\newtheorem{Theorem}{Theorem}[section]
\newtheorem{lemma}[Theorem]{Lemma}
\newtheorem{Corollary}[Theorem]{Corollary}
\newtheorem{remark}[Theorem]{Remark}
\numberwithin{equation}{section}
\newcommand{\la}{\langle}
\newcommand{\ra}{\rangle}
\newcommand{\beq}{\begin{equation}}
\newcommand{\eeq}{\end{equation}}
\newcommand{\bes}{\begin{equation*}}
\newcommand{\ees}{\end{equation*}}
\renewcommand{\l}{\ell}
\newcommand{\dss}{\displaystyle}
\newcommand{\bslj}{J_{\nu}}
\newcommand{\bsljm}{J_{-\nu}}
\newcommand{\bslh}{H_{\nu}^{(1)}}
\newcommand{\belh}{H_{\frac{1}{3}}^{(1)}}
\newcommand{\sbelj}[1]{\sqrt{\frac{\pi#1}{2}}J_{\frac{1}{3}}(#1)}
\newcommand{\sbeljm}[1]{\sqrt{\frac{\pi#1}{2}}J_{-\frac{1}{3}}(#1)}
\newcommand{\sbelhz}[1]{\sqrt{\frac{\pi z}{2}}H_{\frac{1}{3}}^{(1)}(#1)}
\newcommand{\lrp}[1]{\ensuremath{\left(#1\right)}}
\newcommand{\Bgp}[1]{\ensuremath{\Big(#1\Big)}}
\newcommand{\Bgs}[1]{\ensuremath{\Big|#1\Big|}}
\newcommand{\gsf}{\Gamma\lrp{\frac{5}{6}}}
\newcommand{\gso}{\Gamma\lrp{\frac{1}{6}}}
\def\e{\epsilon}
\def\N{{\Bbb N}}
\def\Z{{\Bbb Z}}
\def\R{{\Bbb R}}
\def\T{{\Bbb T}}
\def\C{{\Bbb C}}
\let\cal=\mathcal
\def\H{{\cal H}}
\def\e{\epsilon}
\def\N{{\Bbb N}}
\def\Z{{\Bbb Z}}
\def\R{{\Bbb R}}
\def\T{{\Bbb T}}
\def\C{{\Bbb C}}
\let\cal=\mathcal
\begin{document}
\newtheorem{Sup}{\textbf{Assumption}}

\title[A Reducibility Theorem]{Reducibility of 1-d Quantum Harmonic Oscillator Equation with  Unbounded Oscillation  
Perturbations }
\author{Z. Liang and J. Luo}

\address {School of Mathematical Sciences and
Key Lab of Mathematics for Nonlinear Science, Fudan University,
Shanghai 200433, China} \email{zgliang@fudan.edu.cn, 17210180007@fudan.edu.cn}

\date{}

\begin{abstract} We build a new estimate relative with Hermite functions based 
upon oscillatory integrals and Langer's  turning point theory.  
From it we show that the equation
\begin{eqnarray*}
{\rm i}\partial_t u =-\partial_x^2 u+x^2 u+\e \la x\ra^{\mu} W(\nu x,\omega t)u,\quad u=u(t,x),~x\in\R,~ 0\leq \mu<\frac13,
 \end{eqnarray*}
can be reduced in $\cal H^1(\R)$  to an autonomous system for most values of the frequency vector $\omega$ and $\nu$,
where  $W(\varphi, \theta)$  is a smooth map from $ \T^d\times \T^n$ to $\R$ and odd in $\varphi$.

\end{abstract}

\maketitle

\section{Introduction of the main results }\label{introduction}
\subsection{Statement of the Results}
 \par In this paper we consider 1-d quantum harmonic oscillator equation
\begin{eqnarray}\label{maineq}
{\rm i}\partial_tu &=& H_{\varepsilon}(\omega t)u, \ x\in\R,\\
H_{\varepsilon}(\omega t) : &= & -\partial_{xx}+x^2+\varepsilon \la x\ra ^\mu W(\nu x, \omega t), \nonumber
 \end{eqnarray}
where $0\leq \mu<\frac13$, $W(\varphi, \theta)$ is defined on $\T^d\times \T^n$ and satisfies
\begin{eqnarray}\label{symmetry}
W(-\varphi, \theta)=-W(\varphi, \theta), \qquad {\rm for}\ \forall\  (\varphi, \theta)\in \T^{d}\times \T^n.
\end{eqnarray}
In order to state the results we need to introduce  some notations and spaces.\\
{\bf $C^{\beta}(\R^n, X)$}. Assume that $X$ is a complex Banach space with the norm $\|\cdot \|_{X}$. Let $\mathcal{C}^{b}(\R^n,X)$, $0<b<1$, be the space of
H\"older continuous functions $f : \R^n\rightarrow X$ with the norm
$$\|f\|_{\mathcal{C}^{b}(\R^n, X)} : = \sup\limits_{0<\|z_1-z_2\|<2\pi}\frac{\|f(z_1)-f(z_2)\|_{X}}{{\|z_1-z_2\|^{b}}}+\sup\limits_{z\in \R^n} \|f(z)\|_{X}.$$
If $b=0$, then $\|f\|_{\mathcal{C}^{b}(\R^n, X)}$ denotes the sup-norm. For $\beta=[\beta]+b$ with $0\leq b<1$, we denote
by ${\mathcal{C}^{\beta}}(\R^n, X)$ the space of functions $f: \R^n\rightarrow X$ with H\"older continuous partial derivatives and $\partial^{\nu} f\in \mathcal{C}^{b}(\R^n, X_{\nu})$ for all
multi - indices $\nu=(\nu_1, \cdots, \nu_n)\in \N^n$, where $|\nu| : = |\nu_1|+\cdots+|\nu_n| \leq \beta$ and $X_{\nu}=\mathfrak{L}(\prod\limits_{i=1}^{|\nu|}Y_i, X)$ with the standard norm and $Y_i : =\R^n$, $i=1, \cdots, |\nu|$.
We define the norm
$
\|f\|_{\mathcal{C}^{\beta}(\R^n, X)} := \sum\limits_{|\nu|\leq \beta}\|\partial^{\nu}f\|_{\mathcal{C}^{b}(\R^n, X_{\nu})}.
$ \\
{\bf $C^{\beta}(\T^n, X)$}. Denote by $\mathcal{C}^{\beta}(\T^n, X)$ the space of all functions $f\in \mathcal{C}^{\beta}(\R^n, X)$ that are of period $2\pi$ in all variables. We define
$\|f\|_{\mathcal{C}^{\beta}(\T^n, X)} :=  \|f\|_{\mathcal{C}^{\beta}(\R^n, X)} $.\\
{\bf Linear Space.} Let $s\in \R$, we define the complex weighted-$\l^2$-space
\[
\l_s^2=\big\{\xi=(\xi_j\in\C,~j\in\Z_+)~\big|~\|\xi\|_s<\infty\big\},~\text{where }~\dss\|\xi\|_s^2=\sum_{j\in\Z_+}j^s|\xi_j|^2.
\]
{\bf Hermite functions} The harmonic oscillator operator $T=-\frac{d^2}{dx^2}+x^2$  has eigenfunctions $(h_j)_{j\geq 1}$, so called  the Hermite functions, namely, 
\begin{eqnarray}\label{eigensection1}
Th_j=(2j-1)h_j, \qquad \|h_j\|_{L^2(\R)}=1,\qquad  j\geq 1.
\end{eqnarray} 
{\bf $\mathcal{H}^p$. } Let $p\geq 0$ be an integer we define
$$ \mathcal{H}^p:=\{u\in \mathcal{H}^p(\R,\C)\ |\  x\mapsto x^{\alpha_1}\partial_x^{\alpha_2}u \in L^{2}(\R)\ {\rm for\ any}\ \alpha_1,\alpha_2\in\N,\     0\leq \alpha_1+\alpha_2\leq
p \}.$$
\noindent To a function $u\in\mathcal{H}^p$ we associate the sequence $(u_j)_{j\geq 1}$ of its Hermite coefficients by the formula $u(x)=\sum_{j\geq 1}u_jh_{j}(x).$
For $p\geq0$,
$u\in\mathcal{H}^p\Leftrightarrow (u_j)_{j\geq 1}\in\ell_{p}^2$ and we define its  norm by 
$$\|u\|_p=\|(u_j)_{j\geq 1}\|_p=(\sum_{j\geq 1}j^p|u_j|^2)^{\frac12}. $$
For simplicity  we define $\alpha=\frac{1}{12}-\frac{\mu}{4}$ and $\beta_{*}(n,\mu) =18(n+3)(2+\alpha^{-1})(2+5\alpha^{-1})$.  Our intent is to prove the following
\begin{Theorem}\label{quantumth}
Assume  that $W(\varphi, \theta)$ satisfies (\ref{symmetry}) and $W(\varphi, \theta)$ is $C^{s}(\T^d\times \T^n)$ with
$s\geq d+[1\vee \tau]+n+3+\beta$ and $\beta>\beta_{*}(n,\mu) $ with $\beta\in \Z$ and $\tau>d-1$. 
There exists $\e_*>0$ such that for all $0\leq \e<\e_*$ there exists a closed set $\Omega_\gamma\times D_\varepsilon\subset [A,B]^d\times [1,2]^n$ 
and for all $(\nu, \omega)\in \Omega_\gamma\times D_\varepsilon$ the linear Schr\"odinger equation  (\ref{maineq})
reduces to a linear autonomous equation in the space $\mathcal{H}^{1}$.\\
\indent More precisely,  there exists $\e_*>0$ such that for all $0\leq \e<\e_*$ there exists a closed set $\Omega_{\gamma}\times D_\varepsilon$,  and for $(\nu,\omega) \in \Omega_{\gamma}\times D_\varepsilon$, 
there exist a linear isomorphism $\Psi_\omega^{\infty,1}(\theta)\in\mathfrak{L}(\mathcal{H}^{s'})$ with $0\leq s'\leq 1$,  unitary on $L^2(\R)$, where  $\Psi_\omega^{\infty,1}(\theta)- id \in \mathcal{C}^{\iota}(\T^n,\mathfrak{L}(\mathcal{H}^{0}, \mathcal{H}^{2\alpha})) \cap
\mathcal{C}^{\iota}(\T^n,\mathfrak{L}(\mathcal{H}^{s'}, \mathcal{H}^{s'}))$  with $\iota\notin\Z$ and $\iota\leq\frac{2}{9}\beta$ and a bounded Hermitian operator $Q_1\in\mathfrak{L}(\mathcal{H}^{1})$ such that $t\mapsto u(t,\cdot)\in\mathcal{H}^{1}$ satisfies (\ref{maineq}) if and only if $t\mapsto v(t,\cdot)= \Psi_\omega^{\infty,1}(\omega t)u(t,\cdot) $ satisfies the autonomous equation
\begin{equation*}\label{reducedeq}
\mathrm{i}\partial_t v= -v_{xx}+x^2 v+\varepsilon Q_1(v),
\end{equation*}
furthermore,
$$\|\Psi_\omega^{\infty,1}(\theta)-id\|_{\mathcal{C}^{\iota}(\T^n, \mathfrak{L}(\mathcal{H}^{0},\mathcal{H}^{2\alpha}))}\leq C \varepsilon^{\frac{3}{2\beta}(\frac{2}{9}\beta-\iota)},\  (\theta,\nu, \omega)\in\T^n\times \Omega_{\gamma}\times D_\varepsilon,$$
and
$$\|\Psi_\omega^{\infty,1}(\theta)-id\|_{\mathcal{C}^{\iota}(\T^n, \mathfrak{L}(\mathcal{H}^{s'},\mathcal{H}^{s'}))}\leq C \varepsilon^{\frac{3}{2\beta}(\frac{2}{9}\beta-\iota)},\  (\theta,\nu, \omega)\in\T^n\times \Omega_{\gamma}\times D_\varepsilon,$$
for $0\leq s'\leq 1$,
and for any $p\in \N$ and $\omega\in D_{\varepsilon}$, there exists a constant $K_1$ depending on $n, \beta$
\begin{eqnarray}\label{Whitneysmooth1}
\|Q_1\|_{\mathcal{L}(\mathcal{H}^p, \mathcal{H}^{p+4\alpha})}+ \|\partial_{\omega}Q_1\|_{\mathcal{L}(\mathcal{H}^p, \mathcal{H}^{p+4\alpha})}\leq K_1.
\end{eqnarray}
\end{Theorem}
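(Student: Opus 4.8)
\medskip

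The plan is to run a KAM--Nash--Moser reducibility scheme for \eqref{maineq} in the Hermite basis, the genuinely new ingredient being a quantitative control of the matrix of the unbounded multiplier $\langle x\rangle^{\mu}W(\nu x,\theta)$ extracted from the turning-point asymptotics of the Hermite functions. \textbf{Step 1: reduction to a matrix equation and the key estimate.} Writing $u=\sum_{j\ge 1}u_jh_j$, equation \eqref{maineq} becomes $\mathrm i\dot u=\big(D+\varepsilon P(\omega t)\big)u$ on the scale $\l_p^2$, $0\le p\le 1$, with $D=\mathrm{diag}(2j-1)$ and $P(\theta)$ the matrix of $\langle x\rangle^{\mu}W(\nu x,\theta)$ in the basis $(h_j)$. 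Expanding $W$ in Fourier modes in its first slot, $W(\nu x,\theta)=\sum_{a\in\Z^d}w_a(\theta)e^{\mathrm i\langle a,\nu\rangle x}$, one gets $P_{jl}(\theta)=\sum_{a\in\Z^d}w_a(\theta)M_{jl}(\langle a,\nu\rangle)$ with $M_{jl}(\xi):=\int_{\R}\langle x\rangle^{\mu}h_j(x)h_l(x)e^{\mathrm i\xi x}\,dx$. The analytic core is the bound on $M_{jl}(\xi)$: via Langer's uniform approximation (WKB in the classically allowed region $|x|<\sqrt{2j-1}$, Airy behaviour near the turning points $x=\pm\sqrt{2j-1}$, exponential decay beyond) together with stationary-phase and van der Corput estimates — carried out also in the delicate coalescence regime where the critical point of the phase meets a turning point — one obtains $|M_{jl}(\xi)|\lesssim (1+|j-l|)^{-N}(j+l)^{2\alpha}$ for every $N$, with constants that stay bounded as long as $|\xi|$ is bounded away from $0$. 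The parity $h_j(-x)=(-1)^{j-1}h_j(x)$ together with the oddness \eqref{symmetry} kills the $a=0$ mode — hence the dangerous diagonal $M_{jj}(0)\asymp j^{\mu/2}$ — and in fact forces $P(\theta)$ to have vanishing diagonal; combined with the $M_{jl}$-bound this yields $P\in\mathcal C^{s}\big(\T^n,\mathfrak L(\mathcal H^{p},\mathcal H^{p+4\alpha})\big)$ for $0\le p\le 1$, i.e. $P$ is smoothing of order $4\alpha=\tfrac13-\mu>0$, and it is exactly here that $\mu<\tfrac13$ enters.

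\medskip

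\textbf{Step 2: non-resonance sets, homological equation, iteration.} Put $\Omega_\gamma:=\{\nu\in[A,B]^d:\ |\langle a,\nu\rangle|\ge\gamma|a|^{-\tau}\ \forall\,a\in\Z^d\setminus\{0\}\}$, which exhausts $[A,B]^d$ as $\gamma\to 0$ since $\tau>d-1$; for $\nu\in\Omega_\gamma$ the constants of Step 1 are uniform. Conjugate successively by $\Phi_m=\exp(S_m)$ with $S_m(\theta)$ skew-Hermitian solving the homological equation $\mathrm i\,(\omega\!\cdot\!\partial_\theta)S_m-[D,S_m]=\Pi_{K_m}\big(P_m-\langle P_m\rangle_\theta\big)$, where $\Pi_{K_m}$ truncates to $\theta$-Fourier modes $|k|\le K_m$ and $\langle\,\cdot\,\rangle_\theta$ is the $\theta$-average; its Fourier coefficients $\widehat{(S_m)}_k(j,l)=-\widehat{(P_m)}_k(j,l)\big/\big(\langle k,\omega\rangle+2(j-l)\big)$ are well defined and $S_m$ is again smoothing of order $4\alpha$, provided $\omega$ is kept in the set where the distance of $\langle k,\omega\rangle$ to $2\Z$ is $\ge\gamma\langle k\rangle^{-\tau_1}$ for every $0<|k|\le K_m$ (when $k=0$ the bound $|2(j-l)|\ge 2$ is automatic for $j\ne l$) — a Diophantine-type condition relative to $\Z^{n+1}$ whose complement in $[1,2]^n$ has measure $O(\gamma)$; intersecting over all scales yields $D_\varepsilon$. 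The new perturbation $P_{m+1}$ collects the quadratic error and the tail $(\mathrm{Id}-\Pi_{K_m})P_m$; choosing $K_m\nearrow$, regularity indices $\beta_m\searrow$ and thresholds $\varepsilon_m$ in the standard Nash--Moser way (so that $\varepsilon_{m+1}\le\varepsilon_m^{1+\varsigma}$ up to the loss of derivatives) and invoking the interpolation inequalities on the scale $\mathcal C^{\beta}\big(\T^n,\mathfrak L(\mathcal H^{\bullet},\mathcal H^{\bullet})\big)$, the scheme converges. The regularity budget — the $\langle k\rangle^{\tau_1}$ paid to the divisors, plus the interpolation losses and the losses in propagating the smoothing-of-order-$4\alpha$ structure under composition — is precisely what forces $s\ge d+[1\vee\tau]+n+3+\beta$ and $\beta>\beta_{*}(n,\mu)$, and leaves a surviving $\theta$-regularity up to $\iota\le\tfrac29\beta$.

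\medskip

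\textbf{Step 3: passage to the limit.} The partial compositions $\Psi^{(m)}_\omega=\Phi_1\circ\cdots\circ\Phi_m$ converge, in $\mathfrak L(\mathcal H^{s'})$ for $0\le s'\le 1$ and in $\mathfrak L(\mathcal H^{0},\mathcal H^{2\alpha})$, to a limit $\Psi^{\infty,1}_\omega(\theta)$ with $\Psi^{\infty,1}_\omega(\theta)-\mathrm{id}$ smoothing of order $2\alpha$ and of $\theta$-regularity $\iota\notin\Z$, $\iota\le\tfrac29\beta$; each $\Phi_m$ is unitary on $L^2$ (as $S_m$ is skew-Hermitian, $P_m$ being Hermitian), hence so is $\Psi^{\infty,1}_\omega$. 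The limiting conjugation carries $\mathrm i\partial_t-H_\varepsilon(\omega t)$ into $\mathrm i\partial_t-\big(-\partial_{xx}+x^2+\varepsilon Q_1\big)$, with $Q_1:=\lim_m\langle P_m\rangle_\theta$ a bounded Hermitian operator which, since the estimate of Step 1 applies verbatim to every $P_m$, lies in $\mathfrak L(\mathcal H^{p},\mathcal H^{p+4\alpha})$ for all $p\in\N$; differentiating the whole construction with respect to $\omega$ along $D_\varepsilon$ — the Melnikov conditions being $C^1$ in $\omega$ in Whitney's sense, each $\omega$-derivative costing one further power of $\langle k\rangle$ — gives the companion bound on $\partial_\omega Q_1$, i.e. \eqref{Whitneysmooth1}. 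Summing the geometric series of the $\varepsilon_m$ and interpolating between the regularity indices $\tfrac29\beta$ and $\iota$ produces the quantitative estimates with exponent $\varepsilon^{\frac{3}{2\beta}(\frac29\beta-\iota)}$.

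\medskip

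\textbf{Where the difficulty lies.} The heart of the matter is Step 1: proving the bound on $M_{jl}(\xi)$ with exactly the gain $(j+l)^{2\alpha}$, $2\alpha=\tfrac16-\tfrac\mu2$, uniformly in $(j,l)$ and in the oscillation frequency $\xi=\langle a,\nu\rangle$. This forces one to splice WKB estimates in the classically allowed region to Airy asymptotics at the turning points and then to run stationary-phase analysis precisely in the coalescence regime — the situation that only Langer's uniform approximation controls — and it is here that $\mu<\tfrac13$ is indispensable, being equivalent to the gain $4\alpha$ being positive. The secondary, bookkeeping-heavy difficulty, running through Steps 2--3, is to keep this smoothing-of-order-$4\alpha$ structure stable under composition and along the iteration while the small-divisor corrections cost powers of $\langle k\rangle$; reconciling these two demands is what dictates the very large regularity thresholds $s$ and $\beta_{*}(n,\mu)$.
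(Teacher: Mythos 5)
Your outline follows the same template as the paper -- expand in the Hermite basis, estimate the matrix of the multiplier via Langer's turning-point asymptotics plus oscillatory-integral estimates, then run a KAM-type reducibility -- but the estimate you place at the heart of Step 1 is not the one the paper proves, and it is wrong in two distinct ways. You claim $|M_{jl}(\xi)|\lesssim (1+|j-l|)^{-N}(j+l)^{2\alpha}$ for every $N$. First, with $2\alpha=\tfrac16-\tfrac\mu2>0$, the factor $(j+l)^{2\alpha}$ is \emph{growth}, not decay, so as written the bound cannot give the smoothing $\mathcal H^p\to\mathcal H^{p+4\alpha}$ you assert one line later; the paper's Lemma~\ref{Indecaysection1} is the product decay $|M_{jl}(k)|\le C(|k|\vee|k|^{-1})\,j^{-\alpha}l^{-\alpha}$, $\alpha=\tfrac1{12}-\tfrac\mu4$, which is precisely what makes $P(\theta)$ land in the matrix class $\mathcal M_\alpha=\{A:\sup_{a,b}(ab)^\alpha|A_a^b|<\infty\}$ that the abstract Theorem~\ref{KAM} from \cite{LiangW19} is built on. Second, and more fundamentally, the super-polynomial off-diagonal decay $(1+|j-l|)^{-N}$ for every $N$ is not established and is in fact false: the paper's own computations (see e.g.\ the integral over $[X_m,X_n]$ in Lemma~\ref{lemma2.8}, which contributes $\asymp m^{-\frac18+\frac\mu4}n^{-\frac1{12}+\frac\mu4}$ when $X_n\ge2X_m$) show only finite polynomial gain off the diagonal, the obstruction being the Airy-type singularities $(X_m^2-x^2)^{-1/4}(X_n^2-x^2)^{-1/4}$ at the two turning points, which prevent arbitrarily many integrations by parts. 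So your Step~1, and with it the ``smoothing-of-order-$4\alpha$ off-diagonal'' bookkeeping on which Steps~2--3 are hung, does not hold.

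Once the key estimate is put in the correct product form, the paper's route is actually more economical than what you sketch: it does not re-derive a Nash--Moser scheme. After checking that $P(\theta)\in\mathcal C^{\beta_1}(\T^n,\mathcal M_\alpha)$ (Lemma~\ref{L3.3}, which needs the non-resonance set $\Omega_\gamma$ in $\nu$ precisely so that $|\langle k,\nu\rangle|\ge\gamma|k|^{-\tau}$ makes the $k$-sum converge, and a finite-regularity Fr\'echet-differentiability argument in $\theta$ rather than the $\theta$-Fourier truncations $\Pi_{K_m}$ you describe), the paper cites the abstract reducibility Theorem~\ref{KAM} as a black box, whose hypotheses A1--A2 are the standard first and second Melnikov conditions for $\lambda_a=2a-1$. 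Your observation that oddness in $\varphi$ kills the $a=0$ Fourier mode (and indeed forces $P_{jl}=0$ for $j\equiv l\bmod 2$) is exactly what the paper uses to justify restricting to $k\ne0$ in Lemma~\ref{Indecaysection1}, and your description of the limiting transformation and of the bound \eqref{Whitneysmooth1} via the diagonal limit $N_\infty-N\in\mathcal M_\alpha$ is in the right spirit; but as it stands the argument cannot be repaired without replacing your Step~1 by the genuine $(jl)^{-\alpha}$ estimate and the $\mathcal M_\alpha$-based iteration.
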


\begin{remark}
The sets $\Omega_{\gamma}$  satisfies
${\rm Meas}([A,B]^d\setminus \Omega_\gamma)= \mathcal{O}(\gamma)$ when $\gamma\rightarrow 0$, while $D_\varepsilon$ 
satisfies ${\rm Meas}([1,2]^n\setminus D_\varepsilon)\leq c(\beta,n,\mu)\varepsilon^{\frac{3/2}{(2+\alpha^{-1})(2+5\alpha^{-1})}}$.
\end{remark}
\begin{remark}\label{parameternv}
For any $\nu\in \Omega_{\gamma}\subset [A, B]^d$, we have
$|\la k, \nu\ra|\geq \frac{\gamma}{|k|^{\tau}}$ where $k\neq 0$ and $\tau>d-1$.
\end{remark}
\begin{remark}
The derivative in (\ref{Whitneysmooth1}) is in the sense of Whitney.
\end{remark}
\indent A consequence of the above theorems and corollary is that in the considered range of parameters all the Sobolev norms, i.e. the $\H^s$ norms of the solutions are bounded forever and the spectrum of the Floquet operator is pure point.\\
\indent  Consider 1-d quantum harmonic oscillator equation
\begin{eqnarray}\label{maineq2}
{\rm i}\partial_t u &=& \mathcal H_{\varepsilon}(\omega t)u, \ x\in\R,\\
\mathcal H_{\varepsilon}(\omega t) : &= & -\partial_{xx}+x^2+\varepsilon X(x, \omega t), \nonumber
 \end{eqnarray}
where
\begin{eqnarray}\label{realform1}
X(x,\theta)= \la x\ra^{\mu}  \sum\limits_{k\in \Lambda} (a_k(\theta)\sin kx+b_k(\theta) \cos kx)
\end{eqnarray}
 with $k\in \Lambda\subset \R \setminus \{0\}$ with $|\Lambda|<\infty$ and $0\leq \mu<\frac{1}{3}$.
\begin{Corollary}\label{coro1.5}
Assume that $a_k(\theta)$ and $b_k(\theta)\in C^{r}(\T^n)$ where $r\ge n+2+\beta$ with $\beta$ as in Theorem \ref{quantumth}.
There exists $\e_*>0$ such that for all $0\leq \e<\e_*$ there exists a closed set $D_\varepsilon\subset [1,2]^n$  such that for
 all $\omega\in D_\varepsilon$ the linear Schr\"odinger equation  (\ref{maineq2})
reduces to a linear autonomous equation in the space $\mathcal{H}^{1}$.\\ 
 \indent  More precisely,  there exists $\e_*>0$ such that for all $0\leq \e<\e_*$ there exists a closed set $D_\varepsilon$
 and for $\omega\in D_\varepsilon$, there exist a linear isomorphism $\Psi_\omega^{\infty,2}(\theta)\in\mathfrak{L}(\mathcal{H}^{s'})$ with $0\leq s'\leq 1$,  unitary on $L^2(\R)$, where  $\Psi_\omega^{\infty,2}(\theta)- id \in \mathcal{C}^{\iota}(\T^n,\mathfrak{L}(\mathcal{H}^{0}, \mathcal{H}^{2\alpha})) \cap
\mathcal{C}^{\iota}(\T^n,\mathfrak{L}(\mathcal{H}^{s'}, \mathcal{H}^{s'}))$  with $\iota\notin\Z$ and $\iota\leq\frac{2}{9}\beta$ and a bounded Hermitian operator $Q_2\in\mathfrak{L}(\mathcal{H}^{1})$ such that $t\mapsto u(t,\cdot)\in\mathcal{H}^{1}$ satisfies (\ref{maineq}) if and only if $t\mapsto v(t,\cdot)= \Psi_\omega^{\infty,2}(\omega t)u(t,\cdot) $ satisfies the autonomous equation
\begin{equation*}\label{reducedeq}
\mathrm{i}\partial_t v= -v_{xx}+x^2 v+\varepsilon Q_2(v).
\end{equation*}
Furthermore,
$$\|\Psi_\omega^{\infty,2}(\theta)-id\|_{\mathcal{C}^{\iota}(\T^n, \mathfrak{L}(\mathcal{H}^{0},\mathcal{H}^{2\alpha}))}\leq C \varepsilon^{\frac{3}{2\beta}(\frac{2}{9}\beta-\iota)},\  (\theta,\omega)\in\T^n\times D_\varepsilon,$$
and
$$\|\Psi_\omega^{\infty,2}(\theta)-id\|_{\mathcal{C}^{\iota}(\T^n, \mathfrak{L}(\mathcal{H}^{s'},\mathcal{H}^{s'}))}\leq C \varepsilon^{\frac{3}{2\beta}(\frac{2}{9}\beta-\iota)},\  (\theta,\omega)\in\T^n\times D_\varepsilon,$$
for $0\leq s'\leq 1$,
and for any $p\in \N$ and $\omega\in D_{\varepsilon}$, there exists a constant $K_2$ depending on $n, \beta$,
\begin{eqnarray}\label{Whitneysmooth2}
\|Q_2\|_{\mathcal{L}(\mathcal{H}^p, \mathcal{H}^{p+4\alpha})}+ \|\partial_{\omega}Q_2\|_{\mathcal{L}(\mathcal{H}^p, \mathcal{H}^{p+4\alpha})}\leq K_2.
\end{eqnarray}
\end{Corollary}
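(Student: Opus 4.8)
The plan is to deduce Corollary~\ref{coro1.5} from the construction underlying Theorem~\ref{quantumth}, the point being that the perturbation in (\ref{maineq2}) is, after a harmless rewriting, a \emph{finite} sum of elementary pieces of the type already controlled there. First I would put $X(x,\theta)$ in exponential form: writing $\sin kx=\tfrac{1}{2\rmi}(e^{\rmi kx}-e^{-\rmi kx})$ and $\cos kx=\tfrac12(e^{\rmi kx}+e^{-\rmi kx})$,
\[
X(x,\theta)=\la x\ra^{\mu}\sum_{k\in\Lambda}\Bigl(c_k(\theta)\,e^{\rmi kx}+\overline{c_k(\theta)}\,e^{-\rmi kx}\Bigr),\qquad c_k=\tfrac12\bigl(b_k-\rmi a_k\bigr),
\]
so that, since $|\Lambda|<\infty$ and $0\notin\Lambda$, the multiplication operator $v\mapsto X(\cdot,\theta)v$ is a finite combination of the operators $\la x\ra^{\mu}e^{\pm\rmi kx}$ with fixed nonzero real frequencies, weighted by coefficients $c_k,\overline{c_k}\in C^{r}(\T^n)$. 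In the Hermite basis its matrix entries are therefore finite sums of $\la h_j,\la x\ra^{\mu}e^{\pm\rmi kx}h_l\ra$ times the Fourier coefficients of $a_k,b_k$.

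The decisive input is the Hermite-function estimate established earlier in the paper (the one built from oscillatory integrals and Langer's turning-point analysis): for each fixed real $k\neq0$ the numbers $\la h_j,\la x\ra^{\mu}e^{\rmi kx}h_l\ra$ decay in $|j-l|$ and carry a smoothing gain of a positive power of $j$, quantified by $\alpha=\tfrac1{12}-\tfrac\mu4>0$, uniformly in $j,l$. I would stress that this estimate is insensitive to the sign of $k$, hence to parity in $x$, so the even pieces $\la x\ra^{\mu}\cos kx$ are handled exactly as the odd pieces $\la x\ra^{\mu}\sin kx$ occurring in Theorem~\ref{quantumth}. Summing over the finite set $\Lambda$ and invoking $a_k,b_k\in C^{r}(\T^n)$, $r\ge n+2+\beta$, for the $\theta$-regularity, one checks that $\theta\mapsto(X_{jl}(\theta))$ belongs to the very class of off-diagonally decaying, $\la x\ra^{4\alpha}$-smoothing, $C^{\beta}$-in-$\theta$ perturbations for which the iteration of Theorem~\ref{quantumth} is designed. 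Because the spatial sum is finite there is now \emph{no} arithmetic condition on the frequencies $k$ to impose: the analogue of the set $\Omega_\gamma$ disappears, and the sole requirement, $0\notin\Lambda$, plays exactly the role that oddness of $W$ in $\varphi$ plays in Theorem~\ref{quantumth}, namely the absence of a non-oscillating, unbounded $\la x\ra^{\mu}$ component. This is also consistent with the drop in the required smoothness from $s\ge d+[1\vee\tau]+n+3+\beta$ down to $r\ge n+2+\beta$, the saved derivatives being those spent on spatial Fourier decay and on the $\nu$-Diophantine small divisors.

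It then remains to run the reducibility iteration of Theorem~\ref{quantumth} verbatim on $\mathcal H_{\varepsilon}(\omega t)=-\partial_{xx}+x^2+\varepsilon X(x,\omega t)$: for $\varepsilon<\varepsilon_*$ this yields a closed set $D_\varepsilon\subset[1,2]^n$ carrying the time-frequency Diophantine conditions with the stated measure bound, a quasi-periodic family $\Psi_\omega^{\infty,2}(\theta)$ of linear isomorphisms, unitary on $L^2(\R)$, with the stated mapping properties and with $\Psi_\omega^{\infty,2}(\theta)-\mathrm{id}$ small in the indicated $C^{\iota}$-norms by $C\varepsilon^{\frac{3}{2\beta}(\frac{2}{9}\beta-\iota)}$, together with a bounded Hermitian limit $Q_2$ obeying (\ref{Whitneysmooth2}), such that $t\mapsto u(t,\cdot)$ solves (\ref{maineq2}) if and only if $v=\Psi_\omega^{\infty,2}(\omega t)u$ solves $\rmi\partial_t v=-v_{xx}+x^2v+\varepsilon Q_2 v$; the $\omega$-derivative bound is obtained, as in Theorem~\ref{quantumth}, by differentiating the construction in the Whitney sense.

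I expect the genuinely delicate point — beyond a faithful transcription of the proof of Theorem~\ref{quantumth} — to be checking that each elementary piece $\la x\ra^{\mu}e^{\pm\rmi kx}$, now tested against a \emph{single} fixed frequency rather than integrated against a smooth symbol over $\Z^{d}$, still delivers the full $\la x\ra^{4\alpha}$ gain and the same $\varepsilon$-exponents; this is precisely where the earlier Hermite estimate is applied, and carrying the constants $\alpha,\beta,\iota$ cleanly through it is the part that needs care.
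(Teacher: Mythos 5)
Your proposal is correct and follows the same route the paper implicitly takes (the paper leaves the proof of Corollary~\ref{coro1.5} to the reader, stating it as a consequence of the machinery built for Theorem~\ref{quantumth}). You correctly identify the two key simplifications: (i) writing each $\sin kx$, $\cos kx$ as a combination of $e^{\pm\rmi kx}$ and invoking Lemma~\ref{Indecaysection1} termwise over the finite set $\Lambda$, which is insensitive to the sign of $k$ and so handles even and odd pieces alike, and (ii) observing that the hypothesis $0\notin\Lambda$ replaces the oddness condition (\ref{symmetry}), whose role in the proof of Lemma~\ref{L3.3} is precisely to kill the $k=0$ Fourier mode, since Lemma~\ref{Indecaysection1} requires $k\neq0$. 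Your accounting of the reduced regularity requirement $r\geq n+2+\beta$ versus $s\geq d+[1\vee\tau]+n+3+\beta$ is also right: the $d+[1\vee\tau]+1$ derivatives in $\varphi$, spent in Lemma~\ref{L3.3} on summing $|\widehat W(k,l)|\,|k|^{1\vee\tau}$ over $k\in\Z^d$ against the small divisors $|k\cdot\nu|\geq\gamma/|k|^\tau$, are simply not needed here because the spatial sum is finite and $\Omega_\gamma$ drops out, while the remaining $n+\beta+2$ derivatives in $\theta$ are exactly what the inductive argument of Lemma~\ref{L3.3}(e) consumes to reach $P\in\mathcal C^\beta(\T^n,\mathcal M_\alpha)$. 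From there the application of Theorem~\ref{KAM} with $\lambda_a=2a-1$, Lemmas~\ref{aspt}--\ref{aspt02}, and the passage to $\Psi_\omega^{\infty,2}$, $Q_2$ via $M_\omega^T$ go through verbatim as in the proof of Theorem~\ref{quantumth} and Lemma~\ref{psismooth}. No gap.
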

\begin{remark}
The set $D_\varepsilon$ satisfies 
${\rm Meas}([1,2]^n\setminus D_\varepsilon)\leq c(\beta,n,\mu)\varepsilon^{\frac{3/2}{(2+\alpha^{-1})(2+5\alpha^{-1})}}$.
\end{remark}

\subsection{Related results and a critical lemma}
\indent  In the following we recall some relevant results. For 1d harmonic oscillator see  \cite{Com87} and  \cite{EV}  for periodic in time bounded perturbations.  
Refer to  \cite{GT11}, \cite{Wang08} and  \cite{WLiang17} for 1d harmonic oscillators with quasi-periodic in time bounded perturbations.\\
\indent In \cite{BG} Bambusi and Graffi first proved the reducibility of 1d Schr\"odinger equation with an unbounded time quasiperiodic perturbation. In \cite{BG} they assumed that the potential grows at infinity like $|x|^{2l}$  with  a real $l>1$ and the perturbation is bounded by $1+|x|^{\beta}$ with $\beta<l-1$;  reducibility in the limiting case $\beta= l-1$ was obtained by Liu and Yuan in \cite{LY10}. Recently, the results in \cite{BG} and \cite{LY10} have been improved in \cite{BamII, BamI}, in which Bambusi firstly obtained the reducibility results for 1d harmonic oscillators with unbounded perturbations.  In  \cite{BamI}  Bambusi proved the reducibility when the symbol of the perturbation grows at most like $(\xi^2+x^2)^{\beta/2}$ with $\beta<2$.  In \cite{BamII} he generalized the class of the symbol to which the perturbation belongs(see \cite{BamIII}).  In remark 2.7 \cite{BamII}, Bambusi wrote ``we also remark that the assumption that the functions $a_i$ are symbols rules out cases like $a_i(x, \omega t) = \cos(x -\omega t)$.''  The terms ``$a_i$''  are  exactly the oscillatory ones considered in this paper. \\
 \indent More applications of pseudodifferential calculus can be found in several papers (see e.g. \cite{BBM14, BBHM17, BM16, FP15, Giu17, IPT05,Mon17a,PT01}).  We mention that the above results  are limited in the one dimensional case, while some higher dimensional results on this problems have been recently obtained \cite{BGMR18,FGMP18, GP16, LiangW19,  Mon17b}. The related techniques have been used for a control on the growth of Sobolev norms in \cite{BGMR17, Mon18}.\\
\indent The proof of Theorem \ref{quantumth} is based upon the KAM in \cite{LiangW19} and the following  estimate of Hermite functions.  
\begin{lemma}\label{Indecaysection1}
Suppose $h_m(x)$ satisfies (\ref{eigensection1}). For any $k\neq 0$ and  for any $m,n\geq 1$,
$$\big|\int_{\R}\la x\ra^{\mu} e^{{\rm i} kx}h_m(x) \overline{{h}_n(x)}dx\big| \leq C (|k|\vee  |k|^{-1})m^{-\frac{1}{12}+\frac{\mu}{4}} n^{-\frac{1}{12}+\frac{\mu}{4}}, $$
where $C$ is an absolute constant and $0\leq \mu<\frac13$.
\end{lemma}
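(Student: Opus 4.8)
The plan is to estimate the oscillatory integral $I_{m,n}(k):=\int_\R \la x\ra^\mu e^{{\rm i}kx}h_m(x)\overline{h_n(x)}\,dx$ by splitting $\R$ into regions governed by the classical turning points of the Hermite functions and by using uniform asymptotics. Recall $h_m$ solves $-h_m''+x^2 h_m=(2m-1)h_m$, so its turning points are at $x=\pm\sqrt{2m-1}$. In the oscillatory region $|x|\lesssim\sqrt{m}$ one has the classical bound $|h_m(x)|\lesssim m^{-1/4}(1+|\sqrt{2m-1}-|x||)^{-1/4}$ away from the turning point, and near the turning point Langer's transformation gives $h_m(x)\sim m^{-1/4}\cdot m^{-1/6}\,\mathrm{Ai}(\cdots)$, i.e. the familiar $m^{-1/12}$ height of the Airy bump of width $m^{-1/6}$; in the forbidden region $|x|\gtrsim\sqrt m$ the function decays super-exponentially. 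First I would record these pointwise estimates precisely (this is standard Olver/Langer turning-point theory for the parabolic cylinder functions), keeping uniform constants.

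Next, assume WLOG $m\le n$. The weight $\la x\ra^\mu$ is supported effectively on $|x|\lesssim\sqrt n$ after the super-exponential cutoffs, contributing a factor $\lesssim n^{\mu/2}$; but one has to be more careful since $\la x\ra^\mu$ multiplies both functions, so the honest bound is obtained by tracking powers of $x$ through each region. I would decompose the integral as $\int = \int_{|x|<\sqrt m/2}+\int_{\sqrt m/2<|x|<2\sqrt m}+\int_{2\sqrt m<|x|<2\sqrt n}+\int_{|x|>2\sqrt n}$ (rescaling if $m\asymp n$). On the innermost "doubly oscillatory" region both $h_m,h_n$ oscillate; here the product $e^{{\rm i}kx}h_m\overline{h_n}$ is oscillatory and one integrates by parts in $x$, or better, uses the stationary-phase/non-stationary-phase principle after writing $h_m\overline{h_n}$ via WKB phases $e^{\pm{\rm i}\phi_m(x)}e^{\mp{\rm i}\phi_n(x)}$ with $\phi_j'(x)=\sqrt{2j-1-x^2}$: the combined phase $kx\pm\phi_m\mp\phi_n$ has derivative bounded below except in a small set of stationary points, where one invokes the standard van der Corput $|k|^{-1}$-type bound (this is where the factor $|k|\vee|k|^{-1}$ enters — large $|k|$ helps, small $|k|$ costs an inverse power). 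The key gain $m^{-1/12+\mu/4}n^{-1/12+\mu/4}$ should come precisely from the two turning-point regions, where the Airy-scale amplitudes $m^{-1/12}$, $n^{-1/12}$ appear and the $\la x\ra^\mu\sim m^{\mu/2}$ (resp. $n^{\mu/2}$) factors are distributed; the width $m^{-1/6}$ of each Airy bump combined with amplitude $m^{-1/4}\cdot m^{1/12}=m^{-1/6}$... I would balance these so that the turning-point contribution is $\lesssim m^{-1/12+\mu/4}n^{-1/12+\mu/4}$, matching the claimed exponent $-\alpha=-\tfrac1{12}+\tfrac\mu4$.

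I expect the main obstacle to be the careful bookkeeping in the region $\sqrt m \lesssim |x|\lesssim \sqrt n$, where $h_m$ has already entered its exponentially decaying regime while $h_n$ is still oscillating (and carrying the bulk of the $\la x\ra^\mu$ weight up to $n^{\mu/2}$): one must show the exponential decay of $h_m$ there beats the polynomial growth of $\la x\ra^\mu$ and the non-decay of $h_n$, and does so with enough room to absorb into the stated bound. A related subtlety is handling the four turning points $\pm\sqrt{2m-1},\pm\sqrt{2n-1}$ simultaneously and the degenerate coincidence $m\asymp n$ where two of them nearly collide; here a direct Cauchy–Schwarz fallback $|I_{m,n}(k)|\le \|\la x\ra^\mu h_m\|_{L^2}\lesssim m^{\mu/2}$ is too weak, so the oscillation in $e^{{\rm i}kx}$ together with the Airy-profile decay must be used quantitatively. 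Once the pointwise asymptotics are in hand, each piece is a one-dimensional integral estimate (van der Corput plus explicit Airy-integral bounds), so no deep new idea is needed beyond organizing the turning-point geometry; I would close by summing the at most $O(1)$ pieces.
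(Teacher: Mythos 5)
Your overall strategy matches the paper's: approximate $h_m,h_n$ by Langer's turning-point asymptotics (the paper uses the Bessel $H^{(1)}_{1/3}$ form from Titchmarsh, equivalent to your Airy picture), split the line at the turning points $X_m=\sqrt{2m-1}$, $X_n=\sqrt{2n-1}$ and at distances $X_j^{\pm 1/3}$ from them, control the classically forbidden regions by exponential decay, and control the doubly--oscillatory region by van der Corput applied to the phase $kx+\zeta_m(x)-\zeta_n(x)$ whose derivative is $g(x)=\sqrt{X_n^2-x^2}-\sqrt{X_m^2-x^2}-k$. The factor $m^{-1/12+\mu/4}n^{-1/12+\mu/4}$ does come from the Airy-bump height $X_j^{-1/6}=j^{-1/12}$ over a window of width $X_j^{-1/3}$, exactly as you say (your intermediate arithmetic ``$m^{-1/4}\cdot m^{1/12}=m^{-1/6}$'' is off; the amplitude at distance $X_m^{-1/3}$ from the turning point is $(X_m\cdot X_m^{-1/3})^{-1/4}=m^{-1/12}$).

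However, there is a genuine gap that you yourself flag but do not close: the near-degenerate case $X_m\le X_n\le 2X_m$. There, on the interval $[X_m^{2/3},X_m-X_m^{1/3}]$, the phase derivative $g(x)$ changes sign, can be as small as $O(kX_m^{-1/3})$, and a single van der Corput estimate does not suffice: one must decide, depending on the relative sizes of $X_n^2-X_m^2$ and $k$, whether to use the first- or second-derivative test and on which subinterval. The paper handles this with a fine case analysis (its Lemmas on $kX_m^{2/3}\le X_n^2-X_m^2\le kX_m^{5/6}$, $\le kX_m$, $\le 4kX_m$, $\ge 4kX_m$, plus a separate case $k>X_m^{1/3}$), in each case locating the point $a$ where $g$ hits a chosen level, proving $g'(a)\gtrsim kX_m^{-\gamma}$ for the appropriate $\gamma$, and then applying the oscillatory-integral lemma with the matching order. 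It also uses a quantitative derivative bound $|\Psi'(x)|\lesssim J_1+J_3$ for the slowly varying amplitude $\Psi(x)=(X_m^2-x^2)^{-1/4}(X_n^2-x^2)^{-1/4}f_m\overline{f_n}$ to make van der Corput's total-variation term tractable. Without spelling out that subdivision and the accompanying lower bounds on $g$ or $g'$, your plan does not establish the stated $m^{-\alpha}n^{-\alpha}$ bound in the regime where the two turning points nearly coincide, which is precisely where the estimate is tight. The remaining pieces of your outline (forbidden region, disjoint turning points, super-exponential tails) do go through along the lines you describe.
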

\begin{remark}
We guess the estimate in Lemma \ref{Indecaysection1} is optimal and will give a numerical proof for it in the following paper.  
\end{remark}
\indent By Theorem \ref{KAM} in sect.  \ref{s3} and Lemma \ref{Indecaysection1} we prove Thm. \ref{quantumth} in section \ref{s4}.   
For the readers' convenience we give a fast introduction of 
Langer's turning point theory  from \cite{T2} at the beginning of section \ref{S2}. The lengthy proof for  Lemma \ref{Indecaysection1}  is then given after it.  
Section \ref{S5} is divided into two parts. In the first part we present Theorem \ref{maintheorem2} without proof. In the second we give some lemmas. \\
\noindent {\it Notations}.   
For $k\in \Z^n$, $|k|=\sum\limits_{j=1}^n|k_j|$. We use $\la x\ra=\sqrt{1+x^2}$.
$\la\cdot ,\cdot \ra$ is the standard scalar product in $\R^n$ or $\ell^2$. $\|\cdot \|$ is an operator-norm or $\ell^2$-norm.  We use the notations $1\vee \tau: = \max \{1, \tau\}$, $\Z_{+}=\{1,2,\cdots\}$ and $\N=\{0,1,2,\cdots\}$ and $\T=\R/ 2\pi\Z$.
For a positive number $a$, $[a]$ means the largest integer not larger than $a$.  
We use the notation $f(x)=O(g(x))$ for $x\rightarrow \infty$ if $|f(x)|\leq C|g(x)|$ when $x$ is large enough.  We denote $ \T^n_{\rho} =\big\{(a+{\rm i}b)\in \C^n/{2\pi \Z^n}\big| \max\limits_{j} |b_j|<\rho\big\}$. The notation $Meas$ stands for the Lebesgue measure in $\R^n$. \\
{\bf Acknowledgements.} The first author is very grateful to Bambusi, D.  for many  invaluable discussions on this paper.  Both authors were partially supported by National Natural Science Foundation of China(Grants No.  11371097; 11571249) and 
Natural Science Foundation of Shanghai(Grants No. 19ZR1402400).

\section{A KAM Theorem}\label{s3}
We introduce the KAM Theorem from \cite{LiangW19} especially for 1d case.  We remark that KAM theory is almost well-developed for nonlinear Hamiltonian PDEs in 1-d context. See \cite{BBP2, GY1, KLiang, KaPo, KP,  Kuk93, Ku1, Ku2, LZ, LY1, LiuYuan, Pos, W90, ZGY} for 1-d KAM results.  Comparing with 1-d case, the KAM results for multidimensional PDEs are relatively few. Refer to  \cite{EGK, EK, GXY, GY2, GP16,  PX} for n-d results. See \cite{Berti} and \cite{Berti2019} for an almost complete picture of recent KAM theory.\\
\subsection{Setting}\label{phase}
\noindent\emph{Linear space.} For $p\geq0$  we define
 $Y_p:=\ell_{p}^2\times\ell_{p}^2=\{\zeta=(\zeta_a=(\xi_a,\eta_a)\in\C^2)_{a\in\Z_{+}}\big|\ \|\zeta\|_p<\infty\}$
with  $\|\zeta\|_p^2:=\sum\limits_{a\in \Z_{+}} (|\xi_a|^2+|\eta_a|^2)a^{p}$. We provide the space $Y_p,\ p\geq0,$ with the symplectic structure $\mathrm{i}\sum_{a\in \Z_{+}}d \xi_a\wedge d \eta_a$. \\
\noindent\emph{Infinite matrices.} We denote by $\mathcal{M}_{\alpha}$ the set of infinite matrices $A:\mathcal{E}\times\mathcal{E}\rightarrow \C$ with the norm
$
|A|_{\alpha}:=\sup_{a,b\in \Z_{+}} \left(ab\right)^{\alpha}\big|A_a^b\big|<+\infty.
$
We also denote $\mathcal{M}_{\alpha}^+$ be the subspace of $\mathcal{M}_{\alpha}$ satisfying that an infinite matrix $A\in\mathcal{M}_{\alpha}^+$ if
$
|A|_{\alpha+}:=\sup_{a,b\in \Z_{+}}(a b)^{\alpha}\left(1+|a-b|\right)  |A_{a}^{b}|<+\infty.
$\\
\noindent\emph{Parameter.} In  the paper $\omega$ will play the role of a parameter belonging to $D_0=[1,2]^n$.
All the constructed functions will depend on $\omega$ with $\mathcal{C}^1$ regularity. When a function is only defined on a Cantor subset of $D_0$ the regularity
is understood in  Whitney sense.\\
\noindent\emph{A class of quadratic Hamiltonians.} Let $D\subset D_0, \ \alpha>0$ and $\sigma>0$. We denote by $\mathcal{M}_{\alpha}(D,\sigma)$ the set of
mappings  as
$\T^n_\sigma\times D\ni (\theta,\omega)\mapsto Q(\theta,\omega)\in \mathcal{M}_{\alpha}$
which is real analytic on $\theta\in \T^n_\sigma$ and $\mathcal{C}^1$
continuous on  $\omega\in D$. This space is equipped with the norm
$$[Q]_{\alpha}^{D,\sigma}:=\sup_{\substack{\omega\in D,|\Im \theta|<\sigma,\\|k|=0,1}}\left|\partial^k_\omega Q(\theta,\omega)\right|_{\alpha}.$$
\indent The subspace of $ \mathcal{M}_{\alpha}(D,\sigma)$ formed by  $F(\theta,\omega)$ such that $\partial^k_\omega F(\theta,\omega)\in \mathcal{M}^+_{\alpha},\ |k|=0,1,$ is denoted by $ \mathcal{M}_{\alpha}^+(D,\sigma)$ and
equipped with the norm
$[F]_{\alpha+}^{D,\sigma}:=\sup\limits_{\substack{\omega\in D,|\Im \theta|<\sigma,\\|k|=0,1}}\left|\partial^k_\omega F(\theta,\omega)\right|_{\alpha+}.$
The subspace of  $  \mathcal{M}_{\alpha}(D,\sigma)$ that are independent of $\theta$ will be denoted by $\mathcal{M}_{\alpha}(D)$ and for $N\in \mathcal{M}_{\alpha}(D),$
$$[N]_{{\alpha}}^{D}:=\sup_{\omega\in D,|k|=0,1}|\partial^k_\omega N(\omega)|_{{\alpha}}.$$
\noindent\emph{$\mathcal{C}^1$ norm of operator in $\omega$.} Given   $(\theta,\omega)\in\T_{\sigma}^n\times D$,  $\Phi(\theta,\omega)\in\mathfrak{L}(Y_r,Y_{r'})$ being $\mathcal{C}^1$
operator with respect to  $\omega$  in Whitney sense, we define the $C^1$ norm of $\Phi(\theta,\omega)$ with respect to $\omega$ by
$$\|\Phi\|^*_{\mathfrak{L}(Y_r,Y_{r'})}=\sup_{\substack{(\theta,\omega)\in\T_{\sigma}^n\times D,\\|k|=0,1,\ \|\zeta\|_s\neq0}}\frac{\|\partial_\omega^k\Phi (\theta,\omega)\zeta\|_{r'}}{\|\zeta\|_r},$$
where $r, r'\in \R$.
 \subsection{The reducibility theorem}\label{s2.4}
 \noindent In this subsection we state an abstract reducibility theorem for quadratic $t$-quasiperiodic   Hamiltonian of the form
\begin{equation}\label{hameq}
H(t,\xi,\eta)= \la\xi, N\eta\ra+ \varepsilon\la\xi, P(\omega t)\eta\ra, \quad (\xi,\eta)\in Y_1\subset Y_0,
\end{equation}
 and the associated Hamiltonian system  is
\begin{eqnarray*}
\left\{\begin{array}{c}
\dot{\xi}=-\mathrm{i}N\xi-\mathrm{i}\varepsilon P^T(\omega t)\xi,\\
\dot{\eta}=\ \ \mathrm{i}N\eta+\mathrm{i}\varepsilon P (\omega t)\eta,\
\end{array}\right.\label{hameq00}
\end{eqnarray*}
where $N=diag\{\lambda_a,\ a\in \Z_{+}\}$ 
satisfying the following assumptions:\\
 \textbf{Hypothesis A1 - Asymptotics.} There exist   positive constants $c_0,\ c_1,\ c_2$ such that $$c_1 a\geq\lambda_a\geq c_2a
 \ {\rm and}\ |\lambda_a-\lambda_b|\geq c_0|a-b|,\ a,b\in \Z_{+}.$$
\textbf{Hypothesis A2 - Second Melnikov condition in measure estimates.}
 There exist   positive constants $\alpha_1,\alpha_2$ and $c_3$ such that the following holds: for each $0<\kappa<1/4$ and $K>0$ there exists a closed
 subset $ D':= D'(\kappa,K)\subset  D_0$ with
 ${\rm Meas}( D_0\setminus  D')\leq c_3K^{\alpha_1}\kappa^{\alpha_2}$ such that for all $\omega\in  D',$  $k\in \Z^n$ with $0<|k|\leq K$ and  $a,b\in \Z_{+}$ we have
 $|\la k,\omega\ra +\lambda_a-\lambda_b|\geq \kappa(1+|a-b|).$ Then  we have the following reducibility results.\\
\begin{Theorem}\label{KAM}
Given a non autonomous Hamiltonian (\ref{hameq}), we assume  that $(\lambda_{a})_{a\in \Z_{+}}$ satisfies Hypothesis  A1-A2 and  $P(\theta)\in \mathcal{C}^\beta(\T^n,\mathcal{M}_{\alpha})$ with $\alpha>0$ and $\beta>\max\{9(2+\frac{1}{\alpha})\frac{\gamma_1}{{\gamma_2}-24\delta},\ 9n,\ 24\}$ where  $\gamma_1= \max\{\alpha_1, n+3\},\
\gamma_2=\frac{\alpha\alpha_2}{5+2\alpha\alpha_2}$, $\delta\in(0,\frac{\gamma_2}{24})$.\\
\indent  Then there exists $\varepsilon_*(n,\beta,\delta )>0$ such that
if $0\leq \varepsilon<\varepsilon_*(n,\beta,\delta )$, there exist\\
(i) a Cantor set $D_\varepsilon\subset D_0$ with ${\rm Meas}(D_0\setminus D_\varepsilon)\leq c(n,\beta,\delta)\varepsilon^{\frac{3\delta}{2+\alpha^{-1}}}$;\\
(ii) a $\mathcal{C}^1$ family in $\omega\in D_\varepsilon$(in Whitney sense), linear, unitary and symplectic
coordinate transformation
$\Phi_\omega^\infty(\theta): Y_0\rightarrow Y_0,\ \theta\in\T^n,\ \omega\in D_\varepsilon,$ of the form
\begin{equation*}\label{transf}
(\xi_+,\eta_+)\mapsto(\xi,\eta)=\Phi_\omega^\infty(\theta)(\xi_+,\eta_+)=( \overline{M}_\omega(\theta)\xi_+,{M}_\omega(\theta)\eta_+),
\end{equation*}
where
 $\Phi_{\omega}^{\infty}(\theta)-id \in \mathcal{C}^{\iota}(\T^n, \mathcal{L}(Y_{0}, Y_{2\alpha})) \cap \mathcal{C}^{\iota}(\T^n, \mathcal{L}(Y_{s'}, Y_{s'}))$ with $ 0\leq s'\leq 1, \iota\leq\frac{2}{9}\beta$, $\iota\notin\Z$  and
 satisfies
\begin{eqnarray*}
\|\Phi_\omega^\infty-id\|_{\mathcal{C}^{\iota}(\T^n, \mathfrak{L}(Y_{0}, Y_{2\alpha})) }&\leq&   C(n,\beta, \iota) \varepsilon^{\frac{3}{2\beta}(\frac{2}{9}\beta-\iota)},
\end{eqnarray*}
and
\begin{eqnarray*}
\|\Phi_\omega^\infty-id\|_{\mathcal{C}^{\iota}(\T^n, \mathfrak{L}(Y_{s'}, Y_{s'})) }&\leq&   C(n,\beta, \iota) \varepsilon^{\frac{3}{2\beta}(\frac{2}{9}\beta-\iota)}.
\end{eqnarray*}
(iii)  a $\mathcal{C}^1$ family of  autonomous quadratic Hamiltonians in  normal forms
$$H_\infty(\xi_+,\eta_+)=\la\xi_+,N_\infty(\omega)\eta_+\ra=\sum\limits_{j\geq 1}\lambda_{j}^{\infty}\xi_{j,+}\eta_{j,+},\ \omega\in D_\varepsilon,$$
where $N_\infty(\omega)=diag\{\lambda_j^{\infty}\}$ is diagonal and is close to
$N$, i.e. 
\begin{eqnarray}\label{Ninfty}
[N_\infty(\omega)-N]_{\alpha}^{D_\varepsilon}\leq  c(n,\beta)\varepsilon, 
\end{eqnarray}
such that
$$H(t,\Phi_\omega^\infty(\omega t)(\xi_+,\eta_+))=H_\infty(\xi_+,\eta_+),\ t\in\R,\ (\xi_+,\eta_+)\in Y_{1},\ \omega\in D_\varepsilon. $$
Furthermore $\Phi^{\infty}_{\omega}(\theta)$ and $\Phi^{\infty}_{\omega}(\theta)^{-1}$ are bounded operators from
$Y_{s'}$ into itself for $0\leq s'\leq 1$ and they satisfy:
$$\|M_{\omega}(\theta)-Id\|_{\mathcal{L}(\ell_{s'}^2, \ell_{s'}^2)}, \|M^{-1}_{\omega}(\theta)-Id\|_{\mathcal{L}(\ell_{s'}^2, \ell_{s'}^2)}\leq  c \varepsilon^{1/2}. $$
\end{Theorem}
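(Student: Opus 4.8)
\emph{Sketch of the proof.} The plan is a KAM (reducibility) iteration for quadratic Hamiltonians, carried out in the analytic-in-$\theta$ category and followed by an analytic-smoothing step to accommodate the merely $\mathcal C^\beta$ dependence of $P$; this is the argument of \cite{LiangW19}, which I outline. One constructs inductively Hamiltonians $H_j(t,\xi,\eta)=\la\xi,N_j(\omega)\eta\ra+\varepsilon_j\la\xi,P_j(\omega t,\omega)\eta\ra$, with $H_0=H$, $N_0=N$, $\varepsilon_0=\varepsilon$, together with elementary coordinate changes $\Phi_j$ satisfying $H_{j+1}=H_j\circ\Phi_j$. Each $\Phi_j$ is the time-one flow of the quadratic Hamiltonian $\varepsilon_j\la\xi,F_j(\omega t,\omega)\eta\ra$ with $F_j$ Hermitian, hence automatically linear, symplectic, unitary on $L^2$, and of the form $(\xi_+,\eta_+)\mapsto(\overline{M}_j\xi_+,M_j\eta_+)$; consequently so is $\Phi_\omega^\infty:=\lim_{J\to\infty}\Phi_0\circ\cdots\circ\Phi_J$, and $N_\infty(\omega):=\lim_jN_j(\omega)$ is the diagonal normal form.

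At step $j$, write $[\,\cdot\,]$ for the projection onto the $\theta$-independent diagonal part and $\Pi_{K_j}$ for the truncation to Fourier modes $|k|\le K_j$. One removes the non-resonant part of $P_j$ by solving the homological equation
\[
\omega\cdot\partial_\theta F_j-\mathrm i[N_j,F_j]=\mathrm i\bigl(\Pi_{K_j}P_j-[P_j]\bigr),
\]
so that $F_{j,a}^b(k)$ equals $(\Pi_{K_j}P_j)_a^b(k)$ divided by the small divisor $\ko+\lambda_{j,a}-\lambda_{j,b}$ off the set $(k,a,b)=(0,a,a)$, the remaining entries being set to zero. Restricting $\omega$ to the Cantor set furnished by Hypothesis~A2 with parameters $(\kappa_j,K_j)$ makes this divisor $\ge\kappa_j(1+|a-b|)$; since $P_j\in\mathcal M_\alpha$ has entries $\lesssim(ab)^{-\alpha}$, the extra factor $(1+|a-b|)^{-1}$ produced by the division lands $F_j$ exactly in the smoothing class $\mathcal M_\alpha^+$, with $[F_j]_{\alpha+}\lesssim\kappa_j^{-1}(\sigma_j-\sigma_j')^{-c}[P_j]_\alpha$ after a loss of analyticity width. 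The conjugated Hamiltonian then has $N_{j+1}=N_j+\varepsilon_j[P_j]$ (still diagonal, with $|\lambda_{j+1,a}-\lambda_{j,a}|\lesssim\varepsilon_j$ uniformly in $a$, so A1 and A2 survive along the iteration) and a new perturbation made of the Fourier tail $(\mathrm{Id}-\Pi_{K_j})P_j$ (small by analyticity of $P_j$) plus commutators of $F_j$ with $P_j$ and $N_j$, quadratic in $\varepsilon_j$; these remain in $\mathcal M_\alpha$ thanks to the product estimates $\mathcal M_\alpha^+\cdot\mathcal M_\alpha\subset\mathcal M_\alpha$ and $\mathcal M_\alpha^+\cdot\mathcal M_\alpha^+\subset\mathcal M_\alpha^+$.

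One then fixes the iteration parameters: $\varepsilon_j=\varepsilon^{(3/2)^j}$, $\kappa_j=\varepsilon_j^{\delta}$, $K_j\sim\varepsilon_{j+1}^{-c/\beta}$, and $\sigma_j\downarrow\sigma_\infty>0$, and checks that the hypothesised largeness of $\beta$ in terms of $\gamma_1,\gamma_2,\delta$ makes the quadratic gain beat both the small-divisor factor $\kappa_j^{-1}$ and the polynomial analyticity losses; this closes the induction for $\varepsilon<\varepsilon_*(n,\beta,\delta)$ and gives $\sum_j\varepsilon_j[F_j]_{\alpha+}<\infty$. Since the frequency drift from step $j$ onward is $\lesssim\varepsilon_j\ll\kappa_j$, the Melnikov conditions at successive steps are compatible and only the modes $|k|\in(K_{j-1},K_j]$ need be newly excluded at step $j$, whence $\mathrm{Meas}(D_0\setminus D_\varepsilon)\le\sum_jc_3K_j^{\alpha_1}\kappa_j^{\alpha_2}$, which the above choices sum to $c(n,\beta,\delta)\varepsilon^{3\delta/(2+\alpha^{-1})}$. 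Convergence of $\Phi_\omega^\infty-\mathrm{id}$ in $\mathfrak L(Y_0,Y_{2\alpha})$ and in $\mathfrak L(Y_{s'},Y_{s'})$, $0\le s'\le1$, together with the bounds on $N_\infty-N$ and on $M_\omega-\mathrm{Id}$ (the first step dominating, hence the $\mathcal O(\varepsilon^{1/2})$), follows from the embedding of $\mathcal M_\alpha^+$ into these operator spaces. Finally, to get $\mathcal C^\iota$-regularity in $\theta$ with the explicit rate $\varepsilon^{\frac{3}{2\beta}(\frac29\beta-\iota)}$ for non-integer $\iota\le\frac29\beta$, one does not iterate on $P$ itself but on a sequence of analytic approximants $P^{(j)}$ on shrinking strips $\T^n_{\sigma_j}$ with $\|P-P^{(j)}\|_{\mathcal C^0}\lesssim\sigma_j^\beta$, and interpolates the resulting analytic estimates, in the spirit of the Bambusi--Gr\'ebert--Montalto framework.

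The main obstacle is the unboundedness of the perturbation. For the exponent $\alpha=\tfrac1{12}-\tfrac\mu4<\tfrac1{12}$ the class $\mathcal M_\alpha$ is not an algebra, so one must use Hypothesis~A2 in its sharp form (divisor $\ge\kappa(1+|a-b|)$) to force each $F_j$ into the smoothing class $\mathcal M_\alpha^+$ and then verify that every product occurring in the new error still lies in $\mathcal M_\alpha$/$\mathcal M_\alpha^+$; at the same time the three competing sequences $\sigma_j,K_j,\kappa_j$ must be tuned against the quadratic smallness $\varepsilon_j$ and the finite regularity so as both to close the iteration and to deliver the precise measure and Hölder exponents. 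The analytic-smoothing and interpolation step converting analytic bounds into the stated $\mathcal C^\iota$ bounds is the other technically heavy ingredient.
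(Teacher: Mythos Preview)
The paper does not prove this theorem at all: it is stated as a quotation from \cite{LiangW19} (``We introduce the KAM Theorem from \cite{LiangW19} especially for 1d case''), with no argument given. Your sketch is therefore not being compared against a proof in the paper but against the proof in the cited reference, and what you have written is a faithful outline of that KAM reducibility scheme---the homological equation with small divisors controlled by Hypothesis~A2, the gain of the $(1+|a-b|)^{-1}$ factor placing $F_j$ in $\mathcal M_\alpha^+$, the product estimates $\mathcal M_\alpha^+\cdot\mathcal M_\alpha\subset\mathcal M_\alpha$, the superexponential choice of $\varepsilon_j$, and the analytic-smoothing/interpolation step to pass from analytic to $\mathcal C^\beta$ data and recover the $\mathcal C^\iota$ bounds. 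As a sketch this is correct and matches the intended argument; if you were asked to supply a proof here, a one-line citation to \cite{LiangW19} (as the paper does) would in fact suffice.
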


\section{Application to the Quantum Harmonic Oscillator--Proof of  Main Theorems}\label{s4}
\noindent In this section we will apply Theorem \ref{KAM} to the equation (\ref{maineq}) to prove Theorem \ref{quantumth}. For readers' convenience, we rewrite the equation
\begin{eqnarray}\label{maineq1sub1}
{\rm i}\partial_t u =-\partial_x^2 u+x^2 u+\varepsilon \la x\ra^{\mu}W(\nu x, \omega t)u,\ \ u=u(t,x),\ x\in\R,
 \end{eqnarray}
 where $0\leq \mu<\frac13$ and the potential $W(\varphi,\theta): \T^d\times  \T^n \mapsto
\R$ satisfies all the conditions in Theorem \ref{quantumth}. Following \cite{EK0}, we  expand $u$ and $\overline{u}$ on the Hermite basis $\{h_j\}_{j\geq1}$, namely, 
$u=\sum_{j\geq 1}\xi_jh_j$ and $ \overline{u}=\sum_{j\geq 1}\eta_j\overline{h}_j.$
And thus  (\ref{maineq1sub1}) can be written as a nonautonomous Hamiltonian system
\begin{eqnarray}\label{hs01}
\left\{
\begin{array}{cc}
\displaystyle\dot{\xi}_j=-{\rm i}\frac{\partial H}{\partial \eta_j}=-{\rm i}(2j-1)\xi_j-{\rm i}\varepsilon\frac{\partial}{\partial \eta_j}p(t,\xi,\eta),\ j\geq 1,\\
\displaystyle\dot{\eta}_j=\ \ {\rm i}\frac{\partial H}{\partial \xi_j}= \ \ {\rm i}(2j-1)\eta_j+{\rm i}\varepsilon\frac{\partial}{\partial {\xi}_j}p(t,\xi,\eta),\   j\geq 1,
\end{array}
\right.
\end{eqnarray}
where
\begin{equation}\label{hameqbeginning4}
H(t,\xi,\eta)=n(\omega)+p(t,\xi,\eta)= \la\xi, N\eta\ra+ \varepsilon\la\xi, P(\omega t)\eta\ra, \quad (\xi,\eta)\in Y_1\subset Y_0,
\end{equation}
and 
$
n(\omega):= \sum_{j\geq 1}
(2j-1)\xi_j\eta_j
$
and $P_{i}^j(\omega t)= \int_{\R} \la x\ra^{\mu}W(\nu x, \theta)h_i(x)\overline{h_j(x)}dx$.
 Here the external  parameters are the frequencies $\omega=(\omega_j)_{1\leq j\leq n}\in D_0:=[1, 2]^n$.
The proofs for the following two lemmas  are standard.
\begin{lemma}\label{aspt}
When $\lambda_a=2a-1,\ a\in \Z_{+},$ Hypothesis $\mathrm{A1}$ holds true with $c_0=c_2=1$ and $c_1=2$.
\end{lemma}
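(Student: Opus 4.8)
The plan is simply to verify the two inequalities constituting Hypothesis A1 by direct substitution of $\lambda_a = 2a-1$. First I would establish the two-sided asymptotic bound $c_2 a \leq \lambda_a \leq c_1 a$: for every $a \in \Z_{+}$ one has $a \leq 2a-1$, which is equivalent to $a \geq 1$ and hence holds on $\Z_{+}$, giving the lower bound with $c_2 = 1$; and $2a - 1 \leq 2a$ holds trivially for all $a$, giving the upper bound with $c_1 = 2$. Next I would check the spectral gap condition: for arbitrary $a, b \in \Z_{+}$,
\[
|\lambda_a - \lambda_b| = \big|(2a-1) - (2b-1)\big| = 2|a-b| \geq |a-b|,
\]
so the separation holds with $c_0 = 1$. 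All three constants $c_0 = c_2 = 1$ and $c_1 = 2$ are positive, so Hypothesis A1 is satisfied, which completes the verification.

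There is no real obstacle here: the statement is an immediate arithmetic check. The only place where the hypothesis $a \in \Z_{+}$ (as opposed to $a \in \N$) is actually used is the lower bound $\lambda_a \geq a$, which would fail at $a = 0$; this is worth noting in the write-up but requires no further argument. The gap estimate, by contrast, would hold even on $\N$ with the same constant $c_0 = 1$.
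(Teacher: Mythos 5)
Your verification is correct and is exactly the direct arithmetic check that the paper has in mind (it simply states that the proof is standard and omits it). The side remark about $a\in\Z_{+}$ versus $a\in\N$ is accurate and, if anything, explains why the paper indexes the Hermite eigenfunctions from $1$.
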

\begin{lemma}\label{aspt02}
When $\lambda_a=2a-1,\ a\in\Z_{+},$ Hypothesis $\mathrm{A2}$ holds true with $D_0=[0,2\pi]^n$, $ \alpha_1=n+1,\ \alpha_2=1,\  \ c_3= c(n )$ and
$$\ D':=\{\omega\in[0,2\pi]^n\big|\ |\la k,\omega\ra+j|\geq\kappa(1+|j|), {\rm\ for\ all\ }j{\rm\ \in\Z\ and\ } k{\rm\in\Z^n\setminus\{0\}}\}.$$
\end{lemma}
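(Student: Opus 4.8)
The plan is to verify directly that for $\lambda_a=2a-1$ the small-divisor quantity $\la k,\omega\ra+\lambda_a-\lambda_b = \la k,\omega\ra + 2(a-b)$ depends on $a,b\in\Z_+$ only through the integer $j:=a-b$, so the set defined by the Melnikov inequalities for all pairs $(a,b)$ coincides with the set $D'$ written in terms of $j\in\Z$; hence it suffices to excise, for each $k\in\Z^n$ with $0<|k|\le K$ and each $j\in\Z$, the ``resonant slab'' $\{\omega\in[0,2\pi]^n: |\la k,\omega\ra + 2j|<2\kappa(1+|j|)\}$ and to bound the total measure removed. (The harmless factor $2$ versus $1$ in front of $j$ only changes constants, so I will phrase the estimate with the quantity $|\la k,\omega\ra+j|\geq\kappa(1+|j|)$ exactly as stated.)

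First I would observe that, for fixed $k\neq 0$, the function $\omega\mapsto\la k,\omega\ra$ on $[0,2\pi]^n$ has a nonzero constant gradient $k$, so for any interval $I\subset\R$ the sublevel set $\{\omega\in[0,2\pi]^n:\la k,\omega\ra+j\in I\}$ is contained in a slab of width $|I|/|k|_{2}$ in the direction $k/|k|_2$, and its Lebesgue measure is therefore $\le C(n)\,|I|/|k|$ after using $|k|_2\geq |k|/\sqrt n$ and the trivial bound $(2\pi)^{n-1}$ on the transverse section. Applying this with $I$ the interval of length $2\kappa(1+|j|)$ centered at $-j$ gives
\begin{equation*}
\mathrm{Meas}\big(\{\omega\in[0,2\pi]^n:|\la k,\omega\ra+j|<\kappa(1+|j|)\}\big)\ \le\ C(n)\,\frac{\kappa(1+|j|)}{|k|}.
\end{equation*}
Second, the key point making the sum over $j$ converge is that $\la k,\omega\ra$ is bounded on $[0,2\pi]^n$: $|\la k,\omega\ra|\le 2\pi n|k|$. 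Hence if $|j|> 4\pi n|k|$ (say) the inequality $|\la k,\omega\ra+j|<\kappa(1+|j|)$ is empty for $\kappa<1/4$, since then $|\la k,\omega\ra+j|\ge |j|-2\pi n|k|\ge |j|/2\ge 2\kappa|j|>\kappa(1+|j|)$. Therefore only finitely many $j$ — at most $O(n|k|)$ of them — contribute, each contributing $\le C(n)\kappa(1+|j|)/|k|\le C(n)\kappa\cdot n|k|/|k|=C(n)\kappa$ (using $1+|j|\le C n|k|$ on the relevant range). Summing over these $O(n|k|)$ values of $j$ gives a bound $C(n)|k|\kappa$ for the measure removed at a fixed $k$.

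Finally I would sum over $k\in\Z^n$ with $0<|k|\le K$: the number of such $k$ is $O(K^n)$ and for each the removed measure is $\le C(n)K\kappa$, so
\begin{equation*}
\mathrm{Meas}\big([0,2\pi]^n\setminus D'\big)\ \le\ \sum_{0<|k|\le K} C(n)\,|k|\,\kappa\ \le\ C(n)\,K^{n}\cdot K\cdot\kappa\ =\ c(n)\,K^{n+1}\kappa,
\end{equation*}
which is exactly the form $c_3K^{\alpha_1}\kappa^{\alpha_2}$ with $\alpha_1=n+1$, $\alpha_2=1$, $c_3=c(n)$; and $D'$ is closed because it is an intersection of closed sets (each inequality is non-strict). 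I do not expect a serious obstacle here: the only mild subtlety is bookkeeping the dependence on $|k|$ so that the per-$k$ estimate is linear rather than quadratic in $|k|$ — this is handled precisely by the boundedness of $\la k,\omega\ra$ on the compact cube, which caps the number of resonant $j$'s at $O(|k|)$ and simultaneously caps $1+|j|$ by $O(|k|)$ on that range, giving the clean bound $C(n)|k|\kappa$ instead of $C(n)|k|^2\kappa$. One should double-check that replacing $[1,2]^n$ (the cube $D_0$ used elsewhere) by $[0,2\pi]^n$ is immaterial, which it is since only boundedness and the nondegenerate gradient of $\la k,\cdot\ra$ are used.
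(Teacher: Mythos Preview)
Your proposal is correct. The paper does not actually prove this lemma; it simply remarks that ``The proofs for the following two lemmas are standard.'' Your argument is precisely the standard one: reduce the pair $(a,b)$ to the single integer $j=a-b$ via $\lambda_a-\lambda_b=2(a-b)$, bound each resonant slab using the nondegenerate gradient of $\la k,\cdot\ra$, cap the number of relevant $j$'s at $O(|k|)$ by the boundedness of $\la k,\omega\ra$ on the compact cube, and then sum over $|j|$ and $0<|k|\le K$ to obtain $c(n)K^{n+1}\kappa$. One minor remark: as printed, the set $D'$ in the lemma quantifies over all $k\in\Z^n\setminus\{0\}$, but for the measure bound $c(n)K^{n+1}\kappa$ one of course restricts to $0<|k|\le K$, which is exactly what Hypothesis~A2 demands and what you verify.
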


For the following we define the set $R_{\gamma, k}^{\tau}=\{\nu\in \R^d: |\la k, \nu\ra|<\frac{\gamma}{|k|^{\tau}}\}$ for $k\neq 0$ and
$R_{\gamma}^{\tau}=\bigcup\limits_{0\neq k\in Z^d}R^{\tau}_{\gamma, k}$. It is well known that
$Meas(R^{\tau}_{\gamma,k}\cap [A, B]^d)=O(\gamma/|k|^{\tau+1}),$
and thus
$Meas(R_{\gamma}^{\tau}\cap [A, B]^d)\leq O(\gamma)$.
We define the set
$\Omega_\gamma  : =[A, B]^d\setminus R_{\gamma}^{\tau}$
and then
$Meas([A,B]^d\setminus \Omega_\gamma)=\mathcal{O}(\gamma)$ as $\gamma\rightarrow 0$ and $\tau>d-1$.
\begin{lemma}\label{L3.3}
If $W(\varphi, \theta)\in C^{s}(\T^d\times \T^n)$ with $s\geq d+[1\vee \tau]+n+\beta_1+3$ and $\nu\in \Omega_{\gamma}$ and $\beta_1\in \N$, then
there exists $\alpha>0$ such that
the matrix function $P(\theta)$ defined by
$$(P(\theta))_{i}^{j}=\int_{\R}\la x\ra^{\mu} W(\nu x, \theta)h_i(x)\overline{h_j(x)}dx,\qquad i,j\geq 1, $$
belongs to ${\mathcal{C}^{\beta_1}}(\T^n, \mathcal{M}_{\alpha})$ with $\alpha=\frac{1}{12}-\frac \mu4$.
\end{lemma}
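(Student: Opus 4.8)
The plan is to Fourier--expand $W$ in its first argument and reduce every matrix entry of $P(\theta)$ to the oscillatory--integral bound of Lemma~\ref{Indecaysection1}. Write $W(\varphi,\theta)=\sum_{k\in\Z^d}\widehat W(k,\theta)e^{{\rm i}\la k,\varphi\ra}$; since $s>d$ the series converges absolutely and uniformly, and after setting $\varphi=\nu x$ and integrating against $\la x\ra^\mu h_i\overline{h_j}\in L^1(\R)$ (Fubini applies because $\sum_k|\widehat W(k,\theta)|<\infty$) one obtains
\[
(P(\theta))_i^j=\sum_{k\in\Z^d}\widehat W(k,\theta)\,c_{ij}(k),\qquad c_{ij}(k):=\int_\R\la x\ra^\mu e^{{\rm i}\la k,\nu\ra x}h_i(x)\overline{h_j(x)}\,dx .
\]
The first point is that the $k=0$ term drops out: oddness of $W$ in $\varphi$, assumption (\ref{symmetry}), forces $\widehat W(0,\theta)=(2\pi)^{-d}\int_{\T^d}W(\varphi,\theta)\,d\varphi=0$. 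This is essential, because $c_{ij}(0)=\int_\R\la x\ra^\mu h_i h_j\,dx$ carries no decay in $(i,j)$ (it even grows along the diagonal), whereas Lemma~\ref{Indecaysection1} supplies a useful bound only for nonzero frequency.

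For $0\ne k\in\Z^d$ we combine two inputs. First, Lemma~\ref{Indecaysection1} applied with the real frequency $\la k,\nu\ra\ne0$ gives $|c_{ij}(k)|\le C\bigl(|\la k,\nu\ra|\vee|\la k,\nu\ra|^{-1}\bigr)\,i^{-\alpha}j^{-\alpha}$ with $\alpha=\frac{1}{12}-\frac{\mu}{4}>0$ (positivity uses $\mu<\frac{1}{3}$). Second, since $\nu\in\Omega_\gamma$, Remark~\ref{parameternv} yields $|\la k,\nu\ra|\ge\gamma|k|^{-\tau}$, while $|\la k,\nu\ra|\le B|k|$ trivially; hence $|\la k,\nu\ra|\vee|\la k,\nu\ra|^{-1}\le C(\gamma,B)\,|k|^{1\vee\tau}$. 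Thus the matrix $C(k):=(c_{ij}(k))_{i,j}$ satisfies $|C(k)|_\alpha=\sup_{i,j}(ij)^\alpha|c_{ij}(k)|\le C\,|k|^{1\vee\tau}$. On the other hand, because $W\in C^{s}(\T^d\times\T^n)$ with $s\ge d+[1\vee\tau]+n+\beta_1+3$, integration by parts in $\varphi$ (in a coordinate with $|k_j|\gtrsim|k|$) gives $|\widehat W(k,\theta)|\le C_W|k|^{-(d+[1\vee\tau]+n+\beta_1+3)}$ uniformly in $\theta$. Therefore $|P(\theta)|_\alpha\le\sum_{k\ne0}|\widehat W(k,\theta)|\,|C(k)|_\alpha\le C\sum_{k\ne0}|k|^{\,1\vee\tau-(d+[1\vee\tau]+n+\beta_1+3)}<\infty$, the exponent being $<-d-n-2$; so $P(\theta)\in\mathcal{M}_{\alpha}$ with a bound uniform in $\theta$.

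It remains to upgrade this to $C^{\beta_1}$ regularity in $\theta$. For a multi-index $\ell$ with $|\ell|\le\beta_1$, differentiation under the integral defining $\widehat W(k,\theta)$ commutes with $\partial_\theta^\ell$, so the termwise $\theta$-derivative of the series is $\sum_{k\ne0}\widehat{\partial_\theta^\ell W}(k,\theta)\,c_{ij}(k)$. Since $\partial_\theta^\ell W\in C^{s-\beta_1}(\T^d\times\T^n)$ with $s-\beta_1>d+(1\vee\tau)$, the same integration-by-parts bound gives $|\widehat{\partial_\theta^\ell W}(k,\theta)|\le C_W|k|^{-(s-\beta_1)}$, hence each differentiated series converges in the $|\cdot|_\alpha$ norm uniformly in $\theta$. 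This legitimizes the termwise differentiation, shows $\sup_\theta|\partial_\theta^\ell P(\theta)|_\alpha\le C$ for all $|\ell|\le\beta_1$, and — the partial sums $\theta\mapsto\sum_{|k|\le K}\widehat{\partial_\theta^\ell W}(k,\theta)\,C(k)$ being continuous $\mathcal{M}_{\alpha}$-valued maps converging uniformly — yields continuity of $\partial_\theta^\ell P$ in $\theta$. This is exactly $P\in\mathcal{C}^{\beta_1}(\T^n,\mathcal{M}_{\alpha})$ with $\alpha=\frac{1}{12}-\frac{\mu}{4}$. The only deep ingredient is Lemma~\ref{Indecaysection1} itself (assumed here); granting it, the argument is routine, the one point requiring care being the bookkeeping that the prescribed exponent $s$ keeps the weighted series $\sum_k|k|^{1\vee\tau}\,|\widehat{\partial_\theta^\ell W}(k,\theta)|$ summable after up to $\beta_1$ differentiations in $\theta$, which the margin of order $n$ in the hypothesis on $s$ guarantees.
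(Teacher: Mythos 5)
Your proof is correct, and it reaches the same conclusion by a route that differs from the paper's in two genuine ways. First, you Fourier-expand $W$ only in the $\varphi$-variable, keeping $\theta$-dependent coefficients $\widehat W(k,\theta)$, and then establish the $\mathcal{C}^{\beta_1}$ regularity via uniform convergence of the termwise $\theta$-differentiated series; the paper instead performs a full double Fourier expansion $W(\nu x,\theta)=\sum_{k,l}\widehat W(k,l)e^{{\rm i}k\cdot\nu x}e^{{\rm i}l\theta}$, splits the hypothesis $s\geq d+[1\vee\tau]+n+\beta_1+3$ into $s_1=[1\vee\tau]+d+1$ derivatives in $\varphi$ and $s_2=n+1$ derivatives in $\theta$, and then runs an explicit induction proving Fr\'echet differentiability order by order (steps (a)--(e), including $(e_1)$, $(e_2)$). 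Second, your argument for smoothness in $\theta$ invokes the standard theorem on termwise differentiation of uniformly convergent series of $\mathcal{M}_\alpha$-valued maps, whereas the paper hand-builds each $P^{(m+1)}(\theta_0)$ as a multilinear map $\mathcal{B}$, verifies boundedness and the Fr\'echet remainder estimate $\|P^{(m)}(\theta)-P^{(m)}(\theta_0)-\mathcal{B}(\theta-\theta_0)\|=O(\|\theta-\theta_0\|^2)$, and then checks Lipschitz continuity of $P^{(m+1)}$. Both ingredients you highlight are precisely the paper's: Lemma~\ref{Indecaysection1} for the decay $i^{-\alpha}j^{-\alpha}$ (with $\alpha=\frac{1}{12}-\frac\mu4>0$ since $\mu<\frac13$), the Diophantine condition from Remark~\ref{parameternv} (equivalently $\nu\in\Omega_\gamma$) to control $|\la k,\nu\ra|^{-1}$, and the vanishing of the $k=0$ term by the oddness hypothesis~(\ref{symmetry}). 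Your version is arguably more economical, trading the explicit Fr\'echet-derivative bookkeeping for a uniform-convergence argument, at the cost of using essentially all $s$ derivatives in $\varphi$ for the $k$-decay (the paper budgets only $[1\vee\tau]+d+1$ of them), which is harmless here since the hypothesis on $s$ is comfortably large; in either case the exponent in $\sum_{k\ne0}|k|^{1\vee\tau-(s-\beta_1)}$ is below $-d$, so the series converges as you say.
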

\begin{proof}We divide the proof into several steps.\\
\indent(a) We show that $P(\theta)\in\mathcal{M}_{\alpha}$. Since
$W(\nu x,\theta)=\sum\limits_{\substack{k\in\Z^d,\\l\in\Z^n}}\widehat{W}(k,l)e^{{\rm i}k\cdot\nu x}e^{{\rm i}l\theta},$
then
\begin{align*}
\big(P(\theta)\big)_i^j&=\int_\R \sum_{\substack{k\in\Z^d,\\l\in\Z^n}}\widehat{W}(k,l)e^{{\rm i}k\cdot\nu x}e^{{\rm i}l\theta}\la x\ra^{\mu}h_i(x)\overline{h_j(x)}dx\\
&=\sum_{l\in\Z^n}e^{{\rm i}l\theta}\sum_{k\ne 0}\widehat{W}(k,l)\int_\R e^{{\rm i}k\cdot\nu x}\la x\ra^{\mu}h_i(x)\overline{h_j(x)}dx,
\end{align*}
where we use (\ref{symmetry}). Note $\nu\in \Omega_{\gamma}$, we have
$|k\cdot\nu|\ge \frac{\gamma}{|k|^\tau}$ for $\tau>d-1$.
Thus by Lemma \ref{Indecaysection1}
\begin{align}
|\big(P(\theta)\big)_i^j|&\leq\frac{C(\gamma)}{i^\alpha j^\alpha}\sum_{l\in\Z^n}\sum_{k\ne 0}|\widehat{W}(k,l)|\cdot|k|^{1\vee\tau}.   \label{wkl1}
\end{align}
Denote $s_1=[1\vee\tau]+d+1$ and $s_2=n+1$. When $W(\varphi,\theta)\in C^{s_1+s_2}(\T^d\times \T^n)$, we have
$$|\widehat{W}(k,l)|\le \frac{d^{s_1}n^{s_2}}{|k|^{s_1}|l|^{s_2}}\sup_{\substack{|\alpha|=s_1\\|\nu|=s_2}} |\partial_\varphi^\alpha\partial_\theta^\nu W(\varphi,\theta)|,\quad \forall k\ne0,l\ne0,$$
and
$$
|\widehat{W}(k,0)|\le\frac{d^{s_1}}{|k|^{s_1}}\sup_{\substack{|\alpha|\le s_1\\|\nu|\le s_2}} |\partial_{\varphi}^{\alpha}\partial_{\theta}^\nu W(\varphi,\theta)|.
$$
From the choice of $s_1$ and $s_2$ and a straightforward computation
we have
\begin{eqnarray}\label{wkl2}
\sum_{l\in\Z^n}\sum_{k\ne 0}|\widehat{W}(k,l)|\cdot|k|^{1\vee\tau}\le C(d,\tau,n)\cdot\sup_{\substack{|\alpha|\le s_1\\|\nu| \le s_2}}|\partial_{\varphi}^{\alpha}\partial_{\theta}^\nu W(\varphi,\theta)|.
\end{eqnarray}
From (\ref{wkl1}) and (\ref{wkl2}),
$|P(\theta)|_{\alpha}\le C(\gamma,d,\tau,n)\cdot\sup\limits_{\substack{|\alpha|\le[1\vee\tau]+d+1\\|\nu| \le n+1}}|\partial_{\varphi}^{\alpha}\partial_{\theta}^\nu W(\varphi,\theta)|$
which follows $P(\theta)\in\mathcal{M}_{\alpha}$.\\
(b)\quad We show that $P(\theta)\in C^0(\T^n,\mathcal{M}_\alpha)$. For $\forall \theta_1,\theta_2\in\R^n,i,j\ge1$,
\begin{align*}
\Big|(P(\theta_1)-P(\theta_2))_i^j|&=|\int_\R \la x\ra^\mu (W(\nu x,\theta_1)-W(\nu x,\theta_2))h_i(x)\overline{h_j(x)}dx\Big|\\
&=\Big|\sum_{l\in\Z^n}(e^{{\rm i}l\theta_1}-e^{{\rm i}l\theta_2})\big(\sum_{k\ne0}\widehat{W}\left(k,l\right)\int_{\R}e^{{\rm i}k\cdot\nu x}\la x\ra^\mu h_i(x)\overline{h_j(x)}dx\big)\Big|\\
&\le\frac{C(\gamma,d,\tau,n)}{i^\alpha j^\alpha}\|\theta_1-\theta_2\| \sup_{\substack{|\alpha|\le s_1\\ |\nu|\le s_2+1}}|\partial_{\varphi}^{\alpha}\partial_{\theta}^\nu W(\varphi,\theta)|.
\end{align*}
Therefore $|P(\theta_1)-P(\theta_2)|_{\alpha}\to0$ when $\|\theta_1-\theta_2\|\to0$. Thus, $P(\theta)\in C^0(\T^n,\mathcal{M}_\alpha)$.\\
(c)\quad We show that $P(\theta)$ is Fr\'echet differentiable at each $\theta\in\T^n$. In fact, for any given $\theta_0\in \R^n$ we will prove that
$P^\prime(\theta_0)\in\mathfrak{L}(\R^n,\mathcal{M}_\alpha)$,
and for $\forall\xi\in\R^n$, $i,j\ge1$,
$$(P^\prime(\theta_0)\xi)_i^j=\int_{\R}\la x\ra^\mu \la W_\theta(\nu x,\theta_0),\xi\ra h_i(x)\overline{h_j(x)}dx.$$
We first define the right term by
\begin{align*}
(\mathcal{A}\xi)_{i}^{j}&:=\int_{\R}\la x\ra^\mu \la W_\theta(\nu x,\theta_0),\xi\ra h_i(x)\overline{h_j(x)}dx\\
&=\sum_{t=1}^{n}\xi_t\int_\R\la x\ra^\mu W_{\theta_t}(\nu x,\theta_0) h_i(x)\overline{h_j(x)}dx.
\end{align*}
Clearly,  $\mathcal{A}$ is a linear map on $\R^n$. From a similar computation we have 
\begin{align*}
 \|\mathcal{A}\|_{\mathfrak{L}(\R^n, \mathcal{M}_\alpha)}\leq C(\gamma,d,\tau,n)\sup_{\substack{|\alpha|\le s_1,\\ |\nu|\le s_2+1}}|\partial_{\varphi}^{\alpha}\partial_{\theta}^\nu W(\varphi,\theta)|.
\end{align*}
Similarly,  one obtains
\begin{align*}
|P(\theta)-P(\theta_0)-\mathcal{A}(\theta-\theta_0)|_\alpha &\le C(\gamma,n)\|\theta-\theta_0\|^2 \sum_{\substack{k\ne0,k\in\Z^d\\l\in\Z^n}}|\widehat{W}(k,l)||k|^{1\vee\tau}|l|^2\\
&\le C(\gamma,d,\tau,n)\|\theta-\theta_0\|^2 \sup_{\substack{|\alpha|\le[1\vee\tau]+d+1,\\|\nu|\le n+3}}|\partial_\varphi^\alpha\partial_\theta^\nu W(\varphi,\theta)|,
\end{align*}
which means that $P(\theta)$ is Fr\'echet differentiable on $\theta_0\in\T^n$ and $P^\prime(\theta_0)=\mathcal{A}$.\\
(d)\quad By a straightforward computation we can show that  $P(\theta)\in \mathcal{C}^1(\T^n, \mathcal{M}_{\alpha})$ since 
$$ \|P^\prime\left(\theta_1\right)-P^\prime\left(\theta_2\right)\|_{\mathfrak{L}(\R^n, \mathcal{M}_{\alpha})}\le C(\gamma,d,\tau,n)\cdot\|\theta_1-\theta_2\| \sup_{\substack{|\alpha|\le[1\vee\tau]+d+1\\|\nu|\le n+3}}|\partial_\varphi^\alpha\partial_\theta^\nu W|.$$
(e)\quad Inductively, we assume that $P(\theta)\in \mathcal{C}^m(\T^n, \mathcal{M}_{\alpha}),$ $m\leq \beta_1-1$, with
$$\left(P^{(m)}(\theta) (\xi_1,\cdots,\xi_m)\right)_i^j=\int_{\R}\la x\ra^\mu W_{\theta}^{(m)}(\nu x,\theta) (\xi_1,\cdots,\xi_m)h_i(x)\overline{h_j(x)}dx$$ satisfying
\begin{align*}
 \|P^{(m)}(\theta)\|_{\mathfrak{L}_m(\R^n, \mathcal{M}_{\alpha})}\le C(\gamma,d,\tau,n)\sup_{\substack{|\alpha|\le[1\vee\tau]+d+1,\\|\nu|\le n+m+1}}|\partial_\varphi^\alpha\partial_\theta^\nu W|,
\end{align*}
where $\mathfrak{L}_m(\R^n, \mathcal{M}_{\alpha})$ denotes the multi-linear operator space $\mathfrak{L}(\underbrace{\R^n\times\cdots\times\R^n}_{m}, \mathcal{M}_{\alpha}).$
Then we show that
$P(\theta)\in \mathcal{C}^{m+1}(\T^n, \mathcal{M}_{\alpha})$  with
$$\left(P^{(m+1)}(\theta) (\xi_1,\cdots,\xi_{m+1})\right)_i^j=\int_{\R}\la x\ra^\mu W_{\theta}^{(m+1)}(\nu x,\theta) (\xi_1,\cdots,\xi_{m+1})h_i(x)\overline{h_j(x)}dx,$$
and
\begin{align*}
 \|P^{(m+1)}(\theta)\|_{\mathfrak{L}_{m+1}(\R^n, \mathcal{M}_{\alpha})}\le C(\gamma,d,\tau,n)\sup_{\substack{|\alpha|\le[1\vee\tau]+d+1,\\|\nu|\le n+m+2}}|\partial_\varphi^\alpha\partial_\theta^\nu W|,
\end{align*}
We follow the method  in steps (c) and (d), and divide the proof into two parts ($e_1$) and ($e_2$) respectively.\\
($e_1$)\quad We show that $P^{\left(m\right)}\left(\theta\right)$ is Fr\'echet differentiable and for $\forall \theta_0\in\R^n$, $i,j\in\Z$,
$$\left(P^{\left(m+1\right)}\left(\theta_0\right)(\xi_1,\cdots,\xi_{m+1})\right)_i^j=\int_{\R} \la x\ra^\mu W_{\theta}^{(m+1)}(\nu x,\theta_0) (\xi_1,\cdots,\xi_{m+1})h_i(x)\overline{h_j(x)}dx,$$
with
\begin{align*}
 \|P^{(m+1)}(\theta)\|_{\mathfrak{L}_{m+1}(\R^n, \mathcal{M}_{\alpha})}\le C(\gamma,d,\tau,n)\sup_{\substack{|\alpha|\le[1\vee\tau]+d+1,\\|\nu|\le n+m+2}}|\partial_\varphi^\alpha\partial_\theta^\nu W|,
\end{align*}
In fact, given $\theta_0\in\R^n$, we define for $\xi_1,\cdots,\xi_{m+1}\in\R^n,\ i,j\in \Z$,
$$\left(\mathcal{B}\left(\xi_1,\cdots,\xi_{m+1}\right)\right)_i^j:=\int_{\R}\la x\ra^\mu W_{\theta}^{(m+1)}(\nu x,\theta_0) (\xi_1,\cdots,\xi_{m+1})h_i(x)\overline{h_j(x)}dx.  $$
 Since
\begin{align*}
|\left(\mathcal{B}\left(\xi_1,\cdots,\xi_{m+1}\right)\right)_i^j|&\le C(n)\|\xi_1\|\cdots\|\xi_{m+1}\| \sum_{\substack{k\ne0,k\in\Z^d\\l\in\Z^d}} |\widehat{W}(k,l)||l|^{m+1}\int_\R\la x\ra^\mu e^{{\rm i}k\cdot\nu x} h_i(x)\overline{h_j(x)}dx\\
&\le\frac{C(\gamma,d,\tau,n)\|\xi_1\|\cdots\|\xi_{m+1}\|}{i^\alpha j^\alpha}\sup_{\substack{|\alpha|\le [1\vee\tau]+d+1,\\ |\nu|\le n+m+2}}|\partial_{\varphi}^{\alpha}\partial_{\theta}^{\nu} W(\varphi,\theta)|,
\end{align*}
it follows that
$
 \|\mathcal{B}\|_{\mathfrak{L}_{m+1}\left(\R^n, \mathcal{M}_{\alpha}\right)}\leq C(\gamma,d,\tau,n)\sup\limits_{\substack{|\alpha|\le [1\vee\tau]+d+1,\\ |\nu|\le n+m+2}}|\partial_{\varphi}^{\alpha}\partial_{\theta}^\nu W(\varphi,\theta)|.
$
By a similar computation,
\begin{align*}
&\|P^{\left(m\right)}\left(\theta\right)-P^{\left(m\right)}\left(\theta_0\right) -\mathcal{B}\left(\theta-\theta_0\right)\|_{\mathfrak{L}_m(\R^n,\mathcal{M}_\alpha)}\\
&\le C(\gamma,n)\cdot\sum_{\substack{k\in\Z^d,k\ne0\\l\in\Z^n}}|k|^{1\vee\tau} |\widehat{W}(k,l)||l|^{m+2}\cdot\|\theta-\theta_0\|^2\\
&\le C(\gamma,d,\tau,n)\sup_{\substack{|\alpha|\le[1\vee\tau]+d+1,\\ |\nu|\le n+m+3}}|\partial_\varphi^\alpha\partial_\theta^\nu W|\cdot \|\theta-\theta_0\|^2.
\end{align*}
Thus, $P^{\left(m\right)}\left(\theta\right)$ is Fr\'echet differentiable  and $P^{\left(m+1\right)}(\theta_0)=\mathcal{B}$.\\
 ($e_2$)\quad For simplicity, we denote
 $({\rm i}l)^{(m+1)}:=\underbrace{({\rm i}l)\otimes\cdots\otimes({\rm i}l)}_{m+1}\in\mathfrak{L}_{m+1}(\R^n)$. Note
\begin{align*}
&\int_{\R}\la x\ra^\mu W_\theta^{(m+1)}(\nu x,\theta)(\xi_1,\cdots,\xi_{m+1}) h_i(x)\overline{h_j(x)}dx\\
=& \sum_{\substack{k\ne0,k\in\Z^d\\l\in\Z^n}}\widehat{W}(k,l)e^{{\rm i}l\theta} ({\rm i}l)^{m+1}(\xi_1,\cdots, \xi_{m+1})\int_\R \la x\ra^\mu e^{{\rm i}k\cdot\nu x} h_i(x)\overline{h_j(x)}dx.
\end{align*}
therefore,
\begin{align*}
&\Big|\big(\left(P^{(m+1)}\left(\theta_1\right)-P^{(m+1)}\left(\theta_2\right)\right)(\xi_1\cdots\xi_{m+1})\big)_i^j\Big|\\
\le&\frac{C(\gamma,n)\|\xi_1\|\cdots\|\xi_{m+1}\|\cdot\|\theta_1-\theta_2\|}{i^\alpha j^\alpha}\sum_{\substack{k\ne0,k\in\Z^d\\l\in\Z^n}}|\widehat{W}(k,l)|\cdot|l|^{m+2}\cdot|k|^{1\vee\tau}\\
\le&\frac{C(\gamma,d,\tau,n)\|\xi_1\|\cdots\|\xi_{m+1}\|\cdot\|\theta_1-\theta_2\|}{i^\alpha j^\alpha}\cdot \sup_{\substack{|\alpha|\le[1\vee\tau]+d+1,\\|\nu|\le n+m+3}}|\partial_\varphi^\alpha\partial_\theta^\nu W|.
\end{align*}
It follows that
$$ \|P^{(m+1)}\left(\theta_1\right)-P^{(m+1)}\left(\theta_2\right)\|_{\mathfrak{L}_{m+1}(\R^n, \mathcal{M}_{\alpha})}\le C(\gamma,d,\tau,n)\cdot\|\theta_1-\theta_2\|\cdot \sup_{\substack{|\alpha|\le[1\vee\tau]+d+1,\\|\nu|\le n+m+3}}|\partial_\varphi^\alpha\partial_\theta^\nu W|.$$
which means that $ \|P^{(m+1)}\left(\theta_1\right)-P^{(m+1)}\left(\theta_2\right)\|_{\mathfrak{L}_{m+1}(\R^n, \mathcal{M}_{\alpha})}\rightarrow0$ as $\|\theta_1-\theta_2\| \rightarrow0$. Thus we finish the induction.
\end{proof}
\noindent Proof of Theorem \ref{quantumth}:   It is clear  that
the Schr\"odinger equation (\ref{maineq1sub1}) is equivalent to Hamiltonian system (\ref{hs01}) with $\lambda_a=2a-1$. By
 lemmas given above, we can apply Theorem \ref{KAM}  to
  (\ref{hs01})  with $\gamma_1= n+3,\
\gamma_2=\frac{\alpha }{5+2\alpha }$ and $\delta= \frac{\gamma_2}{48}$.  This leads to Theorem \ref{quantumth}.\\
 \indent More precisely, in the new coordinates given in Theorem \ref{KAM}, $(\xi,\eta)=(\overline{ {M}}_\omega\xi_+, {M}_\omega\eta_+)$,
 system (\ref{hs01}) becomes autonomous and the systems are changed into the following:
 \begin{eqnarray*}
\left\{\begin{array}{c}
\dot{\xi}_{+,a}=-\mathrm{i} \lambda_a^{\infty}(\omega)\xi_{+,{a}},\\
\dot{\eta}_{+,a}= \ \ \mathrm{i} \lambda_a^{\infty}(\omega)\eta_{+,{a}},
\end{array}\right.\ \ \ a\in \Z_{+}.\label{inftyeq}
\end{eqnarray*}
 Hence the solution starts from $(\xi_+(0),\eta_+(0))$ is given by  $$(\xi_+(t),\eta_+(t))=(e^{-\mathrm{i}{t{N}}_\infty }\xi_+(0),e^{ \mathrm{i}{t {N}}_\infty }\eta_+(0)),\ t\in\R,$$
 where $N_{\infty}=diag\{\lambda^{\infty}_a\}_{a\in \Z_{+}}$.
Then the solution $u(t,x)$ of (\ref{maineq}) corresponding to the initial data $u_0(x)=\sum\limits_{a\geq 1}\xi_a(0)h_a(x)\in \mathcal{H}^{1}$  is formulated by
$u(t,x)=\sum\limits_{a\geq 1}\xi_a(t)h_a(x)$ with
$
\xi(t)=\overline{M}_\omega(\omega t)e^{-\mathrm{i}{t\overline{N}}_\infty }M^{T}_\omega(0)\xi(0),
$
where we use the fact $(\overline{ {M}}_\omega)^{-1}=M^{T}_\omega.$\\
\indent Let us define the transformation $\Psi^{\infty,1}_\omega(\theta)$ by
$$\Psi^{\infty,1}_\omega(\theta)(\sum_{a\geq 1}\xi_ah_a(x)):=\sum_{a\geq 1}(M^T_\omega(\theta)\xi)_ah_a(x)=\sum_{a\geq 1}\xi_{+,a}h_a(x).$$
\noindent  $u(t,x)$ satisfies (\ref{maineq}) if and only if  $v(t,x)=\Psi^{\infty,1}_\omega(\omega t)u(t,x)$ satisfies the autonomous equation
$i\partial_t v= (-\partial_{xx}+|x|^2)v+\varepsilon Q_1v,$
where  $$\varepsilon Q_1(\sum\limits_{a\in\Z_{+}}\xi_{a}h_a(x))=\sum\limits_{a\in\Z_{+}}((N_\infty-N_0)\xi)_ah_a(x)=\sum\limits_{a\in \Z_{+}}(\lambda_a^{\infty}-\lambda_a)\xi_ah_a(x). $$
For the rest estimates see lemma \ref{psismooth} below and  (\ref{Ninfty}). \qed
\begin{lemma}\label{psismooth}
$$\|\Psi^{\infty,1}_\omega(\cdot)-id\|_{\mathcal{C}^{\iota}(\T^n, \mathfrak{L}(\mathcal{H}^{0},\mathcal{H}^{2\alpha}))}\leq C \varepsilon^{\frac{3}{2\beta}(\frac{2}{9}\beta-\iota)},$$
and
\begin{eqnarray*}\label{lemma3.5}
\|\Psi^{\infty,1}_\omega(\cdot)-id\|_{\mathcal{C}^{\iota}(\T^n, \mathfrak{L}(\mathcal{H}^{s'},\mathcal{H}^{s'}))}\leq C \varepsilon^{\frac{3}{2\beta}(\frac{2}{9}\beta-\iota)},
\end{eqnarray*}
where $\iota$ is defined in Theorem \ref{KAM} and $0\leq s'\leq 1$.
\end{lemma}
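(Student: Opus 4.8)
The plan is to transfer the bounds on the symplectic conjugation $\Phi^\infty_\omega$ furnished by Theorem~\ref{KAM} to $\Psi^{\infty,1}_\omega$ through the Hermite isometry, and then to pass from $\Phi^\infty_\omega-id$ to (the complex conjugate of) its inverse. First I would record the dictionary between the two pictures. Since $u=\sum_{a\ge1}\xi_ah_a\mapsto(\xi_a)_{a\ge1}$ is an isometry of $\mathcal H^p$ onto $\ell_p^2$ for every $p\ge0$, and $\Psi^{\infty,1}_\omega(\theta)\big(\sum_a\xi_ah_a\big)=\sum_a(M^T_\omega(\theta)\xi)_ah_a$ by definition, under this isometry $\Psi^{\infty,1}_\omega(\theta)$ is exactly the infinite matrix $M^T_\omega(\theta)$ acting on the weighted $\ell^2$-spaces, whence
\[
\|\Psi^{\infty,1}_\omega(\cdot)-id\|_{\mathcal C^{\iota}(\T^n,\mathfrak L(\mathcal H^{p},\mathcal H^{q}))}=\|M^T_\omega(\cdot)-Id\|_{\mathcal C^{\iota}(\T^n,\mathfrak L(\ell_p^2,\ell_q^2))}
\]
for $(p,q)=(0,2\alpha)$ and for $(p,q)=(s',s')$ with $0\le s'\le 1$. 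It is thus enough to bound the right-hand side.

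For that I would use unitarity together with the block structure of $\Phi^\infty_\omega$. By Theorem~\ref{KAM} the map $\Phi^\infty_\omega(\theta)$ is unitary on $Y_0$ and acts as $(\xi_+,\eta_+)\mapsto(\overline M_\omega(\theta)\xi_+,M_\omega(\theta)\eta_+)$, so $M_\omega(\theta)$ is unitary and $M^T_\omega(\theta)=(\overline M_\omega(\theta))^{-1}=\overline{M^{-1}_\omega(\theta)}$, the identity already used in the proof of Theorem~\ref{quantumth}. Entrywise complex conjugation leaves every operator norm $\mathfrak L(\ell_p^2,\ell_q^2)$ and the $\mathcal C^{\iota}$-norm in $\theta$ invariant, so it suffices to estimate $M^{-1}_\omega(\theta)-Id$; and since $(\Phi^\infty_\omega(\theta))^{-1}$ sends $(\xi,\eta)$ to $\big((\overline M_\omega(\theta))^{-1}\xi,(M_\omega(\theta))^{-1}\eta\big)$, reading off the second block gives
\[
\|M^{-1}_\omega(\cdot)-Id\|_{\mathcal C^{\iota}(\T^n,\mathfrak L(\ell_p^2,\ell_q^2))}\le\|(\Phi^\infty_\omega(\cdot))^{-1}-id\|_{\mathcal C^{\iota}(\T^n,\mathfrak L(Y_p,Y_q))}.
\]
So everything reduces to estimating $(\Phi^\infty_\omega)^{-1}-id$ in $\mathcal C^{\iota}(\T^n,\mathfrak L(Y_{s'},Y_{s'}))$ and in $\mathcal C^{\iota}(\T^n,\mathfrak L(Y_0,Y_{2\alpha}))$ from the corresponding bounds on $\Phi^\infty_\omega-id$ in Theorem~\ref{KAM}. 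For the first space I would use $(\Phi^\infty_\omega)^{-1}-id=-(\Phi^\infty_\omega-id)(\Phi^\infty_\omega)^{-1}$ together with the Hölder product inequality in $\theta$ (a $\mathcal C^{\iota}$-norm of a product is controlled by a $\mathcal C^{\iota}$-norm times a sup-norm, plus the symmetric term), and close the estimate by absorption: Theorem~\ref{KAM} supplies both the $\mathcal C^{\iota}$-bound $\|\Phi^\infty_\omega-id\|_{\mathcal C^{\iota}(\T^n,\mathfrak L(Y_{s'},Y_{s'}))}\le C\varepsilon^{\frac{3}{2\beta}(\frac{2}{9}\beta-\iota)}$ and the sup-in-$\theta$ smallness $\|M^{\pm1}_\omega(\theta)-Id\|_{\mathfrak L(\ell_{s'}^2,\ell_{s'}^2)}\le c\varepsilon^{1/2}$, and since $\varepsilon^{1/2}\le\varepsilon^{\frac{3}{2\beta}(\frac{2}{9}\beta-\iota)}$ the single term carrying the unknown $\|(\Phi^\infty_\omega)^{-1}-id\|_{\mathcal C^{\iota}}$ can be moved to the left, yielding $\|(\Phi^\infty_\omega)^{-1}-id\|_{\mathcal C^{\iota}(\T^n,\mathfrak L(Y_{s'},Y_{s'}))}\le C\varepsilon^{\frac{3}{2\beta}(\frac{2}{9}\beta-\iota)}$ for $\varepsilon$ small (the left-hand side being finite to begin with because $\Phi^\infty_\omega(\theta)$ is uniformly boundedly invertible on $Y_{s'}$). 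For the second space the same factorization and product inequality apply, now with the factor $\Phi^\infty_\omega-id$ estimated in $\mathfrak L(Y_0,Y_{2\alpha})$ by Theorem~\ref{KAM} and the factor $(\Phi^\infty_\omega)^{-1}$ estimated in $\mathfrak L(Y_0,Y_0)$ by the $s'=0$ case just obtained; this produces the bound in $\mathcal C^{\iota}(\T^n,\mathfrak L(Y_0,Y_{2\alpha}))$. Chaining the displayed inequalities then proves both estimates of the lemma.

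The dictionary and the conjugate/transpose bookkeeping are routine. The one place that needs care is the passage from $\Phi^\infty_\omega-id$ to its inverse: $\mathcal C^{\iota}(\T^n,\mathfrak L(Y_0,Y_{2\alpha}))$ carries no algebra structure, so inversion there cannot be handled by a direct Neumann series; instead one factors out a factor living in the genuine Banach algebra $\mathcal C^{\iota}(\T^n,\mathfrak L(Y_0,Y_0))$ (the $s'=0$ case), and one must combine the strong $\varepsilon^{1/2}$-smallness of $M_\omega-Id$ with the weaker $\varepsilon^{\frac{3}{2\beta}(\frac{2}{9}\beta-\iota)}$-bound, the former being precisely what makes the absorption step close even when the Hölder exponent $\iota$ is close to $\tfrac29\beta$.
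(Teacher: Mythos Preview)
Your argument is correct. Both you and the paper begin with the same identification: the Hermite expansion is an isometry $\mathcal H^p\to\ell_p^2$, so $\|\Psi^{\infty,1}_\omega-id\|_{\mathcal C^{\iota}(\T^n,\mathfrak L(\mathcal H^{p},\mathcal H^{q}))}=\|M^T_\omega-Id\|_{\mathcal C^{\iota}(\T^n,\mathfrak L(\ell_p^2,\ell_q^2))}$; the paper spells this out by an explicit induction on the order of the $\theta$-derivative. Where you diverge is in the last step. The paper simply reads the bound on $M^T_\omega-Id$ off Theorem~\ref{KAM}, implicitly relying on the fact that the KAM construction controls $M^T_\omega-Id$ just as well as $M_\omega-Id$ (ultimately because the matrix class $\mathcal M_\alpha$ used in the iteration is transpose-symmetric). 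You instead pass through unitarity, $M^T_\omega=\overline{M^{-1}_\omega}$, and then bound $(\Phi^\infty_\omega)^{-1}-id$ from $\Phi^\infty_\omega-id$ by the factorization $(\Phi^\infty_\omega)^{-1}-id=-(\Phi^\infty_\omega-id)(\Phi^\infty_\omega)^{-1}$, a H\"older product inequality in $\theta$, and an absorption. Your route is longer but more self-contained: it uses only the statement of Theorem~\ref{KAM} and no internal feature of its proof, and it handles the off-diagonal case $\mathfrak L(Y_0,Y_{2\alpha})$ cleanly by composing with the $s'=0$ bound already obtained.

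One small point to tighten: in your absorption step you assert finiteness of $\|(\Phi^\infty_\omega)^{-1}-id\|_{\mathcal C^{\iota}}$ ``because $\Phi^\infty_\omega(\theta)$ is uniformly boundedly invertible on $Y_{s'}$''. Uniform bounded invertibility only gives a $\mathcal C^0$ bound on the inverse; for the a~priori $\mathcal C^\iota$-finiteness you should invoke the Neumann series in the Banach algebra $\mathcal C^\iota(\T^n,\mathfrak L(Y_{s'},Y_{s'}))$, which converges since $\|\Phi^\infty_\omega-id\|_{\mathcal C^\iota}$ is small. With that adjustment the argument is complete.
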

We delay the above proof in section \ref{S5}. \\

 \section{Estimates on eigenfunctions}\label{S2}
\subsection{Langer's turning point}
We now introduce Langer's turning point method based on the contents in Chapter 22.27 of \cite{T2}.  For other application of Langer's turning point theory, see \cite{WLiang17, Yajima}.  \\
\indent Consider the function
\begin{eqnarray}\label{tezhengfangcheng}
\psi^{\prime\prime}(x)+(\lambda-q(x))\psi(x)=0,\quad x>0,
\end{eqnarray}
where $q(x)$ increases steadily to $+\infty$, $q(x)$ is three times differentiable, and for $x>x_0$ for some positive constant $x_0$, $q'(x)$ is nondecreasing, and  as $x\rightarrow \infty$
$$\frac{q'(x)}{q(x)}=O(\frac1x),\quad \frac{q^{\prime\prime}(x)}{q'(x)}=O(\frac1x),\quad \frac{q^{\prime\prime\prime}(x)}{q^{\prime\prime}(x)}=O(\frac1x).$$
We also suppose that there exists a unique $X>0$ such that $\lambda=q(X)$. \\
\indent Then for constant $a\ge1$,
$\ln\frac{q(ax)}{q(x)}=\int_x^{ax}\frac{q'(t)}{q(t)}dt=O(\int_x^{ax}\frac{1}{t}dt)=O(\ln a),$
which means
$q(ax)=O(q(x))$,
and similarly for $q'(x)$ and $q^{\prime\prime}(x)$. Since
$q(x)=\int_0^x q'(t)dt+q(0)\le xq'(x)+q(0),$
it follows $q'(x)\geq C\frac{q(x)}{x}$,
when $x\geq x_1$ for some $x_1>0$.  Now set
$\eta(x)=(\lambda-q(x))^\frac14\psi(x)$ and $\zeta(x)=\int_X^x (\lambda-q(t))^\frac12 dt$, 
where
\begin{align*}
\arg\zeta(x)=\left\{
\begin{array}{ll}
&\frac12\pi \quad(x>X),\\
&-\pi \quad(x<X).
\end{array}\right.
\end{align*}
Then the equation (\ref{tezhengfangcheng}) is transformed into
$\frac{d^2\eta}{d\zeta^2}+\eta+ [\frac{q^{\prime\prime}(x)}{4(\lambda-q(x))^2}+\frac{5q^\prime(x)^2}{16(\lambda-q(x))^3}]\eta=0$
and this may be expressed as
\begin{eqnarray}\label{xinfangcheng}
\frac{d^2\eta}{d\zeta^2}+(1+\frac{5}{36\zeta^2})\eta=f(x)\eta,
\end{eqnarray}
where
$f(x)=\frac{5}{36\zeta^2}-\frac{q^{\prime\prime}(x)}{4(\lambda-q(x))^2}-\frac{5q^\prime(x)^2}{16(\lambda-q(x))^3}$.
As we know, Bessel equation
$\frac{d^2 G}{d\zeta^2}+(1+\frac{5}{36\zeta^2})G=0$
has two linearly independent solutions $(\frac{\pi\zeta}{2})^\frac12 J_\frac13(\zeta)$ and $(\frac{\pi\zeta}{2})^\frac12 H_\frac13^{(1)}(\zeta)$, where $J_\nu(x)$ and $H_\nu^{(1)}(x)$ are the first kind Bessel function and one of the third kind Bessel function, respectively. By the property of Bessel function that
$x(J_\nu(x)H_\nu^{(1)\prime}(x)-J_\nu^\prime(x)H_\nu^{(1)}(x))=\frac{2{\rm i}}{\pi}$, 
then (\ref{xinfangcheng}) is formally equivalent to the integral equation
\begin{align*}
\eta = (\frac{\pi\zeta}{2})^\frac12H_{\frac13}^{(1)}(\zeta)+\frac{\pi{\rm i}}{2}\int_x^\infty \bigg(H_\frac13^{(1)}(\zeta)J_\frac13(\theta)-J_\frac13(\zeta)H_\frac13^{(1)}(\theta)\bigg)\zeta^\frac12\theta^\frac12 f(t)(\lambda-q(t))^\frac12\eta(t)dt,
\end{align*}
where we write $\zeta=\zeta(x)$ and $\theta=\zeta(t)$ for convenience. Set
$$\alpha(x)=e^{-{\rm i}\zeta}(\frac{\pi\zeta}{2})^\frac12 H_\frac13^{(1)}(\zeta),\quad \beta(x)=e^{{\rm i}\zeta}(\frac{\pi\zeta}{2})^\frac12 J_\frac13(\zeta),\quad \chi(x)=e^{-{\rm i}\zeta}\eta(x),$$
then
$$\chi(x)=\alpha(x)+{\rm i}\int_x^\infty \bigg(\alpha(x)\beta(t)-e^{2{\rm i}(\theta-\zeta)}\beta(x)\alpha(t)\bigg) f(t)(\lambda-q(t))^\frac12\chi(t)dt.$$
Clearly, $\alpha(x)$, $\beta(x)$ are bounded, and
$\Im(\theta-\zeta)=\Im(\int_x^t(\lambda-q(u))^\frac12du)\geq0$.
To give the estimate of solution of (\ref{xinfangcheng}) or (\ref{tezhengfangcheng}), we first present two preparation lemmas and delay the proofs in the Appendix.
\begin{lemma}{\rm (\cite{T2})}\label{6.5}
For fixed $\lambda$, if $x>2X$, then
$\int_x^\infty|f(t)(\lambda-q(t))^\frac12|dt\le\frac{C}{x(q(x))^\frac12}$,
here $C$ is a constant independent of $x$ and $\lambda$.
\end{lemma}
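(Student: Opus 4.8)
The plan is to estimate, for $t>2X$, each of the three terms making up $f(t)(\lambda-q(t))^{\frac12}$ by a constant multiple of $t^{-2}(q(t))^{-\frac12}$, and then integrate. Since $q$ is increasing, $\int_x^\infty \frac{dt}{t^2(q(t))^{1/2}}\le (q(x))^{-\frac12}\int_x^\infty t^{-2}\,dt=\frac{1}{x(q(x))^{1/2}}$, so such termwise bounds give the lemma at once. (For $t>X$ one has $\arg\zeta=\pi/2$, hence $|\zeta^{-2}(\lambda-q)^{\frac12}|=|\zeta|^{-2}(q-\lambda)^{\frac12}$, and likewise every factor below is taken in modulus.)

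First I would record two facts that follow from the standing hypotheses on $q$ and from the computations already made in the text. (i) \emph{Comparability away from the turning point.} Since $q'$ is eventually nondecreasing and $q'(x)\ge c\,q(x)/x$ for $x$ large, one gets $q(2X)-q(X)=\int_X^{2X}q'(s)\,ds\ge X q'(X)\ge c\,q(X)=c\lambda$, hence $q(2X)\ge(1+c)\lambda$ and therefore $q(t)-\lambda\ge\frac{c}{1+c}\,q(t)$ for every $t\ge 2X$, with $c$ depending only on $q$. (ii) \emph{Derivative bounds.} The relations $q'/q=O(1/x)$ and $q''/q'=O(1/x)$ give $q'(t)\le C q(t)/t$ and $q''(t)\le C q(t)/t^2$ for $t$ large.

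Using (i) and (ii), the second term $\frac{|q''(t)|}{4(q(t)-\lambda)^{2}}(q(t)-\lambda)^{\frac12}\le C\frac{q(t)/t^{2}}{(q(t))^{3/2}}=\frac{C}{t^{2}(q(t))^{1/2}}$, and the third term $\frac{5q'(t)^{2}}{16(q(t)-\lambda)^{3}}(q(t)-\lambda)^{\frac12}\le C\frac{q(t)^{2}/t^{2}}{(q(t))^{5/2}}=\frac{C}{t^{2}(q(t))^{1/2}}$. The only nontrivial term is $\frac{5}{36|\zeta(t)|^{2}}(q(t)-\lambda)^{\frac12}$, for which I need the lower bound $|\zeta(t)|\ge c\,t\,(q(t))^{\frac12}$ on $t>2X$. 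For $t\ge 4X$ this follows by keeping only $\int_{t/2}^{t}(q(u)-\lambda)^{\frac12}\,du$, on which $q(u)-\lambda\ge\frac{c}{1+c}q(u)\ge c'q(t)$ by (i) and $q(au)=O(q(u))$, so $|\zeta(t)|\ge\frac t2(c'q(t))^{1/2}$. For $2X\le t\le 4X$ I keep instead $\int_X^{2X}(q(u)-\lambda)^{\frac12}\,du$ and use $q(u)-q(X)\ge(u-X)q'(X)$ together with $q'(X)\ge c\,q(X)/X$ to get $|\zeta(t)|\ge cX(q(X))^{\frac12}$, which is $\asymp t(q(t))^{\frac12}$ since $t\asymp X$ and $q(t)\asymp q(X)$ in that range. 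Hence $\frac{5}{36|\zeta(t)|^{2}}(q(t)-\lambda)^{\frac12}\le C\frac{(q(t))^{1/2}}{t^{2}q(t)}=\frac{C}{t^{2}(q(t))^{1/2}}$; summing the three bounds and integrating yields the assertion, and all constants depend only on $q$, so $C$ is independent of $x$ and $\lambda$ (the regime of bounded $\lambda$, i.e.\ small $X$, being harmless, since there both sides are continuous in the parameters and the right-hand side stays bounded away from $0$).

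I expect the lower bound on $|\zeta|$ near the turning point to be the main difficulty: one has to exploit that $q$ grows fast enough that the integral $\int_X^{2X}(q(u)-\lambda)^{\frac12}\,du$, whose integrand degenerates at $u=X$, is still of the full size $X(q(X))^{\frac12}$. Everything else is a routine matter of invoking the $O(1/x)$ hypotheses on $q'/q$ and $q''/q'$ and the elementary consequences $q'(x)\ge c\,q(x)/x$, $q(ax)=O(q(x))$ already established above.
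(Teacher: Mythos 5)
Your proof is correct, and it is close in spirit to the paper's, with one technical difference worth noting. The paper treats the first term by observing that $\frac{(q(t)-\lambda)^{1/2}}{|\zeta(t)|^2}$ is an exact derivative (of $-1/|\zeta(t)|$), so $\int_x^\infty \frac{(q-\lambda)^{1/2}}{|\zeta|^2}\,dt=\frac{1}{|\zeta(x)|}$; it then needs the lower bound $|\zeta(x)|\gtrsim x(q(x))^{1/2}$ only at the single endpoint $t=x$, and proves it by the simple truncation $|\zeta(x)|\ge\int_{2x/3}^x(q(t)-\lambda)^{1/2}dt$. Similarly the second and third terms are handled by pulling out a factor $1/t\le 1/x$ and then again invoking an exact antiderivative $\int_x^\infty q' q^{-3/2}\,dt = 2(q(x))^{-1/2}$. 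You instead bound all three summands pointwise by $C\,t^{-2}(q(t))^{-1/2}$ and integrate once; this is more uniform, but for the $1/|\zeta|^2$ term it forces you to establish the lower bound $|\zeta(t)|\gtrsim t(q(t))^{1/2}$ for every $t>2X$, not just at $t=x$, which is why you need the two-regime argument ($t\ge 4X$ versus $2X\le t\le 4X$), the latter requiring the more delicate estimate $\int_X^{2X}(q(u)-\lambda)^{1/2}du\gtrsim X(q(X))^{1/2}$ using $q'$ nondecreasing. Both routes rest on the same inputs ($q(t)-\lambda\gtrsim q(t)$ for $t\ge 2X$, the $O(1/x)$ hypotheses, and $q'\gtrsim q/x$, $q(ax)=O(q(x))$); the paper's use of exact antiderivatives is slightly slicker and avoids your case split near the turning point, but your termwise approach is clean and equally valid.
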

\begin{lemma}{\rm (\cite{T2})}\label{2.2sub}
$\int_0^\infty |f(x)||\lambda-q(x)|^\frac12dx=O\bigg(\frac{1}{X\lambda^\frac12}\bigg), \quad \lambda\to\infty.$
\end{lemma}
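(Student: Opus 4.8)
The plan is to split the integral at the turning point $X$. The tail $\int_{2X}^{\infty}$ will be immediate from Lemma~\ref{6.5}; the range $[0,X/2]$ will be handled summand by summand; and the real work is concentrated in the neighbourhood $[X/2,2X]$ of the turning point, where the cancellation built into the Langer substitution $\eta=(\lambda-q)^{1/4}\psi$ has to be exploited, since there $\frac{5}{36\zeta^{2}}$ and $\frac{5q'^{2}}{16(\lambda-q)^{3}}$ are individually of size $|X-x|^{-3}$ and not integrable against $(\lambda-q)^{1/2}$.

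First I would record the elementary size estimates used throughout, all valid for $\lambda$ (hence $X$) large. Since $q'$ is nondecreasing past $x_{0}$ and $\lambda=q(0)+\int_{0}^{X}q'(t)\,dt\le q(0)+Xq'(X)$, while $q'(X)\le Cq(X)/X=C\lambda/X$ from $q'/q=O(1/x)$, one gets $q'(X)\asymp\lambda/X$. The hypotheses then give, on $[X/2,2X]$, $\lambda-q(x)\asymp q'(X)|X-x|\asymp\tfrac{\lambda}{X}|X-x|$, and $q''(x)\lesssim q'(x)/x\lesssim\lambda/X^{2}$, $q'''(x)\lesssim\lambda/X^{3}$; while for $x\le X/2$ one has $\lambda-q(x)\gtrsim\lambda$ together with $|\zeta(x)|\ge\int_{X/2}^{X}(\lambda-q(t))^{1/2}\,dt\gtrsim X\lambda^{1/2}$ (using $\lambda-q(t)\asymp\tfrac{\lambda}{X}(X-t)$ on $[X/2,X]$). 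Finally, $q'(x)\ge q'(x_{0})>0$ for $x>x_{0}$ forces $X\lesssim\lambda$, which will be used to absorb the contribution of the fixed compact interval $[0,x_{0}]$, where $q',q'',q'''$ are merely $O(1)$.

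With these in hand, split $\int_{0}^{\infty}=\int_{0}^{2X}+\int_{2X}^{\infty}$. For the tail, Lemma~\ref{6.5} with $x=2X$ and $q(2X)\ge q(X)=\lambda$ gives
\begin{equation*}
\int_{2X}^{\infty}|f(t)(\lambda-q(t))^{1/2}|\,dt\le\frac{C}{2X\,(q(2X))^{1/2}}\le\frac{C}{2X\lambda^{1/2}}.
\end{equation*}
On $[0,X/2]$ I would bound the three summands of $f$ separately: from $|\zeta(x)|\gtrsim X\lambda^{1/2}$, $\lambda-q(x)\gtrsim\lambda$, $q'(x)\lesssim\lambda/X$, $q''(x)\lesssim\lambda/X^{2}$ (and the $O(1)$ bounds on $[0,x_{0}]$, where $X\lesssim\lambda$ closes the estimate), each of $\frac{5}{36\zeta^{2}}$, $\frac{q''}{4(\lambda-q)^{2}}$, $\frac{5q'^{2}}{16(\lambda-q)^{3}}$ is $\lesssim\frac{1}{X^{2}\lambda}$, so multiplying by $(\lambda-q)^{1/2}\lesssim\lambda^{1/2}$ and integrating over a set of length $\lesssim X$ gives $O(1/(X\lambda^{1/2}))$.

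The heart of the proof is $[X/2,2X]$. Here I would Taylor expand $q(t)=\lambda-q'(X)(X-t)+\tfrac12q''(X)(X-t)^{2}-\tfrac16q'''(\xi)(X-t)^{3}$, substitute into $\zeta(x)=\int_{X}^{x}(\lambda-q)^{1/2}$ and into the two denominators, and verify that the $|X-x|^{-3}$ parts of $\frac{5}{36\zeta^{2}}$ and of $\frac{5q'^{2}}{16(\lambda-q)^{3}}$ cancel, that the leftover $|X-x|^{-2}$ part of their difference equals $\frac{q''(X)}{4q'(X)^{2}|X-x|^{2}}$, and that this cancels the leading part of $\frac{q''}{4(\lambda-q)^{2}}$; what survives is $f(x)=O\big(\tfrac{1}{X\lambda}|X-x|^{-1}\big)$ on $[X/2,2X]$ — this is exactly where all three derivative hypotheses and the bounds $q''\lesssim\lambda/X^{2}$, $q'''\lesssim\lambda/X^{3}$ are needed. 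Then, using $(\lambda-q(x))^{1/2}\lesssim(\tfrac{\lambda}{X}|X-x|)^{1/2}$,
\begin{equation*}
\int_{X/2}^{2X}|f(x)|\,|\lambda-q(x)|^{1/2}\,dx\lesssim\int_{X/2}^{2X}\frac{1}{X\lambda|X-x|}\cdot\frac{\lambda^{1/2}|X-x|^{1/2}}{X^{1/2}}\,dx=\frac{1}{X^{3/2}\lambda^{1/2}}\int_{X/2}^{2X}\frac{dx}{|X-x|^{1/2}}\lesssim\frac{1}{X\lambda^{1/2}}.
\end{equation*}
Adding the three contributions yields $\int_{0}^{\infty}|f(x)||\lambda-q(x)|^{1/2}\,dx=O(1/(X\lambda^{1/2}))$. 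I expect the main obstacle to be precisely the bookkeeping in this last region: one must carry the expansion far enough to see both cancellations, keep uniform control of the remainders for $|X-x|$ as large as $\sim X$ (dyadically decomposing $[X/2,2X]$ if the single expansion is not uniform enough), and handle the two branch conventions for $\zeta$ (real for $x<X$, purely imaginary for $x>X$, so $\zeta^{2}$ changes sign) so that the cancellation is visible on both sides of the turning point.
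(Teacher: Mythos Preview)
Your plan is correct and follows the same architecture as the paper's proof: split the integral into a neighbourhood of the turning point $X$ and its complement, invoke Lemma~\ref{6.5} for the far tail, bound the terms of $f$ separately away from $X$, and exploit the cancellation built into the Langer substitution to obtain $f(x)=O\bigl(\tfrac{1}{X^{2}|\lambda-q(x)|}\bigr)$ (equivalently, your $O\bigl(\tfrac{1}{X\lambda|X-x|}\bigr)$) near $X$.

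The one methodological difference worth flagging is how the cancellation is extracted. You propose to Taylor-expand $q$ about $X$ and substitute into $\zeta$ and the denominators; the paper (following Titchmarsh) instead integrates $|\zeta(x)|=\int_{X}^{x}\frac{q'(t)(q(t)-\lambda)^{1/2}}{q'(t)}\,dt$ by parts twice to obtain
\[
|\zeta(x)|=\frac{2(q(x)-\lambda)^{3/2}}{3q'(x)}\Bigl(1+\frac{2(q(x)-\lambda)q''(x)}{5q'^{2}(x)}-S\Bigr),\qquad S=O\Bigl(\frac{(x-X)^{2}}{X^{2}}\Bigr),
\]
from which $-\zeta^{-2}=\tfrac{9q'^{2}}{4(q-\lambda)^{3}}-\tfrac{9q''}{5(q-\lambda)^{2}}+O\bigl(\tfrac{1}{X^{2}|q-\lambda|}\bigr)$ and the cancellation in $f$ follows at once. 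This has two small advantages over your route: the expansion is in powers of $(q(x)-\lambda)$ rather than $(x-X)$, so the coefficients match those in $f$ without further substitution; and the correction $\frac{(q-\lambda)q''}{q'^{2}}=O((x-X)/X)$ can be made $\le\tfrac14$ by choosing the inner endpoints $X'\ge X/2$, $X''\le 2X$ suitably, so no dyadic refinement is needed. Your Taylor-expansion approach yields the same bound, but you are right that uniformity for $|x-X|\sim X$ is the delicate point; restricting first to $|x-X|\le cX$ for a small absolute $c$ (as the paper effectively does) and handling the remaining compact pieces by the crude term-by-term bounds would avoid the dyadic decomposition you anticipate.
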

From these two lemmas, we can prove that the iteration converges. In fact, if we denote
$\int_0^\infty |f(t)||\lambda-q(t)|^\frac12dt= M_0=O\left(\frac1{X\lambda^\frac12}\right)$, 
and
$\left|\alpha(x)\beta(t)-e^{2{\rm i}(\theta-\zeta)}\beta(x)\alpha(t)\right|\le M$
uniformly, then
$|\chi_0(x)|=|\alpha(x)|\le C, \quad |\chi_1(x)-\chi_0(x)|\le C M M_0$, 
and generally, if
$|\chi_n(x)-\chi_{n-1}(x)|\le CM^n M_0^n$, 
then
\begin{align*}
&|\chi_{n+1}(x)-\chi_n(x)|\\
=&\left|\int_x^\infty \bigg(\alpha(x)\beta(t)-e^{2{\rm i}(\theta-\zeta)}\beta(x)\alpha(t)\bigg) f(t)(\lambda-q(t))^\frac12(\chi_n(t)-\chi_{n-1}(t))dt\right|\\
\le& CM^{n+1}M_0^n\int_x^\infty \left|f(t)(\lambda-q(t))^\frac12\right|dt\le CM^{n+1}M_0^{n+1}.
\end{align*}
Thus,
\begin{align*}
|\chi_n(x)|&\le|\chi_0(x)|+|\chi_1(x)-\chi_0(x)|+\cdots+|\chi_n(x)-\chi_{n-1}(x)|\\
&\le C(1+MM_0+\cdots+M^n M_0^n)\le\frac{C}{1-MM_0}.
\end{align*}
If $\lambda$ is sufficiently large, then $MM_0<1$, and by the theorem of dominated convergence, when $n\to\infty$, $\chi_n(x)\to\chi(x)=\alpha(x)+O(\frac{1}{X\lambda^\frac12})$ uniformly w.r.t $x$, which means that $\chi(x)$ is bounded.\\
Next we show that
\begin{eqnarray}\label{biaodaforchi}
\chi(x)=\alpha(x)\left(1+O\left(\frac{1}{X\lambda^\frac12}\right)\right).
\end{eqnarray}
In fact, similar as Lemma \ref{Bessel}, if $\zeta(x)<-c_0$ or ${\rm i}\zeta(x)<-c_0$, where $c_0$ are arbitrary two positive constants, we can prove that $|\alpha(x)|>C$ and (\ref{biaodaforchi}) holds. While for $0<|\zeta(x)|\leq c_0$ we have
$|\beta(x)|\le C|\alpha(x)|$. 
Thus,
$$|\chi(x)-\alpha(x)|=\left|\int_x^\infty \bigg(\alpha(x)\beta(t)-e^{2{\rm i}(\theta-\zeta)}\beta(x)\alpha(t)\bigg) f(t)(\lambda-q(t))^\frac12\chi(t)dt\right|\le \frac{C|\alpha(x)|}{X\lambda^\frac12}.$$
Hence we have
\begin{lemma}\cite{T2}\label{biaoda}
When $\lambda>c_1>0$ large enough such that 
$$M\int_0^{\infty}|f(t)||\lambda-q(x)|^\frac12dt<1,$$ the solution of (\ref{tezhengfangcheng}) can be written as
$\psi(x)=(\lambda-q(x))^{-\frac14}(\frac{\pi\zeta}{2})^\frac12 H_\frac13^{(1)}(\zeta)(1+O(\frac{1}{X\lambda^\frac12}))$.
\end{lemma}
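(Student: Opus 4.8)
The plan is to obtain the lemma by unwinding the Langer substitution once the asymptotic identity (\ref{biaodaforchi}) for $\chi$ is in hand, so the substantive work is establishing (\ref{biaodaforchi}). First I would turn the formal integral equation into a rigorous one: differentiating the proposed representation of $\eta$ twice and using the Wronskian identity $\zeta\bigl(J_{1/3}(\zeta)H_{1/3}^{(1)\prime}(\zeta)-J_{1/3}^{\prime}(\zeta)H_{1/3}^{(1)}(\zeta)\bigr)=\tfrac{2{\rm i}}{\pi}$, one checks that a bounded solution of (\ref{xinfangcheng}) with the correct behavior as $x\to\infty$ solves the integral equation and conversely; the convergence of the tail integral that legitimizes this is exactly Lemma \ref{6.5}. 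Passing to $\chi(x)=e^{-{\rm i}\zeta}\eta(x)$ recasts the equation in the form displayed just before Lemma \ref{6.5}, whose kernel is uniformly bounded because $\alpha(x),\beta(x)$ are bounded and $\Im(\theta-\zeta)\ge 0$ forces $|e^{2{\rm i}(\theta-\zeta)}|\le 1$.

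Next I would analyze the Picard iteration $\chi_0=\alpha$, $\chi_{n+1}(x)=\alpha(x)+{\rm i}\int_x^\infty(\alpha(x)\beta(t)-e^{2{\rm i}(\theta-\zeta)}\beta(x)\alpha(t))\,f(t)(\lambda-q(t))^{1/2}\chi_n(t)\,dt$. Writing $M$ for the uniform bound on the kernel and $M_0=\int_0^\infty|f(t)||\lambda-q(t)|^{1/2}\,dt$, a straightforward induction gives $|\chi_{n+1}(x)-\chi_n(x)|\le CM^{n+1}M_0^{n+1}$, hence $|\chi_n(x)|\le C/(1-MM_0)$. By Lemma \ref{2.2sub}, $M_0=O(1/(X\lambda^{1/2}))$, so for $\lambda$ large enough $MM_0<1$; the series converges uniformly, and by dominated convergence $\chi_n\to\chi$ with $\chi$ bounded and $\chi(x)=\alpha(x)+O(1/(X\lambda^{1/2}))$.

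The last step, which I expect to be the main obstacle, is to upgrade this \emph{additive} $O(1/(X\lambda^{1/2}))$ error to the \emph{multiplicative} statement (\ref{biaodaforchi}), namely $\chi(x)=\alpha(x)(1+O(1/(X\lambda^{1/2})))$. I would split on the size of $\zeta(x)$: if $\zeta(x)<-c_0$ or ${\rm i}\zeta(x)<-c_0$, the large-argument asymptotics of $H_{1/3}^{(1)}$ give a uniform lower bound $|\alpha(x)|\ge C>0$, so the additive bound already yields the multiplicative one; if $0<|\zeta(x)|\le c_0$, i.e. near the turning point $x=X$, the behavior of $J_{1/3}$ and $H_{1/3}^{(1)}$ near the origin gives $|\beta(x)|\le C|\alpha(x)|$, whence, bounding the kernel by $C|\alpha(x)|$ and using boundedness of $\chi$, the integral equation gives $|\chi(x)-\alpha(x)|\le CM_0|\alpha(x)|=O(|\alpha(x)|/(X\lambda^{1/2}))$. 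Getting these two Bessel estimates \emph{uniformly} in $\zeta$, and carefully tracking the branches of $(\lambda-q)^{1/2}$ and of $\arg\zeta$ on the two sides of $x=X$, is the delicate point; the rest is bookkeeping. To conclude, substitute $\chi(x)=e^{-{\rm i}\zeta}\eta(x)$, $\eta(x)=(\lambda-q(x))^{1/4}\psi(x)$ and $\alpha(x)=e^{-{\rm i}\zeta}(\tfrac{\pi\zeta}{2})^{1/2}H_{1/3}^{(1)}(\zeta)$ into (\ref{biaodaforchi}) and cancel the common factors $e^{-{\rm i}\zeta}$ and $(\lambda-q(x))^{1/4}$, which yields $\psi(x)=(\lambda-q(x))^{-1/4}(\tfrac{\pi\zeta}{2})^{1/2}H_{1/3}^{(1)}(\zeta)(1+O(1/(X\lambda^{1/2})))$, the assertion of the lemma.
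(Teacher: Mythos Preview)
Your proposal is correct and follows essentially the same route as the paper: Picard iteration on the integral equation for $\chi$ with the kernel bound $M$ and the total-variation bound $M_0=O(1/(X\lambda^{1/2}))$ from Lemma~\ref{2.2sub} to get the additive estimate, then the same two-case split on $|\zeta(x)|$ (using $|\alpha(x)|\ge C$ away from the turning point and $|\beta(x)|\le C|\alpha(x)|$ near it) to upgrade to the multiplicative form (\ref{biaodaforchi}). The Bessel estimates you flag as the delicate point are exactly what the paper packages in Lemma~\ref{Bessel}.
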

\noindent Since 
$M\int_x^\infty|f(t)||\lambda-q(t)|^\frac12dt\le\frac{MC}{x(q(x))^\frac12}$,
where $M,C$ are independent of $x$ and $\lambda$, 
it follows $M\int_{c_2}^\infty|f(t)||\lambda-q(t)|^\frac12dt<\frac12$ for some positive constant $c_2$.
Hence we have
\begin{lemma}\cite{T2}\label{xiaobiaoda}
For any fixed $\lambda$, when $x>\max\{2X,c_2\}$, the solution of (\ref{tezhengfangcheng}) can be written as
$\psi(x)=\psi_1(x)+\psi_2(x)$, 
where $\psi_1(x)=(\lambda-q(x))^{-\frac14}(\frac{\pi\zeta}{2})^\frac12 H_\frac13^{(1)}(\zeta)$, and $|\psi_2(x)|\le\frac{C}{x(q(x))^\frac12}|\psi_1(x)|$.
\end{lemma}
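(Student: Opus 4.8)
The plan is to re-run the contraction/iteration argument already used for Lemma \ref{biaoda}, but with the iteration localized to the half-line $[x,\infty)$, so that the small parameter controlling it is the tail integral estimated in Lemma \ref{6.5} rather than the global quantity of Lemma \ref{2.2sub}. Recall that any solution of (\ref{tezhengfangcheng}) has the form $\psi(x)=(\lambda-q(x))^{-1/4}e^{{\rm i}\zeta}\chi(x)$, where $\chi$ solves $\chi(x)=\alpha(x)+{\rm i}\int_x^\infty\big(\alpha(x)\beta(t)-e^{2{\rm i}(\theta-\zeta)}\beta(x)\alpha(t)\big)f(t)(\lambda-q(t))^{1/2}\chi(t)\,dt$. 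Hence $\psi_1(x)=(\lambda-q(x))^{-1/4}e^{{\rm i}\zeta}\alpha(x)$ is exactly the stated principal term, $\psi_2(x)=(\lambda-q(x))^{-1/4}e^{{\rm i}\zeta}(\chi(x)-\alpha(x))$, and since the common prefactor cancels in the ratio $|\psi_2|/|\psi_1|$, the claim reduces to showing $|\chi(x)-\alpha(x)|\le C\,|\alpha(x)|/(x(q(x))^{1/2})$ for $x>\max\{2X,c_2\}$.

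First I would record from the discussion preceding Lemma \ref{biaoda} that $\alpha,\beta$ are uniformly bounded and that, on the range $x>X$ (where $\zeta(x)={\rm i}|\zeta(x)|$, hence ${\rm i}\zeta(x)=-|\zeta(x)|$), one has $|\beta(x)|\le C|\alpha(x)|$: for $|\zeta(x)|\le c_0$ this is the bound quoted there, while for $|\zeta(x)|>c_0$ it follows from $|\alpha(x)|>C$ together with boundedness of $\beta$. Combined with $|e^{2{\rm i}(\theta-\zeta)}|\le1$ (because $\Im(\theta-\zeta)\ge0$), this yields the relative kernel bound $\big|\alpha(x)\beta(t)-e^{2{\rm i}(\theta-\zeta)}\beta(x)\alpha(t)\big|\le C|\alpha(x)||\alpha(t)|$ for $t\ge x>X$.

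Next I would set $\rho(x):=C\int_x^\infty|f(t)(\lambda-q(t))^{1/2}|\,dt$ with $C$ absorbing $\sup|\alpha|^2$ and the kernel constant; by Lemma \ref{6.5}, $\rho(x)\le C'/(x(q(x))^{1/2})$ for $x>2X$, and after enlarging $c_2$ we may assume $\rho(x)\le\tfrac12$ for $x>c_2$, with $\rho$ decreasing for large $x$. For the Picard iterates $\chi_0=\alpha$, $\chi_{n+1}=\alpha+{\rm i}\int_x^\infty(\cdots)f(t)(\lambda-q(t))^{1/2}\chi_n\,dt$, an induction on $n$ --- bounding the kernel by $C|\alpha(x)||\alpha(t)|$ and pulling $\rho(x)^n\ge\rho(t)^n$ out of the integral --- gives $|\chi_{n+1}(x)-\chi_n(x)|\le|\alpha(x)|\rho(x)^{n+1}$ for $x>\max\{2X,c_2\}$, with convergence $\chi_n\to\chi$ by dominated convergence as in the case treated above. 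Summing the telescoping series and using $\rho(x)\le\tfrac12$ gives $|\chi(x)-\alpha(x)|\le|\alpha(x)|\sum_{n\ge1}\rho(x)^n\le 2\rho(x)|\alpha(x)|\le C|\alpha(x)|/(x(q(x))^{1/2})$, and multiplying back by $|(\lambda-q(x))^{-1/4}e^{{\rm i}\zeta}|$ yields $|\psi_2(x)|\le\frac{C}{x(q(x))^{1/2}}|\psi_1(x)|$, as claimed.

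The one point I expect to need care is the \emph{relative} kernel bound, i.e.\ dominating $\beta$ by $\alpha$ uniformly (including in the regime $\zeta\to0$, where both tend to $0$ but at different rates); once that is in hand, the rest is exactly the fixed-point scheme behind Lemma \ref{biaoda}, with the global estimate of Lemma \ref{2.2sub} replaced by the localized one of Lemma \ref{6.5}.
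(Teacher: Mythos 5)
Your argument is correct and is essentially the same as the paper's: the paper's entire ``proof'' of Lemma \ref{xiaobiaoda} is the one-line observation, immediately preceding the statement, that $M\int_x^\infty|f(t)||\lambda-q(t)|^{1/2}\,dt\le MC/(x(q(x))^{1/2})$ (Lemma \ref{6.5}), from which one reruns the iteration used for Lemma \ref{biaoda} on $[x,\infty)$ with this localized smallness parameter in place of $M_0$. You have simply written out the fixed-point/telescoping details and the case split ($|\zeta|\le c_0$ via $|\beta|\le C|\alpha|$, $|\zeta|>c_0$ via the lower bound on $|\alpha|$) that the paper leaves implicit, so there is nothing to flag.
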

\begin{remark}\label{hnbiaoda}
In the application  $q(x)=x^2$, $\lambda_n=2n-1$ with $n\in \Z_{+}$. Then for $\lambda_n>c_1$, i.e. $n>\frac{c_1+1}{2}$,
$h_n(x)=\psi_1^{(n)}(x)+\psi_2^{(n)}(x)$, 
where $x>0$
and
$\psi_1^{(n)}(x)=(\lambda_n-x^2)^{-\frac14}(\frac{\pi\zeta_n}{2})^\frac12 H_\frac13^{(1)}(\zeta_n)$ and 
$\psi_2^{(n)}(x)=\psi_1^{(n)}(x) O(\frac{1}{\lambda_n})$. While for $\lambda_n\le\frac{c_2^2}{4}: = c_3$ and $x>c_2$, we have
$h_n(x)=\psi_1^{(n)}(x)+\psi_2^{(n)}(x)$,  
where  $|\psi_2^{(n)}(x)|\le\frac{C}{x^2}|\psi_1^{(n)}(x)|$. 
\end{remark}
\begin{remark}\label{4.6}
For the following we denote $m_0=\max\{\frac{c_1+1}{2},\frac{c_3+1}{2}\}$.
\end{remark}

\subsection{Proof of Lemma \ref{Indecaysection1}}
A well-known fact is that
$h_n(x)={(n! 2^n \pi^{\frac12})^{-\frac12}}{e^{-\frac12 x^2} H_n(x)}$ where $H_n(x)$ is the Hermite polynomial of degree $n$ and $h_n(x)$ is an even or odd function of $x$ according to whether $n$ is odd or even(\cite{T1}).
From the symmetry of $h_n(x)$, we only need to consider $\displaystyle\int_0^{+\infty} \la x\ra^{\mu} e^{{\rm i}kx}h_m(x)\overline{h_n(x)}dx$ for $1\leq m\leq n$. 
Rewrite
\begin{eqnarray}\label{int0infty}
\int_0^{+\infty} \la x\ra^{\mu} e^{{\rm i}kx}h_m(x)\overline{h_n(x)}dx=\int_0^{X_n}+\int_{X_n}^{+\infty}.
\end{eqnarray}
 From Remark \ref{hnbiaoda} and Remark \ref{4.6} and  $m>m_0$,
\begin{eqnarray}\label{gujidecay1}
h_m(x)&=&(\lambda_m-x^2)^{-\frac{1}{4}}(\frac{\pi\zeta_m}{2})^{\frac{1}{2}}H_{\frac{1}{3}}^{(1)}(\zeta_m) +(\lambda_m-x^2)^{-\frac{1}{4}}(\frac{\pi\zeta_m}{2})^{\frac{1}{2}}H_{\frac{1}{3}}^{(1)}(\zeta_m)O (\frac{1}{\lambda_m}) \nonumber\\
&:=& \psi^{(m)}_1(x) +\psi^{(m)}_2(x),
\end{eqnarray}
where $\zeta_m(x)=\displaystyle\int_{X_{m}}^x\sqrt{\lambda_m-t^2}dt$ with $X_m^2=\lambda_m(X_m>0)$.
While for $m\le m_0$, by Lemma \ref{xiaobiaoda} and $x>2X_{m_0}$, 
$h_m(x)=\psi_1^{(m)}(x)+\psi_2^{(m)}(x)$,
where $\psi_1^{(m)}(x)=(\lambda_m-x^2)^{-\frac14}(\frac{\pi\zeta_m}{2})^\frac12 H_\frac13^{(1)}(\zeta_m)$, and $|\psi_2^{(m)}(x)|\le\frac{C}{x^2}|\psi_1^{(m)}(x)|$. Now we estimate (\ref{int0infty}) in the following three cases: \\
1)$m,n<C_*: = 2^8m_0^3$; 2)$m\le m_0$ and $n\ge C_*$; 3)$m,n>m_0$.
\begin{lemma}
When $n,m<C_*$,
$$\left|\int_0^{+\infty} \la x\ra^{\mu} e^{{\rm i}kx}h_m(x)\overline{h_n(x)}dx\right|\le\frac{C}{n^{\frac{1}{12}-\frac\mu4}m^{\frac{1}{12}-\frac\mu4}}.$$
\end{lemma}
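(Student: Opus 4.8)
The plan is to treat this as the ``bounded-index'' regime, where only soft arguments are needed. First I would note that $C_*=2^8m_0^3$ is an absolute constant (Remark~\ref{4.6}), so the range $1\le m,n<C_*$ contains only finitely many index pairs, and that since $0\le\mu<\frac13$ the exponent $\frac1{12}-\frac\mu4$ is strictly positive, so that $n^{-\frac1{12}+\frac\mu4}$ and $m^{-\frac1{12}+\frac\mu4}$ are positive absolute constants (bounded above by $1$ and below by $C_*^{-1/12}$). Hence the asserted inequality is equivalent to the $k$-, $m$- and $n$-independent estimate
\[
\Big|\int_0^{+\infty}\la x\ra^{\mu}e^{{\rm i}kx}h_m(x)\overline{h_n(x)}\,dx\Big|\le C
\]
with $C$ absolute.

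Next I would discard the oscillation using $|e^{{\rm i}kx}|=1$, which reduces matters to an $L^1$-bound insensitive to $k$ and to the turning point:
\[
\Big|\int_0^{+\infty}\la x\ra^{\mu}e^{{\rm i}kx}h_m(x)\overline{h_n(x)}\,dx\Big|\le\int_{\R}\la x\ra^{\mu}|h_m(x)||h_n(x)|\,dx.
\]
Then, by Cauchy--Schwarz together with the elementary spectral bound for Hermite functions (using $\mu<2$, one integration by parts, and (\ref{eigensection1})),
\[
\int_{\R}\la x\ra^{\mu}|h_m|^2\,dx\le\int_{\R}(1+x^2)|h_m|^2\,dx=1+\int_{\R}x^2|h_m|^2\,dx\le 1+\la Th_m,h_m\ra=2m,
\]
so that $\int_{\R}\la x\ra^{\mu}|h_m||h_n|\,dx\le 2\sqrt{mn}\le 2C_*$ whenever $m,n<C_*$.

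Finally I would reinsert the (bounded) factors $n^{\frac1{12}-\frac\mu4}m^{\frac1{12}-\frac\mu4}<C_*^{1/6}$ to obtain the statement with the absolute constant $C=2C_*^{7/6}$. (Alternatively, one may skip the explicit constant altogether: each integral $\int_{\R}\la x\ra^{\mu}|h_m||h_n|\,dx$ is finite, and one takes the maximum over the finitely many admissible pairs.) I do not expect any genuine obstacle in this case; the only point that needs a moment's attention is to verify that the constant produced is truly independent of $m$, $n$ and $k$, which the Cauchy--Schwarz/Hermite estimate guarantees at once. The real work behind Lemma~\ref{Indecaysection1} --- the Langer turning-point asymptotics of Remark~\ref{hnbiaoda} and the oscillatory-integral cancellation that produces the factor $(|k|\vee|k|^{-1})$ --- belongs entirely to the two remaining cases $m\le m_0,\ n\ge C_*$ and $m,n>m_0$.
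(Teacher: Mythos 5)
Your proof is correct, and it takes a genuinely different (and more elementary) route than the paper. The paper's own proof of this sub-lemma still relies on the Langer/Bessel machinery: it chooses a constant $X_0$ depending only on $C_*$, splits $\int_0^\infty = \int_0^{X_0}+\int_{X_0}^\infty$, handles $[0,X_0]$ by Cauchy--Schwarz and $L^2$-normalization exactly as you do, and controls the tail $[X_0,\infty)$ by writing $h_m=\psi_1^{(m)}+\psi_2^{(m)}$ (Lemma \ref{xiaobiaoda}) and invoking the Bessel bound $\bigl|\sqrt{\pi\zeta_m/2}\,H^{(1)}_{1/3}(\zeta_m)\bigr|\le e^{-|\zeta_m|}$ (Lemma \ref{Bessel}) together with $|\zeta_m|\ge x-X_0$ (Lemma \ref{zetaesti}) to get exponential decay. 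Your argument avoids the turning-point asymptotics entirely: since $\mu<2$, the moment identity $\int_{\R}(1+x^2)|h_m|^2\,dx = 1+\int x^2|h_m|^2 \le 1+\la Th_m,h_m\ra_{L^2}=2m$ (using $\la Th_m,h_m\ra = \int(|h_m'|^2+x^2|h_m|^2)\,dx$ and (\ref{eigensection1})), plus Cauchy--Schwarz, gives $\int\la x\ra^\mu|h_m||h_n|\,dx\le 2\sqrt{mn}\le 2C_*$, and restoring $n^{\frac1{12}-\frac\mu4}m^{\frac1{12}-\frac\mu4}<C_*^{1/6}$ yields the absolute constant $C=2C_*^{7/6}$. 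This is simpler and equally rigorous; the paper's heavier tail estimate buys nothing extra in the bounded-index regime, where an $O(1)$ bound suffices. The only thing worth stressing --- and you did stress it --- is that $C_*$ and hence $C$ depends only on $m_0$, so the constant is truly independent of $k$, $m$ and $n$.
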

\begin{proof}
Since $m<C_*$, then for $x>X_0$, $h_m(x)=\psi_1^{(m)}(x)+\psi_2^{(m)}(x)$,
where $\psi_1^{(m)}(x)=(\lambda_m-x^2)^{-\frac14}(\frac{\pi\zeta_m}{2})^\frac12 H_\frac13^{(1)}(\zeta_m)$, and $|\psi_2^{(m)}(x)|\le\frac{C}{x^2}|\psi_1^{(m)}(x)|$ and $X_0$ is a positive constant depending on $C_*$ only. $h_n(x)$ has a similar decomposition. \\
When $x\le X_0$, by $\rm{H\ddot{o}lder}$ inequality and $n,m<C_*$,
$$\left|\int_0^{X_0} \la x\ra^{\mu} e^{{\rm i}kx}h_m(x)\overline{h_n(x)}dx\right|\le X_0^\mu\le \frac{C}{n^{\frac{1}{12}-\frac\mu4}m^{\frac{1}{12}-\frac\mu4}}.$$
When $x>X_0$, $|X_m^2-x^2|^{-\frac14}<1$, and by Lemma \ref{Bessel}, $\left|\sqrt{\frac{\pi \zeta_m}{2}}H^{(1)}_\frac{1}{3}(\zeta_m)\right|\leq e^{-\left| \zeta_m\right|}$. By Lemma \ref{zetaesti} and  $x> X_0$,
$\left|\zeta_m\right| \geq\frac{2\sqrt2}{3}X_m^\frac12(x-X_m)^\frac32\geq x-X_0$. 
 Thus,
\begin{align*}
   \left|\int_{X_0}^{+\infty} \la x\ra^{\mu} e^{{\rm i}kx}h_m(x)\overline{h_n(x)}dx\right|
\le \int_{X_0}^{+\infty} \la x\ra^{\mu}e^{-2(x-X_0)}dx\le Ce^{2X_0}\le\frac{C}{n^{\frac{1}{12}-\frac\mu4}m^{\frac{1}{12}-\frac\mu4}}.
\end{align*}
\end{proof}

\begin{lemma}
For $m\le m_0$ and $n\ge C_*$,
$$\left|\int_0^{+\infty} \la x\ra^{\mu} e^{{\rm i}kx}h_m(x)\overline{h_n(x)}dx\right|\le\frac{C}{n^{\frac{1}{12}-\frac\mu4}m^{\frac{1}{12}-\frac\mu4}}.$$
\end{lemma}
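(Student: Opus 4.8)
The plan is to exploit that here $m$ ranges over the \emph{bounded} set $\{1,\dots,m_0\}$, so that $h_m$ decays like a fixed Gaussian and the whole integrand is localized near the origin --- a region which, since $n\ge C_*=2^8m_0^3$, lies deep in the classically allowed zone of $h_n$, far from its turning point $X_n=\sqrt{\lambda_n}$. First I would reduce the claim: since $1\le m\le m_0$ with $m_0$ an absolute constant, the factor $m^{-(1/12-\mu/4)}$ is bounded above (by $1$) and below (by $m_0^{-(1/12-\mu/4)}>0$) by absolute constants, and moreover $1/12-\mu/4\le 1/4$ because $\mu\ge 0$; hence it suffices to prove
\[
\Big|\int_0^{+\infty}\la x\ra^{\mu}e^{{\rm i}kx}h_m(x)\overline{h_n(x)}\,dx\Big|\le C\lambda_n^{-1/4}
\]
with $C$ independent of $k,m,n$. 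The absence of $k$-dependence in this target is the signal that no oscillatory cancellation from $e^{{\rm i}kx}$ will be needed here: the triangle inequality will suffice, the decay being supplied entirely by $h_m$.

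The two pointwise inputs I would use are, first, $|h_m(x)|\le Ce^{-x^2/4}$ for all $x$ with $C$ depending only on $m_0$ --- immediate from $h_m(x)=(m!\,2^m\sqrt\pi)^{-1/2}e^{-x^2/2}H_m(x)$ and $m\le m_0$ (or, if one prefers to stay within the Langer framework, from Lemma \ref{xiaobiaoda} and Remark \ref{hnbiaoda} in the forbidden region $x\gtrsim X_{m_0}$); and, second, a crude bulk bound for $h_n$. For the latter I would split $\int_0^{+\infty}=\int_0^{X_n/2}+\int_{X_n/2}^{+\infty}$ and, on $[0,X_n/2]$, insert the Langer representation of Remark \ref{hnbiaoda}, $h_n=\psi_1^{(n)}+\psi_2^{(n)}$ with $|\psi_2^{(n)}|\le\frac{C}{\lambda_n}|\psi_1^{(n)}|$ and $\psi_1^{(n)}(x)=(\lambda_n-x^2)^{-1/4}(\tfrac{\pi\zeta_n}{2})^{1/2}H_{1/3}^{(1)}(\zeta_n)$: for $x\le X_n/2$ one has $\lambda_n-x^2\ge\frac34\lambda_n$, so $(\lambda_n-x^2)^{-1/4}\le C\lambda_n^{-1/4}$, while $|\zeta_n(x)|=\int_x^{X_n}\sqrt{\lambda_n-t^2}\,dt\ge\int_{X_n/2}^{X_n}\sqrt{\lambda_n-t^2}\,dt\ge c\lambda_n$ is large, so by Lemma \ref{Bessel} the Bessel factor is $O(1)$; hence $|h_n(x)|\le C\lambda_n^{-1/4}$ on $[0,X_n/2]$.

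The two pieces are then routine. On $[0,X_n/2]$, using $|h_m(x)|\le Ce^{-x^2/4}$,
\[
\Big|\int_0^{X_n/2}\la x\ra^{\mu}e^{{\rm i}kx}h_m\overline{h_n}\,dx\Big|\le C\lambda_n^{-1/4}\int_0^{+\infty}\la x\ra^{\mu}e^{-x^2/4}\,dx\le C\lambda_n^{-1/4},
\]
while on $[X_n/2,+\infty)$, combining $|h_m(x)|\le Ce^{-x^2/4}$, $\|h_n\|_{L^2(\R)}=1$ and Cauchy--Schwarz,
\[
\int_{X_n/2}^{+\infty}\la x\ra^{\mu}e^{-x^2/4}|h_n(x)|\,dx\le\Big(\int_{X_n/2}^{+\infty}\la x\ra^{2\mu}e^{-x^2/2}\,dx\Big)^{1/2}\le Ce^{-cX_n^2}=Ce^{-c(2n-1)},
\]
which is negligible compared with $\lambda_n^{-1/4}$. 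Summing the two bounds gives the displayed estimate, and restoring the $m^{-(1/12-\mu/4)}$ factor (legitimate since $1\le m\le m_0$) finishes the proof. I do not expect a genuine obstacle here: the only point needing any care is obtaining a usable pointwise bound on $h_n$ in the bulk region $x=O(1)$, but the Gaussian decay of $h_m$ keeps us uniformly away from the turning point $X_n$, so the coarse estimate $|h_n|\le C\lambda_n^{-1/4}$ from Remark \ref{hnbiaoda} and Lemma \ref{Bessel} is all that is required; the delicate turning-point and stationary-phase analysis, together with the $(|k|\vee|k|^{-1})$ loss, will only be needed in the remaining case $m,n>m_0$.
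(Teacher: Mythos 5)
Your proof is correct and in fact gives a stronger bound ($\sim n^{-1/4}$ versus the target $\sim n^{-1/12+\mu/4}$). The paper proves the same lemma with the same underlying ingredients (Langer/Bessel bound $|h_n|\lesssim\lambda_n^{-1/4}$ in the bulk, Cauchy--Schwarz with $\|h_n\|_{L^2}=1$ on the tail) but organizes the estimate differently: it splits at $X_n^{1/3}$ instead of $X_n/2$, uses only the crude bound $|h_m|\le C$ on the bulk, and invokes the Langer representation of $h_m$ (Lemma \ref{xiaobiaoda}) together with the $\zeta_m$-lower bound (Lemma \ref{zetaesti}) to get exponential decay on the tail $[X_n^{1/3},\infty)$. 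The split at $X_n^{1/3}$ is forced precisely because the paper does not exploit decay of $h_m$ in the bulk: integrating $\la x\ra^\mu(X_n^2-x^2)^{-1/4}$ over an interval of length $X_n^{1/3}$ yields $X_n^{-1/6+\mu/3}\sim n^{-1/12+\mu/6}$, which only just beats the target. Your use of the explicit Gaussian bound $|h_m(x)|\le Ce^{-x^2/4}$ (legitimate for $m\le m_0$, with $C$ depending only on $m_0$) makes the bulk integral $\int_0^\infty\la x\ra^\mu e^{-x^2/4}\,dx$ absolutely convergent with no constraint on the split point, giving the sharper $\lambda_n^{-1/4}$. Both routes are valid; yours is slightly more elementary in sidestepping the Langer machinery for $h_m$, at the cost of leaving the turning-point framework used systematically elsewhere in the section.
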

\begin{proof}
We split the integral into  two parts
$$\int_0^{+\infty} \la x\ra^{\mu} e^{{\rm i}kx}h_m(x)\overline{h_n(x)}dx=\int_0^{X_n^\frac13}+\int_{X_n^\frac13}^{+\infty}.$$
 When $x>2X_{m_0}$, by Lemma \ref{xiaobiaoda} we have $|h_m(x)|\le 2(x^2-X_m^2)^{-\frac14}|\sqrt{\frac{\pi \zeta_m}{2}}H^{(1)}_\frac{1}{3}(\zeta_m)|\le 2 e^{-|\zeta_m|}$. On the other hand by Lemma \ref{Bessel}, $|h_n(x)|\le C(X_n^2-x^2)^{-\frac14}$ on $[0,X_n^\frac13]$. Thus,
$$\left|\int_0^{X_n^\frac13}\la x\ra^{\mu} e^{{\rm i}kx}h_m(x)\overline{h_n(x)}dx\right|\le C\int_0^{X_n^\frac13}\la x\ra^{\mu}(X_n^2-x^2)^{-\frac14}dx \le CX_n^{-\frac16+\frac\mu3}\le\frac{C}{n^{\frac{1}{12}-\frac\mu4}m^{\frac{1}{12}-\frac\mu4}}.$$
When $x\ge X_n^\frac13\ge2X_{m_0}$, by Lemma \ref{zetaesti}, $e^{-|\zeta_m|}\le e^{-C(x-X_m)}$. Thus, by $\rm{H\ddot{o}lder}$ inequality,
$$\left|\int_{X_n^\frac13}^{+\infty}\la x\ra^{\mu} e^{{\rm i}kx}h_m(x)\overline{h_n(x)}dx\right|\le C\left(\int_{X_n^\frac13}^{+\infty}\la x\ra^{2\mu}e^{-Cx}dx\right)^\frac12\le e^{-CX_n^\frac13}.$$
\end{proof}

Now we turn to the third case that is $m,n>m_0$. Rewrite
$\int_0^{+\infty} \la x\ra^{\mu} e^{{\rm i}kx}h_m(x)\overline{h_n(x)}dx=\int_0^{X_n}+\int_{X_n}^{+\infty}$.
We first turn to the integral $\int_{X_n}^{+\infty}$. In the following part of  this section we will denote 
$\cal{F}(x) = \la x\ra^\mu e^{{\rm i}kx}\psi_1^{(m)}(x) \overline{\psi_1^{(n)}(x)}$ for simplicity. 
\subsection{the integral on $[X_n, +\infty)$}
In fact in this part we have
\begin{lemma}\label{Xntowuqiong}
When $m_0<m\le n$,
$\displaystyle\left|\int_{X_n}^{+\infty}\la x\ra^\mu e^{{\rm i}kx}h_m(x)\overline{h_n(x)}dx\right|\leq\frac{C}{m^{\frac{1}{12}-\frac\mu4} n^{\frac{1}{12}-\frac\mu4}}. $
\end{lemma}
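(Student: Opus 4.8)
The plan is to reduce the bound to the leading Langer term $\psi_1^{(m)}\overline{\psi_1^{(n)}}$ and then let the exponential decay of the Hermite functions past their turning points do all the work. Observe that $m\le n$ forces $X_m\le X_n$, so on the whole interval $[X_n,+\infty)$ both $h_m$ and $h_n$ lie in their classically forbidden regions; hence the oscillation of $e^{{\rm i}kx}$ plays no role here and one simply uses $|e^{{\rm i}kx}|\le 1$. This is exactly why the bound in Lemma \ref{Xntowuqiong} carries no factor $|k|\vee|k|^{-1}$, in contrast with Lemma \ref{Indecaysection1}, whose $k$-dependence will come entirely from the oscillatory interval $[0,X_n]$.

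First I would apply Remark \ref{hnbiaoda}: since $m,n>m_0$, one has $h_j=\psi_1^{(j)}+\psi_2^{(j)}$ with $\psi_2^{(j)}=\psi_1^{(j)}\,O(1/\lambda_j)$ for $j\in\{m,n\}$. Expanding $h_m\overline{h_n}$ and using $\lambda_m\le\lambda_n$ together with $\lambda_m$ bounded below, one gets $|h_m\overline{h_n}|\le C|\psi_1^{(m)}|\,|\psi_1^{(n)}|$, so it suffices to bound $\int_{X_n}^{+\infty}|\mathcal{F}(x)|\,dx=\int_{X_n}^{+\infty}\la x\ra^{\mu}|\psi_1^{(m)}(x)|\,|\psi_1^{(n)}(x)|\,dx$. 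By Lemma \ref{Bessel}, $|\psi_1^{(j)}(x)|\le C(x^2-\lambda_j)^{-1/4}e^{-|\zeta_j(x)|}$ for $x>X_j$, and by Lemma \ref{zetaesti}, $|\zeta_j(x)|\ge\tfrac{2\sqrt2}{3}X_j^{1/2}(x-X_j)^{3/2}$. Dropping the favourable factor $e^{-|\zeta_m(x)|}\le1$ and using $\lambda_m\le\lambda_n$ to bound $(x^2-\lambda_m)^{-1/4}\le(x^2-\lambda_n)^{-1/4}$ on $[X_n,+\infty)$, the estimate reduces to
$$\int_{X_n}^{+\infty}\la x\ra^{\mu}\,(x^2-\lambda_n)^{-1/2}\,e^{-\frac{2\sqrt2}{3}X_n^{1/2}(x-X_n)^{3/2}}\,dx.$$

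The key step is the Airy-type rescaling $x=X_n+X_n^{-1/3}u$, $u\ge0$: a short computation gives $(x^2-\lambda_n)^{-1/2}\le X_n^{-1/3}u^{-1/2}$, the exponent becomes exactly $\tfrac{2\sqrt2}{3}u^{3/2}$, $dx=X_n^{-1/3}\,du$, and $\la x\ra^{\mu}\le CX_n^{\mu}(1+u)^{\mu}$ (using $X_n\ge1$). Collecting the powers of $X_n$ yields $\int_{X_n}^{+\infty}|\mathcal{F}(x)|\,dx\le CX_n^{\mu-2/3}\int_0^{+\infty}(1+u)^{\mu}u^{-1/2}e^{-\frac{2\sqrt2}{3}u^{3/2}}\,du\le CX_n^{\mu-2/3}$, the $u$-integral converging (integrable $u^{-1/2}$ singularity at $0$, super-exponential decay at $\infty$). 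Finally, since $X_n=\sqrt{2n-1}$ and $0\le\mu<\tfrac13$, one has $X_n^{\mu-2/3}\le Cn^{\mu/2-1/3}\le Cn^{\mu/2-1/6}=Cn^{-2(1/12-\mu/4)}\le C\,m^{-(1/12-\mu/4)}n^{-(1/12-\mu/4)}$, where the middle step uses $-1/3<-1/6$ and the last uses $m\le n$ together with $1/12-\mu/4>0$; this is the claimed bound.

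I do not expect a real obstacle here: the statement is a soft tail estimate, with only bookkeeping to be checked — that the $\psi_1^{(j)}$-bounds and Lemma \ref{zetaesti} really apply up to the endpoint $x=X_n$, and that the apparent $(x^2-\lambda_n)^{-1/4}$ singularity at the turning point is harmless because it is integrable (it is in fact cancelled by the vanishing of $(\tfrac{\pi\zeta}{2})^{1/2}H^{(1)}_{1/3}(\zeta)$ as $\zeta\to0$). The bound even has a power of $n$ to spare, which reflects that the dominant contribution to $\int_0^{+\infty}$ comes from the complementary interval $[0,X_n]$, where — by contrast — the genuine difficulty of exploiting the oscillation $e^{{\rm i}kx}$ against the Langer asymptotics lies.
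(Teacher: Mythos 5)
Your proof is correct, and it reaches the conclusion by a different (arguably cleaner and slightly sharper) decomposition than the paper's. The paper proves Lemma \ref{Xntowuqiong} via Lemmas \ref{2.2} and \ref{2.3}, splitting $[X_n,+\infty)$ into $[X_n,\,X_n+X_n^{1/3}]$, $[X_n+X_n^{1/3},\,2X_n]$ and $[2X_n,\,+\infty)$: near the turning point one simply integrates the $(x-X_n)^{-1/2}$ singularity over a non-Airy length $X_n^{1/3}$ (producing $X_n^{\mu-1/3}$), while on the outer two pieces one invokes the strong exponential smallness $|\zeta_n|\gtrsim X_n$ resp.\ $|\zeta_n|\gtrsim X_n(x-X_n)$. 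You instead perform a single Airy rescaling $x=X_n+X_n^{-1/3}u$ on the whole half-line and let the exponent $\tfrac{2\sqrt2}{3}u^{3/2}$ absorb both the integrable singularity at $u=0$ and the tail at once; the resulting $C X_n^{\mu-2/3}$ is sharper than what the paper's split produces, and both comfortably satisfy the claimed estimate. The ingredients (Remark \ref{hnbiaoda}, Lemma \ref{Bessel}, Lemma \ref{zetaesti}) are the same, so this is a reorganization of the same calculation rather than a new method.

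Two small points of rigour. First, the invocation ``by Lemma \ref{Bessel}, $|\psi_1^{(j)}(x)|\le C(x^2-\lambda_j)^{-1/4}e^{-|\zeta_j(x)|}$ for $x>X_j$'' is not a direct quotation: estimate (\ref{belhcomp}) gives $\bigl|\sqrt{\tfrac{\pi\zeta}{2}}H^{(1)}_{1/3}(\zeta)\bigr|\le e^{-|\zeta|}$ only for $\zeta$ on the positive imaginary axis with $|\zeta|>c_3$, whereas for $0<|\zeta|\le c_3$ one has instead the algebraic bound (\ref{belhball2}). Since that algebraic bound is uniformly bounded on the compact range while $e^{-|\zeta|}\ge e^{-c_3}$ there, the two cases do combine into $\bigl|\sqrt{\tfrac{\pi\zeta}{2}}H^{(1)}_{1/3}(\zeta)\bigr|\le C e^{-|\zeta|}$ for all $\zeta={\rm i}|\zeta|$, so the step is sound — but it should be stated, since it is precisely the range $u\lesssim1$ (i.e.\ $|\zeta_n|$ small) that contributes the dominant $X_n^{-2/3}$ power. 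Second, your closing remark that the $(x^2-\lambda_n)^{-1/4}$ factor is ``cancelled by the vanishing of $(\tfrac{\pi\zeta}{2})^{1/2}H^{(1)}_{1/3}(\zeta)$ as $\zeta\to0$'' is true but unused in your computation: with the $e^{-|\zeta_n|}$ bound you keep, the integrand still carries a $u^{-1/2}$ singularity, and you rely on its integrability rather than on cancellation. Neither of these affects the validity of the argument.
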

Lemma \ref{Xntowuqiong} results from the following two lemmas.
\begin{lemma}\label{2.2}
For $m_0<m\le n$, 
$$\displaystyle\left|\int_{2X_n}^{+\infty}\la x\ra^\mu e^{{\rm i}kx}h_m(x) \overline{h_n(x)}dx\right|\leq e^{-C n}.$$
\end{lemma}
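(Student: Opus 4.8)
The plan is to exploit the exponential decay of both Hermite functions in the classically forbidden region $x > X_n$, where $X_n = \sqrt{\lambda_n}$ is the larger turning point. First I would invoke Remark \ref{hnbiaoda} (together with Lemma \ref{xiaobiaoda} for the finitely many indices $\le m_0$, but here $m,n>m_0$) to write $h_n(x) = \psi_1^{(n)}(x) + \psi_2^{(n)}(x)$ with $\psi_1^{(n)}(x) = (\lambda_n - x^2)^{-1/4}\bigl(\tfrac{\pi\zeta_n}{2}\bigr)^{1/2} H_{1/3}^{(1)}(\zeta_n)$ and $|\psi_2^{(n)}(x)| \le C\lambda_n^{-1}|\psi_1^{(n)}(x)|$, and similarly for $h_m$. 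On the range $x > 2X_n$ we have $x > 2X_m$ as well (since $m \le n$), so both decompositions are valid. The key analytic input is the bound on the Hankel function in the region where its argument $\zeta$ has positive imaginary part: by Lemma \ref{Bessel}, $\bigl|\sqrt{\tfrac{\pi\zeta}{2}} H_{1/3}^{(1)}(\zeta)\bigr| \le e^{-|\zeta|}$ (or the analogous estimate $\le C e^{-\Im\zeta}$), so that $|\psi_1^{(n)}(x)| \le (x^2 - \lambda_n)^{-1/4} e^{-|\zeta_n(x)|}$ and likewise for $\psi_1^{(m)}$.

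Next I would estimate $|\zeta_n(x)|$ from below for $x \ge 2X_n$. Since $\zeta_n(x) = \int_{X_n}^x \sqrt{t^2 - \lambda_n}\,dt$ (up to the phase convention, in the forbidden region it is $\mathrm{i}$ times a real increasing function), on $x \ge 2X_n$ we have $t^2 - \lambda_n \ge \tfrac34 t^2$, hence $|\zeta_n(x)| \ge \tfrac{\sqrt3}{2}\int_{X_n}^x t\,dt \ge c(x^2 - \lambda_n) \ge c' x^2$ for a suitable constant, and in particular $|\zeta_n(x)| \ge c\lambda_n = c(2n-1) \ge c'' n$ at the left endpoint $x = 2X_n$, with linear-in-$x^2$ growth thereafter. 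The same lower bound holds for $|\zeta_m(x)|$ with $\lambda_m$ in place of $\lambda_n$, and since $m$ could be small relative to $n$ I would only use $|\zeta_m(x)| \ge 0$ and rely on $|\zeta_n|$ to produce the decay. Combining, $|\mathcal F(x)| \lesssim \la x\ra^\mu (x^2-\lambda_n)^{-1/2} e^{-|\zeta_n(x)|} \lesssim \la x\ra^\mu e^{-c x^2}$ on $[2X_n,\infty)$ after absorbing the mild algebraic factor, and then
$$
\left|\int_{2X_n}^{+\infty}\la x\ra^\mu e^{{\rm i}kx}h_m(x) \overline{h_n(x)}dx\right| \le C\int_{2X_n}^{+\infty}\la x\ra^\mu e^{-c x^2}dx \le C e^{-c' \lambda_n} \le e^{-Cn}
$$
for all $n$ large, which is the claimed bound (the cross terms involving $\psi_2^{(m)}$, $\psi_2^{(n)}$ only improve the estimate by an extra $\lambda^{-1}$ factor). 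The oscillatory factor $e^{{\rm i}kx}$ plays no role here — it has modulus one — so no integration by parts or stationary phase analysis is needed in this regime.

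The main obstacle, such as it is, is bookkeeping the turning-point normalization: one must be careful that the representation of $h_n$ from Remark \ref{hnbiaoda} is genuinely valid on all of $[2X_n,\infty)$ and that the constants in Lemma \ref{Bessel} and Lemma \ref{zetaesti} are uniform in $n$. Once the uniform lower bound $|\zeta_n(x)| \gtrsim x^2$ on $x \ge 2X_n$ is in hand, the Gaussian decay is immediate and the final integral is a routine estimate; the genuinely hard oscillatory-integral work is confined to the complementary range $[X_n, 2X_n]$, which is treated in the companion lemma and not here.
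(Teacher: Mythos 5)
Your proposal is correct and follows essentially the same route as the paper: decompose $h_m,h_n$ via the turning-point representation, bound the Hankel factor by $e^{-|\zeta_n|}$ using Lemma~\ref{Bessel}, and use the growth of $|\zeta_n(x)|$ on $[2X_n,\infty)$ (which forces $|\zeta_n(2X_n)|\gtrsim X_n^2\sim n$) to produce the $e^{-Cn}$ decay, the oscillatory factor $e^{{\rm i}kx}$ playing no role. One small technical slip: the chain $|\zeta_n(x)|\ge\tfrac{\sqrt3}{2}\int_{X_n}^x t\,dt$ relies on $\sqrt{t^2-\lambda_n}\ge\tfrac{\sqrt3}{2}t$, which fails for $t$ near $X_n$; the correct lower bound on $[2X_n,\infty)$ is exactly the one supplied by Lemma~\ref{zetaesti} (which you do cite at the end), so the argument goes through once that is used in place of your intermediate computation.
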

\begin{proof}
Since $h_n(x):=\psi^{(n)}_1(x) +\psi^{(n)}_2(x)$, we only need to prove  
$\displaystyle\left|\int_{2X_n}^{+\infty}\cal{F}(x)dx\right|\leq e^{-C n}$, 
since the other three integrals have better estimates,  where $\psi^{(n)}_2(x)=O(\frac1{\lambda_n})\psi^{(n)}_1(x)$. 
 Recall that $\psi_1^{(n)}(x)=\left(\lambda_n - x^2 \right)^{-\frac{1}{4}}\sqrt{\frac{\pi \zeta_n}{2}}H^{(1)}_\frac{1}{3}(\zeta_n)$, then
by Lemma \ref{Bessel},  $\left|\sqrt{\frac{\pi \zeta_n}{2}}H^{(1)}_\frac{1}{3}(\zeta_n)\right|\leq e^{-\left| \zeta_n\right|}$ when $x\geq 2X_n$.  By Lemma \ref{zetaesti}, when $x\geq2X_n$,
$\left|\zeta_n\right| \geq \frac{2\sqrt2}{3}X_n(x-X_n)\geq\frac{\sqrt2}{3}(x-X_n)+\frac{\sqrt2}{3}X_n^2$. 
Therefore,
\begin{align*}
          \left|\int_{2X_n}^{+\infty}\cal{F}(x) dx\right| \leq &\int_{2X_n}^{+\infty}\la x\ra^\mu(x^2-\lambda_m)^{-\frac{1}{4}}(x^2-\lambda_n)^{-\frac{1}{4}} e^{-\left| \zeta_m\right|}e^{-\left| \zeta_n\right|}dx \\
\leq & \int_{2X_n}^{+\infty}\la x\ra^\mu(x^2-\lambda_n)^{-\frac{1}{4}}(x^2-\lambda_n)^{-\frac{1}{4}} e^{-\left| \zeta_n\right|}dx\\
\leq &C e^{-\frac{\sqrt2}{3}X_n^2}n^{-\frac{1}{2}} \int_{2X_n}^{+\infty}\la x\ra^\mu e^{-\frac{\sqrt2}{3}(x-X_n)}dx \leq e^{-C n}.
\end{align*}
\end{proof}

\begin{lemma}\label{2.3}
For  $m_0<m\le n$, 
$$\displaystyle\left|\int^{2X_n}_{X_n}\la x\ra^\mu e^{{\rm i}kx}h_m(x) \overline{h_n(x)}dx\right|\leq  \frac{C}{m^{\frac{1}{12}-\frac\mu4} n^{\frac{1}{12}-\frac\mu4}}.$$
\end{lemma}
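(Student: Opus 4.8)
The interval $[X_n,2X_n]$ sits entirely to the \emph{right} of both turning points (recall $X_m\le X_n$ since $m\le n$, and $X_j=\sqrt{\lambda_j}$), so on it both $h_m$ and $h_n$ are in the classically forbidden regime and their Langer representatives $\psi_1^{(m)},\psi_1^{(n)}$ carry exponential Bessel decay coming from $H^{(1)}_{1/3}$. The plan is therefore to avoid oscillatory/stationary-phase arguments in $x$ altogether and simply bound the integrand pointwise: the factor $e^{{\rm i}kx}$ is irrelevant here, and the whole estimate comes down to how much the exponential decay improves on the a priori divergent $\int_{X_n}^{2X_n}(x-X_n)^{-1/2}\,dx$.

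First I would dispose of the lower-order pieces: since $m,n>m_0$ we have $\lambda_m,\lambda_n>c_1$, so by Remark~\ref{hnbiaoda} $h_j=\psi_1^{(j)}+\psi_2^{(j)}$ with $\psi_2^{(j)}=O(\lambda_j^{-1})\psi_1^{(j)}$ for $j\in\{m,n\}$; expanding $h_m\overline{h_n}$, every term except $\psi_1^{(m)}\overline{\psi_1^{(n)}}$ picks up an extra factor $\lambda_m^{-1}$ and/or $\lambda_n^{-1}$, so it suffices to control $\big|\int_{X_n}^{2X_n}\mathcal{F}(x)\,dx\big|$ with $\mathcal{F}(x)=\la x\ra^\mu e^{{\rm i}kx}\psi_1^{(m)}(x)\overline{\psi_1^{(n)}(x)}$. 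Next, on $[X_n,2X_n]$ one has $\la x\ra^\mu\le CX_n^\mu$; by Lemma~\ref{Bessel}, $\big|\sqrt{\pi\zeta_j/2}\,H^{(1)}_{1/3}(\zeta_j)\big|\le e^{-|\zeta_j|}$ for $j=m,n$; and because $x\ge X_n\ge X_m$ we get $x+X_m,\,x+X_n\ge X_n$ and $x-X_m\ge x-X_n$, hence
$$|\psi_1^{(n)}(x)|\le\big((x-X_n)X_n\big)^{-1/4}e^{-|\zeta_n|},\qquad |\psi_1^{(m)}(x)|\le\big((x-X_n)X_n\big)^{-1/4},$$
so that $|\mathcal{F}(x)|\le CX_n^{\mu}X_n^{-1/2}(x-X_n)^{-1/2}e^{-|\zeta_n|}$.

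Now I would feed in the turning-point lower bound of Lemma~\ref{zetaesti}, $|\zeta_n(x)|\ge\frac{2\sqrt2}{3}X_n^{1/2}(x-X_n)^{3/2}$ on $[X_n,2X_n]$, and substitute $v=X_n^{1/2}(x-X_n)^{3/2}$ to obtain $\int_{X_n}^{2X_n}(x-X_n)^{-1/2}e^{-|\zeta_n|}\,dx\le CX_n^{-1/6}\int_0^\infty v^{-2/3}e^{-cv}\,dv\le CX_n^{-1/6}$; therefore
$$\Big|\int_{X_n}^{2X_n}\mathcal{F}(x)\,dx\Big|\le CX_n^{\mu-\frac12-\frac16}=CX_n^{\mu-\frac23}\le Cn^{\frac\mu2-\frac13},$$
using $X_n=\sqrt{2n-1}\ge\sqrt n$. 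Finally, since $0\le\mu<\frac13$ we have $\frac\mu4-\frac1{12}<0$, so $m\le n$ forces $m^{\frac\mu4-\frac1{12}}\ge n^{\frac\mu4-\frac1{12}}$; hence $m^{-\frac1{12}+\frac\mu4}n^{-\frac1{12}+\frac\mu4}\ge n^{-\frac16+\frac\mu2}\ge n^{\frac\mu2-\frac13}$, which is exactly the claimed bound.

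The one place where care is needed is precisely the estimate $\int_{X_n}^{2X_n}(x-X_n)^{-1/2}e^{-|\zeta_n|}\,dx\le CX_n^{-1/6}$: without the exponential this integral has size $X_n^{1/2}$, which is far too large, and it is only the combination of the Bessel decay from Lemma~\ref{Bessel} with the cubic-power vanishing $|\zeta_n|\gtrsim X_n^{1/2}(x-X_n)^{3/2}$ of Lemma~\ref{zetaesti} that produces the gain of order $X_n^{-2/3}$ --- exactly the margin the statement demands. One should also note that the Langer decomposition of Remark~\ref{hnbiaoda} and the bounds of Lemmas~\ref{Bessel}--\ref{zetaesti} are valid all the way down to $x=X_n$, the $(x-X_n)^{-1/4}$ prefactor singularity being harmless since it is integrable; and that, in contrast to the remaining interval $[0,X_n]$, no oscillatory-integral analysis in $x$ enters here.
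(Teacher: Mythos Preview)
Your proof is correct and in fact yields a sharper bound than the paper's, $CX_n^{\mu-2/3}$ versus $CX_n^{\mu-1/3}$. The approaches differ in one respect: the paper splits $[X_n,2X_n]$ at $X_n+X_n^{1/3}$, treating the near-turning-point piece by the crude bound $\int_{X_n}^{X_n+X_n^{1/3}}(x^2-\lambda_n)^{-1/2}\,dx\le CX_n^{-1/3}$ (no exponential used) and the far piece by the uniform lower bound $|\zeta_n|\ge\tfrac{2\sqrt2}{3}X_n$ from Lemma~\ref{zetaesti}, which kills it outright. You instead keep the exponential $e^{-|\zeta_n|}$ across the whole interval and absorb it via the substitution $v=X_n^{1/2}(x-X_n)^{3/2}$, turning the integral into a Gamma-type integral and picking up the extra factor $X_n^{-1/3}$. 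Your route is a bit slicker and avoids the cut; the paper's route is more elementary in that it never needs the $e^{-|\zeta_n|}$ bound \emph{near} $\zeta_n=0$.

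On that last point, one small imprecision: Lemma~\ref{Bessel} literally gives $\big|\sqrt{\pi\zeta/2}\,H^{(1)}_{1/3}(\zeta)\big|\le e^{-|\zeta|}$ only for $\zeta\in(c_3,\infty){\rm i}$; for $\zeta\in(0,c_2]{\rm i}$ it gives instead the bound $\le C\max\{|\zeta|^{1/6},|\zeta|^{5/6}\}$. Taking $c_2=c_3=1$ and combining, you get $\le Ce^{-|\zeta|}$ uniformly on $(0,\infty){\rm i}$ with an absolute constant $C$, which is all your argument needs --- but it is worth saying explicitly.
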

\begin{proof}
As above we only need to estimate 
$|I|=\displaystyle\left|\int^{2X_n}_{X_n}\cal{F}(x) dx\right|\leq \frac{C}{m^{\frac{1}{12}-\frac\mu4} n^{\frac{1}{12}-\frac\mu4}}$.
We divide $I$ into two parts as
\begin{align*}
&|I|=\displaystyle\left|(\int^{2X_n}_{X_n+X_n^{\frac{1}{3}}}+\int^{X_n+X_n^{\frac{1}{3}}}_{X_n})\cal{F}(x) dx\right|\\
\leq& CX_n^\mu \Big(\int^{2X_n}_{X_n+X_n^{\frac{1}{3}}}+\int^{X_n+X_n^{\frac{1}{3}}}_{X_n}\Big)\left|\psi_1^{(m)}(x) \overline{\psi_1^{(n)}(x)}\right|dx.
\end{align*}
By Lemma \ref{zetaesti}, when $x\geq X_n+X_n^{\frac{1}{3}}$,
$\left|\zeta_n\right| \geq\frac{2\sqrt2}{3}X_n^\frac{1}{2}(x-X_n)^\frac{3}{2}\geq\frac{2\sqrt2}{3}X_n.$
Thus,
\begin{align*}
          \int^{2X_n}_{X_n+X_n^{\frac{1}{3}}}\left|\psi_1^{(m)}(x) \overline{\psi_1^{(n)}(x)}\right| dx
\leq & C\int^{2X_n}_{X_n+X_n^{\frac{1}{3}}}(x^2-\lambda_m)^{-\frac{1}{4}}(x^2-\lambda_n)^{-\frac{1}{4}} e^{-\left| \zeta_n\right|}dx \\
\leq & Ce^{-\frac{2\sqrt2}{3}X_n} \int^{2X_n}_{X_n+X_n^{\frac{1}{3}}}(x^2-\lambda_n)^{-\frac{1}{2}}dx\le  Ce^{-\frac{2\sqrt2}{3}X_n}.
\end{align*}
On the other hand,
\begin{align*}
          &\int^{X_n+X_n^{\frac{1}{3}}}_{X_n}\left|\psi_1^{(m)}(x) \overline{\psi_1^{(n)}(x)}\right|dx \leq C\int^{X_n+X_n^{\frac{1}{3}}}_{X_n}(x^2-\lambda_m)^{-\frac{1}{4}}(x^2-\lambda_n)^{-\frac{1}{4}}dx \\
          &\leq C\int^{X_n+X_n^{\frac{1}{3}}}_{X_n} (x^2-\lambda_n)^{-\frac{1}{2}}dx  \leq CX_n^{-\frac{1}{2}} \int^{X_n+X_n^{\frac{1}{3}}}_{X_n}(x-X_n)^{-\frac{1}{2}}dx \leq CX_n^{-\frac{1}{3}}.
\end{align*}
Therefore,  
$|I|\leq CX_n^{\mu-\frac13} \leq\frac{C}{m^{\frac{1}{12}-\frac\mu4}n^{\frac{1}{12}-\frac\mu4}}$. 
\end{proof}
\indent In  the following we will estimate the integral on $[0,X_n]$, for which we have to  discuss  two different cases, namely,  $X_n\geq 2X_m$ or $X_m\leq X_n \leq 2X_m$ with $n\geq m>m_0$.
\subsection{the integral on $[0, X_n]$ for the case $X_n\geq 2X_m$}
To simplify the following proof we will use the following notation in the remained parts.
We define
$
f_m(x)=\int_0^{\infty}e^{-t}t^{-\frac{1}{6}}\lrp{1+\frac{{\rm i}t}{2\zeta_m}}^{-\frac{1}{6}}dt
$
and
$
f_n(x)=\int_0^{\infty}e^{-t}t^{-\frac{1}{6}}\lrp{1+\frac{{\rm i}t}{2\zeta_n}}^{-\frac{1}{6}}dt.
$
 When $x\in [0,X_{m}]$,  from a straightforward computation
\begin{align*}
\psi_1^{(m)}(x) &= (X_m^2-x^2)^{-\frac{1}{4}}\sqrt{\frac{\pi\zeta_m}{2}} H_\frac{1}{3}^{(1)}(\zeta_m)\\
&= (X_m^2-x^2)^{-\frac{1}{4}} \frac{e^{{\rm i}\lrp{\zeta_m-\frac{\pi}{6}-\frac{\pi}{4}}}}{\Gamma{\lrp{\frac{5}{6}}}}
\int_0^{\infty}e^{-t}t^{-\frac{1}{6}}\lrp{1+\frac{{\rm i}t}{2\zeta_m}}^{-\frac{1}{6}}dt\\
&=C(X_m^2-x^2)^{-\frac{1}{4}}e^{{\rm i}\zeta_m(x)}f_m(x).
\end{align*}
Similarly, when $x\in [0,X_m]$,
$
\overline{\psi_1^{(n)}(x)} =C(X_n^2-x^2)^{-\frac{1}{4}}e^{-{\rm i}\zeta_n(x)}\overline{f_n(x)}.
$
We also define $\Psi(x)=(X_m^2-x^2)^{-\frac{1}{4}} (X_n^2-x^2)^{-\frac{1}{4}}\cdot f_m(x)\overline{f_n(x)}$
and
$g(x)=(\zeta_n(x)-\zeta_m(x)-kx)^\prime=\sqrt{X_n^2-x^2}-\sqrt{X_m^2-x^2}-k$ with  $x\in [0,X_m]$.
We will use the derivative of $\Psi$ for many times, i.e.
\begin{align*}
\Psi^\prime(x)=&\frac{1}{2}x(X_m^2-x^2)^{-\frac{5}{4}} (X_n^2-x^2)^{-\frac{1}{4}}\cdot f_m(x)\overline{f_n(x)}\\
+&\frac{1}{2}x(X_m^2-x^2)^{-\frac{1}{4}} (X_n^2-x^2)^{-\frac{5}{4}}\cdot f_m(x)\overline{f_n(x)}\\
+&(X_m^2-x^2)^{-\frac{1}{4}} (X_n^2-x^2)^{-\frac{1}{4}}\cdot \left(f_m^\prime(x)\overline{f_n(x)}+f_m(x)\overline{f_n^\prime(x)}\right).
\end{align*}
From $x\in [0, X_m]$ we obtain
 $\left|f_m(x)\right| \leq \Gamma(\frac{5}{6})$ and  $\left|f_n(x)\right| \leq \Gamma(\frac{5}{6})$. By a straightforward computation we have 
\begin{Corollary}\label{coro2.5}
For $x\in [0,X_m)$ and $m\leq  n$,
\begin{align*}
\left| \Psi^\prime(x)\right|\leq & C\Big(x(X_m^2-x^2)^{-\frac{5}{4}}(X_n^2-x^2)^{-\frac{1}{4}} +x(X_m^2-x^2)^{-\frac{1}{4}} (X_n^2-x^2)^{-\frac{5}{4}}+\\
& \frac{(X_m^2-x^2)^{\frac{1}{4}} (X_n^2-x^2)^{-\frac{1}{4}}}{X_m(X_m-x)^3} +
\frac{(X_m^2-x^2)^{-\frac{1}{4}} (X_n^2-x^2)^{\frac{1}{4}}}{X_n (X_n-x)^3}\Big)\\
=&C\big(J_1+J_2+J_3+J_4\big)\leq C(J_1+J_3).
\end{align*}
\end{Corollary}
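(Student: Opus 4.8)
The proof starts from the explicit expression for $\Psi'(x)$ displayed just above the statement, which is the sum of three terms coming from the product rule applied to $\Psi(x)=(X_m^2-x^2)^{-1/4}(X_n^2-x^2)^{-1/4}f_m(x)\overline{f_n(x)}$. The plan is to bound the three summands one at a time. For the first two summands I would use only the pointwise bounds $|f_m(x)|\le\Gamma(\tfrac56)$ and $|f_n(x)|\le\Gamma(\tfrac56)$ on $[0,X_m]$ recorded above; since the factors multiplying $f_m\overline{f_n}$ in those two terms are precisely $\tfrac12 x(X_m^2-x^2)^{-5/4}(X_n^2-x^2)^{-1/4}$ and $\tfrac12 x(X_m^2-x^2)^{-1/4}(X_n^2-x^2)^{-5/4}$, this gives at once the bounds $CJ_1$ and $CJ_2$.

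The third summand $(X_m^2-x^2)^{-1/4}(X_n^2-x^2)^{-1/4}\big(f_m'(x)\overline{f_n(x)}+f_m(x)\overline{f_n'(x)}\big)$ requires bounds on $f_m'$ and $f_n'$. Differentiating $f_m(x)=\int_0^\infty e^{-t}t^{-1/6}(1+\tfrac{{\rm i}t}{2\zeta_m})^{-1/6}\,dt$ under the integral sign and using $\zeta_m'(x)=\sqrt{X_m^2-x^2}$ yields
\[
f_m'(x)=\frac{{\rm i}\,\zeta_m'(x)}{12\,\zeta_m(x)^2}\int_0^\infty e^{-t}t^{5/6}\Big(1+\frac{{\rm i}t}{2\zeta_m}\Big)^{-7/6}dt .
\]
Because $\zeta_m(x)$ is real (and negative) for $x\in[0,X_m)$, the number $1+\tfrac{{\rm i}t}{2\zeta_m}$ has real part $1$ and hence modulus $\ge 1$, so $\big|(1+\tfrac{{\rm i}t}{2\zeta_m})^{-7/6}\big|\le 1$ and the integral is $\le\Gamma(\tfrac{11}{6})$. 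Combining this with the turning-point lower bound $|\zeta_m(x)|\ge c\,X_m^{1/2}(X_m-x)^{3/2}$ (Lemma \ref{zetaesti}) gives $|f_m'(x)|\le C(X_m^2-x^2)^{1/2}\big/\big(X_m(X_m-x)^3\big)$, and symmetrically $|f_n'(x)|\le C(X_n^2-x^2)^{1/2}\big/\big(X_n(X_n-x)^3\big)$ (legitimate since $x\in[0,X_m]\subset[0,X_n)$). Feeding these together with $|f_m|,|f_n|\le\Gamma(\tfrac56)$ into the third summand bounds $f_m'\overline{f_n}$ by $CJ_3$ and $f_m\overline{f_n'}$ by $CJ_4$, so $|\Psi'(x)|\le C(J_1+J_2+J_3+J_4)$.

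It then remains to absorb $J_2$ and $J_4$, which is elementary. From $m\le n$, hence $X_m\le X_n$ and $X_m^2-x^2\le X_n^2-x^2$, one gets $J_2/J_1=(X_m^2-x^2)/(X_n^2-x^2)\le1$. For $J_4$ versus $J_3$ I would write
\[
\frac{J_4}{J_3}=\Big(\frac{X_n^2-x^2}{X_m^2-x^2}\Big)^{1/2}\frac{X_m(X_m-x)^3}{X_n(X_n-x)^3}
=\frac{(X_n+x)^{1/2}}{(X_m+x)^{1/2}}\cdot\frac{X_m}{X_n}\cdot\frac{(X_m-x)^{5/2}}{(X_n-x)^{5/2}},
\]
and invoke the standing hypothesis $X_n\ge 2X_m$ of this subsection together with $x\in[0,X_m)$: then $X_n+x\le 2X_n$ and $X_m+x\ge X_m$, while $X_n-x\ge X_n-X_m\ge X_n/2$ and $X_m-x\le X_m$, so $(X_m-x)/(X_n-x)\le 2X_m/X_n$; collecting the powers of $X_m/X_n$ gives $J_4/J_3\le C(X_m/X_n)^2\le C$. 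Hence $|\Psi'(x)|\le C(J_1+J_3)$.

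The one genuinely delicate step is the estimate of $f_m'$ and $f_n'$: it relies on $\zeta_m$ being real on the classically allowed interval $[0,X_m)$, which keeps $1+\tfrac{{\rm i}t}{2\zeta_m}$ in the right half-plane so the fractional power is harmless, together with the Langer turning-point lower bound $|\zeta_m(x)|\ge c\,X_m^{1/2}(X_m-x)^{3/2}$. Everything after that — the comparison $J_1+J_2+J_3+J_4\le C(J_1+J_3)$ — is just bookkeeping with the two inequalities $X_m\le X_n$ and $X_n\ge 2X_m$.
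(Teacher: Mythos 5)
Your decomposition of $\Psi'$ into three summands, the bound on the first two via $|f_m|,|f_n|\le\Gamma(\tfrac56)$, and the estimate of the third summand by differentiating $f_m$ under the integral sign, invoking $|1+\tfrac{{\rm i}t}{2\zeta_m}|\ge 1$ for $\zeta_m$ real negative on $[0,X_m)$, using $\zeta_m'=\sqrt{X_m^2-x^2}$, and feeding in the Langer lower bound $|\zeta_m|\ge c\,X_m^{1/2}(X_m-x)^{3/2}$, is exactly the ``straightforward computation'' the paper alludes to and is correct. The absorption $J_2\le J_1$ from $X_m^2-x^2\le X_n^2-x^2$ is also fine.

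However, in the last step you ``invoke the standing hypothesis $X_n\ge 2X_m$ of this subsection'' to prove $J_4\le CJ_3$. That is a genuine gap: the corollary's hypotheses are only $x\in[0,X_m)$ and $m\le n$, and the paper reuses the conclusion $|\Psi'(x)|\le C(J_1+J_3)$ in the later subsection treating the complementary range $X_m\le X_n\le 2X_m$ (see the proofs of Lemmas \ref{lemma2.10} and \ref{lemma2.11}). If your proof of the corollary genuinely required $X_n\ge2X_m$, those later applications would be unjustified. Fortunately the extra hypothesis is not needed: writing
\[
\frac{J_4}{J_3}
=\frac{X_m}{X_n}\cdot\Big(\frac{X_n+x}{X_m+x}\Big)^{1/2}\cdot\Big(\frac{X_m-x}{X_n-x}\Big)^{5/2}
\le \frac{X_m}{X_n}\cdot\Big(\frac{X_n}{X_m}\Big)^{1/2}\cdot 1
=\Big(\frac{X_m}{X_n}\Big)^{1/2}\le 1,
\]
where the middle factor is maximized at $x=0$ since $x\mapsto(X_n+x)/(X_m+x)$ is nonincreasing, and the last factor is $\le1$ because $X_m-x\le X_n-x$. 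This uses only $X_m\le X_n$, so the corollary holds in the generality in which it is stated and later applied. You should replace the $X_n\ge 2X_m$ step with this elementary observation.
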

Our main intent in this subsection is to set up
\begin{lemma}\label{lemma2.4}
For $k \neq 0$, if $X_n\geq 2X_m$, then $$\displaystyle\left|\int^{X_n}_0 \la x\ra^\mu e^{{\rm i}kx}h_m(x)\overline{h_n(x)}dx\right|\leq
\frac{C (|k|\vee 1)^{\frac{1}{2}}}{m^{\frac18-\frac\mu4} n^{\frac{1}{12}-\frac\mu4}},$$
where $m_0< m\leq n$.
\end{lemma}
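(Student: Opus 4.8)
The plan is to treat the integral $\int_0^{X_n}\la x\ra^\mu e^{{\rm i}kx}h_m(x)\overline{h_n(x)}dx$ in the regime $X_n\ge 2X_m$ by splitting the interval $[0,X_n]$ into three pieces: the ``bulk'' part $[0,X_m-X_m^{1/3}]$ where both Hermite functions are in their oscillatory (Airy/Bessel) regime and one can integrate by parts against the fast phase $e^{{\rm i}(\zeta_n-\zeta_m-kx)}$; the ``turning-point layer'' $[X_m-X_m^{1/3},X_m+c]$ near the smaller turning point $X_m$, where $\psi_1^{(m)}$ degenerates and one estimates crudely using $(X_m^2-x^2)^{-1/4}$ being integrable; and the ``intermediate'' part $[X_m+c, X_n]$ where $x>X_m$ means $h_m$ has left its classically allowed region and is exponentially small, controlled by Lemma~\ref{xiaobiaoda} and the $\zeta$-estimates of Lemma~\ref{zetaesti}, while $h_n$ is still bounded by $(X_n^2-x^2)^{-1/4}$ from Lemma~\ref{Bessel}. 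On the bulk piece I would write $h_m\overline{h_n}\approx C\,\Psi(x)\,e^{{\rm i}(\zeta_m-\zeta_n)(x)}$ using the representations $\psi_1^{(m)}(x)=C(X_m^2-x^2)^{-1/4}e^{{\rm i}\zeta_m}f_m(x)$ and $\overline{\psi_1^{(n)}(x)}=C(X_n^2-x^2)^{-1/4}e^{-{\rm i}\zeta_n}\overline{f_n(x)}$ just derived, so that the full oscillatory integrand is $\la x\ra^\mu\Psi(x)e^{{\rm i}(\zeta_m-\zeta_n-kx+\mathrm{const})}$ with total phase derivative $-g(x)=\sqrt{X_m^2-x^2}-\sqrt{X_n^2-x^2}+k$.

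The key point is the stationary-phase/non-stationary-phase analysis of $g(x)=\sqrt{X_n^2-x^2}-\sqrt{X_m^2-x^2}-k$ on $[0,X_m]$. Since $X_n\ge 2X_m$, on this interval $\sqrt{X_n^2-x^2}\ge \sqrt{X_n^2-X_m^2}\ge \sqrt{3}\,X_m$, so $\sqrt{X_n^2-x^2}-\sqrt{X_m^2-x^2}$ ranges over an interval bounded below by roughly $\sqrt{3}X_m-X_m>0$; thus $|g(x)|$ is either bounded below by a fixed multiple of $X_m$ (when $k$ does not land near that range) or, when $|k|$ is comparable to $X_m$, $g$ may vanish, but then $g'(x)=x\big((X_m^2-x^2)^{-1/2}-(X_n^2-x^2)^{-1/2}\big)\ge 0$ is monotone and one gets a van der Corput bound. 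Integration by parts then yields a boundary term $\big|\Psi\,\la x\ra^\mu/g\big|$ evaluated at the endpoints plus $\int |(\la x\ra^\mu \Psi/g)'|\,dx$; using Corollary~\ref{coro2.5} to bound $\Psi'$ by $C(J_1+J_3)$, and the lower bounds on $|g|$ (or the van der Corput estimate $\int e^{{\rm i}\Phi}\le C|k\vee 1|^{1/2}/\min|\Phi'|^{1/2}$-type reasoning when $g$ has a zero), produces the claimed gain. The factor $(|k|\vee 1)^{1/2}$ in the statement is exactly the signature of a degenerate (van der Corput order $\ge 2$) stationary point contribution, which is why the exponent on $m$ is $\frac18-\frac\mu4$ rather than the better $\frac1{12}-\frac\mu4$ that appears for $n$.

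I would next assemble the pieces: the turning-point layer of width $\sim X_m^{1/3}$ around $X_m$ contributes at most $C X_m^\mu\int_{|x-X_m|<X_m^{1/3}}(X_m^2-x^2)^{-1/4}(X_n^2-x^2)^{-1/4}dx\le C X_m^\mu X_n^{-1/2}X_m^{1/2}X_m^{-1/6}$, which is comfortably of size $X_m^{-1/8+\mu}X_n^{-1/12}$ after using $X_m\le X_n^{1/2}\cdot(\text{something})$ — more precisely using $X_m\le X_n$ and redistributing exponents; the region $x>X_m$ gives exponential smallness in $X_n$ via $e^{-|\zeta_m|}\le e^{-C(x-X_m)}$ and Hölder against $(X_n^2-x^2)^{-1/4}$; and the bulk integration-by-parts term gives $C(|k|\vee 1)^{1/2}$ times a power of $X_m$ and $X_n$. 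Converting $X_m=\sqrt{2m-1}\sim m^{1/2}$, $X_n\sim n^{1/2}$ throughout turns every $X_m^{-a}X_n^{-b}$ into $m^{-a/2}n^{-b/2}$, and one checks the worst term is $m^{-1/8+\mu}\,n^{-1/12+\mu}\cdot$—wait, more carefully $m^{-(1/4-\mu/2)/?}$; the bookkeeping must land on $m^{-(1/8-\mu/4)}n^{-(1/12-\mu/4)}$. The main obstacle is precisely this careful tracking of powers through the integration by parts: ensuring that the boundary term at $x=X_m-X_m^{1/3}$, where $\Psi$ blows up like $(X_m^2-x^2)^{-1/4}\sim X_m^{-1/4}X_m^{-1/4+1/12}$... and $1/g$ can be as large as $X_m^{-1}$ only in the favorable case, must be balanced against the $|k|$-dependence in the unfavorable (near-stationary) case, and showing these two mechanisms never simultaneously degrade to worse than $(|k|\vee1)^{1/2}m^{-1/8+\mu/4}$. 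I expect the analysis of $g$ near its zero — deciding whether to integrate by parts once more or invoke van der Corput, and extracting exactly one power of $(|k|\vee1)^{1/2}$ and no more — to be the delicate heart of the argument, with everything else being the (lengthy but routine) estimates already rehearsed in Lemmas~\ref{2.2}--\ref{2.3}.
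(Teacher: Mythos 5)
Your high-level decomposition of $[0,X_n]$ into a bulk oscillatory region, a turning-point layer near $X_m$, and a tail $[X_m,X_n]$ matches the paper's structure (its Lemmas~\ref{lemma2.6}, \ref{lemma2.7}, \ref{lemma2.8}), and your observation that $X_n\ge 2X_m$ forces $g(x)=\sqrt{X_n^2-x^2}-\sqrt{X_m^2-x^2}-k$ to be bounded away from zero on $[0,X_m]$ whenever $k$ is not comparable to $X_n$ is exactly the correct non-stationary phase input. However, there are two concrete problems. First, you consistently take the turning-point layer to have width $X_m^{1/3}$, whereas it must have width $X_m^{-1/3}$ (the paper cuts at $X_m-X_m^{-1/3}$). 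With width $X_m^{1/3}$ your own estimate gives $\int_{|x-X_m|<X_m^{1/3}}(X_m^2-x^2)^{-1/4}dx\sim X_m^{-1/4}\cdot(X_m^{1/3})^{3/4}=O(1)$, i.e.\ no decay at all, and the bound $CX_m^{\mu+1/3}X_n^{-1/2}$ you write down is in fact \emph{worse} than the target $m^{-1/8+\mu/4}n^{-1/12+\mu/4}$ (the ratio grows like $X_n^{1/24}$); with the correct width $X_m^{-1/3}$ one instead gets $\sim X_m^{-1/2+\mu}X_n^{-1/2}$ as in Lemma~\ref{lemma2.7}.

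Second, and more importantly, your explanation of the $(|k|\vee 1)^{1/2}$ factor is not how the proof works and, as proposed, does not close. You attribute it to a degenerate (order-$2$ van der Corput) stationary point of $g$. But when $g$ can vanish one must have $k>X_n/4$, and in that regime the paper simply uses the crude $L^2$ bound $|\int|\le \la X_m\ra^\mu$ together with the arithmetic constraint $m\le 2k^2+1$, $n\le 8k^2+1$, which produces $k^{1/2}$ exactly. The van der Corput route you sketch has real obstructions: $g'(0)=0$, so a first-order van der Corput bound fails near $x=0$; and a second-order bound, using $g''\ge g''(0)\sim X_m^{-1}$, gives only $|\int|\lesssim X_m^{1/2}\sup|\la x\ra^\mu\Psi|\sim X_m^{1/3+\mu}X_n^{-1/2}\sim m^{1/6+\mu/2}n^{-1/4}$, which for $n\sim m$ grows like $m^{1/6}$ against the required $m^{-1/4+\mu/2}$ and therefore fails precisely in the small-$k$ case where it would be needed. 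Thus the dichotomy has to be drawn at $k\lessgtr X_n/4$, with pure non-stationary phase (giving $X_n^{-1}$, no $k$-dependence) on one side and the trivial bound (giving $k^{1/2}$) on the other; no stationary-phase estimate enters the proof of this lemma. These two gaps — the sign of the layer-width exponent and the mechanism behind the $k^{1/2}$ — are exactly the places you flagged as the ``delicate heart,'' and as written the plan would not produce the stated bound.
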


We first have 
\begin{lemma}\label{lemma2.6}
For $k \neq 0$, if $X_n\geq2X_m$, then
$$\left|\int_0^{X_m-X_m^{-\frac{1}{3}}} \la x\ra^\mu e^{{\rm i}kx}h_m(x)\overline{h_n(x)}dx\right|\leq \frac{C(|k|\vee 1)^{\frac12}}{m^{\frac{1}{8}-\frac\mu4} n^{\frac{1}{8}-\frac\mu4}}, $$
where $m_0< m\leq n$.
\end{lemma}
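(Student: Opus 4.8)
The plan is to turn the integral into an oscillatory integral with an explicit, monotone phase derivative, and then combine a single integration by parts on the set where that derivative is bounded away from $0$ with a crude estimate on the one interval (if any) where it is small. \emph{Reduction to a main term.} Since $m_0<m\le n$, Remark \ref{hnbiaoda} gives $h_m=\psi_1^{(m)}+\psi_2^{(m)}$ with $\psi_2^{(m)}=\psi_1^{(m)}\,O(\lambda_m^{-1})$, and similarly for $h_n$; the three products containing a factor $\psi_2$ are then $O(\lambda_m^{-1}+\lambda_n^{-1})$ times the bound we shall obtain for the leading product, so it suffices to estimate $\int_0^{X_m-X_m^{-1/3}}\la x\ra^\mu e^{{\rm i}kx}\psi_1^{(m)}(x)\overline{\psi_1^{(n)}(x)}\,dx$. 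On $[0,X_m-X_m^{-1/3}]\subseteq[0,X_m]\subseteq[0,X_n]$ the Hankel integral representations recorded just before Corollary \ref{coro2.5} give $\psi_1^{(m)}(x)=C(X_m^2-x^2)^{-1/4}e^{{\rm i}\zeta_m(x)}f_m(x)$ and $\overline{\psi_1^{(n)}(x)}=C(X_n^2-x^2)^{-1/4}e^{-{\rm i}\zeta_n(x)}\overline{f_n(x)}$ with $|f_m|,|f_n|\le\Gamma(5/6)$; hence the leading product equals $C\,u(x)e^{-{\rm i}G(x)}$, where $u:=\la x\ra^\mu\Psi$, $\Psi$ is as defined before Corollary \ref{coro2.5}, and $G'=g=\sqrt{X_n^2-x^2}-\sqrt{X_m^2-x^2}-k$.

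\emph{Structural facts.} Put $A(x):=\sqrt{X_n^2-x^2}-\sqrt{X_m^2-x^2}$, so $g=A-k$. Using $X_n\ge 2X_m$ one checks that $A'(x)=x\big((X_m^2-x^2)^{-1/2}-(X_n^2-x^2)^{-1/2}\big)>0$ on $(0,X_m)$, that $A$ is increasing on $[0,X_m]$ with values in $[X_n-X_m,\,X_n]$ (and $X_n-X_m\ge X_m$), and that $(X_m^2-x^2)^{-1/2}-(X_n^2-x^2)^{-1/2}\ge c\,(X_m^2-x^2)^{-1/2}$, whence $A'(x)\ge c/2$ on $[X_m/2,\,X_m-X_m^{-1/3}]$. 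In particular $g$ is increasing and vanishes at most once. For the amplitude, on $[0,X_m-X_m^{-1/3}]$ one has $(X_m^2-x^2)^{-1/4}\le CX_m^{-1/6}$ and $(X_n^2-x^2)^{-1/4}\le CX_m^{-1/2}$, so $|u|\le CX_m^{\mu-2/3}$ there, while on $[0,X_m/2]$ even $|u|\le CX_m^{\mu-1}$; and, using Corollary \ref{coro2.5} together with $(\la x\ra^\mu)'=\mu x\la x\ra^{\mu-2}$, the singular factors $x(X_m^2-x^2)^{-5/4}$ and $(X_m-x)^{-11/4}$ that appear are integrable up to the endpoint $X_m-X_m^{-1/3}$, which gives $\int_0^{X_m-X_m^{-1/3}}|u'|\,dx\le C$ (here $\mu<\tfrac23$ is used). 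Recall finally $X_m\asymp m^{1/2}$ and $X_n\asymp n^{1/2}$, so the claimed bound is $\asymp (|k|\vee1)^{1/2}X_m^{\mu/2-1/4}X_n^{\mu/2-1/4}$.

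\emph{Splitting the integral.} Write $[0,X_m-X_m^{-1/3}]=E_<\cup E_\ge$ with $E_<=\{|g|<1\}$ and $E_\ge=\{|g|\ge1\}$; since $g$ is monotone, $E_<$ is an interval (possibly empty) and $E_\ge$ is a union of at most two intervals on each of which $g$ is monotone with $|g|\ge1$. On a component of $E_\ge$ integrate by parts once via $e^{-{\rm i}G}=\frac{{\rm i}}{g}(e^{-{\rm i}G})'$; monotonicity of $g$ makes $\int|g'|g^{-2}$ telescope to at most $(\inf|g|)^{-1}$, so that component contributes at most $C(\inf|g|)^{-1}\big(\sup|u|+\int|u'|\big)\le C(\inf|g|)^{-1}$. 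If $E_<\ne\emptyset$ each such $\inf|g|$ equals $1$; if $E_<=\emptyset$ the whole interval is a single component and $\inf|g|=|g(0)|=X_n-X_m-k$ when $k<X_n-X_m$ (producing the gain $C/(X_n-X_m-k)$), or $\inf|g|\ge1$ with $k>A(X_m-X_m^{-1/3})\gtrsim X_n$. On $E_<$, observe first that $E_<\ne\emptyset$ forces $k>A(0)-1=X_n-X_m-1\gtrsim X_n$, so the claimed bound is then $\gtrsim X_m^\mu$; estimate $\int_{E_<}|u|$ trivially: the part of $E_<$ in $[0,X_m/2]$ is $\le\int_0^{X_m/2}|u|\le CX_m^\mu$, and the part in $[X_m/2,X_m-X_m^{-1/3}]$ has length $\lesssim1$ (since $A'\gtrsim1$ there) with $|u|\le CX_m^{\mu-2/3}$, hence contributes $\le CX_m^{\mu-2/3}$. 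Collecting the contributions $C$, $C/(X_n-X_m-k)$ and $CX_m^\mu$, and performing a short case distinction on the size of $k$ (typically $k\le\tfrac12(X_n-X_m)$ versus $k$ larger, using $X_n\ge2X_m$), each is $\le C(|k|\vee1)^{1/2}m^{-(1/8-\mu/4)}n^{-(1/8-\mu/4)}$, which is the claim.

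\emph{Main obstacle.} The delicate step is this last matching: one must verify that whenever the phase derivative $g$ is small somewhere on the interval — a genuine stationary point, or merely $|g|$ close to $0$ near an endpoint — the corresponding $|k|$ is automatically of size $\gtrsim X_n-X_m\ge X_m$, so that the factor $(|k|\vee1)^{1/2}$ on the right (together with $X_n\ge2X_m$) absorbs the contribution of that region, which is otherwise only $O(X_m^\mu)$, no better than the trivial bound. On the oscillatory part the obstacle is keeping $\sup|u|$ and $\int|u'|$ bounded up to the endpoint $X_m-X_m^{-1/3}$, where $\Psi$ and $\Psi'$ are largest; this is exactly where Corollary \ref{coro2.5} and the improved pointwise bound $(X_m^2-x^2)^{-1/4}\le CX_m^{-1/6}$ on the interval enter, and where the exponent improves from the near-turning-point value $\tfrac1{12}-\tfrac\mu4$ to $\tfrac18-\tfrac\mu4$.
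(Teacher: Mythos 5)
Your proof is correct in substance, but you organize it differently from the paper, and it is worth spelling out the comparison.

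The paper's proof makes a single, global case distinction on the size of $k$: either $k\le X_n/4$, in which case the monotone increasing phase derivative $g=\sqrt{X_n^2-x^2}-\sqrt{X_m^2-x^2}-k$ is bounded below by $X_n/4$ on the whole of $[0,X_m-X_m^{-1/3}]$ and Lemma \ref{oscillatory integral lemma} (Stein's van der Corput lemma, applied with the normalized phase $(\zeta_m-\zeta_n+kx)/X_n$) gives a decay factor $X_n^{-1}$; or $k>X_n/4$, in which case $X_m,X_n\lesssim |k|$ forces $m\le 2k^2+1$, $n\le 8k^2+1$, and the trivial Cauchy--Schwarz bound $CX_m^\mu$ is absorbed by the $(|k|\vee 1)^{1/2}$ on the right. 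Your proof instead performs a local decomposition $E_<=\{|g|<1\}$, $E_\ge=\{|g|\ge 1\}$, does the integration by parts by hand on $E_\ge$ (exploiting monotonicity of $g$ exactly the way the Stein lemma does, so this is essentially a re-derivation of it), and on $E_<$ falls back on the trivial bound together with the observation that $E_<\ne\emptyset$ forces $k\gtrsim X_n-X_m\gtrsim X_n$. The two strategies overlap almost completely: in your scheme, the paper's case $k\le X_n/4$ corresponds to $E_<=\emptyset$ with $\inf g=g(0)=X_n-X_m-k\gtrsim X_n$, and the paper's case $k>X_n/4$ is subsumed by your two remaining branches ($E_<\ne\emptyset$, or $E_<=\emptyset$ with $g<0$). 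The paper's version is somewhat cleaner to execute, because the threshold $k=X_n/4$ immediately separates "the phase is uniformly large" from "$m,n\lesssim k^2$"; your version tracks the geometry of $g$ more closely and needs the extra observations that $g$ is monotone, that $E_<$ is an interval, and that its location controls $k$ from below.

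Two small points worth tightening: (i) you reduce to the $\psi_1^{(m)}\overline{\psi_1^{(n)}}$ product by saying the $\psi_2$-terms are "$O(\lambda_m^{-1}+\lambda_n^{-1})$ times the bound for the leading product" — this is the right idea, but since the cancellation you extract for the leading term comes from oscillation, the cleanest argument (as in the paper) is to bound the $\psi_2$-terms by absolute values, giving $C\lambda_m^{-1}X_m^\mu$ etc., which is comfortably smaller than the target; (ii) after bounding each $E_\ge$-contribution by $C(\inf|g|)^{-1}(\sup|u|+\int|u'|)\le C(\inf|g|)^{-1}$ you discard the extra $X_m$-decay hidden in $\sup|u|+\int|u'|\lesssim X_m^{\mu-1/3}$; as your final case check shows this is still adequate, but one has to verify carefully (as you do in the last paragraph) that the constant $C$ is indeed dominated by the right-hand side, which in the range $k\gtrsim X_n$ uses $X_n\ge 2X_m$ to produce the saving factor $X_m^{\mu}\ge 1$.
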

\begin{proof}
First we estimate the main part
\begin{align*}
\int^{X_m-X_m^{-\frac{1}{3}}}_{0}\cal{F}(x) dx= C\int^{X_m-X_m^{-\frac{1}{3}}}_{0} \la x\ra^\mu e^{{\rm i}(\zeta_m-\zeta_n+kx)}\Psi(x)dx,\\
\end{align*}
by oscillatory integrals, where
$\Psi(x)=(X_m^2-x^2)^{-\frac{1}{4}} (X_n^2-x^2)^{-\frac{1}{4}}\cdot f_m(x)\overline{f_n(x)}.$
We discuss two different cases. \\
Case 1:  $k\leq\frac{X_n}{4}$. In this case, we have
\begin{align*}
g(x)\geq\sqrt{X_n^2-x^2}-\sqrt{\frac{X_n^2}{4}-x^2}-k
\geq \frac{X_n}{2}-k\geq\frac{X_n}{4}.
\end{align*}
Thus, by Lemma \ref{oscillatory integral lemma},
\begin{align*}
&\left|\int^{X_m-X_m^{-\frac{1}{3}}}_{0}e^{{\rm i}\frac{\zeta_m-\zeta_n+kx}{X_n}X_n}\la x\ra^\mu\Psi(x)dx\right|\\
\leq& CX_n^{-1}\left(\left|\left(\la x\ra^\mu\Psi\right)(X_m-X_m^{-\frac{1}{3}})\right|+\int_0^{X_m-X_m^{-\frac{1}{3}}}\left|\left(\la x\ra^\mu\Psi\right)^\prime(x)\right|dx\right)\\
\leq& CX_n^{-1}\left(X_m^\mu\left|\Psi(X_m-X_m^{-\frac{1}{3}})\right| +\int_0^{X_m-X_m^{-\frac{1}{3}}}\left(2\la x\ra^\mu\left(J_1+J_3\right)+\mu\la x\ra^{\mu-1}\frac{ x}{\la x\ra}\left|\Psi(x)\right|\right)dx\right)\\
\leq& CX_n^{-1}X_m^\mu\left(\left|\Psi(X_m-X_m^{-\frac{1}{3}})\right| +\int_0^{X_m-X_m^{-\frac{1}{3}}}\left(J_1+J_3\right)dx\right)+C\mu X_n^{-1}\int_0^{X_m-X_m^{-\frac{1}{3}}}\la x\ra^{\mu-1}\left|\Psi(x)\right|dx.
\end{align*}
Clearly,
\begin{align*}
X_m^\mu\left|\Psi(X_m-X_m^{-\frac{1}{3}})\right| &\leq CX_m^\mu\left(X_m^2-(X_m-X_m^{-\frac{1}{3}})^2\right)^{-\frac{1}{4}}
 \left(X_n^2-(X_m-X_m^{-\frac{1}{3}})^2\right)^{-\frac{1}{4}}\\
&\leq C X_m^{-\frac{1}{3}+\mu},
\end{align*}
and 
\begin{align*}
 & \int_0^{X_m-X_m^{-\frac13}}\mu \la x\ra^{\mu-1}\left|\Psi(x)\right|dx
\leq C\left(X_m^2-(X_m-X_m^{-\frac13})^2\right)^{-\frac14}
 \left(X_n^2-(X_m-X_m^{-\frac13})^2\right)^{-\frac14}\int_0^{X_m}\mu x^{\mu-1}dx\\
\leq& C\left(X_m^2-(X_m-X_m^{-\frac13})^2\right)^{-\frac14}
 \left(X_n^2-(X_m-X_m^{-\frac13})^2\right)^{-\frac14}X_m^{\mu}\leq CX_m^{-\frac13+\mu},
\end{align*}
together with
\begin{align*}
\int_0^{X_m-X_m^{-\frac{1}{3}}}J_1dx &\leq C \int_0^{X_m-X_m^{-\frac{1}{3}}} x(X_m^2-x^2)^{-\frac{5}{4}}(X_m^2-x^2)^{-\frac{1}{4}}dx
 \leq C X_m^{-\frac{1}{3}},
\end{align*}
and
\begin{align*}
\int_0^{X_m-X_m^{-\frac{1}{3}}}J_3dx&\leq CX_m^{-1}\int_0^{X_m-X_m^{-\frac{1}{3}}}(X_m-x)^{-3}dx \leq C X_m^{-\frac{1}{3}},
\end{align*}
we obtain
$\left|\int^{X_m-X_m^{-\frac{1}{3}}}_{0}\cal{F}(x) dx\right|\leq CX_m^{-\frac{1}{3}+\mu}X_n^{-1}.$
Now we turn to the remained three terms. Since $m_0< m\le n$,
\begin{align*}
\left|\int^{X_m-X_m^{-\frac{1}{3}}}_{0} \la x\ra^\mu e^{{\rm i}kx}\psi_2^{(m)}(x)\overline{\psi_1^{(n)}(x)}dx\right|
&\leq C\int_0^{X_m-X_m^{-\frac{1}{3}}}X_m^{-2+\mu}(X_m^2-x^2)^{-\frac{1}{4}}(X_n^2-x^2)^{-\frac{1}{4}}dx\\
&\leq CX_m^{-\frac{3}{2}+\mu}X_n^{-\frac12}\le Cn^{-\frac14+\frac\mu2}.
\end{align*}
Similarly, when $m_0< m\le n$, we have
$ \left|\int^{X_m-X_m^{-\frac13}}_{0} \la x\ra^\mu e^{{\rm i}kx}\psi_1^{(m)}(x)\overline{\psi_2^{(n)}(x)}dx\right|\leq Cn^{-1+\frac\mu2} $
and
$ \left|\int^{X_m-X_m^{-\frac13}}_{0} \la x\ra^\mu e^{{\rm i}kx}\psi_2^{(m)}(x)\overline{\psi_2^{(n)}(x)}dx\right|\leq Cn^{-1+\frac\mu2}.$
Thus,
$$\left|\int^{X_m-X_m^{-\frac13}}_{0} \la x\ra^\mu e^{{\rm i}kx}h_m(x)\overline{h_n(x)}dx\right|\leq \frac{C}{n^{\frac14-\frac\mu2}}\leq \frac{C}{m^{\frac18-\frac\mu4}n^{\frac18-\frac\mu4}},\quad m_0<m\le n. $$
Case 2: $k>\frac{X_n}{4}>0$.\\
Since $m\leq n$, we have $m\leq 2k^2+1$ and $n\leq 8k^2+1$, it follows
$$
\left|\int_0^{X_m-X_m^{-\frac13}} \la x\ra^\mu e^{{\rm i}kx}h_m(x)\overline{h_n(x)}dx\right| \leq CX_m^\mu\leq CX_m^\mu\frac{m^{\frac18} n^{\frac18}}{m^{\frac18}n^{\frac18}}\leq \frac{Ck^\frac12}{m^{\frac18-\frac\mu4} n^{\frac18-\frac\mu4}}.
$$
Combining with these two cases we finish the proof.
\end{proof}
\begin{lemma}\label{lemma2.7}
If $X_n\geq2X_m$,
$$\left|\int_{X_m-X_m^{-\frac{1}{3}}}^{X_m} \la x\ra^\mu e^{{\rm i}kx}h_m(x) \overline{h_n(x)}dx\right| \leq
\frac{C}{m^{\frac{1}{4}-\frac\mu4} n^{\frac{1}{4}-\frac\mu4}}, $$
where $m_0<m\leq n$.
\end{lemma}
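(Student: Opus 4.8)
The plan is to bound this integral by the crude absolute-value estimate, with no oscillatory-integral argument at all: on the tiny interval $[X_m - X_m^{-\frac{1}{3}}, X_m]$ the factor $h_m$ carries only the integrable turning-point singularity $(X_m^2-x^2)^{-\frac14}$, while $h_n$ is still well inside its oscillatory region (because $X_n \ge 2X_m$ forces $x \le \frac12 X_n$ there) and hence of size $X_n^{-\frac12} \asymp n^{-\frac14}$; the shortness of the interval supplies the remaining decay, and the phase $e^{{\rm i}kx}$ is irrelevant. This sliver around $X_m$ is exactly the place where the amplitude derivative $\Psi'$ in Corollary \ref{coro2.5} (the term $J_1 \sim (X_m - x)^{-\frac54}$) ceases to be integrable, so the integration-by-parts behind Lemma \ref{lemma2.6} breaks down and the piece must be treated separately.

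First I would record pointwise bounds on $[0, X_m]$ coming from the representations set up just before Corollary \ref{coro2.5}. Since $m_0 < m \le n$, Remark \ref{hnbiaoda} gives $h_m = \psi_1^{(m)} + \psi_2^{(m)}$ with $\psi_2^{(m)} = O(\lambda_m^{-1})\psi_1^{(m)}$ and $\psi_1^{(m)}(x) = C(X_m^2 - x^2)^{-\frac14} e^{{\rm i}\zeta_m(x)} f_m(x)$; as $\zeta_m(x)$ is real on $(0,X_m)$ and $|f_m(x)| \le \Gamma(\frac56)$, this yields
\[
|h_m(x)| \le C (X_m^2 - x^2)^{-\frac14}, \qquad x \in (0, X_m).
\]
Likewise $h_n = \psi_1^{(n)} + \psi_2^{(n)}$ with $\psi_1^{(n)}(x) = C(X_n^2 - x^2)^{-\frac14} e^{{\rm i}\zeta_n(x)} f_n(x)$, $|f_n(x)| \le \Gamma(\frac56)$, and $\psi_2^{(n)} = O(\lambda_n^{-1})\psi_1^{(n)}$ for $x \in [0,X_m]$; since $x \le X_m \le \frac12 X_n$ gives $X_n^2 - x^2 \ge \frac34 X_n^2$, we obtain $|h_n(x)| \le C X_n^{-\frac12} \le C n^{-\frac14}$ throughout the interval.

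Then I would simply multiply. On $[X_m - X_m^{-\frac13}, X_m]$ one has $\la x\ra^\mu \le C X_m^\mu$, and the change of variables $x = X_m - u$ with $X_m^2 - x^2 = u(2X_m - u) \asymp X_m u$ gives
\[
\int_{X_m - X_m^{-\frac13}}^{X_m} (X_m^2 - x^2)^{-\frac14}\, dx \le C X_m^{-\frac14} \int_0^{X_m^{-\frac13}} u^{-\frac14}\, du = C X_m^{-\frac12}.
\]
Hence, using $X_m = \sqrt{2m-1} \asymp m^{\frac12}$,
\[
\Bigl| \int_{X_m - X_m^{-\frac13}}^{X_m} \la x\ra^\mu e^{{\rm i}kx} h_m(x)\overline{h_n(x)}\, dx \Bigr| \le C X_m^\mu \cdot n^{-\frac14} \cdot X_m^{-\frac12} = C X_m^{\mu - \frac12} n^{-\frac14} \le C m^{\frac\mu2 - \frac14} n^{-\frac14}.
\]
Finally, since $\mu \ge 0$ and $m \le n$ we have $m^{\frac\mu4} \le n^{\frac\mu4}$, so $m^{\frac\mu2 - \frac14} n^{-\frac14} = m^{\frac\mu4}\, m^{\frac\mu4 - \frac14} n^{-\frac14} \le m^{\frac\mu4 - \frac14} n^{\frac\mu4 - \frac14} = C\, m^{-(\frac14 - \frac\mu4)} n^{-(\frac14 - \frac\mu4)}$, which is exactly the claimed estimate.

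I do not expect a genuine obstacle here. The only points needing care are the branch/phase conventions for $\zeta_m$ near the turning point $X_m$ (so that $|e^{{\rm i}\zeta_m}| \le 1$ and $|f_m| \le \Gamma(\frac56)$ stay valid right up to $x = X_m$, where $\psi_1^{(m)}$ is formally singular but the integral converges) and the routine check that the corrections $\psi_2^{(m)}, \psi_2^{(n)}$, being $O(\lambda^{-1})$ relative to $\psi_1$, do not affect the bound. Everything else is a one-line computation.
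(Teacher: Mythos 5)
Your proposal is correct and follows essentially the same route as the paper's proof: a crude absolute-value bound on $\mathcal{F}$, using $(X_m^2-x^2)^{-\frac14}$ for $h_m$, the trivial bound $|h_n(x)|\le C X_n^{-\frac12}$ on $[0,X_m]$ (which follows from $X_n\ge 2X_m$ exactly as the paper's $(X_n^2-X_m^2)^{-\frac14}\le C X_n^{-\frac12}$), the observation that $\int_{X_m-X_m^{-1/3}}^{X_m}(X_m-x)^{-\frac14}dx\asymp X_m^{-\frac14}$, and the $m\le n$ rebalancing of the $\mu$-exponents at the end; your treatment of the $\psi_2$ corrections as $O(\lambda^{-1})$ refinements also matches the paper's "Similarly" remark covering all $j_1,j_2\in\{1,2\}$.
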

\begin{proof}
Firstly,
\begin{align*}
\left|\int_{X_m-X_m^{-\frac{1}{3}}}^{X_m} \cal{F}(x) dx\right| \leq&C\int_{X_m-X_m^{-\frac{1}{3}}}^{X_m} \la x\ra^\mu(X_m^2-x^2)^{-\frac{1}{4}}(X_n^2-x^2)^{-\frac{1}{4}}dx\\
\leq& C X_m^{-\frac{1}{4}+\mu} (X_n^2-X_m^2)^{-\frac{1}{4}}\int_{X_m-X_m^{-\frac{1}{3}}}^{X_m}(X_m-x)^{-\frac{1}{4}}dx\\
\leq& C X_m^{-\frac{1}{4}+\mu} (X_n^2-\frac{X_n^2}{4})^{-\frac{1}{4}}X_m^{-\frac{1}{4}}\leq C X_m^{-\frac{1}{2}+\frac\mu2} X_n^{-\frac{1}{2}+\frac\mu2}.
\end{align*}
Similarly,
$$\left|\int_{X_m-X_m^{-\frac{1}{3}}}^{X_m} \la x\ra^\mu e^{{\rm i}kx}\psi_{j_1}^{(m)}(x) \overline{\psi_{j_2}^{(n)}(x)}dx\right| \leq C X_m^{-\frac{1}{2}+\frac\mu2} X_n^{-\frac{1}{2}+\frac\mu2},\  \ j_1,j_2\in\{1,2\}.$$
Thus, we finish the proof.
\end{proof}

\begin{lemma}\label{lemma2.8}
If $X_n\geq2X_m$,
$$\left|\int_{X_m}^{X_n}\la x\ra^\mu e^{{\rm i}kx}h_m(x)\overline{h_n(x)}dx\right|\leq \frac{C}{m^{\frac{1}{8}-\frac\mu4}n^{\frac{1}{12}-\frac\mu4}}, $$
where $m_0<m\leq n$.
\end{lemma}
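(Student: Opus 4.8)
The plan is to use that on $[X_m,X_n]$ the variable $x$ already lies to the right of the turning point $X_m$ of $h_m$ --- so $h_m$ is in its evanescent regime and decays exponentially --- while still lying to the left of the turning point $X_n$ of $h_n$, where $h_n$ is only oscillatory. By Remark \ref{hnbiaoda} and Remark \ref{4.6}, $h_m(x)=\psi_1^{(m)}(x)\big(1+O(\lambda_m^{-1})\big)$ and likewise for $h_n$, so $|h_m(x)|\le 2|\psi_1^{(m)}(x)|$, $|h_n(x)|\le 2|\psi_1^{(n)}(x)|$ for $m_0<m\le n$; since the oscillation of $e^{{\rm i}kx}$ is not needed here, it suffices to bound $\int_{X_m}^{X_n}\la x\ra^\mu|\psi_1^{(m)}(x)|\,|\psi_1^{(n)}(x)|\,dx$. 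For $x>X_m$ one has $\psi_1^{(m)}(x)=(\lambda_m-x^2)^{-1/4}\sqrt{\tfrac{\pi\zeta_m}{2}}H^{(1)}_{1/3}(\zeta_m)$ with $\zeta_m$ purely imaginary, so by Lemma \ref{Bessel} $|\psi_1^{(m)}(x)|\le C(x^2-\lambda_m)^{-1/4}e^{-|\zeta_m(x)|}$ with the Bessel factor uniformly bounded, whence $\int_{X_m}^{X_m+X_m^{-1/3}}|\psi_1^{(m)}|\,dx\le C\int_{X_m}^{X_m+X_m^{-1/3}}(x^2-\lambda_m)^{-1/4}\,dx\le CX_m^{-1/2}$; while for $x<X_n$ one has $|\psi_1^{(n)}(x)|\le C(\lambda_n-x^2)^{-1/4}$, improved to $|\psi_1^{(n)}(x)|\le CX_n^{-1/6}$ on the layer $[X_n-X_n^{1/3},X_n]$.

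I then split $[X_m,X_n]=[X_m,X_m+X_m^{-1/3}]\cup[X_m+X_m^{-1/3},2X_m]\cup[2X_m,X_n]$ (the last interval empty when $X_n=2X_m$), which is legitimate since $X_m+X_m^{-1/3}<2X_m\le X_n$. On $[X_m,X_m+X_m^{-1/3}]$ one has $x\le X_n/\sqrt2$, hence $\lambda_n-x^2\ge\tfrac12\lambda_n$ and $|\psi_1^{(n)}(x)|\le C\lambda_n^{-1/4}$; with $\la x\ra^\mu\le CX_m^\mu$ and the integral bound above this piece is $\le CX_m^{\mu-1/2}\lambda_n^{-1/4}\sim Cm^{\mu/2-1/4}n^{-1/4}$, which is dominated by $Cm^{-(1/8-\mu/4)}n^{-(1/12-\mu/4)}$ because $\mu<\tfrac12$ and $\tfrac\mu4-\tfrac1{12}\ge-\tfrac14$. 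On $[X_m+X_m^{-1/3},2X_m]$ I would invoke the lower bound $|\zeta_m(x)|\ge cX_m^{1/2}(x-X_m)^{3/2}$ (Lemma \ref{zetaesti}); the substitution $u=x-X_m$ followed by $v=X_m^{1/3}u$ shows $\int_{X_m+X_m^{-1/3}}^{2X_m}(x^2-\lambda_m)^{-1/4}e^{-c|\zeta_m(x)|}\,dx\le CX_m^{-1/2}$, so this piece is at most $CX_m^{\mu-1/2}\sup_{[X_m,2X_m]}(\lambda_n-x^2)^{-1/4}$. If $\lambda_n\ge 8\lambda_m$ the supremum is $\le C\lambda_n^{-1/4}$ and we recover $Cm^{\mu/2-1/4}n^{-1/4}$; in the band $4\lambda_m\le\lambda_n<8\lambda_m$, where $m$ and $n$ are comparable, one has $(\lambda_n-x^2)^{-1/4}\le CX_n^{-1/3}$ for $x\le X_n-X_n^{1/3}$, giving $CX_m^{\mu-5/6}\sim Cm^{\mu/2-5/12}\le Cm^{\mu/2-5/24}$, whereas for $x\ge X_n-X_n^{1/3}$ one has $x-X_m\gtrsim X_m$, hence $|\zeta_m(x)|\gtrsim X_m^2$ and $|h_m(x)|\le e^{-cm}$, so that sublayer contributes only a super-polynomially small term. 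On $[2X_m,X_n]$ one always has $|\zeta_m(x)|\ge c_0\lambda_m\gtrsim m$, and $|\zeta_m(x)|\gtrsim x^2$ once $x\ge 4X_m$, so $|h_m(x)|$ decays faster than any power of $m$ and of $n$; a Cauchy--Schwarz estimate shows this piece is negligible. Adding the three contributions gives the asserted bound.

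The crux of the argument is the exponent bookkeeping on the middle interval: the crude pointwise estimates $|h_m|\le 1$, $|h_n|\le(\lambda_n-x^2)^{-1/4}$, $\la x\ra^\mu\le X_n^\mu$ overshoot the target --- they are off by $m^{-1/24}$ and, worse, introduce a spurious growing factor $n^{\mu/2}$. Two facts close the gap. First, just past its turning point $h_m$ is already small, so in fact $\int_{X_m}^\infty|h_m(x)|\,dx\lesssim X_m^{-1/2}$ (not $\gtrsim X_m^{1/2}$), which localises essentially all the mass of the integral to an $X_m^{-1/3}$-neighbourhood of $X_m$. Second, on that neighbourhood $x^2\le\tfrac12\lambda_n$ because $X_n\ge 2X_m$, and this is exactly what yields the decisive gain $|h_n|\lesssim\lambda_n^{-1/4}$ (respectively $\lesssim X_n^{-1/3}$ when $m\simeq n$). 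One must also peel off the thin turning-point layers of $h_m$ and of $h_n$, on which the singular $(x^2-\lambda)^{-1/4}$ factors are replaced by the bounded Airy-type values $\lesssim\lambda^{-1/12}$, and verify in each of the finitely many regimes of $\lambda_n/\lambda_m$ that the exponents sum to at least $1/8-\mu/4$ in $m$ and $1/12-\mu/4$ in $n$.
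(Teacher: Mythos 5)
Your argument is correct and uses the same basic ingredients as the paper (Langer/Bessel asymptotics, exponential decay of $h_m$ beyond its turning point, and direct estimation without any oscillatory-integral machinery), but your bookkeeping is organized differently. The paper splits $[X_m,X_n]$ into \emph{four} intervals with breakpoints $X_m+X_m^{-1/3}$, $\tfrac32 X_m$, $X_n-X_n^{-1/3}$, and handles each piece with a single uniform estimate: on $[\tfrac32 X_m,X_n-X_n^{-1/3}]$ one has $x-X_m\ge\tfrac13 x$, so the factors $(x-X_m)^{-1/4+\mu}e^{-(x-X_m)}$ absorb the weight and yield $CX_m^{-1/4}X_n^{-1/6}$ directly; the thin layer $[X_n-X_n^{-1/3},X_n]$ is then treated on its own, giving $CX_m^{-1/2}X_n^{-1/6}$. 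This removes the need for any case distinction in the ratio $\lambda_n/\lambda_m$. You instead split at $2X_m$ and handle the $h_n$-turning-point singularity via a case analysis ($\lambda_n\gtrless 8\lambda_m$, and a sublayer $x\gtrless X_n-X_n^{1/3}$) together with a Cauchy--Schwarz estimate on $[2X_m,X_n]$; this is also valid, just less economical. Two small imprecisions in your write-up are worth flagging: the inequality ``$\tfrac\mu4-\tfrac1{12}\ge-\tfrac14$'' is stated in the wrong direction for what you actually need (you need $\mu/2-1/4\le -1/8+\mu/4$, i.e.\ $\mu\le 1/2$, which holds since $\mu<1/3$); and the claim ``$|\zeta_m(x)|\gtrsim x^2$ once $x\ge 4X_m$'' does not follow from Lemma~\ref{zetaesti} as stated (it gives $\gtrsim X_m^{1/2}x^{3/2}$), though it is true by computing $\zeta_m$ directly, and in any case $e^{-c\lambda_m}$ already suffices for that piece. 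Neither affects the conclusion.
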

\begin{proof}
When $X_n>2X_{m_0}$ large enough, $X_m+X_m^{-\frac{1}{3}}\leq  \frac{X_n}{2}+1\leq\frac{3}{4}X_n$. Thus,
\begin{align*}
\left|\int^{X_m+X_m^{-\frac{1}{3}}}_{X_m}\cal{F}(x) dx\right|\leq &CX_m^\mu\int^{X_m+X_m^{-\frac{1}{3}}}_{X_m}(x^2-X_m^2)^{-\frac{1}{4}} (X_n^2-x^2)^{-\frac{1}{4}}dx\\
\leq& CX_m^{-\frac{1}{4}+\mu}\left(X_n^2-(X_m+X_m^{-\frac{1}{3}})^2\right)^{-\frac{1}{4}} \int_{X_m}^{X_m+X_m^{-\frac{1}{3}}}(x-X_m)^{-\frac{1}{4}}dx\\
\leq& CX_m^{-\frac12+\frac\mu2}X_n^{-\frac12+\frac\mu2}.
\end{align*}
By $X_n\geq2X_m$, it follows $X_n-X_n^{-\frac13}\geq\frac32X_m$. By Lemma \ref{zetaesti},
\begin{align*}
\left|\int_{X_m+X_m^{-\frac{1}{3}}}^{\frac32X_m} \cal{F}(x) dx\right| \leq& CX_m^\mu\int_{X_m+X_m^{-\frac{1}{3}}}^{\frac32X_m} (x^2-X_m^2)^{-\frac{1}{4}} (X_n^2-x^2)^{-\frac{1}{4}}e^{{\rm i}\zeta_m}dx\\
\leq&C X_m^{-\frac{1}{4}+\mu}\left(X_n^2-(X_n-X_n^{-\frac{1}{3}})^2\right)^{-\frac{1}{4}} \int_{X_m+X_m^{-\frac{1}{3}}}^{\frac32X_m} (x-X_m)^{-\frac{1}{4}} e^{-(x-X_m)}dx\\
\leq&C X_m^{-\frac{1}{4}+\mu}X_n^{-\frac{1}{6}}\int_0^\infty t^{-\frac{1}{4}}e^{-t}dt\leq C X_m^{-\frac{1}{4}+\frac\mu2}X_n^{-\frac{1}{6}+\frac\mu2}.
\end{align*}
If $x\geq \frac32X_m$, then $x-X_m\geq\frac13x$, and thus
\begin{align*}
\left|\int_{\frac32X_m}^{X_n-X_n^{-\frac{1}{3}}} \cal{F}(x) dx\right| \leq& C\int_{\frac32X_m}^{X_n-X_n^{-\frac{1}{3}}} \la x\ra^\mu(x^2-X_m^2)^{-\frac{1}{4}} (X_n^2-x^2)^{-\frac{1}{4}}e^{{\rm i}\zeta_m}dx\\
\leq&C X_m^{-\frac{1}{4}}\left(X_n^2-(X_n-X_n^{-\frac{1}{3}})^2\right)^{-\frac{1}{4}} \int_{\frac32X_m}^{X_n-X_n^{-\frac{1}{3}}} (x-X_m)^{-\frac{1}{4}+\mu} e^{-(x-X_m)}dx\\
\leq&C X_m^{-\frac{1}{4}}X_n^{-\frac{1}{6}}\int_0^\infty t^{-\frac{1}{4}+\mu}e^{-t}dt\leq C X_m^{-\frac{1}{4}}X_n^{-\frac{1}{6}}.
\end{align*}
Finally,\\
\begin{align*}
\left|\int_{X_n-X_n^{-\frac{1}{3}}}^{X_n}\cal{F}(x) dx\right| \leq&CX_n^\mu\int_{X_n-X_n^{-\frac{1}{3}}}^{X_n} (x^2-X_m^2)^{-\frac{1}{4}}(X_n^2-x^2)^{-\frac{1}{4}}dx\\
\leq&C X_n^\mu\left((X_n-X_n^{-\frac{1}{3}})^2-X_m^2\right)^{-\frac{1}{4}}X_n^{-\frac{1}{4}} \int_{X_n-X_n^{-\frac{1}{3}}}^{X_n} (X_n-x)^{-\frac{1}{4}}dx\\
\leq& CX_m^{-\frac{1}{2}}X_n^{-\frac{1}{2}+\mu}\leq CX_m^{-\frac{1}{2}}X_n^{-\frac16}.
\end{align*}
Combining with all the above estimates we have
$\left|\int_{X_m}^{X_n}\cal{F}(x)dx \right|\leq \frac{C}{m^{\frac{1}{8}-\frac\mu4}n^{\frac{1}{12}-\frac\mu4}},$
which leads to
$\left|\int_{X_m}^{X_n}\la x\ra^\mu e^{{\rm i}kx}h_m(x)\overline{h_n(x)}dx\right|\leq \frac{C}{m^{\frac{1}{8}-\frac\mu4}n^{\frac{1}{12}-\frac\mu4}}.$
\end{proof}
 Lemma \ref{lemma2.4} follows from Lemma \ref{lemma2.6}, \ref{lemma2.7} and Lemma \ref{lemma2.8}. 

\subsection{the integral on $[0, X_n]$ for the case $X_m\leq X_n\leq 2X_m$}
For the case $X_m \leq X_n \leq 2X_m$, we split the integral into
$$\int^{X_n}_0 \la x\ra^\mu e^{{\rm i}kx}h_m(x)\overline{h_n(x)}dx = \left(\int^{X_m^\frac23}_0+\int_{X_m^\frac23}^{X_m-X_m^\frac{1}{3}}+ \int_{X_m-X_m^\frac{1}{3}}^{X_n}\right) \la x\ra^\mu e^{{\rm i}kx}h_m(x)\overline{h_n(x)}dx.$$

\begin{lemma}\label{lemma2.9}
For $X_m \leq X_n \leq 2X_m$, $\displaystyle\left|\int_0^{X_m^\frac23} \la x\ra^\mu e^{{\rm i}kx}h_m(x)\overline{h_n(x)}dx\right|\leq \frac{C}{m^{\frac{1}{12}-\frac\mu4} n^{\frac{1}{12}-\frac\mu4}}$ with $C>0$,
where $m_0< m\leq n$.
\end{lemma}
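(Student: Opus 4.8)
\emph{Proof plan.} The plan is to exploit that the interval $[0,X_m^{2/3}]$ lies deep inside the classically allowed region of \emph{both} the $m$-th and the $n$-th oscillator, so that no oscillatory-integral/stationary-phase argument is needed: a crude size estimate built from the Langer prefactors $(\lambda-q)^{-1/4}$ already produces the required decay, and the unimodular factor $e^{{\rm i}kx}$ plays no role whatsoever.

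First I would record the pointwise bounds. Since $m_0<m\le n$, Remark \ref{hnbiaoda} applies to both indices, giving $h_m=\psi_1^{(m)}+\psi_2^{(m)}$ with $\psi_2^{(m)}=O(\lambda_m^{-1})\psi_1^{(m)}$, and likewise for $n$; moreover for $x\in[0,X_m]$ one has the representation $\psi_1^{(m)}(x)=C(X_m^2-x^2)^{-\frac14}e^{{\rm i}\zeta_m(x)}f_m(x)$ derived above, where $\zeta_m(x)$ is real on $(0,X_m)$ and $|f_m(x)|\le\Gamma(\tfrac56)$. Hence $|h_m(x)|\le C(X_m^2-x^2)^{-\frac14}$ on $[0,X_m]$, and similarly $|h_n(x)|\le C(X_n^2-x^2)^{-\frac14}$ on $[0,X_m]\subset[0,X_n]$. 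On $[0,X_m^{2/3}]$ we have $x^2\le X_m^{4/3}$, so (for $X_m$ large; the finitely many remaining $(m,n)$ are trivial, the right-hand side of the lemma being bounded below there) $X_m^2-x^2\ge\tfrac12X_m^2$ and, using $X_n\ge X_m$, also $X_n^2-x^2\ge\tfrac12X_m^2$. Therefore $(X_m^2-x^2)^{-\frac14}\le CX_m^{-\frac12}$ and $(X_n^2-x^2)^{-\frac14}\le CX_m^{-\frac12}$, so $|h_m(x)\overline{h_n(x)}|\le CX_m^{-1}$ throughout the interval.

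Then I would simply integrate and convert exponents:
\begin{align*}
\Big|\int_0^{X_m^{2/3}}\la x\ra^\mu e^{{\rm i}kx}h_m(x)\overline{h_n(x)}\,dx\Big|\le CX_m^{-1}\int_0^{X_m^{2/3}}\la x\ra^\mu\,dx\le CX_m^{-1}\cdot X_m^{2/3}\la X_m^{2/3}\ra^\mu\le CX_m^{-\frac13+\frac{2\mu}{3}}.
\end{align*}
Recalling $X_m^2=\lambda_m=2m-1$ (so $m=\tfrac12(X_m^2+1)\le X_m^2$) and $2n-1=X_n^2\le 4X_m^2$ (so $n<4m$), and using $0\le\mu<\tfrac13$ hence $\tfrac16-\tfrac\mu2>0$, we get $m^{\frac16-\frac\mu2}\le X_m^{\frac13-\mu}$ and thus $m^{\frac1{12}-\frac\mu4}n^{\frac1{12}-\frac\mu4}\le 4^{\frac1{12}-\frac\mu4}m^{\frac16-\frac\mu2}\le CX_m^{\frac13-\mu}$. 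Multiplying this into the displayed bound gives $m^{\frac1{12}-\frac\mu4}n^{\frac1{12}-\frac\mu4}\cdot CX_m^{-\frac13+\frac{2\mu}{3}}\le CX_m^{-\frac\mu3}\le C$, which is exactly the claimed estimate. I do not expect a genuine obstacle — this is one of the easiest sub-cases, since $[0,X_m^{2/3}]$ avoids both turning points — the only things to watch are the elementary exponent comparison of $-\tfrac13+\tfrac{2\mu}{3}$ against $-\tfrac13+\mu$ after passing from $m$ to $X_m$, and the legitimacy of the amplitude reduction $(X_m^2-x^2)^{-\frac14}\le CX_m^{-\frac12}$ on the whole range $0\le x\le X_m^{2/3}$.
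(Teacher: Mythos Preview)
Your argument is correct and is precisely the sort of crude amplitude bound the paper has in mind when it dismisses Lemma \ref{lemma2.9} with ``The proof is simple.'' The paper gives no details, so there is nothing to compare beyond noting that your pointwise estimate $|h_m(x)\overline{h_n(x)}|\le CX_m^{-1}$ on $[0,X_m^{2/3}]$ and the subsequent exponent bookkeeping are exactly the expected route.
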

The proof is simple. \\
\indent Next we estimate the integral on $[X_m^\frac23,X_m-X_m^{\frac{1}{3}}]$. We first discuss two cases for $k>0$:
1. $k>X_m^\frac13$.  2. $0< k \leq X_m^\frac13$.  The first case is simple, but the second one is much complex. We will
discuss five subcases  for the second one. For $k<0$, the proof is easy to handle.\\

\begin{lemma}\label{Xm13}
If $k>X_m^\frac13$ and $X_m \leq X_n \leq 2X_m$,
$$\left|\int_{X_m^\frac{2}{3}}^{X_m-X_m^{\frac13}}\la x\ra^\mu e^{{\rm i}kx}h_m(x)\overline{h_n(x)}dx\right| \leq \frac{Ck}{m^{\frac{1}{12}-\frac\mu4}n^{\frac{1}{12}-\frac\mu4}}.$$
\end{lemma}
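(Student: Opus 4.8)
The plan is to exploit the hypothesis $k>X_m^{\frac13}$ in order to dispense with oscillatory cancellation altogether: on this middle subinterval the bare triangle-inequality bound on the integrand is already no larger than the claimed right-hand side. As in the preceding lemmas we are in the regime $m_0<m\le n$, so Remark \ref{hnbiaoda} applies to both Hermite functions on all of $(0,+\infty)$: $h_m=\psi_1^{(m)}+\psi_2^{(m)}$, $h_n=\psi_1^{(n)}+\psi_2^{(n)}$ with $\psi_2^{(m)}=O(\lambda_m^{-1})\psi_1^{(m)}$ and $\psi_2^{(n)}=O(\lambda_n^{-1})\psi_1^{(n)}$. On $[X_m^{\frac23},X_m-X_m^{\frac13}]\subset(0,X_m)\subset(0,X_n)$ the phases $\zeta_m(x),\zeta_n(x)$ are real, so Lemma \ref{Bessel} gives $\bigl|\sqrt{\pi\zeta_m/2}\,H^{(1)}_{1/3}(\zeta_m)\bigr|\le C$ and $\bigl|\sqrt{\pi\zeta_n/2}\,H^{(1)}_{1/3}(\zeta_n)\bigr|\le C$; since $\lambda_m^{-1},\lambda_n^{-1}\le1$ this yields the pointwise bounds $|h_m(x)|\le C(X_m^2-x^2)^{-\frac14}$ and $|h_n(x)|\le C(X_n^2-x^2)^{-\frac14}$ on the whole interval. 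Hence
\[
\Bigl|\int_{X_m^{\frac23}}^{X_m-X_m^{\frac13}}\la x\ra^\mu e^{{\rm i}kx}h_m(x)\overline{h_n(x)}\,dx\Bigr|\le C\int_{X_m^{\frac23}}^{X_m-X_m^{\frac13}}\la x\ra^\mu(X_m^2-x^2)^{-\frac14}(X_n^2-x^2)^{-\frac14}\,dx .
\]

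Next I would estimate this elementary integral. Since $X_m\le X_n$ we have $(X_n^2-x^2)^{-\frac14}\le(X_m^2-x^2)^{-\frac14}$ on the interval, and $\la x\ra^\mu\le CX_m^\mu$, so the right-hand side is at most $CX_m^\mu\int_0^{X_m}(X_m^2-x^2)^{-\frac12}\,dx=\tfrac{\pi}{2}CX_m^\mu$. Now I bring in the two hypotheses. From $k>X_m^{\frac13}$, $X_m^\mu<kX_m^{\mu-\frac13}=kX_m^{-\frac16+\frac\mu2}\cdot X_m^{-\frac16+\frac\mu2}$; from $X_n\le2X_m$ and $-\frac16+\frac\mu2<0$ (true because $\mu<\frac13$), $X_m^{-\frac16+\frac\mu2}\le2^{\frac16-\frac\mu2}X_n^{-\frac16+\frac\mu2}$. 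Combining, $X_m^\mu\le CkX_m^{-\frac16+\frac\mu2}X_n^{-\frac16+\frac\mu2}$. Finally, $X_m^2=\lambda_m=2m-1$ and $X_n^2=2n-1$ give $X_m^{-\frac16+\frac\mu2}\le Cm^{-\frac1{12}+\frac\mu4}$ and $X_n^{-\frac16+\frac\mu2}\le Cn^{-\frac1{12}+\frac\mu4}$, so the integral is bounded by $\dfrac{Ck}{m^{\frac1{12}-\frac\mu4}n^{\frac1{12}-\frac\mu4}}$, which is the assertion. (For the finitely many $m$ with $[X_m^{\frac23},X_m-X_m^{\frac13}]$ empty the statement is vacuous.)

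I do not expect a genuine obstacle here; the entire content of the lemma is the observation that, once $k\gtrsim X_m^{1/3}$, the fast phase $e^{{\rm i}kx}$ need not be used at all, so --- unlike the complementary range $0<k\le X_m^{\frac13}$, which will require the oscillatory integral lemma and the sharp turning point estimates --- the argument is purely a chain of size estimates. The two mild points of care are to keep the factor $(X_n^2-x^2)^{-\frac14}$ alive until the very end, so that the regime constraint $X_n\le2X_m$ can be used there (rather than discarding $|h_n|$ too early), and the routine bookkeeping of the lower-order Langer corrections $\psi_2^{(m)},\psi_2^{(n)}$, which satisfy the same bounds with an extra $\lambda^{-1}$ and are therefore absorbed into the constant.
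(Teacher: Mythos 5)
Your proof is correct, but it reaches the key trivial bound $\bigl|\int\bigr|\le CX_m^\mu$ by a longer route than the paper. You invoke the Langer decomposition $h_j=\psi_1^{(j)}+\psi_2^{(j)}$ on the interval, apply Lemma~\ref{Bessel} to get the pointwise estimates $|h_m(x)|\le C(X_m^2-x^2)^{-1/4}$, $|h_n(x)|\le C(X_n^2-x^2)^{-1/4}$, and then integrate $(X_m^2-x^2)^{-1/2}$ explicitly. The paper instead disposes of the integral in one line by the Cauchy--Schwarz (``H\"older'') inequality together with the normalization $\|h_m\|_{L^2}=\|h_n\|_{L^2}=1$: $\bigl|\int_{X_m^{2/3}}^{X_m-X_m^{1/3}}\la x\ra^\mu e^{{\rm i}kx}h_m\overline{h_n}\,dx\bigr|\le \sup_{[X_m^{2/3},X_m-X_m^{1/3}]}\la x\ra^\mu \cdot\|h_m\|_{L^2}\|h_n\|_{L^2}\le CX_m^\mu$, which needs no turning-point asymptotics at all. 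From there the two arguments coincide: the hypothesis $k>X_m^{1/3}$ (equivalently $m<k^6$ and, via $X_n\le2X_m$, $n<4k^6$) is used to absorb $X_m^\mu$ into $Ck\,m^{-(1/12-\mu/4)}n^{-(1/12-\mu/4)}$; your algebraic manipulation with $X_m^\mu<kX_m^{\mu-1/3}=kX_m^{-1/6+\mu/2}X_m^{-1/6+\mu/2}$ and $X_m^{-1/6+\mu/2}\le CX_n^{-1/6+\mu/2}$ is a correct alternative phrasing of the same step. In short, both proofs share the central observation that for $k\gtrsim X_m^{1/3}$ the oscillation of $e^{{\rm i}kx}$ is not needed; the paper's use of the $L^2$-normalization makes the trivial bound cheaper, while your route is consistent with the Langer machinery used in the other sub-lemmas but is overkill here.
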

\begin{proof}
If $k>X_m^\frac13$, it follows $m < k^6$ and $n < 4k^6$. By $\rm{H\ddot{o}lder}$ inequality,
$$\left|\int_{X_m^\frac{2}{3}}^{X_m-X_m^{\frac13}}\la x\ra^\mu e^{{\rm i}kx}h_m(x)\overline{h_n(x)}dx\right| \leq C X_m^\mu\frac{m^{\frac{1}{12}}n^{\frac{1}{12}}} {m^{\frac{1}{12}}n^{\frac{1}{12}}} \leq \frac{Ck}{m^{\frac{1}{12}-\frac\mu4}n^{\frac{1}{12}-\frac\mu4}}.$$
\end{proof}
\begin{lemma}\label{lemma2.10}
For $0< k \leq X_m^\frac13,  X_m\leq X_n\leq 2X_m$, if $0\leq X_n^2-X_m^2\leq kX_m^\frac{2}{3}$, then
$$ \displaystyle\left|\int_{X_m^\frac23}^{X_m-X_m^\frac{1}{3}} \la x\ra^\mu e^{{\rm i}kx}h_m(x)\overline{h_n(x)}dx\right|\leq \frac{C(k^{-1}\vee 1) }{m^{\frac16-\frac\mu4}n^{\frac16-\frac\mu4}}, $$
where $m_0< m\leq n$.
\end{lemma}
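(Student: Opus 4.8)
The plan is to work entirely inside the oscillatory region $[X_m^{2/3},X_m-X_m^{1/3}]$, which lies to the left of both turning points $X_m\le X_n$. By Remark \ref{hnbiaoda} I would write $h_m=\psi_1^{(m)}+\psi_2^{(m)}$ and $h_n=\psi_1^{(n)}+\psi_2^{(n)}$ with $\psi_2^{(m)}=O(\lambda_m^{-1})\psi_1^{(m)}$, $\psi_2^{(n)}=O(\lambda_n^{-1})\psi_1^{(n)}$, and use the representations $\psi_1^{(m)}(x)=C(X_m^2-x^2)^{-1/4}e^{{\rm i}\zeta_m(x)}f_m(x)$ and $\overline{\psi_1^{(n)}(x)}=C(X_n^2-x^2)^{-1/4}e^{-{\rm i}\zeta_n(x)}\overline{f_n(x)}$. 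This reduces the claim to an estimate for the main term $\int_{X_m^{2/3}}^{X_m-X_m^{1/3}}\mathcal{F}(x)\,dx=C\int_{X_m^{2/3}}^{X_m-X_m^{1/3}}\langle x\rangle^\mu e^{{\rm i}(\zeta_m-\zeta_n+kx)}\Psi(x)\,dx$, together with the three lower-order products $\psi_{j_1}^{(m)}\overline{\psi_{j_2}^{(n)}}$, $(j_1,j_2)\ne(1,1)$.

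The heart of the argument — and the step I expect to be the real obstacle — is to show that on this interval the phase $\phi(x)=\zeta_m(x)-\zeta_n(x)+kx$ is non-stationary, with $|\phi'(x)|=|g(x)|\ge k/2$. Writing $\sqrt{X_n^2-x^2}-\sqrt{X_m^2-x^2}=\frac{X_n^2-X_m^2}{\sqrt{X_n^2-x^2}+\sqrt{X_m^2-x^2}}$ and noting that $x\le X_m-X_m^{1/3}$ forces $X_m^2-x^2\ge X_m^{4/3}$, hence $\sqrt{X_m^2-x^2}+\sqrt{X_n^2-x^2}\ge 2X_m^{2/3}$, the hypothesis $X_n^2-X_m^2\le kX_m^{2/3}$ yields $0\le\sqrt{X_n^2-x^2}-\sqrt{X_m^2-x^2}\le k/2$, so $g(x)=\sqrt{X_n^2-x^2}-\sqrt{X_m^2-x^2}-k\le-k/2$. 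Since moreover $g'(x)=x\big((X_m^2-x^2)^{-1/2}-(X_n^2-x^2)^{-1/2}\big)\ge0$, the phase derivative is monotone, so Lemma \ref{oscillatory integral lemma} with $k$ as the large parameter gives
$$\left|\int_{X_m^{2/3}}^{X_m-X_m^{1/3}}\langle x\rangle^\mu e^{{\rm i}\phi(x)}\Psi(x)\,dx\right|\le\frac{C}{k}\left(\sup_{[X_m^{2/3},X_m-X_m^{1/3}]}\langle x\rangle^\mu|\Psi(x)|+\int_{X_m^{2/3}}^{X_m-X_m^{1/3}}\big|(\langle x\rangle^\mu\Psi)'(x)\big|\,dx\right).$$

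It then remains to bound the two amplitude quantities, which is routine bookkeeping. Since $|f_m|,|f_n|\le\Gamma(5/6)$ and, on the interval, $(X_m^2-x^2)^{-1/4}\le CX_m^{-1/3}$ and $(X_n^2-x^2)^{-1/4}\le(X_m^2-x^2)^{-1/4}\le CX_m^{-1/3}$, we get $\sup\langle x\rangle^\mu|\Psi|\le CX_m^{\mu-2/3}$. For the derivative I would use Corollary \ref{coro2.5} ($|\Psi'|\le C(J_1+J_3)$) and $(\langle x\rangle^\mu\Psi)'=\langle x\rangle^\mu\Psi'+\mu\langle x\rangle^{\mu-1}\frac{x}{\langle x\rangle}\Psi$, and compute directly that $\int\langle x\rangle^\mu J_1\,dx\le CX_m^{\mu-2/3}$ (antiderivative $(X_m^2-x^2)^{-1/2}$ evaluated at the endpoints), $\int\langle x\rangle^\mu J_3\,dx\le CX_m^{\mu-5/3}$, and $\int\mu\langle x\rangle^{\mu-1}\frac{x}{\langle x\rangle}|\Psi|\,dx\le CX_m^{-2/3}\int_0^{X_m}\mu\langle x\rangle^{\mu-2}x\,dx\le CX_m^{\mu-2/3}$, so the right-hand side above is $\le\frac{C}{k}X_m^{\mu-2/3}$. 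The three remainder products I would bound without invoking oscillation: $|\psi_2^{(m)}|\le C\lambda_m^{-1}|\psi_1^{(m)}|$ combined with $\int_0^{X_m}(X_m^2-x^2)^{-1/2}\,dx=\pi/2$ gives contributions $\le CX_m^{\mu-2}\le CX_m^{\mu-2/3}$.

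Finally, using $X_m\asymp m^{1/2}$, $X_n\asymp n^{1/2}$ and $m\asymp n$ (forced by $X_m\le X_n\le 2X_m$), one has $X_m^{\mu-2/3}\asymp m^{\mu/2-1/3}\asymp\big(m^{1/6-\mu/4}n^{1/6-\mu/4}\big)^{-1}$, and collecting the factor $1+k^{-1}\le2(k^{-1}\vee1)$ gives the asserted bound. When $m<C_*$ the whole interval analysis is unnecessary, since $X_n\le 2X_m$ then forces $n$ bounded as well and the estimate is immediate. The only delicate point, as noted, is the uniform lower bound $|g(x)|\ge k/2$: this is exactly where the constraint $X_n^2-X_m^2\le kX_m^{2/3}$ and the precise choice of left endpoint $X_m^{2/3}$ are exploited, and it is what keeps us in the easy non-stationary regime rather than forcing the stationary-phase analysis used in the complementary subcases.
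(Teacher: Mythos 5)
Your proposal is correct and follows essentially the same route as the paper's own proof: decompose via the Langer expansion, apply Lemma~\ref{oscillatory integral lemma} after establishing the non-stationary bound $|g(x)|\ge k/2$ from the hypothesis $X_n^2-X_m^2\le kX_m^{2/3}$, and then bound the boundary term, the $J_1,J_3$ integrals and the $\la x\ra^{\mu-1}$ term exactly as the paper does. The only cosmetic difference is that you derive $|g(x)|\ge k/2$ uniformly by bounding the denominator $\sqrt{X_n^2-x^2}+\sqrt{X_m^2-x^2}\ge 2X_m^{2/3}$ on the whole interval, whereas the paper evaluates $g$ at the right endpoint and invokes monotonicity — both correct, and both still need $g'\ge 0$ for the $k=1$ case of the oscillatory-integral lemma.
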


\begin{proof}
We first estimate
$$\left|\displaystyle\int_{X_m^\frac23}^{X_m-X_m^\frac{1}{3}}\cal{F}(x) dx\right| = \left|C\displaystyle\int_{X_m^\frac23}^{X_m-X_m^\frac{1}{3}} \la x\ra^\mu e^{{\rm i}(\zeta_m-\zeta_n+kx)}\Psi(x)dx\right|.$$
Since
\begin{align*}
g(X_m-X_m^\frac{1}{3})&=\frac{X_n^2-X_m^2}{\sqrt{X_n^2-(X_m-X_m^{\frac{1}{3}})^2}+\sqrt{X_m^2-(X_m-X_m^{\frac{1}{3}})^2}}-k\\
                      &\leq\frac{k X_m^\frac23}{\sqrt{X_n^2-X_m^2+X_m^\frac{4}{3}}+X_m^\frac{2}{3}}-k\leq\frac{k X_m^\frac23}{2X_m^\frac{2}{3}}-k= -\frac{k}{2},
\end{align*}
together with $g^\prime(x)\geq0$,  one obtains
$|g(x)|\geq\frac{k}{2}$ for $x\in[X_m^\frac23,X_m-X_m^\frac{1}{3}]$. Then by Lemma \ref{oscillatory integral lemma},
\begin{align*}
&\left| \int_{X_m^\frac23}^{X_m-X_m^\frac{1}{3}} \la x\ra^\mu e^{{\rm i}\frac{2}{k}(\zeta_m-\zeta_n+kx)\cdot\frac{k}{2}} (X_m^2-x^2)^{-\frac{1}{4}} (X_n^2-x^2)^{-\frac{1}{4}}\cdot f_m(x)\overline{f_n(x)}dx\right|\\
\leq& Ck^{-1}\left[\left|(\la x\ra^\mu \Psi)(X_m-X_m^{\frac{1}{3}})\right| + \int_{X_m^\frac23}^{X_m-X_m^\frac{1}{3}} \left|(\la x\ra^\mu \Psi)^\prime(x)\right|dx\right]\\
\le&  Ck^{-1}\left[ X_m^{\mu}\left|\Psi(X_m-X_m^{\frac{1}{3}})\right| + \la X_m\ra^\mu\int_{X_m^\frac23}^{X_m-X_m^\frac{1}{3}} \left|\Psi^\prime(x)\right|dx +\mu\int_{X_m^\frac23}^{X_m-X_m^\frac{1}{3}} \la x\ra^{\mu-1}\left|\Psi(x)\right|dx \right].
\end{align*}
From
\begin{align*}
\left|\Psi(X_m-X_m^{\frac{1}{3}})\right| &\leq C\left(X_m^2-(X_m-X_m^{\frac{1}{3}})^2\right)^{-\frac{1}{4}} \left(X_n^2-(X_m-X_m^{\frac{1}{3}})^2\right)^{-\frac{1}{4}} \leq C X_m^{-\frac{2}{3}},
\end{align*}
and
$
\int_{X_m^\frac23}^{X_m-X_m^\frac{1}{3}} \la x\ra^{\mu-1}\left|\Psi(x)\right|dx\le CX_m^{-\frac23+\mu}.
$
By Corollary \ref{coro2.5}, $|\Psi^\prime(x)|\le C(J_1+J_3)$. From 
\begin{align*}
\int_{X_m^\frac23}^{X_m-X_m^\frac{1}{3}}J_1dx &\leq  C \int_{X_m^\frac23}^{X_m-X_m^\frac{1}{3}} x(X_m^2-x^2)^{-\frac{5}{4}}(X_n^2-x^2)^{-\frac{1}{4}}dx \leq C X_m^{-\frac{2}{3}},
\end{align*}
together with
$
\int_{X_m^\frac23}^{X_m-X_m^\frac{1}{3}}J_3dx\leq C\int_{X_m^\frac23}^{X_m-X_m^\frac{1}{3}}\leq C X_m^{-\frac53},
$
it follows by Corollary \ref{coro2.5}
$$\left|\displaystyle\int_{X_m^\frac23}^{X_m-X_m^\frac{1}{3}} \cal{F}(x) dx\right| \leq Ck^{-1} X_m^{-\frac23+\mu}.$$
The estimates for  the rest  three terms are much simpler.  In fact, when $m > m_0$,
\begin{align*}
&\left|\int_{X_m^\frac23}^{X_m-X_m^\frac{1}{3}} \la x\ra^\mu e^{{\rm i}kx}\psi^{(m)}_2(x)\overline{\psi^{(n)}_1(x)}dx\right| \leq CX_m^{-2+\mu}\int_{X_m^\frac23}^{X_m-X_m^\frac{1}{3}}(X_m^2-x^2)^{-\frac{1}{2}}dx\\
&\leq CX_m^{-\frac52+\mu} \int_{0}^{X_m-X_m^\frac{1}{3}}(X_m-x)^{-\frac{1}{2}}dx \leq CX_m^{-\frac23+\mu}.
\end{align*}
The other two terms have same estimates. Therefore,
\begin{align*}
\left|\displaystyle\int_{X_m^\frac23}^{X_m-X_m^\frac{1}{3}}\la x\ra^\mu e^{{\rm i}kx}h_m(x)\overline{h_n(x)}dx\right|
\leq& \frac{C(k^{-1}\vee 1) }{m^{\frac16-\frac\mu4} n^{\frac16-\frac\mu4}}.
\end{align*}
\end{proof}
\begin{lemma}\label{lemma2.11}
For $0< k \leq X_m^\frac13$, $X_m\leq X_n\leq 2X_m$, if $kX_m^\frac23\leq X_n^2-X_m^2\leq kX_m^\frac56$, then
$$\displaystyle\left|\int_{X_m^\frac23}^{X_m-X_m^\frac{1}{3}} \la x\ra^\mu e^{{\rm i}kx}h_m(x)\overline{h_n(x)}dx\right|\leq \frac{C(k^{-1}\vee 1)}{m^{\frac{1}{12}-\frac\mu4} n^{\frac{1}{12}-\frac\mu4}},$$
where $m_0< m\leq n$.
\end{lemma}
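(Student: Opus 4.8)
The plan is to re-use the oscillatory–integral scheme of Lemma~\ref{lemma2.10}, the one new feature being that in the present range of $X_n^2-X_m^2$ the phase derivative may be small, or even vanish, inside $[X_m^{2/3},X_m-X_m^{1/3}]$, so the interval has to be cut where it is smallest and the stationary part estimated trivially. As before, since $m_0<m\le n$ we split $h_m=\psi_1^{(m)}+\psi_2^{(m)}$, $h_n=\psi_1^{(n)}+\psi_2^{(n)}$ with $\psi_2^{(m)}=O(\lambda_m^{-1})\psi_1^{(m)}$, $\psi_2^{(n)}=O(\lambda_n^{-1})\psi_1^{(n)}$ (Remark~\ref{hnbiaoda}); the three terms carrying a $\psi_2$ factor gain a power $X_m^{-2}$ and are $\le CX_m^{-2+\mu}$, exactly as in Lemmas~\ref{lemma2.6} and~\ref{lemma2.10}, so it remains to bound $\displaystyle\int_{X_m^{2/3}}^{X_m-X_m^{1/3}}\cal{F}(x)\,dx=C\int_{X_m^{2/3}}^{X_m-X_m^{1/3}}\la x\ra^\mu e^{{\rm i}(\zeta_m-\zeta_n+kx)}\Psi(x)\,dx$, whose phase derivative is $-g(x)$ with $g(x)=\dfrac{X_n^2-X_m^2}{\sqrt{X_n^2-x^2}+\sqrt{X_m^2-x^2}}-k$ non-decreasing.

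Write $A:=(X_n^2-X_m^2)/k$, so $X_m^{2/3}\le A\le X_m^{5/6}$ by hypothesis. First I would record that $g(X_m^{2/3})\le k(X_m^{-1/6}-1)\le-\tfrac k2$ for $X_m$ large, and set $x_0:=\sup\{x\in[X_m^{2/3},X_m-X_m^{1/3}]:g(x)\le-\tfrac k2\}$. If $x_0=X_m-X_m^{1/3}$ then $|g|\ge k/2$ on the whole interval and the argument of Lemma~\ref{lemma2.10} applies verbatim, giving $\le Ck^{-1}X_m^{-2/3+\mu}$. Otherwise $g(x_0)=-\tfrac k2$, hence $\sqrt{X_n^2-x_0^2}=A+\tfrac k4$ and $\sqrt{X_m^2-x_0^2}=A-\tfrac k4$ (using $k\le X_m^{1/3}\le A$), so $X_m^2-x_0^2\le A^2\le X_m^{5/3}$ forces $x_0\ge X_m/\sqrt2$. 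Splitting at $x_0$: on $[X_m^{2/3},x_0]$ one has $|g|\ge k/2$, and Lemma~\ref{oscillatory integral lemma} combined with $|\Psi(x_0)|\le CX_m^{-2/3}$, $|f_m|,|f_n|\le\Gamma(\tfrac56)$ and $\int|\Psi'|\le CX_m^{-2/3}$ (Corollary~\ref{coro2.5}) yields a contribution $\le Ck^{-1}X_m^{-2/3+\mu}$.

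It remains to estimate $\int_{x_0}^{X_m-X_m^{1/3}}\cal{F}$. There $x\ge X_m/\sqrt2$, $\sqrt{X_m^2-x^2}\le A$, $\sqrt{X_n^2-x^2}\le 2A$, so $g'(x)=x\,\dfrac{X_n^2-X_m^2}{\sqrt{X_m^2-x^2}\sqrt{X_n^2-x^2}\big(\sqrt{X_n^2-x^2}+\sqrt{X_m^2-x^2}\big)}\ge c\,\dfrac{X_m\,kA}{A^3}\ge ckX_m^{-2/3}$, while $-\tfrac k2\le g\le g(X_m-X_m^{1/3})\le CkX_m^{1/6}$, so this sub-interval has length $\le CX_m^{5/6}$. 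Let $\xi$ be the point of it at which $|g|$ is minimal (the zero of $g$, or the right endpoint). For a parameter $\rho>0$ I would bound the stationary piece $[\xi-\rho,\xi+\rho]$ trivially by $C\la X_m\ra^\mu\sup|\Psi|\,\rho\le CX_m^{-2/3+\mu}\rho$, and on the two remaining pieces use the monotonicity of $g$ together with $g'\ge ckX_m^{-2/3}$ to get $|g|\ge c\rho kX_m^{-2/3}$ there, so Lemma~\ref{oscillatory integral lemma} (with $\sup|\Psi|+\int|\Psi'|\le CX_m^{-2/3}$) contributes $\le C\dfrac{X_m^{-2/3+\mu}}{\rho kX_m^{-2/3}}=C\dfrac{X_m^{\mu}}{\rho k}$; the choice $\rho=X_m^{1/3}k^{-1/2}$ balances the two and gives $\le CX_m^{-1/3+\mu}k^{-1/2}$ (when this $\rho$ exceeds the length, i.e.\ $k\le X_m^{-1}$, the trivial bound over the whole sub-interval is already $\le CX_m^{1/6+\mu}\le Ck^{-1}X_m^{-1/3+\mu}$). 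Collecting the three pieces and the $\psi_2$ terms, and using $X_m\asymp X_n\asymp\sqrt m\asymp\sqrt n$ (so $X_m^{-1/3+\mu}\asymp m^{-\frac1{12}+\frac\mu4}n^{-\frac1{12}+\frac\mu4}$), $k^{-1/2}\le k^{-1}\vee1$ and $X_m^{-2/3+\mu}\le X_m^{-1/3+\mu}$, gives the claim. I expect the only genuine difficulty to be the bookkeeping around the stationary point: cutting at the level $-\tfrac k2$ so that $x_0$ is automatically pushed close to $X_m$, extracting the uniform lower bound $g'\gtrsim kX_m^{-2/3}$ beyond $x_0$, and optimising $\rho$; everything else simply reproduces estimates already carried out in Lemmas~\ref{lemma2.6}--\ref{lemma2.10}.
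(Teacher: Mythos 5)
Your argument is correct, and it reaches the same final estimate, but it takes a genuinely different route on the sub-interval containing the stationary point.  The paper also cuts at a level set of $g$ (namely $g(a)=-kX_m^{-1/3}$) and, on $[a,X_m-\tfrac1{32}X_m^{1/3}]$, exploits $g''>0$ to deduce $g'(x)\ge g'(a)\ge CkX_m^{-2/3}$ and then applies Lemma~\ref{oscillatory integral lemma} directly with the \emph{second}-derivative condition ($k=2$), giving $Ck^{-1/2}X_m^{1/3}\cdot X_m^{-2/3+\mu}$ in one stroke; the tail $[X_m-X_m^{1/3},X_m-\tfrac1{32}X_m^{1/3}]$ is controlled by the crude $L^1$ bound $\int(X_m^2-x^2)^{-1/2}$.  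You instead cut at $g(x_0)=-\tfrac k2$, show $g'\gtrsim kX_m^{-2/3}$ to the right of $x_0$, and rederive the second-derivative van der Corput bound by hand: a trivial estimate on a window of width $2\rho$ around the zero of $g$, the first-derivative lemma on the flanks where $|g|\gtrsim \rho k X_m^{-2/3}$, and optimization in $\rho$.  This "window plus optimize" argument is exactly the classical proof of the $k=2$ van der Corput inequality, so it is equivalent, but it requires you to carry $\rho$, check that $\rho$ does not exceed the length of the sub-interval, and treat the case of very small $k$ separately, whereas the paper sidesteps all of that by invoking the $k=2$ statement of Lemma~\ref{oscillatory integral lemma} once; on the other hand, your version makes it transparent where the factor $k^{-1/2}X_m^{1/3}$ comes from.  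The cut level is also different ($-k/2$ versus $-kX_m^{-1/3}$), which changes the bookkeeping on $[X_m^{2/3},x_0]$ versus $[X_m^{2/3},a]$ slightly but harmlessly, since both give $\le Ck^{-1}X_m^{-2/3+\mu}$.  Both proofs handle the $\psi_2$ terms and the endpoint of the possible degenerate case ($x_0=X_m-X_m^{1/3}$, respectively the detached tail) in the same elementary way.
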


\begin{proof}
When $kX_m^\frac23\leq X_n^2-X_m^2\leq kX_m^\frac56$,
\begin{align*}
&\Big|\int_{X_m-X_m^{\frac13}}^{X_m-\frac{1}{32}X_m^{\frac13}}\cal{F}(x) dx\Big|\leq C   \int_{X_m-X_m^{\frac13}}^{X_m-\frac{1}{32}X_m^{\frac13}}\la x\ra^\mu(X_m^2-x^2)^{-\frac{1}{2}}dx\leq C n^{-\frac{1}{12}+\frac\mu4}m^{-\frac{1}{12}+\frac\mu4}.
\end{align*}
In the following we estimate the integral on $[X_m^\frac23, X_m-\frac{1}{32}X_m^{\frac13}]$. Since $m >m_0$ large enough,
\begin{align*}
g(X_m-\frac{1}{32}X_m^{\frac13})
\geq \frac{kX_m^\frac23}{\sqrt{kX_m^\frac56+\frac{1}{16}X_m^\frac43}+\sqrt{\frac{1}{16}X_m^\frac43}}-k
\geq \frac{kX_m^\frac23}{\frac{X_m^\frac23}{2}+\frac{X_m^\frac23}{4}}-k=\frac k3.
\end{align*}
On the other hand,
\begin{align*}
g(X_m-X_m^\frac23)
\leq \frac{kX_m^\frac56}{2\sqrt{X_m^2-(X_m-X_m^\frac23)^2}}-k
\leq \frac{kX_m^\frac56}{2X_m^\frac56}-k=-\frac k2.
\end{align*}
We  denote $g(a)=-kX_m^{-\frac13}$, then by the monotonicity of $g(x)$, $X_m-X_m^\frac23<a<X_m-\frac{1}{32}X_m^{\frac13}$.
In the following we estimate the integral on $[a,X_m-\frac{1}{32}X_m^{\frac13}]$ firstly. A straightforward computation shows us $g^{\prime\prime}(x)>0$, therefore,
\begin{align*}
\left|g^\prime(x)\right|\geq \left|\frac{a}{\sqrt{X_n^2-a^2}}-\frac{a}{\sqrt{X_m^2-a^2}}\right|
=\frac{a(g(a)+k)}{\sqrt{X_n^2-a^2}\sqrt{X_m^2-a^2}}\geq CkX_m^{-\frac23},
\end{align*}
By  Lemma \ref{oscillatory integral lemma},
\begin{eqnarray}\label{59}
&&\Big|\int_a^{X_m-\frac{1}{32}X_m^{\frac13}}\la x\ra^\mu e^{{\rm i}\frac{\zeta_m-\zeta_n+kx}{kX_m^{-\frac23}} kX_m^{-\frac23}}(X_n^2-x^2)^{-\frac{1}{4}}(X_m^2-x^2)^{-\frac{1}{4}}f_m(x)\overline{f_n(x)}dx\Big| \nonumber \\
&\leq&Ck^{-\frac{1}{2}}X_m^{\frac13}\left[\left|(\la x\ra^\mu\Psi)(X_m-\frac{1}{32}X_m^{\frac13})\right| + \int_a^{X_m-\frac{1}{32}X_m^{\frac13}} \left|(\la x\ra^\mu\Psi)^\prime(x)\right|dx\right] \nonumber \\
&\leq&Ck^{-\frac{1}{2}}X_m^{\frac13}\bigg[X_m^\mu\left|\Psi(X_m-\frac{1}{32}X_m^{\frac13})\right| +  X_m^\mu\int_a^{X_m-\frac{1}{32}X_m^{\frac13}}\left|\Psi^\prime(x)\right|dx \nonumber \\
&&+\mu\int_a^{X_m-\frac{1}{32}X_m^{\frac13}} \la x\ra^{\mu-1}\left|\Psi(x)\right|dx\bigg].
\end{eqnarray}
We compute the right terms in (\ref{59}) one by one. Clearly,
\begin{align*}
\left|\Psi(X_m-\frac{1}{32}X_m^{\frac13})\right|&\leq C\left(X_m^2-(X_m-\frac{1}{32}X_m^{\frac13})^2\right)^{-\frac{1}{4}}\left(X_n^2-(X_m-\frac{1}{32}X_m^{\frac13})^2\right)^{-\frac{1}{4}}\leq C X_m^{-\frac23}.
\end{align*}
From 
$|\Psi^\prime(x)| \leq C(J_1+J_3)$ and 
\begin{align*}
\int_a^{X_m-\frac{1}{32}X_m^{\frac13}}J_1dx &\leq C \int_a^{X_m-\frac{1}{32}X_m^{\frac13}} x(X_m^2-x^2)^{-\frac{5}{4}}(X_m^2-x^2)^{-\frac{1}{4}}dx\\
&\leq C X_m X_m^{-\frac{3}{2}} \int_a^{X_m-\frac{1}{32}X_m^{\frac13}}(X_m-x)^{-\frac{3}{2}}dx \leq CX_m^{-\frac23},
\end{align*}
and
$
\int_a^{X_m-\frac{1}{32}X_m^{\frac13}}J_3dx\leq C X_m^{-\frac53},
$
and
$
\int_a^{X_m-\frac{1}{32}X_m^{\frac13}} \la x\ra^{\mu-1}\left|\Psi(x)\right|dx\le CX_m^{-\frac23+\mu},
$
we obtain
$$\Big|\int_a^{X_m-\frac{1}{32}X_m^{\frac13}}\la x\ra^\mu e^{{\rm i}kx}\psi_1^{(m)}(x)\overline{\psi_1^{(n)}(x)}dx\Big| \leq Ck^{-\frac{1}{2}}X_m^{-\frac13+\mu}\leq \frac{Ck^{-\frac{1}{2}}}{m^{\frac{1}{12}-\frac\mu4}n^{\frac{1}{12}-\frac\mu4}}.$$
Next we estimate the integral on $[X_m^\frac23,a]$. From $\left|g(x)\right| \geq kX_m^{-\frac13}$ and  Lemma \ref{oscillatory integral lemma}, one obtains
\begin{align*}
&\Big|\int_{X_m^\frac23}^a \la x\ra^\mu e^{{\rm i}\frac{\zeta_m-\zeta_n+kx}{kX_m^{-\frac13}} kX_m^{-\frac13}}(X_n^2-x^2)^{-\frac{1}{4}}(X_m^2-x^2)^{-\frac{1}{4}}f_m(x)\overline{f_n(x)}dx\Big|\\
\leq&Ck^{-1}X_m^{\frac13}\bigg[\left|(\la x\ra^\mu\Psi)(a)\right| + \int_{X_m^\frac23}^a \left|(\la x\ra^\mu\Psi)^\prime(x)\right|dx\bigg]\\
\leq&Ck^{-1}X_m^{\frac13}\bigg[\la a\ra^\mu\left|\Psi(a)\right| + \la a\ra^\mu\int_{X_m^\frac23}^a\left|\Psi^\prime(x)\right|dx+ \mu\int_{X_m^\frac23}^a \la x\ra^{\mu-1}\left|\Psi(x)\right|dx\bigg].
\end{align*}
Clearly,
\begin{align*}
\left|\Psi(a)\right|&\leq C\left(X_m^2-(X_m-\frac{1}{32}X_m^{\frac13})^2\right)^{-\frac{1}{4}}\left(X_n^2-(X_m-\frac{1}{32}X_m^{\frac13})^2\right)^{-\frac{1}{4}}\leq C X_m^{-\frac23}.
\end{align*}
and
$\int_{X_m^\frac23}^a \la x\ra^{\mu-1}\left|\Psi(x)\right|dx\le C X_m^{-\frac23+\mu}$.
Thus, 
$$\Big|\int_{X_m^\frac23}^a\cal{F}(x) dx\Big| \leq Ck^{-1}X_m^{-\frac13+\mu}\leq \frac{C k^{-1}}{ m^{\frac{1}{12}-\frac\mu4}n^{\frac{1}{12}-\frac\mu4}}.$$
Combining with all the estimates in this part, one obtains
$$\Big|\int_{X_m^\frac23}^{X_m-X_m^{\frac{1}{3}}} \cal{F}(x) dx\Big| \leq \frac{C(k^{-1}\vee 1)}{m^{\frac{1}{12}-\frac\mu4}n^{\frac{1}{12}-\frac\mu4}}.$$
The estimates for the remained three terms are much better. Therefore, 
$$\Big|\int_{X_m^\frac23}^{X_m-X_m^{\frac{1}{3}}} \la x\ra^\mu e^{{\rm i}kx}h_m(x)\overline{h_n(x)}dx\Big| \leq \frac{C(k^{-1}\vee 1)}{m^{\frac{1}{12}-\frac\mu4}n^{\frac{1}{12}-\frac\mu4}}.$$
\end{proof}
We delay the proofs of Lemma \ref{lemma2.12}, \ref{lemma2.13}, \ref{lemma2.14} into section \ref{S5}. 
\begin{lemma}\label{lemma2.12}
For $0<k\leq X_m^\frac13, X_m\leq X_n\leq 2X_m$, if $kX_m^\frac56 \leq X_n^2-X_m^2 \leq kX_m$, then
$$\displaystyle\left|\int_{X_m^\frac23}^{X_m-X_m^\frac{1}{3}} \la x\ra^\mu e^{{\rm i}kx}h_m(x)\overline{h_n(x)}dx\right|\leq \frac{C(k^{-1}\vee 1)}{m^{\frac{1}{12}-\frac\mu4} n^{\frac{1}{12}-\frac\mu4}},$$
where $m_0< m\leq n$.
\end{lemma}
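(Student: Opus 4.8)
\emph{Proof plan.} The argument runs parallel to that of Lemma \ref{lemma2.11}; what changes is the position of the intermediate splitting points, which is now governed by the window $kX_m^{5/6}\le X_n^2-X_m^2\le kX_m$. As there, decomposing $h_m=\psi_1^{(m)}+\psi_2^{(m)}$, $h_n=\psi_1^{(n)}+\psi_2^{(n)}$ with $\psi_2^{(m)}=O(\lambda_m^{-1})\psi_1^{(m)}$ and $\psi_2^{(n)}=O(\lambda_n^{-1})\psi_1^{(n)}$, it suffices to estimate the leading term $\int_{X_m^{2/3}}^{X_m-X_m^{1/3}}\cal F(x)\,dx=C\int_{X_m^{2/3}}^{X_m-X_m^{1/3}}\la x\ra^\mu e^{{\rm i}(\zeta_m-\zeta_n+kx)}\Psi(x)\,dx$, the three cross terms carrying an extra $\lambda_m^{-1}$ or $\lambda_n^{-1}$ and being handled by crude absolute values exactly as in Lemma \ref{lemma2.11}. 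The first step is to analyse the phase. Its derivative is, up to sign, $g(x)=\sqrt{X_n^2-x^2}-\sqrt{X_m^2-x^2}-k$, which is increasing on $[0,X_m)$; moreover $g'(x)=x\big((X_m^2-x^2)^{-1/2}-(X_n^2-x^2)^{-1/2}\big)=x(g(x)+k)\big(\sqrt{X_m^2-x^2}\sqrt{X_n^2-x^2}\big)^{-1}$ is increasing as well, both facts coming from the computation behind Corollary \ref{coro2.5}. Using $X_n^2-X_m^2\le kX_m$ one gets $g(X_m^{2/3})\le-k/3<0$; using $X_n^2-X_m^2\ge kX_m^{5/6}$ together with $X_n^2-x^2\lesssim X_m^{4/3}$ near $x=X_m-X_m^{1/3}$ one gets $g(X_m-X_m^{1/3})\ge ckX_m^{1/6}>0$. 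Hence the phase has a unique critical point $x_0\in(X_m^{2/3},X_m-X_m^{1/3})$, and $g(x_0)=0$ yields $\sqrt{X_m^2-x_0^2}=\tfrac12\big(\tfrac{X_n^2-X_m^2}{k}-k\big)=:Y$ with $\tfrac14X_m^{5/6}\le Y\le\tfrac12X_m$; consequently $x_0\asymp X_m$, $X_m-x_0\asymp Y^2/X_m\gtrsim X_m^{2/3}$, $|\Psi(x_0)|\le C(X_m^2-x_0^2)^{-1/4}(X_n^2-x_0^2)^{-1/4}\le CY^{-1}$, and $g'(x_0)\asymp X_mk\,Y^{-2}$.

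With this in hand I would follow Lemma \ref{lemma2.11}: pick $a<x_0$ with $g(a)=-c_0kX_m^{-1/3}$ (so $X_m^{2/3}<a<x_0$) and split $[X_m^{2/3},X_m-X_m^{1/3}]=[X_m^{2/3},a]\cup[a,x_0+\tfrac14(X_m-x_0)]\cup[x_0+\tfrac14(X_m-x_0),X_m-X_m^{1/3}]$, the last interval possibly empty. On $[X_m^{2/3},a]$ one has $|g|\ge c_0kX_m^{-1/3}$, so the first--derivative part of Lemma \ref{oscillatory integral lemma} with parameter $kX_m^{-1/3}$ applies; bounding the total variation of $\la x\ra^\mu\Psi$ by $\int(J_1+J_3)\,dx$ and $\int\la x\ra^{\mu-1}|\Psi|\,dx$ as in Corollary \ref{coro2.5} and using $|\Psi|\lesssim Y^{-1}$ there, this piece is $\le Ck^{-1}X_m^{1/3+\mu}Y^{-1}\le C(k^{-1}\vee1)X_m^{-1/3+\mu}$, the last step using $Y\ge\tfrac14X_m^{5/6}$. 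On $[a,x_0+\tfrac14(X_m-x_0)]$, which contains the stationary point, $g''>0$ and by monotonicity $|g'|\ge g'(a)\asymp X_mk\,Y^{-2}$, so the second--derivative part of Lemma \ref{oscillatory integral lemma} with parameter $X_mk\,Y^{-2}$ applies; since $|\Psi|\lesssim Y^{-1}$ on this interval, the contribution is $\lesssim Y(X_mk)^{-1/2}\,Y^{-1}X_m^\mu=k^{-1/2}X_m^{-1/2+\mu}$, which is $\le(k^{-1}\vee1)X_m^{-1/3+\mu}$ precisely because $0<k\le X_m^{1/3}$. Finally, on $[x_0+\tfrac14(X_m-x_0),X_m-X_m^{1/3}]$ the function $g$ has already grown, $g(x)\ge g'(x_0)(x-x_0)\gtrsim X_mk\,Y^{-2}\cdot Y^2X_m^{-1}=k$ on that interval, so the first--derivative test applies again and, with $\sup\la x\ra^\mu|\Psi|\lesssim X_m^{-2/3+\mu}$ there, yields $\le Ck^{-1}X_m^{-2/3+\mu}\le C(k^{-1}\vee1)X_m^{-1/3+\mu}$. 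Summing the three pieces and recalling $m^{1/12-\mu/4}n^{1/12-\mu/4}\asymp X_m^{1/3-\mu}$, the leading term is $\le\dfrac{C(k^{-1}\vee1)}{m^{1/12-\mu/4}n^{1/12-\mu/4}}$; the three terms with $\psi_2^{(m)}$ or $\psi_2^{(n)}$ pick up an additional $X_m^{-3/2}$ or $X_n^{-3/2}$ and are negligible.

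The part I expect to cost the most effort is not any single estimate but the coordination of the splitting: the thresholds $c_0kX_m^{-1/3}$ and $\tfrac14(X_m-x_0)$ must be chosen so that, \emph{uniformly} over the whole window $\tfrac14X_m^{5/6}\le Y\le\tfrac12X_m$ and over $0<k\le X_m^{1/3}$, the first--derivative contributions on the two lateral intervals, the second--derivative contribution around $x_0$, and all the variation estimates stay below $\dfrac{C(k^{-1}\vee1)}{m^{1/12-\mu/4}n^{1/12-\mu/4}}$; this is exactly where both endpoints of the hypothesis on $X_n^2-X_m^2$ are used, and in practice it is a somewhat tedious tracking of exponents of $X_m$, with the ranges $k\ge1$ and $k<1$ treated separately. (If one prefers not to split $[a,X_m-X_m^{1/3}]$ a second time, one may instead introduce break points $x_0+2^jX_m^{1/3}$ and sum a geometric series; the bookkeeping is the same.)
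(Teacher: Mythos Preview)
Your strategy is correct and is essentially the same as the paper's: write the leading term as an oscillatory integral with phase derivative $-g(x)$, locate where $g$ changes sign, split the interval into pieces where either $|g|$ or $|g'|$ has a good lower bound, and apply Lemma~\ref{oscillatory integral lemma} with $k=1$ or $k=2$ accordingly, bounding the variation of $\langle x\rangle^{\mu}\Psi$ via Corollary~\ref{coro2.5}. The cross terms with $\psi_2^{(m)}$ or $\psi_2^{(n)}$ are indeed harmless.

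The one difference worth noting is in the choice of break points. You work with the stationary point $x_0$ directly, set $g(a)=-c_0kX_m^{-1/3}$, and take a $Y$--dependent right split $x_0+\tfrac14(X_m-x_0)$; this gives the sharper lower bound $g'\gtrsim X_m k Y^{-2}$ on the middle piece. The paper instead uses \emph{fixed} break points: it shows $g(X_m-\tfrac1{32}X_m^{2/3})\ge k/3$ (from $X_n^2-X_m^2\ge kX_m^{5/6}$) and $g(X_m/2)\le -k/3$, defines $a$ by $g(a)=-k/3$, and splits as $[X_m^{2/3},a]\cup[a,X_m-\tfrac1{32}X_m^{2/3}]\cup[X_m-\tfrac1{32}X_m^{2/3},X_m-X_m^{1/3}]$. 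On the middle piece it only uses the crude bound $g'(a)\ge CkX_m^{-1}$, which together with $|\Psi|\lesssim X_m^{-5/6}$ near the right endpoint gives $k^{-1/2}X_m^{-1/3+\mu}$. Your bound $k^{-1/2}X_m^{-1/2+\mu}$ on the middle piece is actually a bit stronger, but both suffice; the paper's choice avoids computing $x_0$ and $Y$ and keeps the bookkeeping lighter (in particular your threshold $g(a)=-c_0kX_m^{-1/3}$, carried over from Lemma~\ref{lemma2.11}, is not needed here---$g(a)=-k/3$ already works).
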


\begin{lemma}\label{lemma2.13}
For $0<k\leq X_m^\frac13, X_m\leq X_n\leq 2X_m$, if $kX_m \leq X_n^2-X_m^2 \leq 4kX_m$, then
$$\displaystyle\left|\int_{X_m^\frac23}^{X_m-X_m^\frac{1}{3}} \la x\ra^\mu e^{{\rm i}kx}h_m(x)\overline{h_n(x)}dx\right|\leq \frac{C(k^{-1}\vee 1)}{m^{\frac{1}{12}-\frac\mu4} n^{\frac{1}{12}-\frac\mu4}},$$
where $m_0<m\leq n$.
\end{lemma}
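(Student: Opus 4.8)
Like Lemmas \ref{lemma2.10}, \ref{lemma2.11} and \ref{lemma2.12}, this is a subcase of the range $X_m\le X_n\le 2X_m$, $0<k\le X_m^{1/3}$, separated according to the size of $X_n^2-X_m^2$, and I would argue in the same way. Decomposing $h_m=\psi_1^{(m)}+\psi_2^{(m)}$, $h_n=\psi_1^{(n)}+\psi_2^{(n)}$ with $\psi_2^{(m)}=O(\lambda_m^{-1})\psi_1^{(m)}$, $\psi_2^{(n)}=O(\lambda_n^{-1})\psi_1^{(n)}$, the three products carrying a $\psi_2$ factor gain $O(\lambda_m^{-1}+\lambda_n^{-1})=O(X_m^{-2})$ and are negligible, so it suffices to estimate the principal term $\int_{X_m^{2/3}}^{X_m-X_m^{1/3}}\mathcal F(x)\,dx=C\int_{X_m^{2/3}}^{X_m-X_m^{1/3}}\la x\ra^\mu e^{{\rm i}(\zeta_m-\zeta_n+kx)}\Psi(x)\,dx$ and to show it is $\le C(k^{-1}\vee1)X_m^{\mu-\frac13}$; since $X_m\le X_n\le 2X_m$ this is the claimed bound, because then $m^{\frac1{12}-\frac\mu4}n^{\frac1{12}-\frac\mu4}\asymp X_m^{\frac13-\mu}$. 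The feature of this subcase I would exploit everywhere is that $X_n^2-X_m^2\le 4kX_m\le 4X_m^{4/3}\le 4(X_m^2-x^2)$ for $x\le X_m-X_m^{1/3}$, so $X_m^2-x^2\le X_n^2-x^2\le 5(X_m^2-x^2)$ on the whole interval; hence $|\Psi(x)|\le C(X_m^2-x^2)^{-1/2}$ and, by Corollary \ref{coro2.5}, $|\Psi'(x)|\le C\big(x(X_m^2-x^2)^{-3/2}+X_m^{-1}(X_m-x)^{-3}\big)$.

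Next I would record the geometry of the phase through $g(x)=(\zeta_n-\zeta_m-kx)'=\sqrt{X_n^2-x^2}-\sqrt{X_m^2-x^2}-k=\dfrac{X_n^2-X_m^2}{\sqrt{X_n^2-x^2}+\sqrt{X_m^2-x^2}}-k$, which is increasing and convex ($g'\ge0$, $g''>0$, as in the proof of Lemma \ref{lemma2.11}). One has $g(X_m^{2/3})=(X_n-X_m)-k+O(kX_m^{-2/3})$ with $X_n-X_m=\frac{X_n^2-X_m^2}{X_n+X_m}\in[\frac k3,2k]$, so $|g(X_m^{2/3})|\le 2k$ for $X_m$ large; whereas at $x=X_m-X_m^{1/3}$ the denominator of $g$ is $\le 4X_m^{2/3}$, so $g(X_m-X_m^{1/3})\ge\frac18kX_m^{1/3}$. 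Hence for $X_m$ large there is $a\in[X_m^{2/3},X_m-X_m^{1/3})$ with $g(a)=2k$ (if $g(X_m^{2/3})\ge 2k$, take $a=X_m^{2/3}$ and only the second estimate below is needed).

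On $[X_m^{2/3},a]$ the phase may be stationary, so I would apply the oscillatory integral Lemma \ref{oscillatory integral lemma} in its second-derivative (van der Corput) form, as in the proof of Lemma \ref{lemma2.11}: $g'$ is increasing and $X_n^2-X_m^2\ge kX_m$ forces $|(\zeta_m-\zeta_n+kx)''|=g'(x)\ge g'(X_m^{2/3})\ge\frac16kX_m^{-4/3}$ there, while $g(a)=2k$ gives $\sqrt{X_n^2-a^2}+\sqrt{X_m^2-a^2}=\frac{X_n^2-X_m^2}{3k}\ge\frac{X_m}{3}$, so (by the comparison above) $X_m^2-a^2\ge cX_m^2$ and $|\Psi|\le CX_m^{-1}$ on $[X_m^{2/3},a]$. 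The lemma then gives $\big|\int_{X_m^{2/3}}^a\mathcal F\big|\le C(kX_m^{-4/3})^{-1/2}\big(\sup\la x\ra^\mu|\Psi|+\int|(\la x\ra^\mu\Psi)'|\big)\le Ck^{-1/2}X_m^{2/3}\cdot X_m^{\mu-1}=Ck^{-1/2}X_m^{\mu-1/3}\le C(k^{-1}\vee1)X_m^{\mu-1/3}$. On $[a,X_m-X_m^{1/3}]$ monotonicity gives $|(\zeta_m-\zeta_n+kx)'|=g(x)\ge 2k$, and the first-derivative form of the lemma (as in Lemma \ref{lemma2.10}) gives $\big|\int_a^{X_m-X_m^{1/3}}\mathcal F\big|\le\frac Ck\big(\sup\la x\ra^\mu|\Psi|+\int|(\la x\ra^\mu\Psi)'|\big)\le\frac Ck X_m^{\mu-2/3}\le C(k^{-1}\vee1)X_m^{\mu-1/3}$, the power $X_m^{\mu-2/3}$ coming from $(X_m^2-x^2)^{-1/2}\le CX_m^{-2/3}$ near $X_m-X_m^{1/3}$ together with $\int x(X_m^2-x^2)^{-3/2}dx=[(X_m^2-x^2)^{-1/2}]$ and $\int X_m^{-1}(X_m-x)^{-3}dx$. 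Summing the two pieces, restoring the negligible $\psi_2$-terms, and rewriting powers of $X_m$ via $X_m^2=\lambda_m=2m-1$ and $X_m\asymp X_n$ completes the proof.

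I expect the delicate point to be the bookkeeping around the possible stationary point: one must check that the splitting point $a$ (where $g\asymp k$) stays a fixed fraction of $X_m$ away from the turning point $X_m$ — this is exactly where the defining lower bound $X_n^2-X_m^2\ge kX_m$ of this subcase enters, and it is what keeps $|\Psi|$ of size $X_m^{-1}$ (rather than $X_m^{-2/3}$) on the van der Corput interval, which is in turn what makes the only available second-derivative lower bound $g'\gtrsim kX_m^{-4/3}$ near $X_m^{2/3}$ good enough. The remaining degeneracies — $g(X_m^{2/3})$ near $2k$, and the finitely many pairs $(m,n)$ with $m$ not yet large — are disposed of by tuning the constant in the choice of $a$ and by the crude H\"older estimate already used in Lemmas \ref{lemma2.9} and \ref{Xm13}.
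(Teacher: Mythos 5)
Your proposal is correct and follows essentially the same route as the paper's proof: split the integral at a point where $g$ crosses a threshold of order $k$, apply the van~der~Corput ($k=2$) form of Lemma~\ref{oscillatory integral lemma} on the left subinterval using $g'\geq C\,kX_m^{-4/3}$ (which follows from the lower bound $X_n^2-X_m^2\geq kX_m$), and the first-derivative form on the right subinterval where $|g|\gtrsim k$. The only cosmetic difference is that the paper takes the fixed split point $\tfrac{2\sqrt2}{3}X_m$ and verifies $g\geq k/5$ there from the upper bound $X_n^2-X_m^2\leq 4kX_m$, whereas you take the data-dependent point $a$ with $g(a)=2k$ and then verify $X_m-a\gtrsim X_m$; both arrangements keep $|\Psi|\lesssim X_m^{-1}$ on the van~der~Corput interval and lead to the same bound $C(k^{-1/2}+k^{-1})X_m^{\mu-1/3}$.
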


\begin{lemma}\label{lemma2.14}
For $0< k \leq X_m^\frac13, X_m\leq X_n\leq 2X_m$, if $X_n^2-X_m^2\geq 4kX_m$, then
$$ \displaystyle\left|\int_{X_m^\frac23}^{X_m-X_m^\frac{1}{3}} \la x\ra^\mu e^{{\rm i}kx}h_m(x)\overline{h_n(x)}dx\right|\leq \frac{C(k^{-1}\vee 1) }{m^{\frac{1}{6}-\frac\mu4} n^{\frac{1}{6}-\frac\mu4}}, $$
where $m_0<m\leq n$.
\end{lemma}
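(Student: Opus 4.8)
The plan is to treat this last sub-case of $X_m\le X_n\le 2X_m$ by the mechanism of Lemmas \ref{lemma2.10}--\ref{lemma2.11}: the hypothesis $X_n^2-X_m^2\ge 4kX_m$ is exactly what prevents the phase $\zeta_m-\zeta_n+kx$ from having a stationary point on $[X_m^{2/3},X_m-X_m^{1/3}]$ and keeps its derivative bounded below by a multiple of $k$, so a single integration by parts against the oscillation (Lemma \ref{oscillatory integral lemma}) suffices.

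First I would reduce to the principal term. Writing $h_m=\psi_1^{(m)}+\psi_2^{(m)}$ and $\overline{h_n}=\overline{\psi_1^{(n)}}+\overline{\psi_2^{(n)}}$ on $[0,X_m]$, the three products containing a $\psi_2$ carry an extra factor $O(\lambda_m^{-1})$ or $O(\lambda_n^{-1})$ by Remark \ref{hnbiaoda} and are disposed of by the crude $L^1$ bound exactly as at the end of Lemma \ref{lemma2.11}; it remains to estimate $\big|\int_{X_m^{2/3}}^{X_m-X_m^{1/3}}\mathcal{F}(x)\,dx\big|$, where $\mathcal{F}(x)=\la x\ra^\mu e^{{\rm i}kx}\psi_1^{(m)}\overline{\psi_1^{(n)}}=C\la x\ra^\mu e^{{\rm i}(\zeta_m-\zeta_n+kx)}\Psi(x)$. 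Next I would record the crucial lower bound on $g(x)=(\zeta_n-\zeta_m-kx)'=\sqrt{X_n^2-x^2}-\sqrt{X_m^2-x^2}-k$: since $g(x)=\frac{X_n^2-X_m^2}{\sqrt{X_n^2-x^2}+\sqrt{X_m^2-x^2}}-k$ and, in this sub-case, $X_n\le 2X_m$ forces the denominator to be $\le 3X_m$ while $X_n^2-X_m^2\ge 4kX_m$, we obtain $g(x)\ge\frac43k-k=\frac k3$ on the whole interval; moreover $g'(x)=x\big((X_m^2-x^2)^{-1/2}-(X_n^2-x^2)^{-1/2}\big)\ge 0$, so $g$ is monotone there. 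Lemma \ref{oscillatory integral lemma} then gives
$$\Big|\int_{X_m^{2/3}}^{X_m-X_m^{1/3}}\mathcal{F}\Big|\le\frac Ck\Big[\big|(\la x\ra^\mu\Psi)(X_m-X_m^{1/3})\big|+\int_{X_m^{2/3}}^{X_m-X_m^{1/3}}\big|(\la x\ra^\mu\Psi)'(x)\big|\,dx\Big].$$

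The second step is the pointwise control of $\Psi$ and $\Psi'$ on $[X_m^{2/3},X_m-X_m^{1/3}]$. The one ingredient not already present in Lemmas \ref{lemma2.10}--\ref{lemma2.11} is $(X_n^2-x^2)^{-1/4}\le CX_n^{-1/3}$ there: indeed for $x\le X_m-X_m^{1/3}$ and $m>m_0$ one has $X_n^2-x^2\ge(X_n^2-X_m^2)+2X_m^{4/3}-X_m^{2/3}\ge X_m^{4/3}\ge(X_n/2)^{4/3}$ by $X_n\le 2X_m$. Combining this with the elementary $(X_m^2-x^2)^{-1/4}\le CX_m^{-1/3}$ on the same interval and with $|f_m|,|f_n|\le\Gamma(\frac56)$ yields $|\Psi(x)|\le CX_m^{-1/3}X_n^{-1/3}$, so the boundary term is $\le CX_m^{\mu-1/3}X_n^{-1/3}\le CX_m^{-1/3+\mu/2}X_n^{-1/3+\mu/2}$ since $X_m\le X_n$. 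For the amplitude-derivative integral I would split $(\la x\ra^\mu\Psi)'=\mu x\la x\ra^{\mu-2}\Psi+\la x\ra^\mu\Psi'$: the first term integrates to $\le CX_m^{-1/3}X_n^{-1/3}\int_0^{X_m}\mu x^{\mu-1}\,dx=CX_m^{\mu-1/3}X_n^{-1/3}$, and for the second I would invoke Corollary \ref{coro2.5} ($|\Psi'|\le C(J_1+J_3)$), pull out $(X_n^2-x^2)^{-1/4}\le CX_n^{-1/3}$, and use $\int x(X_m^2-x^2)^{-5/4}\,dx\le CX_m^{-1/3}$ together with the bound for $J_3$ from Lemma \ref{lemma2.10}, obtaining $\int_{X_m^{2/3}}^{X_m-X_m^{1/3}}\la x\ra^\mu(J_1+J_3)\,dx\le CX_m^{\mu-1/3}X_n^{-1/3}$. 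Collecting the pieces gives $\big|\int_{X_m^{2/3}}^{X_m-X_m^{1/3}}\mathcal{F}\big|\le\frac Ck X_m^{-1/3+\mu/2}X_n^{-1/3+\mu/2}$, and since $\frac1k\le(k^{-1}\vee 1)$ and $X_m=\sqrt{2m-1}$, $X_n=\sqrt{2n-1}$, this is $\le\frac{C(k^{-1}\vee 1)}{m^{1/6-\mu/4}n^{1/6-\mu/4}}$; the three $\psi_2$-terms are strictly smaller, which completes the proof.

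I do not anticipate a genuine obstacle. The hypothesis $X_n^2-X_m^2\ge 4kX_m$ is tailored precisely so that the phase is non-stationary with a gap of size $\gtrsim k$, and $X_n\le 2X_m$ keeps the two turning points comparable, which is what permits trading $(X_n^2-x^2)^{-1/4}$ for $CX_n^{-1/3}$. The one point requiring care --- the analogue of a hard step --- is the bookkeeping of the powers of $X_m$ and $X_n$ so that the final estimate lands exactly on $m^{-1/6+\mu/4}n^{-1/6+\mu/4}$ rather than something weaker; this is the same computation already performed in Lemmas \ref{lemma2.10}--\ref{lemma2.11}, with $X_n$ now playing a role symmetric to $X_m$.
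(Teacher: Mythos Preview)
Your proposal is correct and follows essentially the same route as the paper: you establish $g(x)\ge\frac{k}{3}$ on the whole interval from $X_n^2-X_m^2\ge 4kX_m$ and $X_n\le 2X_m$, note the monotonicity of $g$, apply Lemma~\ref{oscillatory integral lemma} with $k=1$, and then recycle the amplitude bounds for $\Psi$ and $\Psi'$ from Lemma~\ref{lemma2.10}. The paper's proof is in fact slightly more terse---after recording $g(x)\ge\frac{4kX_m}{X_n+X_m}-k\ge\frac{k}{3}$ it simply says ``the remained proof is similar as Lemma~\ref{lemma2.10}''---so your explicit separation of the $X_n^{-1/3}$ factor via $X_n^2-x^2\ge X_m^{4/3}\ge(X_n/2)^{4/3}$ is a harmless refinement rather than a different method (the paper just uses $X_m\sim X_n$ at the end instead).
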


From Lemma \ref{Xm13} to Lemma \ref{lemma2.14}, we have
\begin{lemma}\label{lemma2.15}
For $\forall k > 0, X_m\leq X_n\leq 2X_m$, then
$$\displaystyle\left|\int_{X_m^\frac23}^{X_m-X_m^\frac{1}{3}} \la x\ra^\mu e^{{\rm i}kx}h_m(x)\overline{h_n(x)}dx\right|\leq \frac{C(k\vee k^{-1})}{m^{\frac{1}{12}-\frac\mu4} n^{\frac{1}{12}-\frac\mu4}},$$
where $m_0<m\leq n$.
\end{lemma}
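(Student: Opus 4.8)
The plan is to obtain Lemma \ref{lemma2.15} as a bookkeeping consequence of Lemmas \ref{Xm13}--\ref{lemma2.14}: organize the case split, check that the five size-regimes for $X_n^2-X_m^2$ are exhaustive under $X_m\le X_n\le 2X_m$, and verify that the heterogeneous bounds already proved all collapse into the single target $\frac{C(k\vee k^{-1})}{m^{\alpha}n^{\alpha}}$ with $\alpha=\frac1{12}-\frac\mu4$. First I would record two trivial reductions used throughout: since $m,n\ge 1$ one has $m^{-\rho}n^{-\rho}\le m^{-\alpha}n^{-\alpha}$ for every $\rho\ge\alpha$, and $\alpha>0$ because $\mu<\frac13$, so any decay exponent at least $\alpha$ — in particular the exponents $\frac16-\frac\mu4$ and $\frac1{12}-\frac\mu4$ appearing in Lemmas \ref{lemma2.10}--\ref{lemma2.14} — is admissible; moreover $k^{-1}\vee 1\le k\vee k^{-1}$ for all $k>0$, and $X_m=\sqrt{2m-1}\ge 1$.

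Next I would split on the size of $k$. If $k>X_m^{\frac13}$, then $k>X_m^{\frac13}\ge 1$, so $k\vee k^{-1}=k$, and Lemma \ref{Xm13} gives at once
$$\Big|\int_{X_m^{\frac23}}^{X_m-X_m^{\frac13}}\la x\ra^\mu e^{{\rm i}kx}h_m(x)\overline{h_n(x)}dx\Big|\le\frac{Ck}{m^{\frac1{12}-\frac\mu4}n^{\frac1{12}-\frac\mu4}}=\frac{C(k\vee k^{-1})}{m^{\alpha}n^{\alpha}}.$$
If instead $0<k\le X_m^{\frac13}$, the key observation is that the five intervals
$$[0,kX_m^{\frac23}],\quad[kX_m^{\frac23},kX_m^{\frac56}],\quad[kX_m^{\frac56},kX_m],\quad[kX_m,4kX_m],\quad[4kX_m,+\infty)$$
cover $[0,+\infty)$ (their endpoints are increasing since $k>0$ and $X_m\ge 1$), hence in particular cover the range $0\le X_n^2-X_m^2\le 3X_m^2$ forced by $X_m\le X_n\le 2X_m$. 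Therefore at least one of Lemmas \ref{lemma2.10}, \ref{lemma2.11}, \ref{lemma2.12}, \ref{lemma2.13}, \ref{lemma2.14} applies to the pair $(m,n)$, and each of them bounds the integral by $\frac{C(k^{-1}\vee 1)}{m^{\rho}n^{\rho}}$ with $\rho=\frac16-\frac\mu4$ (Lemmas \ref{lemma2.10}, \ref{lemma2.14}) or $\rho=\frac1{12}-\frac\mu4$ (Lemmas \ref{lemma2.11}--\ref{lemma2.13}); by the reductions above this is $\le\frac{C(k^{-1}\vee 1)}{m^{\alpha}n^{\alpha}}\le\frac{C(k\vee k^{-1})}{m^{\alpha}n^{\alpha}}$. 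Combining the two cases finishes the proof.

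I do not expect any genuine analytic obstacle at this stage: all of the oscillatory-integral estimates live in Lemmas \ref{Xm13}--\ref{lemma2.14}. The only point that needs care is the exhaustiveness check — making sure the five dyadic-type windows for $X_n^2-X_m^2$ leave no gap once $X_n\le 2X_m$ is imposed — together with confirming that the various right-hand sides ($Ck$ from Lemma \ref{Xm13}, $C(k^{-1}\vee 1)$ from the others, and the two distinct decay exponents $\frac16-\frac\mu4$ and $\frac1{12}-\frac\mu4$) all funnel into the uniform bound $C(k\vee k^{-1})m^{-\alpha}n^{-\alpha}$; this is precisely where $\mu<\frac13$ and $X_m\ge 1$ are used.
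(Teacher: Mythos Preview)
Your proposal is correct and is exactly the paper's approach: the paper simply writes ``From Lemma \ref{Xm13} to Lemma \ref{lemma2.14}, we have'' and states Lemma \ref{lemma2.15}, leaving the exhaustiveness of the case split and the absorption of the various constants and exponents implicit. You have merely made those routine verifications explicit, and they are all fine.
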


\begin{lemma}\label{lemma2.16}
For $\forall k <  0, X_m\leq X_n\leq 2X_m$, then
$$\displaystyle\left|\int_{X_m^\frac23}^{X_m-X_m^\frac{1}{3}} \la x\ra^\mu e^{{\rm i}kx}h_m(x)\overline{h_n(x)}dx\right|\leq \frac{C(|k|^{-1}\vee 1)}{m^{\frac{1}{6}-\frac\mu4 } n^{\frac{1}{6}-\frac\mu4}},$$
where $m_0<m\leq n$.
\end{lemma}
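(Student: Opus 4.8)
The plan is to exploit the fact that for $k<0$ the relevant phase is \emph{uniformly} non‑stationary on $[X_m^{\frac23},X_m-X_m^{\frac13}]$, so that—in sharp contrast with the case $k>0$ treated in Lemmas \ref{lemma2.10}--\ref{lemma2.14}, where the interval had to be split according to the size of $X_n^2-X_m^2$—a single application of Lemma \ref{oscillatory integral lemma} will suffice. On $[0,X_m)$ one has $\psi_1^{(m)}(x)=C(X_m^2-x^2)^{-\frac14}e^{{\rm i}\zeta_m(x)}f_m(x)$ and $\overline{\psi_1^{(n)}(x)}=C(X_n^2-x^2)^{-\frac14}e^{-{\rm i}\zeta_n(x)}\overline{f_n(x)}$, so $\cal{F}(x)=C\la x\ra^\mu e^{{\rm i}(\zeta_m-\zeta_n+kx)}\Psi(x)$, and the phase $\zeta_m-\zeta_n+kx$ has derivative $-g(x)$ with $g(x)=\sqrt{X_n^2-x^2}-\sqrt{X_m^2-x^2}-k$. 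Since $X_n\ge X_m$ and $k<0$, $g(x)\ge|k|>0$ on all of $[0,X_m)$, while $g'(x)=\frac{x}{\sqrt{X_m^2-x^2}}-\frac{x}{\sqrt{X_n^2-x^2}}\ge0$; thus the phase has a monotone derivative of modulus $\ge|k|$ throughout the interval.

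Next I would decompose $h_m=\psi_1^{(m)}+\psi_2^{(m)}$ and $h_n=\psi_1^{(n)}+\psi_2^{(n)}$ as in Remark \ref{hnbiaoda} and treat only the principal term $\int\cal{F}(x)\,dx$; the other three products each carry an extra factor $O(\lambda_m^{-1})$ or $O(\lambda_n^{-1})$ and are estimated without oscillation exactly as in the proof of Lemma \ref{lemma2.10}, contributing $O(X_m^{\mu-\frac23})$ or less. For the principal term, Lemma \ref{oscillatory integral lemma} (with lower bound $|k|$ for the phase derivative) gives
$$\Big|\int_{X_m^{\frac23}}^{X_m-X_m^{\frac13}}\cal{F}(x)\,dx\Big|\le\frac{C}{|k|}\Big(\big|(\la x\ra^\mu\Psi)(X_m-X_m^{\frac13})\big|+\int_{X_m^{\frac23}}^{X_m-X_m^{\frac13}}\big|(\la x\ra^\mu\Psi)'(x)\big|\,dx\Big).$$
On this interval $X_m^2-x^2\ge X_m^2-(X_m-X_m^{\frac13})^2\ge CX_m^{\frac43}$ and $X_n^2-x^2\ge X_m^2-x^2$, so $|\Psi(x)|\le CX_m^{-\frac23}$; by Corollary \ref{coro2.5}, $|\Psi'(x)|\le C(J_1+J_3)$ with $\int J_1\,dx\le CX_m^{-\frac23}$ and $\int J_3\,dx\le CX_m^{-\frac53}$, while the $\mu\la x\ra^{\mu-1}|\Psi|$ contribution integrates to $\le CX_m^{\mu-\frac23}$. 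Hence $\big|\int\cal{F}(x)\,dx\big|\le C|k|^{-1}X_m^{\mu-\frac23}$, and together with the $\psi_2$‑terms $\big|\int_{X_m^{\frac23}}^{X_m-X_m^{\frac13}}\la x\ra^\mu e^{{\rm i}kx}h_m\overline{h_n}\,dx\big|\le C(|k|^{-1}\vee1)\,X_m^{\mu-\frac23}$.

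Finally I would convert $X_m^{\mu-\frac23}$ into the symmetric bound. Because $X_m\le X_n\le2X_m$ and $X_j^2=2j-1$, one has $m\le n\le4m$, hence $m\asymp n$; combined with $X_m\asymp m^{\frac12}$ this yields $X_m^{\mu-\frac23}\asymp m^{\frac\mu2-\frac13}\asymp m^{\frac\mu4-\frac16}n^{\frac\mu4-\frac16}=\frac1{m^{\frac16-\frac\mu4}n^{\frac16-\frac\mu4}}$, which is exactly the asserted estimate. The only computational content is the endpoint and $\Psi'$ bounds, but these are verbatim the ones already carried out in the proofs of Lemmas \ref{lemma2.10}--\ref{lemma2.11}; there is no real obstacle here—this is the easiest of all the cases—because the sign of $k$ produces the uniform lower bound $g\ge|k|$ and thereby removes the case analysis entirely.
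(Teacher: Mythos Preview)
Your proof is correct and follows essentially the same approach as the paper: exploit that for $k<0$ one has $g(x)=\sqrt{X_n^2-x^2}-\sqrt{X_m^2-x^2}-k\ge|k|$ uniformly on $[X_m^{2/3},X_m-X_m^{1/3}]$ with $g'\ge0$, apply Lemma~\ref{oscillatory integral lemma} once with $\lambda=|k|$, and feed in the endpoint bound $|\Psi(X_m-X_m^{1/3})|\le CX_m^{-2/3}$ together with $\int J_1\le CX_m^{-2/3}$, $\int J_3\le CX_m^{-5/3}$ from Corollary~\ref{coro2.5}. The paper's proof is identical in structure; your additional remark that no case splitting in $X_n^2-X_m^2$ is needed here (in contrast to Lemmas~\ref{lemma2.10}--\ref{lemma2.14}) is exactly the point.
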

For the proof see in section \ref{S5}.\\
\indent Combining with Lemma \ref{lemma2.15} and Lemma \ref{lemma2.16}, we have
\begin{lemma}\label{lemma2.17}
For $\forall k \neq  0, X_m\leq X_n\leq 2X_m$, then
$$\displaystyle\left|\int_{X_m^\frac23}^{X_m-X_m^\frac{1}{3}} \la x\ra^\mu e^{{\rm i}kx}h_m(x)\overline{h_n(x)}dx\right|\leq \frac{C(|k|\vee |k|^{-1})}{m^{\frac{1}{12}-\frac\mu4} n^{\frac{1}{12}-\frac\mu4}},$$
where $m_0< m\leq n$.
\end{lemma}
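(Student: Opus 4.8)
The plan is to reduce the statement to the two preceding lemmas according to the sign of $k$. Since $k\ne0$, exactly one of $k>0$, $k<0$ holds, and in each case the bound has already been proved; only an elementary comparison of the right-hand sides remains.

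If $k>0$, I would simply invoke Lemma~\ref{lemma2.15}, which gives
$$\left|\int_{X_m^{2/3}}^{X_m-X_m^{1/3}}\la x\ra^\mu e^{{\rm i}kx}h_m(x)\overline{h_n(x)}dx\right|\le\frac{C(k\vee k^{-1})}{m^{\frac1{12}-\frac\mu4}n^{\frac1{12}-\frac\mu4}}.$$
Since $k>0$ we have $k\vee k^{-1}=|k|\vee|k|^{-1}$, so this is already the asserted estimate. If $k<0$, I would invoke Lemma~\ref{lemma2.16}, which gives
$$\left|\int_{X_m^{2/3}}^{X_m-X_m^{1/3}}\la x\ra^\mu e^{{\rm i}kx}h_m(x)\overline{h_n(x)}dx\right|\le\frac{C(|k|^{-1}\vee1)}{m^{\frac16-\frac\mu4}n^{\frac16-\frac\mu4}},$$
and then check the comparison: because $m_0<m\le n$ and $0\le\mu<\frac13$, the exponents satisfy $\frac16-\frac\mu4\ge\frac1{12}-\frac\mu4>0$, hence $m^{\frac16-\frac\mu4}\ge m^{\frac1{12}-\frac\mu4}$ and likewise for $n$; moreover $1\le|k|\vee|k|^{-1}$ always, so $|k|^{-1}\vee1\le|k|^{-1}\vee|k|=|k|\vee|k|^{-1}$. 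Multiplying these two observations shows the $k<0$ bound is dominated by the claimed one, and combining the two cases finishes the proof.

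There is essentially no obstacle left at this step: the only care needed is the monotonicity of the powers of $m,n$ and the trivial inequality $|k|^{-1}\vee1\le|k|\vee|k|^{-1}$; everything else is a direct citation. The genuinely hard inputs feeding into this statement are Lemma~\ref{lemma2.15} — whose proof rested on splitting the size of $X_n^2-X_m^2$ against $k$ into the regimes handled in Lemmas~\ref{lemma2.10} through~\ref{lemma2.14}, each estimated via the oscillatory-integral bound of Lemma~\ref{oscillatory integral lemma} and the Langer turning-point expansion of Remark~\ref{hnbiaoda} — together with Lemma~\ref{lemma2.16} for the negative-$k$ regime.
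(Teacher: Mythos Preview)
Your proof is correct and matches the paper's own argument: the paper simply states that Lemma~\ref{lemma2.17} follows by combining Lemma~\ref{lemma2.15} and Lemma~\ref{lemma2.16}, and your write-up spells out exactly that combination together with the trivial comparison of the bounds.
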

Now we turn to the last part of integral. In fact for this part we have
\begin{lemma}\label{lemma2.18}
$\forall k \ne 0, X_m\leq X_n\leq 2X_m$, $ \displaystyle\left|\int_{X_m-X_m^\frac{1}{3}}^{X_n}\la x\ra^\mu e^{{\rm i}kx}h_m(x)\overline{h_n(x)}dx\right|\leq \frac{C}{m^{\frac{1}{12}-\frac\mu4} n^{\frac{1}{12}-\frac\mu4}},$
where $m_0< m\leq n$.
\end{lemma}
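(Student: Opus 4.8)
\smallskip
\noindent\emph{Sketch of the argument.}
The plan is to estimate the integral using only the Langer amplitude bounds for $h_m$ and $h_n$ near their turning points, together with one application of the Cauchy--Schwarz inequality; no oscillatory-integral argument is needed here, and that is why the bound carries no factor of $k$. We may assume $X_m$ (hence also $X_n$, since $X_m\le X_n\le 2X_m$) is large, the case of bounded $m$ (and hence of bounded $n$) being immediate from H\"older's inequality because $\|h_m\|_{\infty},\|h_n\|_{\infty}\le C$ and $\la x\ra^{\mu}\le C$ on $[X_m-X_m^{1/3},X_n]$. Since in the present regime $m\asymp n\asymp X_m^{2}$, the asserted estimate is equivalent to a bound of the form $C X_m^{-\frac13+\mu}$.

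First I would record the pointwise bounds. By Remark \ref{hnbiaoda} (valid since $m,n>m_0$), Lemma \ref{Bessel} and Lemma \ref{zetaesti}, for $j\in\{m,n\}$ one has $|h_j(x)|\le C(X_j^{2}-x^{2})^{-\frac14}$ for $0<x<X_j$, and $|h_j(x)|\le C(x^{2}-\lambda_j)^{-\frac14}e^{-|\zeta_j(x)|}$ with $|\zeta_j(x)|\ge \tfrac{2\sqrt2}{3}X_j^{\frac12}(x-X_j)^{\frac32}$ for $x\ge X_j$. From these, an elementary computation on $[X_m-X_m^{1/3},\infty)$ (split at $X_m$; beyond $X_m$ split once more at $X_m+X_m^{-1/3}$ and substitute $u=X_m^{1/3}(x-X_m)$) gives the tail estimates
\[
\int_{X_m-X_m^{1/3}}^{\infty}|h_m(x)|^{2}\,dx\le C X_m^{-\frac13},\qquad \int_{X_m-X_m^{1/3}}^{\infty}|h_m(x)|\,dx\le C,
\]
the point being that $(X_m^{2}-x^{2})^{-\frac12}$ integrates to $O(X_m^{-\frac13})$ over an interval of length $X_m^{1/3}$ ending at $X_m$, while the region $x>X_m$ contributes even less thanks to the exponential factor. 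The same computation at $X_n$ gives $\int_{X_n-X_n^{1/3}}^{X_n}|h_n(x)|^{2}\,dx\le C X_n^{-\frac13}$.

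Then I would split $[X_m-X_m^{1/3},X_n]$ at $X_n-X_n^{1/3}$, which is legitimate because $t\mapsto t-t^{1/3}$ is increasing for $t$ large, so $X_n-X_n^{1/3}\ge X_m-X_m^{1/3}$, with equality only if $X_n=X_m$ (in which case the left piece is empty). On $[X_m-X_m^{1/3},\,X_n-X_n^{1/3}]$ one has $X_n^{2}-x^{2}\ge X_n\cdot X_n^{1/3}=X_n^{4/3}$, hence $|h_n(x)|\le C X_n^{-\frac13}\le C X_m^{-\frac13}$ uniformly; combined with $\int|h_m|\,dx\le C$ and $\la x\ra^{\mu}\le C X_m^{\mu}$ this piece is $\le C X_m^{-\frac13+\mu}$. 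On $[X_n-X_n^{1/3},X_n]$, the Cauchy--Schwarz inequality together with the $L^{2}$ tail estimates (and the inclusion $[X_n-X_n^{1/3},X_n]\subset[X_m-X_m^{1/3},\infty)$) bounds the contribution by $\la X_n\ra^{\mu}\,\|h_m\|_{L^{2}([X_m-X_m^{1/3},\infty))}\,\|h_n\|_{L^{2}([X_n-X_n^{1/3},X_n])}\le C X_m^{\mu}\cdot X_m^{-\frac16}\cdot X_n^{-\frac16}\le C X_m^{-\frac13+\mu}$, using $X_n\asymp X_m$. Adding the two pieces yields the lemma.

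The only step requiring real care is the tail bound $\int_{X_m-X_m^{1/3}}^{\infty}|h_m|^{2}\,dx\le C X_m^{-\frac13}$ (and its $L^{1}$ counterpart): this is where the turning-point expansion of Remark \ref{hnbiaoda}, the uniform Bessel estimate of Lemma \ref{Bessel}, and the lower bound on $|\zeta_m|$ of Lemma \ref{zetaesti} each have to be used in their correct range — the oscillatory region $x<X_m$, the Airy window $|x-X_m|\lesssim X_m^{-1/3}$, and the exponentially small region $x>X_m$. Everything else is routine; in particular, in contrast with Lemmas \ref{lemma2.10}--\ref{lemma2.14}, no integration by parts against the phase $\zeta_m-\zeta_n+kx$ is needed, since the product $h_m\overline{h_n}$ is, up to an exponentially small tail, supported on an interval of length $O(X_m^{1/3})$, over which the amplitude bounds already deliver the required decay.
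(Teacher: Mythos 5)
Your argument is correct, and it is a genuinely different route from the paper's. The paper splits $[X_m-X_m^{1/3},X_n]$ first at $X_m$, bounds the left piece pointwise, and then subdivides $[X_m,X_n]$ again at $X_m+X_m^{-1/3}$ and $X_n-X_n^{-1/3}$ (with a separate Case~2 when these Airy windows overlap), using the exponential decay of $\psi_1^{(m)}$ past $X_m+X_m^{-1/3}$ and treating the $\psi_1$--$\psi_2$ cross terms separately. You instead split only once, at $X_n-X_n^{1/3}$, convert the turning‑point asymptotics into two global integrated bounds $\int_{X_m-X_m^{1/3}}^{\infty}|h_m|\,dx\le C$ and $\int_{X_m-X_m^{1/3}}^{\infty}|h_m|^2\,dx\le CX_m^{-1/3}$ (and similarly for $h_n$), and close via a sup bound on the left piece and Cauchy--Schwarz on the right. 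This removes the paper's Case~1/Case~2 dichotomy and the decomposition into $\psi_1,\psi_2$ contributions, at the cost of needing the (easy but not completely trivial) $L^1$/$L^2$ tail lemmas; the identity $m\asymp n\asymp X_m^2$, which you invoke to translate $CX_m^{-\frac13+\mu}$ into $C\,m^{-\frac1{12}+\frac{\mu}{4}}n^{-\frac1{12}+\frac{\mu}{4}}$, is exactly what the regime $X_m\le X_n\le 2X_m$ provides, so the final exponent matches. Two small points worth stating explicitly in a written‑up version: (i) the pointwise bound $|h_j(x)|\le C(X_j^2-x^2)^{-1/4}$ on $0<x<X_j$ uses that $\bigl|\sqrt{\pi\zeta/2}\,H^{(1)}_{1/3}(\zeta)\bigr|$ is uniformly bounded on the whole negative real $\zeta$‑axis (combining the two ranges in Lemma~\ref{Bessel}); (ii) for $x>X_j$ the crude bound $C(x^2-\lambda_j)^{-1/4}e^{-|\zeta_j|}$ blows up as $x\to X_j^+$, but the blow‑up is integrable and absorbed in your tail estimates, so this does not hurt.
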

\begin{proof}
 Firstly,
\begin{align*}
\left|\int_{X_m-X_m^\frac{1}{3}}^{X_m} \cal{F}(x) dx\right| &\leq C\int_{X_m-X_m^\frac{1}{3}}^{X_m}\la x\ra^\mu(X_m^2-x^2)^{-\frac{1}{4}}(X_n^2-x^2)^{-\frac{1}{4}}dx\\
&\leq CX_m^\mu\int_{X_m-X_m^\frac{1}{3}}^{X_m}(X_m^2-x^2)^{-\frac{1}{2}}dx\\
&\leq CX_m^{-\frac{1}{2}+\mu}\int_{X_m-X_m^\frac{1}{3}}^{X_m}(X_m-x)^{-\frac{1}{2}}dx\\
&\leq CX_m^{-\frac{1}{2}+\mu}X_m^{\frac{1}{6}}\leq \frac{C}{m^{\frac{1}{12}-\frac\mu4}n^{\frac{1}{12}-\frac\mu4}}.
\end{align*}
It follows 
$ \displaystyle\left|\int_{X_m-X_m^\frac{1}{3}}^{X_m} \la x\ra^\mu e^{{\rm i}kx}h_m(x)\overline{h_n(x)}dx\right|\leq \frac{C}{m^{\frac{1}{12}-\frac\mu4}n^{\frac{1}{12}-\frac\mu4}}$. For the remained integral on $[X_m,X_n]$ 
we estimate  the integral in two different cases.\\
Case 1.$X_n - X_n^{-\frac13} \ge X_m + X_m^{-\frac13}$.
We split the integral into three parts. The first part satisfies
\begin{align*}
\left|\int^{X_m+X_m^{-\frac{1}{3}}}_{X_m} \cal{F}(x) dx\right| \leq&\int^{X_m+X_m^{-\frac{1}{3}}}_{X_m} \la x\ra^\mu (x^2-X_m^2)^{-\frac{1}{4}}(X_n^2-x^2)^{-\frac{1}{4}}dx\\
\leq& C X_m^\mu X_m^{-\frac{1}{4}} X_n^{-\frac{1}{4}} (X_n-X_m-X_m^{-\frac{1}{3}})^{-\frac{1}{4}}\int^{X_m+X_m^{-\frac{1}{3}}}_{X_m}(x-X_m)^{-\frac{1}{4}}dx\\
\leq& C X_m^\mu X_m^{-\frac{1}{4}} X_n^{-\frac{1}{4}} X_n^{\frac{1}{12}}X_m^{-\frac{1}{4}} \leq C n^{-\frac{1}{6}+\frac\mu4}m^{-\frac{1}{6}+\frac\mu4}.
\end{align*}
By Lemma \ref{zetaesti}, when $x \geq X_m+X_m^{-\frac{1}{3}}$, ${\rm i}\zeta_m \leq -(x-X_m)$, then the second part satisfies
\begin{align*}
\left|\int_{X_m+X_m^{-\frac{1}{3}}}^{X_n-X_n^{-\frac{1}{3}}} \cal{F}(x) dx\right| \leq& C\int_{X_m+X_m^{-\frac{1}{3}}}^{X_n-X_n^{-\frac{1}{3}}} \la x\ra^\mu (x^2-X_m^2)^{-\frac{1}{4}} (X_n^2-x^2)^{-\frac{1}{4}}e^{{\rm i}\zeta_m}dx\\
\leq&C (2X_m)^\mu X_m^{-\frac{1}{4}}\left(X_n^2-(X_n-X_n^{-\frac{1}{3}})^2\right)^{-\frac{1}{4}} \int_{X_m+X_m^{-\frac{1}{3}}}^{X_n-X_n^{-\frac{1}{3}}} (x-X_m)^{-\frac{1}{4}} e^{{\rm i}\zeta_m}dx\\
\leq&C (2X_m)^\mu X_m^{-\frac{1}{4}}X_n^{-\frac{1}{6}}\int_0^\infty t^{-\frac{1}{4}}e^{-t}dt\leq C n^{-\frac{1}{12}+\frac\mu4}m^{-\frac{1}{12}+\frac\mu4}.
\end{align*}
The last part satisfies
\begin{align*}
\left|\int_{X_n-X_n^{-\frac{1}{3}}}^{X_n} \cal{F}(x) dx\right| \leq&\int_{X_n-X_n^{-\frac{1}{3}}}^{X_n} \la x\ra^\mu (x^2-X_m^2)^{-\frac{1}{4}}(X_n^2-x^2)^{-\frac{1}{4}}dx\\
\leq&C (2X_m)^\mu\left((X_n-X_n^{-\frac{1}{3}})^2-X_m^2\right)^{-\frac{1}{4}}X_n^{-\frac{1}{4}} \int_{X_n-X_n^{-\frac{1}{3}}}^{X_n} (X_n-x)^{-\frac{1}{4}}dx\\
\leq& C n^{-\frac{1}{6}+\frac\mu4}m^{-\frac{1}{6}+\frac\mu4}.
\end{align*}
Thus, in Case 1,
$\Big|\int_{X_m}^{X_n} \la x\ra^\mu e^{{\rm i}kx}h_m(x)\overline{h_n(x)}dx\Big|\leq \frac{C}{m^{\frac{1}{12}-\frac\mu4}
n^{\frac{1}{12}-\frac\mu4}}.$\\
Case 2.$X_n - X_n^{-\frac{1}{3}} < X_m + X_m^{-\frac{1}{3}}$. 
In fact,
\begin{align*}
\left|\int_{X_m}^{X_n} \cal{F}(x) dx\right| \leq&\int_{X_m}^{X_n} \la x\ra^\mu(x^2-X_m^2)^{-\frac{1}{4}}(X_n^2-x^2)^{-\frac{1}{4}}dx\\
\leq&C (2X_m^\mu)X_m^{-\frac{1}{4}}X_n^{-\frac{1}{4}}\int_{X_m}^{X_n} (x-X_m)^{-\frac{1}{4}}(X_n-x)^{-\frac{1}{4}}dx\\
\leq&C (2X_m^\mu)X_m^{-\frac{1}{4}}X_n^{-\frac{1}{4}}\int_{X_m}^{X_n} (x-X_m)^{-\frac{1}{2}}dx\\
\leq&C X_m^\mu X_m^{-\frac{1}{4}}X_n^{-\frac{1}{4}}(X_n-X_m)^\frac{1}{2}\leq C n^{-\frac{1}{6}+\frac\mu4}m^{-\frac{1}{6}+\frac\mu4},
\end{align*}
where  we use the symmetric property of $(x-X_m)^{-\frac{1}{4}}(X_n-x)^{-\frac{1}{4}}$ on the interval $[X_m,X_n]$. Thus,
$\displaystyle\left|\int_{X_m}^{X_n} \la x\ra^\mu e^{{\rm i}kx}h_m(x)\overline{h_n(x)}dx\right|\leq\frac{C}{m^{\frac16-\frac\mu4}n^{\frac16-\frac\mu4}}.$ Combining with all the above estimates we complete the proof.
\end{proof}
From all lemmas in this subsection we have
\begin{lemma}\label{lemma2.20}
For $\forall k \neq  0$,
$\displaystyle\left|\int_{0}^{+\infty} \la x\ra^\mu e^{{\rm i}kx}h_m(x)\overline{h_n(x)}dx\right|\leq \frac{C(|k|^{-1}\vee |k|)}{m^{\frac{1}{12}-\frac\mu4} n^{\frac{1}{12}-\frac\mu4}}. $
\end{lemma}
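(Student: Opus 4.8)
The plan is to assemble Lemma~\ref{lemma2.20} from the estimates already proved in this section; the argument is essentially a case-bookkeeping synthesis, so I describe the organization rather than recompute anything.

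First I would reduce to $1\le m\le n$. Since each $h_j$ is real-valued, taking complex conjugates gives
\[
\overline{\int_0^{+\infty}\la x\ra^\mu e^{{\rm i}kx}h_m(x)\overline{h_n(x)}\,dx}=\int_0^{+\infty}\la x\ra^\mu e^{-{\rm i}kx}h_n(x)\overline{h_m(x)}\,dx,
\]
so the modulus of the left-hand integral is invariant under $(m,n,k)\mapsto(n,m,-k)$; as the asserted bound is symmetric in $m,n$ and depends on $k$ only through $|k|$, it suffices to treat $m\le n$. I would then split into the three size regimes used throughout this section: (i) $m,n<C_*$; (ii) $m\le m_0$ and $n\ge C_*$; (iii) $m_0<m\le n$. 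Regimes (i) and (ii) are precisely the two cases $m,n<C_*$ and $m\le m_0,\ n\ge C_*$ settled above, each yielding a bound $\le C\,(m^{\frac1{12}-\frac\mu4}n^{\frac1{12}-\frac\mu4})^{-1}$, which is of the required shape because $|k|\vee|k|^{-1}\ge1$ for every $k\ne0$.

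For regime (iii) I would write $[0,+\infty)=[0,X_n]\cup[X_n,+\infty)$ with $X_n=\sqrt{\lambda_n}$. The tail over $[X_n,+\infty)$ is Lemma~\ref{Xntowuqiong}. Since $m\le n$ forces $X_m\le X_n$, the part over $[0,X_n]$ falls into the sub-case $X_n\ge 2X_m$, which is Lemma~\ref{lemma2.4}, or the sub-case $X_m\le X_n\le 2X_m$, for which I would use the decomposition $[0,X_n]=[0,X_m^{2/3}]\cup[X_m^{2/3},X_m-X_m^{1/3}]\cup[X_m-X_m^{1/3},X_n]$ and apply Lemma~\ref{lemma2.9}, Lemma~\ref{lemma2.17} and Lemma~\ref{lemma2.18} on the three pieces. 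Summing these bounds and using, for $m,n\ge1$ and $k\ne0$, the elementary inequalities $m^{-\frac18+\frac\mu4}\le m^{-\frac1{12}+\frac\mu4}$, $m^{-\frac16+\frac\mu4}\le m^{-\frac1{12}+\frac\mu4}$ (and likewise in $n$) and $(|k|\vee1)^{\frac12}\le|k|\vee|k|^{-1}$, every term is $\le C\,(|k|\vee|k|^{-1})(m^{\frac1{12}-\frac\mu4}n^{\frac1{12}-\frac\mu4})^{-1}$, which is the claim.

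The main obstacle is not in Lemma~\ref{lemma2.20} itself, which is just summation, but in the lemmas it rests on; the most delicate of these is the oscillatory-integral analysis in the window $X_m\le X_n\le 2X_m$ with $0<k\le X_m^{1/3}$ (Lemmas~\ref{lemma2.10}--\ref{lemma2.14}), where the phase derivative $g(x)=\sqrt{X_n^2-x^2}-\sqrt{X_m^2-x^2}-k$ may vanish inside $[X_m^{2/3},X_m-X_m^{1/3}]$, so one must subdivide this interval into pieces on which either $|g|$ or $|g'|$ is bounded below by an explicit power of $X_m$ and then invoke Lemma~\ref{oscillatory integral lemma} together with the turning-point representation of Remark~\ref{hnbiaoda} and the derivative bound $|\Psi'|\le C(J_1+J_3)$ of Corollary~\ref{coro2.5}. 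The thin turning-point layers $|x-X_m|\lesssim X_m^{-1/3}$ and $|x-X_n|\lesssim X_n^{-1/3}$, where the Langer/Bessel approximation degenerates, are instead handled by crude $L^1$ bounds on $(X_m^2-x^2)^{-1/4}(X_n^2-x^2)^{-1/4}$. Once those inputs are granted, Lemma~\ref{lemma2.20} follows at once by adding the pieces, and Lemma~\ref{Indecaysection1} then follows from it by the parity of the Hermite functions and the invariance of the bound under $k\mapsto-k$.
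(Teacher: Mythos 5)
Your proposal is correct and matches the paper's own synthesis exactly: the paper proves Lemma~\ref{lemma2.20} by combining the same case split ($m,n<C_*$; $m\le m_0$, $n\ge C_*$; $m_0<m\le n$), the same decomposition of $[0,\infty)$ at $X_n$, and the same dichotomy $X_n\ge 2X_m$ vs.\ $X_m\le X_n\le 2X_m$ over $[0,X_n]$, summing the bounds from the preceding lemmas with the observation that $(|k|\vee 1)^{1/2}\le|k|\vee|k|^{-1}$ and that lowering the exponents $\tfrac18,\tfrac16,\tfrac14$ to $\tfrac1{12}$ (minus $\tfrac\mu4$) only weakens the estimates. The initial reduction to $m\le n$ via the $(m,n,k)\mapsto(n,m,-k)$ symmetry is exactly the paper's first step in the proof of Lemma~\ref{Indecaysection1}.
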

From Lemma \ref{lemma2.20} and the symmetry of $h_m(x)$, we complete the proof of Lemma \ref{Indecaysection1}.

\section{Appendix}\label{S5}
\subsection{a new reducibility theorem in $L^2(\R)$}
\noindent If $\mu=0$ and the perturbation terms $\varepsilon W(\nu x, \theta)$ are analytic on $\theta$ in the equation (\ref{maineq}), we can prove the reducibility in $L^2(\R)$ instead of  $\mathcal{H}^1(\R)$. More clearly, consider 1-d quantum harmonic oscillator equation
\begin{eqnarray}\label{maineq1intro}
{\rm i}\partial_t{ \psi} &=& H_{\varepsilon}(\omega t)\psi, \ x\in\R,\\
H_{\varepsilon}(\omega t) : &= & -\partial_{xx}+x^2+\varepsilon W(\nu x, \omega t), \nonumber
 \end{eqnarray}
where  $W(\varphi, \theta)$ is defined on $\T^d\times \T^n$ and satisfies (\ref{symmetry})
and for any $\varphi\in \T^d$ and all $\alpha= (\alpha_1, \cdots, \alpha_d)$,  $\partial_{\varphi}^{\alpha}W(\varphi, \theta )$  is analytic on $\T^n_{\rho}$ and continuous on $\T^d\times \overline{\T^n_{\rho}}$, where
$0\leq |\alpha|= \alpha_1+\cdots+\alpha_d \leq d([1\vee \tau]+d+2)$ and $\tau>d-1$.
\begin{Theorem}\label{maintheorem2}
Assume that $W(\varphi, \theta)$ satisfies all the above assumptions.  There exists $\varepsilon_*>0$, such that for all $0\leq \varepsilon<\varepsilon_{*}$ there exists a closed set
$\Omega_\gamma\times \Omega_1(\varepsilon)\subset [A,B]^d\times [1,2]^n$ and for any $(\nu,\omega)\in \Omega_\gamma\times \Omega_1(\varepsilon)$ 
the linear Schr\"odinger equation  (\ref{maineq1intro})
reduces to a linear autonomous equation in $L^2(\R)$.
\end{Theorem}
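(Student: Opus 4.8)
The plan is to rerun the argument used for Theorem \ref{quantumth} in the analytic-in-$\theta$ category; the analyticity in $\theta$ is precisely what compensates for the very weak finite regularity imposed on $W$ and, at the same time, what allows one to conclude reducibility in $L^2(\R)$ rather than only in $\mathcal{H}^1(\R)$. First I would expand $\psi=\sum_{j\ge1}\xi_jh_j$ and $\overline\psi=\sum_{j\ge1}\eta_j\overline{h_j}$ on the Hermite basis $\{h_j\}$. Since $\mu=0$, equation (\ref{maineq1intro}) turns into the Hamiltonian lattice system (\ref{hs01})--(\ref{hameqbeginning4}) with $N=\mathrm{diag}\{2a-1\}$ and
\[
P_i^j(\theta)=\int_\R W(\nu x,\theta)h_i(x)\overline{h_j(x)}\,dx,\qquad i,j\ge1,
\]
now \emph{without} the weight $\langle x\rangle^\mu$. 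Hypotheses A1 and A2 hold with the explicit constants recorded in Lemmas \ref{aspt} and \ref{aspt02}.

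The key step is to prove that, for $\nu\in\Omega_\gamma$, the map $\theta\mapsto P(\theta)$ belongs to $\mathcal{M}_\alpha(D_0,\sigma)$ with $\alpha=\tfrac{1}{12}$ and some $0<\sigma<\rho$, that is, it is analytic on $\T^n_\sigma$ with values in $\mathcal{M}_{1/12}$ (and $\partial_\omega P\equiv0$, since $\omega$ enters only through $\theta=\omega t$). Expanding $W(\nu x,\theta)=\sum_{k\in\Z^d,\,l\in\Z^n}\widehat{W}(k,l)e^{\mathrm{i}k\cdot\nu x}e^{\mathrm{i}l\cdot\theta}$, the $k=0$ modes drop out by the oddness (\ref{symmetry}), so
\[
P_i^j(\theta)=\sum_{l\in\Z^n}e^{\mathrm{i}l\cdot\theta}\sum_{k\ne0}\widehat{W}(k,l)\int_\R e^{\mathrm{i}(k\cdot\nu)x}h_i(x)\overline{h_j(x)}\,dx.
\]
Lemma \ref{Indecaysection1} with $\mu=0$ applied to the scalar frequency $k\cdot\nu\ne0$ gives $\big|\int_\R e^{\mathrm{i}(k\cdot\nu)x}h_i\overline{h_j}\,dx\big|\le C(|k\cdot\nu|\vee|k\cdot\nu|^{-1})\,i^{-1/12}j^{-1/12}$; the Diophantine bound $|k\cdot\nu|\ge\gamma|k|^{-\tau}$ valid on $\Omega_\gamma$ (Remark \ref{parameternv}) turns the prefactor into $C(\gamma)|k|^{1\vee\tau}$, and the assumed finite differentiability of $W$ in $\varphi$ together with the analyticity in $\theta$ of those $\varphi$-derivatives yields, for every $\sigma<\rho$,
\[
\sum_{l\in\Z^n}\sup_{|\Im\theta|<\sigma}\big|\widehat{W}(k,l)e^{\mathrm{i}l\cdot\theta}\big|\le C(\sigma)\,|k|^{-([1\vee\tau]+d+1)}.
\]
Because $\sum_{k\ne0}|k|^{1\vee\tau}\,|k|^{-([1\vee\tau]+d+1)}<\infty$, one concludes $[P]^{D_0,\sigma}_{1/12}<\infty$, i.e. $P\in\mathcal{M}_{1/12}(D_0,\sigma)$.

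Equipped with A1, A2 and $P\in\mathcal{M}_\alpha(D_0,\sigma)$, I would invoke the analytic version of the KAM reducibility theorem (the analytic counterpart of Theorem \ref{KAM}, obtained by running the scheme of \cite{LiangW19} in the analytic class): for $\varepsilon$ small there are a Cantor set $\Omega_1(\varepsilon)\subset[1,2]^n$ of asymptotically full measure and a $\mathcal{C}^1$ family (in $\omega$, Whitney) of analytic-in-$\theta$, linear, unitary and symplectic maps $\Phi^\infty_\omega(\theta)(\xi_+,\eta_+)=(\overline{M}_\omega(\theta)\xi_+,M_\omega(\theta)\eta_+)$ on $Y_0$, with $\Phi^\infty_\omega-id$ small in $\mathfrak{L}(Y_0)$, conjugating $H$ to a diagonal autonomous normal form $H_\infty=\langle\xi_+,N_\infty(\omega)\eta_+\rangle$; since $\mu=0$ the perturbation is bounded — in fact regularizing, $\alpha=\tfrac{1}{12}>0$ — on $Y_0=\ell^2\times\ell^2$, so the iteration closes already there. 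Translating back exactly as in the proof of Theorem \ref{quantumth}, namely putting $\Psi_\omega(\theta)(\sum_a\xi_ah_a):=\sum_a(M_\omega^T(\theta)\xi)_ah_a$, which is unitary on $L^2(\R)$ since $\overline{M}_\omega^{-1}=M_\omega^T$, one finds that $\psi$ solves (\ref{maineq1intro}) if and only if $v(t,\cdot)=\Psi_\omega(\omega t)\psi(t,\cdot)$ solves the autonomous equation $\mathrm{i}\partial_tv=-v_{xx}+x^2v+\varepsilon Qv$ with $Q=\mathrm{diag}\{\lambda_a^\infty-\lambda_a\}$ bounded and Hermitian on $L^2(\R)$; taking $\Omega_\gamma$ as in Remark \ref{parameternv} (so $\mathrm{Meas}([A,B]^d\setminus\Omega_\gamma)=\mathcal{O}(\gamma)$) completes the argument.

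The main obstacle is the second paragraph above: securing the \emph{analytic} estimate $[P]^{D_0,\sigma}_\alpha<\infty$ with a strictly positive off-diagonal exponent $\alpha$. Without Lemma \ref{Indecaysection1} the multiplication operator $e^{\mathrm{i}(k\cdot\nu)x}$ is merely bounded on $L^2(\R)$, with no decay of its Hermite matrix elements, and the KAM iteration cannot even be started; the whole argument rests on combining the Hermite smoothing $(ij)^{-1/12}$ of that lemma with the Diophantine condition on $\nu$ (to absorb the singular factor $|k\cdot\nu|^{-1}$) and the finite $\varphi$-regularity of $W$ (to kill the resulting polynomial growth in $|k|$), while the analyticity of $W$ in $\theta$ furnishes the exponential decay in $l$ demanded by the $\mathcal{M}_\alpha(D_0,\sigma)$-norm.
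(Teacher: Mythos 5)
Your proposal takes exactly the route the paper indicates: Hermite expansion of (\ref{maineq1intro}), Lemma \ref{Indecaysection1} at $\mu=0$ combined with the Diophantine condition on $\nu$ and Lemma \ref{xishuguji1} to place $P(\theta)$ in an analytic $\mathcal{M}_{1/12}$-class, and then an analytic KAM reducibility theorem to diagonalize, which is precisely what the paper sketches when it says the proof rests on the KAM theorem of \cite{GT11} and Lemma \ref{Indecaysection1}. The only cosmetic difference is that you invoke an unstated analytic counterpart of Theorem \ref{KAM} from \cite{LiangW19} where the paper points to the ready-made analytic theorem in \cite{GT11}; these play the same role and the argument is the same in substance.
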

\begin{remark}
The set  $\Omega_\gamma$ is defined as Theorem \ref{quantumth}, while $\Omega_1(\varepsilon)$ satisfies
$Meas\big(\Omega_1(\varepsilon)\big)\rightarrow 1$ when $ \varepsilon\rightarrow 0$.
\end{remark}

   \par Similarly, we consider 1-d quantum harmonic oscillator equation
\begin{eqnarray}\label{maineq3}
{\rm i}\partial_t{ \psi} &=& \mathcal H_{\varepsilon}(\omega t)\psi, \ x\in\R,\\
\mathcal H_{\varepsilon}(\omega t) : &= & -\partial_{xx}+x^2+\varepsilon X(x, \omega t), \nonumber
 \end{eqnarray}
where
\begin{eqnarray}\label{realform2}
X(x,\theta)= \sum\limits_{k\in \Lambda} (a_k(\theta)  \sin kx+b_k(\theta) \cos kx)
\end{eqnarray}
 with $k\in \Lambda\subset \R \setminus \{0\}$ with $|\Lambda|<\infty$, $a_k(\theta)$ and $b_k(\theta)$ are  real analytic on $\T^n_{\rho}$ and continuous on $\overline{\T^n_{\rho}}$.
\begin{Corollary}\label{coro1.50}
Assume that $X(x, \theta)$ satisfies all the above assumptions.  
There exists $\varepsilon_*>0$, such that for all $0\leq \varepsilon<\varepsilon_{*}$ there exists a closed set
$\Omega_2(\varepsilon)\subset [1,2]^n$ and for any $\omega\in \Omega_2(\varepsilon)$ the linear Schr\"odinger equation  (\ref{maineq3})
reduces to a linear autonomous equation in $L^2(\R)$.
\end{Corollary}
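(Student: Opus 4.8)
The plan is to deduce Corollary \ref{coro1.50} from Theorem \ref{maintheorem2} by exhibiting $X(x,\theta)$ as a perturbation of the form $W(\nu x,\theta)$ treated there. Enumerate $\Lambda=\{k_1,\dots,k_r\}\subset\R\setminus\{0\}$, set $d=r$, $\nu:=(k_1,\dots,k_r)\in\R^d$, and put
\begin{equation*}
W(\varphi,\theta):=\sum_{j=1}^{r}\bigl(a_{k_j}(\theta)\sin\varphi_j+b_{k_j}(\theta)\cos\varphi_j\bigr),\qquad (\varphi,\theta)\in\T^d\times\T^n,
\end{equation*}
so that $W(\nu x,\theta)=X(x,\theta)$. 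Because each $a_{k_j},b_{k_j}$ is real analytic on $\T^n_\rho$ and continuous on $\overline{\T^n_\rho}$, every $\partial_\varphi^\alpha W$ is analytic on $\T^n_\rho$ and continuous on $\T^d\times\overline{\T^n_\rho}$, and the bound on $|\alpha|$ required in Theorem \ref{maintheorem2} holds trivially since $W$ is a trigonometric polynomial in $\varphi$. The oddness hypothesis \eqref{symmetry} is not literally available here (the $\cos\varphi_j$ terms are even), but the sole feature of it used in the construction of the matrix $P$ in Lemma \ref{L3.3} is that the zeroth spatial Fourier mode of $W$ be absent, and for the present $W$ one has $\int_{\T^d}W(\varphi,\theta)\,d\varphi=0$ directly; with this substitution the reasoning of Theorem \ref{maintheorem2} applies unchanged. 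Note also that, in contrast with the main theorem, $X$ is bounded in $x$, so the reduction takes place in $L^2(\R)=\mathcal H^0$ itself, $\mu=0$ and $\alpha=\tfrac1{12}$.

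Concretely I would proceed as in Section \ref{s4} with $\mu=0$: expand $\psi=\sum_{a\ge1}\xi_ah_a$, $\overline\psi=\sum_{a\ge1}\eta_ah_a$, rewrite \eqref{maineq3} as a non-autonomous Hamiltonian system of the type \eqref{hs01} with $\lambda_a=2a-1$ and $P_i^j(\theta)=\int_\R X(x,\theta)h_i(x)\overline{h_j(x)}\,dx$, and, writing $a_k\sin kx+b_k\cos kx=c_k(\theta)e^{{\rm i}kx}+\overline{c_k(\theta)}e^{-{\rm i}kx}$ with $c_k$ analytic on $\T^n_\rho$, invoke Lemma \ref{Indecaysection1} to get, uniformly for $\theta\in\overline{\T^n_\rho}$,
\begin{equation*}
\bigl|P_i^j(\theta)\bigr|\le C\Bigl(\max_{k\in\Lambda}|k|\;\vee\;\bigl(\min_{k\in\Lambda}|k|\bigr)^{-1}\Bigr)\Bigl(\max_{k\in\Lambda}\sup_{\theta}|c_{k}(\theta)|\Bigr)\,i^{-1/12}j^{-1/12}.
\end{equation*}
Hence $\theta\mapsto P(\theta)$ is analytic on $\T^n_\rho$ with values in $\mathcal M_{1/12}$, with norm controlled on $\overline{\T^n_\rho}$, and the same holds after differentiating in $\theta$ by the analyticity of the $c_{k}$. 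Together with Hypotheses A1--A2 for $\lambda_a=2a-1$ (Lemmas \ref{aspt} and \ref{aspt02}), this places us in the situation governed by Theorem \ref{maintheorem2}, which then furnishes a closed set $\Omega_2(\varepsilon)\subset[1,2]^n$ with ${\rm Meas}(\Omega_2(\varepsilon))\to1$ as $\varepsilon\to0$ and, for $\omega\in\Omega_2(\varepsilon)$, a $\theta$-analytic family of transformations $\Psi_\omega(\theta)$, unitary on $L^2(\R)$, conjugating \eqref{maineq3} to an autonomous equation ${\rm i}\partial_tv=-v_{xx}+x^2v+\varepsilon Qv$.

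The only point at which Corollary \ref{coro1.50} is not a verbatim specialization of Theorem \ref{maintheorem2} — and, I expect, the only real content of the proof — is the role of the Diophantine set $\Omega_\gamma$: a fixed trigonometric polynomial $X$ leaves no freedom to impose $|\la k,\nu\ra|\ge\gamma|k|^{-\tau}$. This is harmless. Since $W$ is a trigonometric polynomial in $\varphi$, its spatial Fourier support is the finite set $\{\pm e_1,\dots,\pm e_r\}$, and on it $|\la k,\nu\ra|=|k_j|\ge\min_{k\in\Lambda}|k|>0$ automatically; the sum over spatial modes in the estimate above is therefore finite, the non-resonance condition on $\nu$ is vacuous, and all measure loss is confined to $\omega$ — which is precisely why the excised set in Corollary \ref{coro1.50} is $\Omega_2(\varepsilon)\subset[1,2]^n$ with no factor $\Omega_\gamma$. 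The main (and mild) obstacle is then the bookkeeping that all constants entering the analytic KAM scheme behind Theorem \ref{maintheorem2} depend only on $\min_{k\in\Lambda}|k|$, $\max_{k\in\Lambda}|k|$, $r$, $n$, $\rho$ and $\sup_{k\in\Lambda}(\sup_\theta|a_k|+\sup_\theta|b_k|)$; the substantive analytic ingredient, the $m^{-1/12}n^{-1/12}$ off-diagonal decay of the Hermite matrix elements, is already provided by Lemma \ref{Indecaysection1}.
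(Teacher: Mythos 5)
Your proposal is correct and follows the route the paper itself indicates for this result — namely, apply the Grébert–Thomann KAM theorem in conjunction with the Hermite-coefficient decay of Lemma \ref{Indecaysection1}; the paper states this basis and omits the details, which you have filled in. Your two key observations — that the oddness hypothesis (\ref{symmetry}) is used only to remove the zeroth spatial Fourier mode (which is absent for the present $W$ since $\int_{\T^d}W\,d\varphi=0$), and that the Diophantine restriction on $\nu$ becomes vacuous when $W$ is a trigonometric polynomial in $\varphi$ because its spatial Fourier support is the finite set $\{\pm e_1,\dots,\pm e_r\}$ on which $|\la k,\nu\ra|=|k_j|\geq\min_{k\in\Lambda}|k|>0$ — are precisely what justify discarding the factor $\Omega_\gamma$ and confining the excision to the frequency set $\Omega_2(\varepsilon)\subset[1,2]^n$.
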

The above proofs are based on the KAM theorem in \cite{GT11} and Lemma \ref{Indecaysection1}. We omit the details. 
\subsection{some lemmas}  Proof of Lemma \ref{psismooth}: Recall that
$\Psi^{\infty,1}_\omega(\theta)(\sum_{a\geq 1}\xi_ah_a(x))=\sum_{a\geq 1}(M^T_\omega(\theta)\xi)_ah_a(x).$\\
(a) By  the estimate in Theorem \ref{KAM},  for $\theta\in\R^n$,
\begin{align*}
&\|\Psi^{\infty,1}_\omega(\theta)-id\|_{\mathfrak{L}(\mathcal{H}^{s'},\mathcal{H}^{s'})} = \sup_{\|u\|_{\mathcal{H}^{s'}}=1}\|(\Psi^{\infty,1}_\omega(\theta)-id)u\|_ {\mathcal{H}^{s'}}\\
=&\sup_{\|\xi\|_{l_{s'}^2}=1}\|(M^T_\omega-Id)\xi\|_ {{l_{s'}^2}} \le C(n,\beta,\iota)\varepsilon^{\frac{3}{2\beta}(\frac29 \beta-\iota)}.
\end{align*}
(b) For $b=\iota-[\iota]\in(0,1)$, $\theta_1,\theta_2\in\R^n$,
$$
\frac{\|\Psi^{\infty,1}_\omega(\theta_1)-\Psi^{\infty,1}_\omega(\theta_2)\| _{\mathfrak{L}(\mathcal{H}^{s'},\mathcal{H}^{s'})}}{\|\theta_1-\theta_2\|^b}
=\frac{1}{\|\theta_1-\theta_2\|^b}\|M^T_\omega(\theta_1)-M^T_\omega(\theta_2)\|_{\mathfrak{L}(l_{s'}^2,l_{s'}^2)}.
$$
Thus,
$$\|\Psi^{\infty,1}_\omega-id\|_{C^b(\T^n,\mathfrak{L}(\mathcal{H}^{s'},\mathcal{H}^{s'}))} =\|M^T_\omega-Id\|_{C^b(\T^n,\mathfrak{L}(l_{s'}^2,l_{s'}^2))}\le C(n,\beta,\iota)\varepsilon^{\frac{3}{2\beta}(\frac29 \beta-\iota)}.$$
(c) Denote $\la\mathcal{A}(\theta),\eta\ra u:=\sum_{a\ge1}(\la A(\theta),\eta\ra\xi)_ah_a(x)$ for $\eta\in\R^n$ where $A:=(M^T_\omega-Id)_\theta^\prime$. Note that $M^T_\omega-Id\in C^{\iota}(\T^n,\mathfrak{L}(l_{s'}^2,l_{s'}^2))$, then for $\theta\in\R^n$,
\begin{align*}
\|\mathcal{A}(\theta)\|_{\mathfrak{L}(\R^n,\mathfrak{L}(\mathcal{H}^{s'},\mathcal{H}^{s'}))} =&\sup_{\|u\|_{\mathcal{H}^{s'}}=1, \|\eta\|=1} \|\la\mathcal{A}(\theta),\eta\ra u\|_{\mathcal{H}^{s'}}\\
=&\sup_{\|\xi\|_{l_{s'}^2}=1,\|\eta\|=1} \|\la A(\theta),\eta\ra \xi\|_{l_{s'}^2}\\
=&\|A(\theta)\|_{\mathfrak{L}(\R^n,\mathfrak{L}(l_{s'}^2,l_{s'}^2))}\le \|M^T_\omega-Id\|_{C^\iota(\T^n,\mathfrak{L}(l_{s'}^2,l_{s'}^2))}.
\end{align*}
Thus $\mathcal{A}(\theta)\in\mathfrak{L}\left(\R^n,\mathfrak{L}(\mathcal{H}^{s'},\mathcal{H}^{s'})\right)$ and for given $\theta_0\in\R^n$,
\begin{align*}
&\|\Psi^{\infty,1}_\omega(\theta)-\Psi^{\infty,1}_\omega(\theta_0)-\la\mathcal{A}(\theta_0),\theta-\theta_0\ra\| _{\mathfrak{L}(\mathcal{H}^{s'},\mathcal{H}^{s'}))}\\
=&\sup_{\|u\|_{\mathcal{H}^{s'}}=1}\|(\Psi^{\infty,1}_\omega(\theta)-\Psi^{\infty,1}_\omega(\theta_0)-\la\mathcal{A}(\theta_0),\theta-\theta_0\ra)u\| _{\mathcal{H}^{s'}}\\
=&\|M^T_\omega(\theta)-M^T_\omega(\theta_0)-\la A(\theta_0),\theta-\theta_0\ra\| _{\mathfrak{L}(l_{s'}^2,l_{s'}^2)}.
\end{align*}
Since $M^T_\omega-Id\in C^1(\T^n,\mathfrak{L}(l_{s'}^2,l_{s'}^2))$, it follows 
$\Psi^{\infty,1}_\omega-id\in C^1(\T^n,\mathfrak{L}(\mathcal{H}^{s'},\mathcal{H}^{s'}))$ and $\mathcal{A}=(\Psi^{\infty,1}_\omega-id)_{\theta}^\prime$. Inductively, we can show that
$$\|\Psi^{\infty,1}_\omega-id\|_{C^{\iota}(\T^n,\mathfrak{L}(\mathcal{H}^{s'},\mathcal{H}^{s'}))} =\|M^T_\omega-Id\|_{C^{\iota}(\T^n,\mathfrak{L}(l_{s'}^2,l_{s'}^2))}\le C(n,\beta,\iota)\varepsilon^{\frac{3}{2\beta}(\frac29 \beta-\iota)}.$$
The rest is similar. \qed\\
Proof of Lemma \ref{lemma2.12}.  Since $m>m_0$ large enough, we have 
\begin{align*}
g(X_m-\frac{1}{32}X_m^\frac23)\geq\frac{kX_m^\frac56}{\sqrt{kX_m+\frac{X_m^\frac53}{16}}+\sqrt{\frac{X_m^\frac53}{16}}}-k\geq\frac{kX_m^\frac56}{\frac{X_m^\frac56}{2}+\frac{X_m^\frac56}{4}}-k=\frac k3,
\end{align*}
and
$g(\frac{X_m}{2})\leq\frac{kX_m}{2\sqrt{\frac34 X_m^2}}-k=\frac{kX_m}{\sqrt3 X_m}-k<-\frac k3.
$
Thus, if we denote $g(a)=-\frac k3$, then $\frac{1}{2}X_m<a<X_m-\frac{1}{32}X_m^\frac23$. We estimate the integral on $[a,X_m-\frac{1}{32}X_m^\frac23]$ firstly.\\
\indent Since
$
g^\prime(a)=\frac{a(g(a)+k)}{\sqrt{X_n^2-a^2}\sqrt{X_m^2-a^2}}\geq CkX_m^{-1},
$
and by Lemma \ref{oscillatory integral lemma}, we have
\begin{align*}
&\Big|\int_a^{X_m-\frac{1}{32}X_m^{\frac23}}\la x\ra^\mu e^{{\rm i}\frac{\zeta_m-\zeta_n+kx}{kX_m^{-1}} kX_m^{-1}}(X_n^2-x^2)^{-\frac{1}{4}}(X_m^2-x^2)^{-\frac{1}{4}}f_m(x)\overline{f_n(x)}dx\Big| \\
\leq&Ck^{-\frac{1}{2}}X_m^{\frac12}\bigg[\left|(\la x\ra^\mu \Psi)(X_m-\frac{1}{32}X_m^\frac23)\right| + \int_a^{X_m-\frac{1}{32}X_m^{\frac23}} \left|(\la x\ra^\mu \Psi)^\prime(x)\right|dx\bigg]\\
\leq&Ck^{-\frac{1}{2}}X_m^{\frac12}\bigg[ X_m^\mu\left|\Psi(X_m-\frac{1}{32}X_m^\frac23)\right| + X_m^\mu\int_a^{X_m-\frac{1}{32}X_m^{\frac23}}\left|\Psi^\prime(x)\right|dx\\
&+ \mu\int_a^{X_m-\frac{1}{32}X_m^{\frac23}} \la x\ra^{\mu-1}\left|\Psi(x)\right|dx\bigg].
\end{align*}
Clearly,
\begin{align*}
\left|\Psi(X_m-\frac{1}{32}X_m^{\frac23})\right|&\leq C\left(X_m^2-(X_m-\frac{1}{32}X_m^{\frac23})^2\right)^{-\frac{1}{4}}\left(X_n^2-(X_m-\frac{1}{32}X_m^{\frac23})^2\right)^{-\frac{1}{4}}\leq C X_m^{-\frac56},
\end{align*}
and
$
\int_a^{X_m-\frac{1}{32}X_m^{\frac23}} \la x\ra^{\mu-1}\left|\Psi(x)\right|dx\le CX_m^{-\frac56+\mu}.
$
From 
$|\Psi^\prime(x)| \leq C(J_1+J_3)$ and 
\begin{align*}
\int_a^{X_m-\frac{1}{32}X_m^{\frac23}}J_1dx &\leq C \int_a^{X_m-\frac{1}{32}X_m^{\frac23}} x(X_m^2-x^2)^{-\frac{5}{4}}(X_m^2-x^2)^{-\frac{1}{4}}dx\\
&\leq C X_m X_m^{-\frac{3}{2}} \int_a^{X_m-\frac{1}{32}X_m^{\frac23}}(X_m-x)^{-\frac{3}{2}}dx \leq CX_m^{-\frac56},
\end{align*}
and
$
\int_a^{X_m-\frac{1}{32}X_m^{\frac23}}J_3dx \leq  C\int_a^{X_m-\frac{1}{32}X_m^{\frac23}}\frac{(X_m^2-x^2)^{\frac{1}{4}} (X_n^2-x^2)^{-\frac{1}{4}}}{X_m(X_m-x)^3} dx
\leq CX_m^{-\frac56},
$
we obtain
$$\Big|\int_a^{X_m-\frac{1}{32}X_m^{\frac23}} \cal{F}(x) dx\Big| \leq Ck^{-\frac{1}{2}}X_m^{-\frac13+\mu}\leq \frac{C k^{-\frac{1}{2}}}{m^{\frac{1}{12}-\frac\mu4}n^{\frac{1}{12}-\frac\mu4}}.$$
Next we estimate the integral on $[X_m-\frac{1}{32}X_m^{\frac23},X_m-X_m^{\frac13}]$.  From $\left|g(x)\right| \geq \frac k3$ and Lemma \ref{oscillatory integral lemma}, one obtains 
\begin{align*}
&\Big|\int_{X_m-\frac{1}{32}X_m^{\frac23}}^{X_m-X_m^{\frac13}} \la x\ra^\mu e^{{\rm i}\frac{\zeta_m-\zeta_n+kx}{k/3} k/3}(X_n^2-x^2)^{-\frac{1}{4}}(X_m^2-x^2)^{-\frac{1}{4}}f_m(x)\overline{f_n(x)}dx\Big| \\
\leq& Ck^{-1}\left[\left|(\la x\ra^\mu \Psi)(X_m-X_m^{\frac13})\right| + \int_{X_m-\frac{1}{32}X_m^{\frac23}}^{X_m-X_m^{\frac13}} \left|(\la x\ra^\mu \Psi)^\prime(x)\right|dx\right]\\
\leq&Ck^{-1}\bigg[X_m^\mu\left|\Psi(X_m-\frac{1}{32}X_m^\frac23)\right| +  X_m^\mu\int_a^{X_m-\frac{1}{32}X_m^{\frac23}}\left|\Psi^\prime(x)\right|dx\\
&+ \mu\int_a^{X_m-\frac{1}{32}X_m^{\frac23}} \la x\ra^{\mu-1}\left|\Psi(x)\right|dx\bigg].
\end{align*}
Clearly,
$
\left|\Psi(X_m-X_m^{\frac13})\right|\leq C\left(X_m^2-(X_m-X_m^{\frac13})^2\right)^{-\frac{1}{4}}\left(X_n^2-(X_m-X_m^{\frac13})^2\right)^{-\frac{1}{4}}\leq C X_m^{-\frac23},
$
and
$\int_a^{X_m-\frac{1}{32}X_m^{\frac23}} \la x\ra^{\mu-1}\left|\Psi(x)\right|dx\le CX_m^{-\frac23+\mu}$. 
Therefore,  
$$\Big|\int_{X_m-\frac{1}{32}X_m^{\frac23}}^{X_m-X_m^{\frac13}} \la x\ra^\mu e^{{\rm i}kx}\psi_1^{(m)}(x)\overline{\psi_1^{(n)}(x)}dx\Big| \leq Ck^{-1}X_m^{-\frac23+\mu}\leq \frac{C k^{-1}}{ m^{\frac16-\frac\mu4}n^{\frac16-\frac\mu4}}.$$
A straightforward  computation shows that the integral on $[X_m^\frac23,a]$ has a better estimate. Combining with all the estimates in this part, one obtains
$\Big|\int_{X_m^\frac23}^{X_m-X_m^{\frac{1}{3}}} \cal{F}(x) dx\Big| \leq \frac{C(k^{-1}\vee 1)}{m^{\frac{1}{12}-\frac\mu4}n^{\frac{1}{12}-\frac\mu4}}$. 
It is easy to check that the estimates for the remained three terms are better. Thus,
$$\Big|\int_{X_m^\frac23}^{X_m-X_m^{\frac{1}{3}}} \la x\ra^\mu e^{{\rm i}kx}h_m(x)\overline{h_n(x)}dx\Big| \leq \frac{C(k^{-1}\vee 1)}{m^{\frac{1}{12}-\frac\mu4}n^{\frac{1}{12}-\frac\mu4}}.$$ \qed\\
Proof of Lemma \ref{lemma2.13}.
Since $m>m_0$ large enough,
\begin{align*}
g(\frac{2\sqrt2}{3}X_m)\geq\frac{kX_m}{\sqrt{4kX_m+\frac{X_m^2}{9}}+\sqrt{\frac{X_m^2}{9}}}-k\geq\frac{kX_m}{\frac{X_m}{2}+\frac{X_m}{3}}-k=\frac k5.
\end{align*}
Thus, by Lemma \ref{oscillatory integral lemma},
\begin{align*}
&\Big|\int_{\frac{2\sqrt2}{3}X_m}^{X_m-X_m^\frac13} \la x\ra^\mu e^{{\rm i}\frac{\zeta_m-\zeta_n+kx}{k/5} k/5}(X_n^2-x^2)^{-\frac{1}{4}}(X_m^2-x^2)^{-\frac{1}{4}}f_m(x)\overline{f_n(x)}dx\Big|  \\
\leq&Ck^{-1}\left[\left|(\la x\ra^\mu\Psi)(X_m-X_m^\frac13)\right| + \int_{\frac{2\sqrt2}{3}X_m}^{X_m-X_m^\frac13} \left|(\la x\ra^\mu\Psi)^\prime(x)\right|dx\right]\\
\leq&Ck^{-1}\bigg[X_m^\mu\left|\Psi(X_m-X_m^\frac13)\right| +  X_m^\mu\int_{\frac{2\sqrt2}{3}X_m}^{X_m-X_m^{\frac13}}\left|\Psi^\prime(x)\right|dx+ \mu\int_{\frac{2\sqrt2}{3}X_m}^{X_m-X_m^{\frac13}} \la x\ra^{\mu-1}\left|\Psi(x)\right|dx\bigg].
\end{align*}
Similarly, 
$\Big|\int_{\frac{2\sqrt2}{3}X_m}^{X_m-X_m^\frac13}\cal{F}(x) dx\Big| \leq \frac{C k^{-1}}{ m^{\frac16-\frac\mu4}n^{\frac16-\frac\mu4}}.$
Next we estimate the integral on $[X_m^\frac23 ,\frac{2\sqrt2}{3}X_m]$. Note that
\begin{align*}
g^\prime(x) =\frac{x(X_n^2-X_m^2)}{\sqrt{X_n^2-x^2}\sqrt{X_m^2-x^2}(\sqrt{X_n^2-x^2}+\sqrt{X_m^2-x^2})}\geq CkX_m^{-\frac43},
\end{align*}
by Lemma \ref{oscillatory integral lemma} we have
\begin{align*}
&\Big|\int_{X_m^\frac23}^{\frac{2\sqrt2}{3}X_m} \la x\ra^\mu e^{{\rm i}\frac{\zeta_m-\zeta_n+kx}{kX_m^{-\frac43}} kX_m^{-\frac43}}(X_n^2-x^2)^{-\frac{1}{4}}(X_m^2-x^2)^{-\frac{1}{4}}f_m(x)\overline{f_n(x)}dx\Big| \\
\leq& Ck^{-\frac12}X_m^{\frac23}\left[|(\la x\ra^\mu\Psi)(\frac{2\sqrt2}{3}X_m)| + \int_{X_m^\frac23}^{\frac{2\sqrt2}{3}X_m} \left|(\la x\ra^\mu\Psi)^\prime(x)\right|dx\right]\\
\leq&Ck^{-\frac12}X_m^\frac23\bigg[X_m^\mu\left|\Psi(\frac{2\sqrt2}{3}X_m)\right| + X_m^\mu\int_{X_m^\frac23}^{\frac{2\sqrt2}{3}X_m}\left|\Psi^\prime(x)\right|dx+ \mu\int_{X_m^\frac23}^{\frac{2\sqrt2}{3}X_m} \la x\ra^{\mu-1}\left|\Psi(x)\right|dx\bigg].
\end{align*}
Clearly,
\begin{align*}
\left|\Psi(\frac{2\sqrt2}{3}X_m)\right|&\leq C\left(X_m^2-(\frac{2\sqrt2}{3}X_m)^2\right)^{-\frac{1}{4}}\left(X_n^2-(\frac{2\sqrt2}{3}X_m)^2\right)^{-\frac{1}{4}}\leq C X_m^{-1},
\end{align*}
and
$
\int_{X_m^\frac23}^{\frac{2\sqrt2}{3}X_m} \la x\ra^{\mu-1}\left|\Psi(x)\right|dx\le CX_m^{-1+\mu}.
$
A straightforward computation shows that the remained term has the same estimate. Thus,
$\Big|\int_{X_m^\frac23}^{\frac{2\sqrt2}{3}X_m} \cal{F}(x) dx\Big| \leq Ck^{-\frac12}X_m^{-\frac13+\mu}\leq \frac{Ck^{-\frac12}}{m^{\frac{1}{12}-\frac\mu4}n^{\frac{1}{12}-\frac\mu4}}.$
Combining with all the estimates in this part, one obtains
$\Big|\int_{X_m^\frac23}^{X_m-X_m^{\frac{1}{3}}}\cal{F}(x) dx\Big| \leq \frac{C(k^{-1}\vee 1)}{m^{\frac{1}{12}-\frac\mu4}n^{\frac{1}{12}-\frac\mu4}}.$
Since the estimates for the remained three terms are much better, it follows 
$$\Big|\int_{X_m^\frac23}^{X_m-X_m^{\frac{1}{3}}} \la x\ra^\mu e^{{\rm i}kx}h_m(x)\overline{h_n(x)}dx\Big| \leq \frac{C(k^{-1}\vee 1)}{m^{\frac{1}{12}-\frac\mu4}n^{\frac{1}{12}-\frac\mu4}}.$$\qed

Proof of Lemma \ref{lemma2.14}.
We first estimate
$$\left|\displaystyle\int_{X_m^\frac23}^{X_m-X_m^\frac{1}{3}} \cal{F}(x) dx\right| = \left|C\displaystyle\int_{X_m^\frac23}^{X_m-X_m^\frac{1}{3}} \la x\ra^\mu e^{{\rm i}(\zeta_m-\zeta_n+kx)}\Psi(x)dx\right|.$$
From 
$
g(x)=\frac{X_n^2-X_m^2}{\sqrt{X_n^2-x^2}+\sqrt{X_m^2-x^2}}-k\geq\frac{4k X_m}{X_n+X_m}-k  \geq \frac{k}{3}
$
and Lemma \ref{oscillatory integral lemma}, we have 
\begin{align*}
&\left| \int_{X_m^\frac23}^{X_m-X_m^\frac{1}{3}} \la x\ra^\mu e^{{\rm i}\frac{(\zeta_m-\zeta_n+kx)}{k/3}k/3} (X_m^2-x^2)^{-\frac{1}{4}} (X_n^2-x^2)^{-\frac{1}{4}}\cdot f_m(x)\overline{f_n(x)}dx\right|\\
\leq& Ck^{-1}\left[\left|(\la x\ra^\mu \Psi)(X_m-X_m^{\frac{1}{3}})\right| + \int_{X_m^\frac23}^{X_m-X_m^\frac{1}{3}} \left|(\la x\ra^\mu \Psi)^\prime(x)\right|dx\right]\\
\leq&Ck^{-1}\bigg[X_m^\mu\left|\Psi(X_m-X_m^{\frac{1}{3}})\right| + X_m^\mu\int_{X_m^\frac23}^{X_m-X_m^{\frac{1}{3}}}\left|\Psi^\prime(x)\right|dx\\
&+ \mu\int_{X_m^\frac23}^{X_m-X_m^{\frac{1}{3}}} \la x\ra^{\mu-1}\left|\Psi(x)\right|dx\bigg].
\end{align*}
The remained proof is similar  as Lemma \ref{lemma2.10}.\qed\\
\noindent Proof of Lemma \ref{lemma2.16}.
We estimate
$$\left|\int_{X_m^\frac{2}{3}}^{X_m-X_m^\frac{1}{3}}
\cal{F}(x) dx\right| =\left|C\int_{X_m^\frac{2}{3}}^{X_m-X_m^\frac{1}{3}}\la x\ra^\mu e^{{\rm i}(\zeta_m-\zeta_n+kx)}\Psi(x)dx\right|$$
firstly. Since
$g(x)=\sqrt{X_n^2-x^2}-\sqrt{X_m^2-x^2}-k \geq -k = |k|$
and $g^\prime(x)>0$, by Lemma \ref{oscillatory integral lemma} we have 
\begin{align*}
&\left|\int_{X_m^\frac{2}{3}}^{X_m-X_m^\frac{1}{3}}\la x\ra^\mu e^{{\rm i}\frac{\zeta_m-\zeta_n+kx}{|k|}|k|} (X_m^2-x^2)^{-\frac{1}{4}}(X_n^2-x^2)^{-\frac{1}{4}}f_m(x)\overline{f_n(x)}dx\right|\\
\leq& \frac{C}{|k|}\left[\left|(\la x\ra^\mu \Psi)(X_m-X_m^\frac{1}{3})\right| +\int_{X_m^\frac{2}{3}}^{X_m-X_m^\frac{1}{3}}|(\la x\ra^\mu \Psi)^\prime(x)|dx\right]\\
\le& C|k|^{-1}\bigg[X_m^\mu \left|\Psi(X_m-X_m^\frac{1}{3})\right| +X_m^\mu\int_{X_m^\frac23}^{X_m-X_m^\frac13}|\Psi^\prime(x)|dx +\mu\int_{X_m^\frac23}^{X_m-X_m^\frac13}\la x\ra^{\mu-1}|\Psi(x)|dx\bigg].
\end{align*}
From 
\begin{align*}
\left|\Psi(X_m-X_m^{\frac{1}{3}})\right| &\leq C\left(X_m^2-(X_m-X_m^{\frac{1}{3}})^2\right)^{-\frac{1}{4}} \left(X_n^2-(X_m-X_m^{\frac{1}{3}})^2\right)^{-\frac{1}{4}}\leq C X_m^{-\frac23}
\end{align*}
and
$\int_{X_m^\frac23}^{X_m-X_m^\frac13}\la x\ra^{\mu-1}|\Psi(x)|dx\le CX_m^{-\frac23+\mu}$,
together with 
\begin{align*}
\int_{X_m^\frac{2}{3}}^{X_m-X_m^{\frac{1}{3}}}J_1dx &\leq  C \int_{X_m^\frac{2}{3}}^{X_m-X_m^{\frac{1}{3}}} x(X_m^2-x^2)^{-\frac{5}{4}}(X_n^2-x^2)^{-\frac{1}{4}}dx \leq C X_m^{-\frac{2}{3}},
\end{align*}
and
$
\int_{X_m^\frac{2}{3}}^{X_m-X_m^{\frac{1}{3}}}J_3dx\leq C X_m^{-\frac53},
$
we have 
$\left|\displaystyle\int_{X_m^\frac{2}{3}}^{X_m-X_m^\frac{1}{3}}\cal{F}(x) dx\right| \leq \frac{C|k|^{-1}}{ m^{\frac16-\frac\mu4}n^{\frac16-\frac\mu4}}.$
Since  the other three terms we have better estimates, it follows 
$\displaystyle\left|\int_{X_m^\frac{2}{3}}^{X_m-X_m^\frac{1}{3}} \la x\ra^\mu e^{{\rm i} kx}h_m(x)\overline{h_n(x)}dx\right|\leq \frac{C(|k|^{-1}\vee 1)}{m^{\frac16-\frac\mu4} n^{\frac16-\frac\mu4}}.$

\indent  For the following lemma we denote $d_1=\min\{\gsf,\gso\}$.
\begin{lemma}\label{Bessel}
Bessel function of third kind $\belh(z)$ satisfies the following:
\begin{alignat}{2}
\Bgs{\sbelhz{z}}&\leq  1,&\quad& z\in(-\infty,-c_0),\label{belhreal}\\
\Bgs{\sbelhz{z}}&\leq  \frac{20}{d_1} |z|^{\frac{1}{6}},&& z\in[-c_1,0),\label{belhball1}\\
\Bgs{\sbelhz{z}}&\leq  \frac{Ce^{c_2}}{d_1} \max\{|z|^{\frac{1}{6}}, |z|^{\frac56}\},&& z\in (0,c_2]{\rm i},\label{belhball2}\\
\Bgs{ \sbelhz{z}}&\leq  e^{-|z|},&& z\in(c_3,\infty){\rm i},\label{belhcomp}
\end{alignat}
\end{lemma}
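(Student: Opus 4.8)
The plan is to derive all four bounds from the integral representation of the Hankel function already exploited in Section~\ref{S2} (where $\psi_1^{(m)}$ is rewritten through it): for $z$ in the slit sector $-\tfrac{\pi}{2}<\arg z<\tfrac{3\pi}{2}$,
\[
\sqrt{\tfrac{\pi z}{2}}\,H_{1/3}^{(1)}(z)=\frac{e^{{\rm i}(z-\frac{\pi}{6}-\frac{\pi}{4})}}{\Gamma(\tfrac56)}\,G(z),\qquad G(z):=\int_{0}^{\infty}e^{-t}\,t^{-1/6}\Big(1+\frac{{\rm i}t}{2z}\Big)^{-1/6}\,dt .
\]
The negative real axis $(-\infty,-c_0)\cup[-c_1,0)$ and the positive imaginary axis $(0,c_2]{\rm i}\cup(c_3,\infty){\rm i}$ all lie in this sector, so it suffices to bound $|e^{{\rm i}z}|$ and $|G(z)|$ in each region; since $d_1=\min\{\Gamma(\tfrac56),\Gamma(\tfrac16)\}=\Gamma(\tfrac56)$, a bound $|e^{{\rm i}z}|\,|G(z)|\le B$ yields $|\sqrt{\pi z/2}\,H_{1/3}^{(1)}(z)|\le B/d_1$. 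I would first note that, the exponent $-\tfrac16$ being real, $\big|(1+\tfrac{{\rm i}t}{2z})^{-1/6}\big|=\big|1+\tfrac{{\rm i}t}{2z}\big|^{-1/6}$ independently of any branch choice, so all the modulus estimates below are insensitive to the branch cut of $H_{1/3}^{(1)}$ on the negative real axis.

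Next I would isolate the one elementary geometric fact that drives everything: $\tfrac{{\rm i}t}{2z}$ lies on the negative imaginary axis when $z$ is real negative, and equals $\tfrac{t}{2|z|}\ge0$ when $z$ is positive imaginary, so in both cases $|1+\tfrac{{\rm i}t}{2z}|\ge\max\{1,\ t/(2|z|)\}$ for $t>0$. The two ``away from the turning point'' bounds (\ref{belhreal}) and (\ref{belhcomp}) are then immediate: using $|1+\tfrac{{\rm i}t}{2z}|\ge1$ gives $|G(z)|\le\int_0^\infty e^{-t}t^{-1/6}\,dt=\Gamma(\tfrac56)$, while $|e^{{\rm i}z}|=1$ for $z$ real negative and $|e^{{\rm i}z}|=e^{-|z|}$ for $z$ positive imaginary; hence $|\sqrt{\pi z/2}\,H_{1/3}^{(1)}(z)|\le1$ on $(-\infty,-c_0)$ and $\le e^{-|z|}$ on $(c_3,\infty){\rm i}$.

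For the two ``near the turning point'' bounds (\ref{belhball1}) and (\ref{belhball2}), the point is to extract the decay of $(1+\tfrac{{\rm i}t}{2z})^{-1/6}$ as $z\to0$. I would split $G(z)=\int_0^{2|z|}+\int_{2|z|}^{\infty}$: on $[0,2|z|]$ use $|1+\tfrac{{\rm i}t}{2z}|^{-1/6}\le1$ to get $\int_0^{2|z|}t^{-1/6}\,dt=\tfrac65(2|z|)^{5/6}$, and on $[2|z|,\infty)$ use $|1+\tfrac{{\rm i}t}{2z}|^{-1/6}\le(2|z|/t)^{1/6}$ to get $(2|z|)^{1/6}\int_0^\infty e^{-t}t^{-1/3}\,dt=(2|z|)^{1/6}\Gamma(\tfrac23)$; hence $|G(z)|\le C(|z|^{1/6}+|z|^{5/6})$ for an absolute $C$. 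On $(0,c_2]{\rm i}$, combining with the crude bound $|e^{{\rm i}z}|=e^{-|z|}\le e^{c_2}$ gives $|\sqrt{\pi z/2}\,H_{1/3}^{(1)}(z)|\le\tfrac{Ce^{c_2}}{d_1}\max\{|z|^{1/6},|z|^{5/6}\}$, i.e.\ (\ref{belhball2}); on $[-c_1,0)$ one has $|e^{{\rm i}z}|=1$ and $|z|^{5/6}\le c_1^{2/3}|z|^{1/6}$, so the $|z|^{1/6}$ term dominates and a short bookkeeping of the numerical constants ($\Gamma(\tfrac23)$, $\Gamma(\tfrac56)$ and the fixed $c_1$) produces the clean bound $\tfrac{20}{d_1}|z|^{1/6}$, i.e.\ (\ref{belhball1}).

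The estimates themselves are thus completely elementary; the only point that will need genuine care is the branch of $H_{1/3}^{(1)}$ on the negative real axis. The integral representation above is valid only on the slit sector, and $H_{1/3}^{(1)}$ is multivalued across the negative reals, so I would have to pin down the branch consistent with the Langer substitution used throughout Section~\ref{S2} (where $\arg\zeta=-\pi$ for $x<X$) and verify that, for that branch, $\tfrac{{\rm i}t}{2z}$ indeed stays in the closed lower half-plane — after which, by the real-exponent remark in the first paragraph, the moduli are branch-free and everything reduces to the $\Gamma$-integral estimates sketched above.
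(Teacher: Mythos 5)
Your proof is correct, and for the two ``away from turning point'' bounds (\ref{belhreal}) and (\ref{belhcomp}) it coincides with the paper's argument: both plug $z=-|z|$ (resp.\ $z={\rm i}|z|$) into the Hankel integral representation (\ref{bslh}), observe $\bgs{1+{\rm i}t/(2z)}\ge 1$ on the relevant ray, and bound the integral by $\Gamma(\frac56)$. For the two ``near the turning point'' bounds (\ref{belhball1}) and (\ref{belhball2}) you take a genuinely different route. You stay with the Hankel integral (\ref{bslh}) and split at $t=2|z|$, using $\bgs{1+{\rm i}t/(2z)}\ge 1$ on $[0,2|z|]$ and $\bgs{1+{\rm i}t/(2z)}\ge t/(2|z|)$ on $[2|z|,\infty)$, yielding $|G(z)|\le \frac65 (2|z|)^{5/6}+\Gamma(\tfrac23)(2|z|)^{1/6}$. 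The paper instead passes to Poisson's representation (\ref{bslj}) for $J_{\pm 1/3}$, where the prefactor $(z/2)^{\pm 1/3}$ gives the small-$z$ powers for free, bounds $\bgs{\int_{-1}^1(1-t^2)^{\nu-1/2}e^{{\rm i}zt}\,dt}$ trivially, and recombines via (\ref{bslhj}) with the $1/\sin(\nu\pi)$ factor. Your approach buys a single unified representation for all four estimates and somewhat tighter numerical constants; the paper's approach avoids splitting the integral because the $z^{\pm\nu}$ factor in Poisson's formula exhibits the $|z|^{1/6}$ and $|z|^{5/6}$ scales directly from the indicial exponents. On the branch caveat you raise: the paper handles it exactly the way you propose, by substituting $z=-|z|$ into the representation valid on $-\tfrac\pi2<\arg z<\tfrac{3\pi}2$ and then taking moduli, so the real-exponent observation in your first paragraph discharges the concern and you need not dwell on the $\arg\zeta=-\pi$ convention from the Langer substitution.
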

where $c_0>0$, $c_1\in (0,\ 1]$, $c_2, c_3$ can be arbitrary positive numbers and $C$ is a positive constant.
\begin{proof}
As in \cite{Wastan}, we have the following equalities
\begin{align}
&\forall~z\in\C\backslash\{0\},\Re\Bgp{\nu+\frac{1}{2}}>0,\notag\\
&\bslj(z)=\frac{1}{\sqrt{\pi}\Gamma\lrp{\nu+\frac{1}{2}}}\lrp{\frac{z}{2}}^{\nu}
\int_{-1}^1(1-t^2)^{\nu-\frac{1}{2}}e^{{\rm i} z t}\,dt.\label{bslj}\\
&\bslh(z)=\frac{\rm i}{\sin(\nu\pi)}\lrp{\bslj(z)e^{-{\rm i}\nu\pi}-\bsljm(z)}.\label{bslhj}\\
&\bslh(z)=\sqrt{\frac{2}{\pi z}}\frac{e^{{\rm i}\lrp{z-\frac{\nu\pi}{2}-\frac{\pi}{4}}}}{\Gamma{\lrp{\nu+\frac{1}{2}}}}
\int_0^{\infty}e^{-t}t^{\nu-\frac{1}{2}}\lrp{1+\frac{{\rm i}t}{2z}}^{\nu-\frac{1}{2}}\,dt.\label{bslh}
\end{align}
(1) When $z\in(-\infty,-c_0),~z=-|z|$. From \eqref{bslh} we have
\begin{align*}
\Bgs{\sbelhz{z}}=\frac{1}{\gsf}\Bgs{\int_0^{\infty}e^{-t}t^{-\frac{1}{6}}\Bgp{1-\frac{{\rm i}t}{2|z|}}^{-\frac{1}{6}}\,dt}
\le\frac{1}{\gsf}\int_0^{\infty}e^{-t}t^{-\frac{1}{6}}\,dt=1.
\end{align*}
Then we obtain \eqref{belhreal}.\\
(2) When $z\in[-c_1,0),~z=-|z|$. From \eqref{bslj} we have
\begin{align*}
\Bgs{\sbelj{z}}=\frac{|z|^{\frac{5}{6}}}{2^{\frac{5}{6}}\gsf}
\Bgs{\int_{-1}^1(1-t^2)^{-\frac{1}{6}}e^{-{\rm i} | z|t}\,dt}.
\end{align*}
Since
$
\Bgs{\int_{-1}^1(1-t^2)^{-\frac{1}{6}}e^{-{\rm i}|\zeta|t}\,dt}
\le 2\int_0^1(1-t)^{-\frac{1}{6}}\,dt=\frac{12}{5},
$
it follows
\begin{equation}\label{belj1}
\Bgs{\sbelj{z}}\le\frac{3|z|^{\frac{5}{6}}}{\gsf},\quad z\in[-c_1,0).
\end{equation}
Similarly,  we have
\begin{equation}\label{belj2}
\Bgs{\sbeljm{z}}\le\frac{12|z|^{\frac{1}{6}}}{\gso},\quad z\in[-c_1,0).
\end{equation}
From  \eqref{bslhj}, \eqref{belj1} and \eqref{belj2} we obtain \eqref{belhball1}.
Similarly, we obtain \eqref{belhball2}.\\
(3)When $z\in(c_3,+\infty){\rm i},\ z={\rm i} |z|$. From  \eqref{bslh} we have
\begin{align*}
\Bgs{\sbelhz{z}}=\frac{e^{-|z|}}{\gsf}\Bgs{\int_0^{\infty}e^{-t}t^{-\frac{1}{6}}\Bgp{1+\frac{t}{2|z|}}^{-\frac{1}{6}}\,dt}
\le\frac{e^{-|z|}}{\gsf}\int_0^{\infty}e^{-t}t^{-\frac{1}{6}}\,dt=e^{-|z|}.
\end{align*}
\end{proof}
 By a straightforward computation we have
\begin{lemma}\label{zetaesti}
For $x \geq X_n$, $|\zeta_n(x)| \geq \frac{2\sqrt2}{3}X_n^\frac{1}{2}(x-X_n)^\frac{3}{2}$.
Furthermore, if we suppose $x \geq X_n+X_n^{-\frac{1}{3}}$ and $X_n > 2$,  
$|\zeta_n(x)| \geq \frac{2\sqrt2}{3}X_n^\frac{1}{3}(x-X_n) > x-X_n$.
If  $0\leq x \leq X_n$,
$|\zeta_n(x)| \geq \frac{2}{3}X_n^\frac{1}{2}(X_n-x)^\frac{3}{2}$, while 
$|\zeta_n(x)| \geq\frac{2}{3} $ if $0\leq x  \leq X_n - X_n^{-\frac{1}{3}}$. 
\end{lemma}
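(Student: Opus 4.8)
The proof of Lemma~\ref{zetaesti} reduces, I claim, to the elementary factorization $t^{2}-X_{n}^{2}=(t-X_{n})(t+X_{n})$ combined with the monotonicity of $x\mapsto|\zeta_{n}(x)|$. Recall from Remark~\ref{hnbiaoda} that $\lambda_{n}=X_{n}^{2}$, and that with the sign/argument convention fixed for $\zeta_{n}$ one has $|\zeta_{n}(x)|=\int_{X_{n}}^{x}\sqrt{t^{2}-X_{n}^{2}}\,dt$ for $x\ge X_{n}$ and $|\zeta_{n}(x)|=\int_{x}^{X_{n}}\sqrt{X_{n}^{2}-t^{2}}\,dt$ for $0\le x\le X_{n}$. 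The plan is to treat the four assertions in the order first, third, fourth, second, the last being the only one that is not a one-line integration.

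For $x\ge X_{n}$ one has $t+X_{n}\ge 2X_{n}$ on $[X_{n},x]$, hence $\sqrt{t^{2}-X_{n}^{2}}\ge\sqrt{2X_{n}}\,\sqrt{t-X_{n}}$; integrating from $X_{n}$ to $x$ gives $|\zeta_{n}(x)|\ge\sqrt{2X_{n}}\cdot\tfrac23(x-X_{n})^{3/2}=\tfrac{2\sqrt{2}}{3}X_{n}^{1/2}(x-X_{n})^{3/2}$, the first inequality. Symmetrically, for $0\le x\le X_{n}$ one has $X_{n}+t\ge X_{n}$, so $\sqrt{X_{n}^{2}-t^{2}}\ge X_{n}^{1/2}\sqrt{X_{n}-t}$, and integrating from $x$ to $X_{n}$ yields $|\zeta_{n}(x)|\ge\tfrac23X_{n}^{1/2}(X_{n}-x)^{3/2}$, the third inequality. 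Since $\tfrac{d}{dx}|\zeta_{n}(x)|=-\sqrt{X_{n}^{2}-x^{2}}\le 0$ on $[0,X_{n}]$, the map $x\mapsto|\zeta_{n}(x)|$ is nonincreasing there, so for $0\le x\le X_{n}-X_{n}^{-1/3}$ the third inequality gives $|\zeta_{n}(x)|\ge|\zeta_{n}(X_{n}-X_{n}^{-1/3})|\ge\tfrac23X_{n}^{1/2}(X_{n}^{-1/3})^{3/2}=\tfrac23$, which is the fourth inequality (using $X_{n}^{1/2}\cdot(X_{n}^{-1/3})^{3/2}=1$).

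For the second inequality I would compare $F(x):=|\zeta_{n}(x)|$ with $G(x):=\tfrac{2\sqrt{2}}{3}X_{n}^{1/3}(x-X_{n})$ on $[X_{n}+X_{n}^{-1/3},\infty)$, assuming $X_{n}>2$. At the left endpoint the first inequality gives $F(X_{n}+X_{n}^{-1/3})\ge\tfrac{2\sqrt{2}}{3}X_{n}^{1/2}(X_{n}^{-1/3})^{3/2}=\tfrac{2\sqrt{2}}{3}=G(X_{n}+X_{n}^{-1/3})$; and for $x\ge X_{n}+X_{n}^{-1/3}$ we have $F'(x)=\sqrt{x^{2}-X_{n}^{2}}\ge\sqrt{2X_{n}(x-X_{n})}\ge\sqrt{2X_{n}\cdot X_{n}^{-1/3}}=\sqrt{2}\,X_{n}^{1/3}\ge\tfrac{2\sqrt{2}}{3}X_{n}^{1/3}=G'(x)$. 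Hence $F-G$ is nondecreasing on $[X_{n}+X_{n}^{-1/3},\infty)$ and nonnegative at its left endpoint, so $|\zeta_{n}(x)|=F(x)\ge G(x)=\tfrac{2\sqrt{2}}{3}X_{n}^{1/3}(x-X_{n})$. Finally $X_{n}>2$ forces $\tfrac{2\sqrt{2}}{3}X_{n}^{1/3}>1$, giving the strict bound $|\zeta_{n}(x)|>x-X_{n}$. I do not expect any genuine obstacle: the only things to watch are the bookkeeping of the powers of $X_{n}$ in the endpoint evaluations and the fact that the constant $\tfrac{2\sqrt{2}}{3}$, although less than $1$, nonetheless passes each comparison because $\tfrac23\le 1$ and $\sqrt2\ge\tfrac{2\sqrt2}{3}$.
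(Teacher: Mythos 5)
Your proof is correct, and since the paper itself provides no proof (it only labels the lemma ``a straightforward computation''), your argument is exactly the elementary calculation being alluded to: factor $t^2-X_n^2=(t-X_n)(t+X_n)$, bound $t+X_n$ below by $2X_n$ on $[X_n,x]$ (resp.\ by $X_n$ on $[x,X_n]$), integrate, and use a derivative comparison for the linear lower bound on $[X_n+X_n^{-1/3},\infty)$. All the endpoint evaluations and exponent bookkeeping check out; the only minor redundancy is invoking monotonicity of $|\zeta_n|$ for the fourth claim, which also follows immediately from substituting $X_n-x\ge X_n^{-1/3}$ into the third inequality.
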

The next lemma is  from \cite{Stein}.
\begin{lemma}\label{oscillatory integral lemma}
Suppose $\phi$ is real-valued and smooth in $(a,b)$, $\psi$ is complex-valued, and that $|\phi^{(k)}(x)|\geq1$ for all $x\in(a,b)$. Then
$$\left|\int_a^b e^{\rm i\lambda\phi(x)}\psi(x)dx\right|\leq c_k\lambda^{-1/k}\left[|\psi(b)|+\int_a^b|\psi^\prime(x)|dx\right]$$
holds when $k\geq 2$ or  $k=1$ and $\phi^\prime(x)$ is monotonic. The bound $c_k$ is independent of $\phi$, $\psi$ and $\lambda$.
\end{lemma}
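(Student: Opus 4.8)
The plan is to prove Lemma~\ref{oscillatory integral lemma} by the classical van der Corput argument, i.e.\ by induction on the order $k$. For the base case $k=1$ I would integrate by parts: since $|\phi'|\ge1$ one may write $e^{i\lambda\phi}=(i\lambda\phi')^{-1}\frac{d}{dx}e^{i\lambda\phi}$, so that
$$\int_a^b e^{i\lambda\phi(x)}\psi(x)\,dx=\frac{1}{i\lambda}\left[\frac{\psi(x)}{\phi'(x)}e^{i\lambda\phi(x)}\right]_a^b-\frac{1}{i\lambda}\int_a^b e^{i\lambda\phi(x)}\Big(\frac{\psi(x)}{\phi'(x)}\Big)'dx.$$
The boundary term is $\le\lambda^{-1}(|\psi(a)|+|\psi(b)|)$. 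For the remaining integral I would use $(\psi/\phi')'=\psi'/\phi'-\psi\phi''/(\phi')^2$ together with $|\phi'|\ge1$, the bounds $\sup_{[a,b]}|\psi|\le|\psi(b)|+\int_a^b|\psi'|$ and $|\psi(a)|\le|\psi(b)|+\int_a^b|\psi'|$, and the key fact that, because $\phi'$ is monotone and does not vanish, $1/\phi'$ is monotone, so $\int_a^b|\phi''|/(\phi')^2\,dx=\mathrm{Var}_{[a,b]}(1/\phi')=|1/\phi'(b)-1/\phi'(a)|\le2$. Collecting terms yields the estimate for $k=1$ with an absolute constant $c_1$.

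For $k\ge2$ I would argue inductively, assuming the statement for $k-1$. The hypothesis $|\phi^{(k)}|\ge1$ forces $\phi^{(k)}$ to keep a constant sign on $(a,b)$, hence $\phi^{(k-1)}$ is strictly monotone and has at most one zero $c$ (if it has none, take $c$ to be the nearer endpoint). Fix $\delta>0$; then $|\phi^{(k-1)}|\ge\delta$ off $(c-\delta,c+\delta)$. I would split $\int_a^b$ into the part over $(c-\delta,c+\delta)\cap(a,b)$ — bounded trivially by $2\delta\sup_{[a,b]}|\psi|\le2\delta(|\psi(b)|+\int_a^b|\psi'|)$ — and the parts over $[a,c-\delta]$ and $[c+\delta,b]$ intersected with $[a,b]$. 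On each of these two pieces I would apply the induction hypothesis with $\lambda$ replaced by $\lambda\delta$ and $\phi$ by $\phi/\delta$ (legitimate since $|(\phi/\delta)^{(k-1)}|\ge1$ there, and in the borderline case $k-1=1$ because $\phi''$ of constant sign makes $\phi'$ monotone), picking up $c_{k-1}(\lambda\delta)^{-1/(k-1)}(|\psi(b)|+\int_a^b|\psi'|)$ from each. Choosing $\delta=\lambda^{-1/k}$ makes $(\lambda\delta)^{-1/(k-1)}=\lambda^{-1/k}=\delta$, so the total is $\le(2c_{k-1}+2)\lambda^{-1/k}(|\psi(b)|+\int_a^b|\psi'|)$, which completes the induction with $c_k=2c_{k-1}+2$.

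I do not expect a genuine obstacle here: the only points needing care are the bookkeeping that keeps $c_k$ independent of $\phi,\psi,\lambda$, the truncation of the splitting intervals to $[a,b]$ when $c\pm\delta$ leaves it, and the observation in the $k=2$ step that $|\phi''|\ge1$ already supplies the monotonicity of $\phi'$ demanded by the base case. Since this is precisely the van der Corput estimate, one may alternatively just invoke \cite{Stein}.
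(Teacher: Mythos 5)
Your proof is correct and is exactly the classical van der Corput argument that the paper simply cites from \cite{Stein} without reproving. The only quibble is a tiny bookkeeping slip in the recursion for the constant: on the two side intervals the endpoint value of $\psi$ that appears from the inductive hypothesis need not be $\psi(b)$, so one should first bound $|\psi(\text{endpoint})|\le|\psi(b)|+\int_a^b|\psi'|$, which yields something like $c_k=4c_{k-1}+2$ rather than $2c_{k-1}+2$; this of course does not affect the conclusion that $c_k$ depends only on $k$. A small organizational difference from Stein's presentation is that you integrate by parts directly on $\int e^{i\lambda\phi}\psi$, whereas Stein first proves the $\psi\equiv1$ case by the same induction and then deduces the amplitude estimate via an Abel-summation step with $F(x)=\int_a^x e^{i\lambda\phi}$; both routes are standard and your version handles the $k\ge2$ induction and the $k=2$ borderline (where $|\phi''|\ge1$ supplies the monotonicity needed by the $k=1$ base case) correctly.
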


The following lemma is  standard in Fourier analysis.
\begin{lemma}\label{xishuguji1}
For any $\varphi\in \T^d$ and all $\alpha= (\alpha_1, \cdots, \alpha_d)$,  if $\partial_{\varphi}^{\alpha}W(\varphi, \theta )$  is analytic on $\T^n_{\rho}$ and continuous on $\T^d\times \overline{\T^n_{\rho}}$, where
$0\leq |\alpha|= \alpha_1+\cdots+\alpha_d \leq d([1\vee \tau]+d+2)$,  then
$$|\widehat{W}(k,l)|\leq \sup\limits_{\varphi\in \T^d}\sup\limits_{|\Im \theta|<\rho}\big|\partial_{\varphi}^{{\hat{\alpha}}}W(\varphi, \theta)\big|\frac{e^{-|l|\rho}}{\la k_1\ra^{\alpha_1}\cdots \la k_d\ra^{\alpha_d}}\leq\frac{C e^{-|l|\rho}}{\la k_1\ra^{\alpha_1}\cdots \la k_d\ra^{\alpha_d}},$$
where
$$
\hat{\alpha}_j= \left\{
\begin{array}{cc}
\alpha_j, \qquad {\rm if}\  k_j\neq 0,\\
0,\qquad {\rm if}\  k_j= 0,\\
\end{array}
\right.
$$
and $C$ is a constant which is independent of $k$ and
$\widehat{W}(k,l)=\frac{1}{(2\pi)^{d+n}} \int_{\T^{d+n}} W(\varphi, \theta) e^{-{\rm i} k\varphi-{\rm i} l\theta }d\varphi d\theta$.
\end{lemma}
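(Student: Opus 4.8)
The plan is to obtain the polynomial decay in $k$ from the $\varphi$-smoothness of $W$ by integration by parts, to obtain the exponential decay in $l$ from the $\theta$-analyticity by a contour translation, and then to combine the two. I would start from
\[
\widehat W(k,l)=\frac{1}{(2\pi)^{d+n}}\int_{\T^{d+n}}W(\varphi,\theta)\,e^{-{\rm i}k\varphi-{\rm i}l\theta}\,d\varphi\,d\theta
\]
and, for every index $j$ with $k_j\neq0$, integrate by parts $\alpha_j$ times in the variable $\varphi_j$. Since $\T^d$ has no boundary no boundary terms appear, and the net effect is a scalar factor $({\rm i}k_j)^{-\alpha_j}$ together with the replacement of $W$ by $\partial_{\varphi_j}^{\alpha_j}W$; for indices with $k_j=0$ nothing is done. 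This gives
\[
\widehat W(k,l)=\frac{1}{(2\pi)^{d+n}}\Big(\textstyle\prod_{j:\,k_j\neq0}({\rm i}k_j)^{-\alpha_j}\Big)\int_{\T^{d+n}}\big(\partial_\varphi^{\hat\alpha}W\big)(\varphi,\theta)\,e^{-{\rm i}k\varphi-{\rm i}l\theta}\,d\varphi\,d\theta,
\]
with $\hat\alpha$ the truncated multi-index of the statement; since $|\hat\alpha|\le|\alpha|\le d([1\vee\tau]+d+2)$, the function $\partial_\varphi^{\hat\alpha}W$ is one of those to which the analyticity hypothesis applies.

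Next I would translate the $\theta$-contour. With $\varphi$ and $k$ fixed, the $\theta$-integrand is, in each variable $\theta_t$ separately, holomorphic on the strip $|\Im\theta_t|<\rho$, continuous up to its closure, and $2\pi$-periodic in $\Re\theta_t$ (because $l_t\in\Z$ and $\partial_\varphi^{\hat\alpha}W$ is $2\pi$-periodic in $\Re\theta_t$). Applying Cauchy's theorem one coordinate at a time — isolating $\theta_t$ by Fubini, and using that the two vertical sides of the translated rectangle cancel by periodicity — one may replace $\theta_t$ by $\theta_t-{\rm i}\rho\,\mathrm{sgn}(l_t)$ in each variable without changing the value of the integral. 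On the translated contour $|e^{-{\rm i}l_t\theta_t}|=e^{-\rho|l_t|}$, so the integral acquires the factor $e^{-\rho|l|}$. Bounding what remains by $\sup_{\varphi\in\T^d,\,|\Im\theta|\le\rho}|\partial_\varphi^{\hat\alpha}W|$ (finite, being the supremum of a continuous function over a compact set) and cancelling the volume $(2\pi)^{d+n}$ yields
\[
|\widehat W(k,l)|\ \le\ \Big(\sup_{\varphi\in\T^d,\,|\Im\theta|\le\rho}|\partial_\varphi^{\hat\alpha}W|\Big)\,\frac{e^{-\rho|l|}}{\prod_{j:\,k_j\neq0}|k_j|^{\alpha_j}}.
\]

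To finish I would pass from $|k_j|$ to $\la k_j\ra$: for $k_j\neq0$ one has $|k_j|\ge1$, hence $\la k_j\ra=\sqrt{1+k_j^2}\le\sqrt2\,|k_j|$, so $\prod_{j:\,k_j\neq0}|k_j|^{\alpha_j}\ge 2^{-|\alpha|/2}\prod_{j=1}^d\la k_j\ra^{\alpha_j}$ (the missing factors $\la k_j\ra^{\alpha_j}$ with $k_j=0$ equal $1$, matching $\hat\alpha_j=0$). This yields the first displayed inequality of the lemma up to the harmless numerical constant $2^{|\alpha|/2}$, and the second one follows by further bounding the (finite) supremum by a constant $C$. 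The argument is entirely routine; the only step that calls for a little care is the contour translation in $\theta$, which must be carried out coordinate by coordinate via Fubini, must use continuity on $\overline{\T^n_\rho}$ to push each contour onto the boundary $|\Im\theta_t|=\rho$, and must use $2\pi$-periodicity of the integrand in $\Re\theta_t$ to discard the contributions of the vertical sides of the translated rectangle.
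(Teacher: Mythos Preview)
Your argument is correct and is precisely the standard Fourier-analytic proof the paper alludes to (the paper states the lemma without proof, calling it ``standard in Fourier analysis''): integration by parts in the $\varphi_j$ with $k_j\neq 0$ for the polynomial decay, followed by a coordinate-wise contour shift in $\theta$ for the exponential decay. The only cosmetic discrepancy is the factor $2^{|\alpha|/2}$ you pick up when passing from $|k_j|$ to $\langle k_j\rangle=\sqrt{1+k_j^2}$; since the lemma is applied only through the second inequality (with the unspecified constant $C$), this is harmless.
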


Proof of Lemma \ref{6.5}.
\begin{align*}
\int_x^\infty|f(t)(\lambda-q(t))^\frac12|dt\le C\left(\int_x^\infty\frac{(q(t)-\lambda)^\frac12}{|\zeta(t)|^2}dt +\int_x^\infty\frac{q''(t)}{(q(t)-\lambda)^\frac32}dt +\int_x^\infty\frac{q'^2(t)}{(q(t)-\lambda)^\frac52}dt\right).
\end{align*}
We estimate the right terms as the following. 
Since
$\int_x^\infty\frac{(q(t)-\lambda)^\frac12}{|\zeta(t)|^2}dt=\frac{1}{|\zeta(x)|}$
and  $x>2X$, we have 
$$\displaystyle|\zeta(x)|=\int_X^x(q(t)-\lambda)^\frac12dt>C\int_{\frac23x}^x (q(t))^\frac12dt>Cx(q(\frac{2x}{3}))^\frac12>Cx(q(x))^\frac12.$$
Thus,
$\int_x^\infty\frac{(q(t)-\lambda)^\frac12}{|\zeta(t)|^2}dt\le\frac{C}{x(q(x))^\frac12}.$
From $\frac{q'(x)}{q(x)}=O\big(\frac1x\big)$ and $\frac{q^{\prime\prime}(x)}{q'(x)}=O\big(\frac1x\big)$,
\begin{align*}
   &\int_x^\infty\frac{q''(t)}{(q(t)-\lambda)^\frac32}dt +\int_x^\infty\frac{q'^2(t)}{(q(t)-\lambda)^\frac52}dt \le C\left(\int_x^\infty\frac{q''(t)}{(q(t))^\frac32}dt +\int_x^\infty\frac{q(t)q'(t)}{t(q(t))^\frac52}dt\right)\\
&\le C\int_x^\infty\frac{q'(t)}{t(q(t))^\frac32}dt\le\frac{C}{x(q(x))^\frac12}.
\end{align*}
This proves the lemma.

Proof of Lemma \ref{2.2sub}. We only give a sketch of the proof. Since the integral has a singularity $X$, we split the integral into
$$\int_0^{X^\prime}+\int_{X^\prime}^X+\int_X^{X^{\prime\prime}}+\int_{X^{\prime\prime}}^\infty=I_1+I_2+I_3+I_4,$$
where $X^\prime\ge\frac12X$ and $X^{\prime\prime}\le2X$.\\
For $I_3$, by directly integrating by parts twice, we obtain
\begin{align*}
|\zeta(x)|&=\int_X^x\frac{q'(t)(q(t)-\lambda)^\frac12}{q'(t)}dt\\
&=\frac{2(q(x)-\lambda)^\frac23}{3q'(x)}(1+\frac{2(q(x)-\lambda)q^{\prime\prime}(x)}{5q'^2(x)}-S),
\end{align*}
where
$\frac{(q(x)-\lambda)q^{\prime\prime}(x)}{q'^2(x)}=O\big(\frac{x-X}{X}\big),$
and
$S=\frac{2q'(x)}{5(q(x)-\lambda)^\frac32}\int_X^x (q(t)-\lambda)^\frac52d\big(\frac{q''(t)}{q'^3(t)}\big)=O\big(\frac{(x-X)^2}{X^2}\big),$
for $X\le x\le X^{\prime\prime}$. So we can choose a suitable $X^{\prime\prime}$ so that
$\frac{2(q(x)-\lambda)q^{\prime\prime}(x)}{5q'^2(x)}\le\frac14$, 
and $|S|$ is much smaller. Thus
\begin{align*}
-\frac{1}{\zeta^2(x)}&=\frac{9q'^2(x)}{4(q(x)-\lambda)^3} \bigg[1-\frac{4(q(x)-\lambda)q^{\prime\prime}(x)}{5q'^2(x)}+O\bigg(\frac{(x-X)^2}{X^2}\bigg)\bigg]\\
&=\frac{9q'^2(x)}{4(q(x)-\lambda)^3} -\frac{9q^{\prime\prime}(x)}{5(q(x)-\lambda)^2}+O\bigg(\frac{1}{X^2|q(x)-\lambda|}\bigg).
\end{align*}
Hence
$f(x)=O\big(\frac{1}{X^2|q(x)-\lambda|}\big)$,
and 
$I_3=O\big(\frac{1}{X^2}\int_X^{X^{\prime\prime}}\frac{1}{(q(x) -\lambda)^\frac12}dx\big)=O\big(\frac{1}{X\lambda^\frac12}\big)$. 
Similar argument can be applied to $I_2$. The estimates for $I_1, I_4$ are easy. We omit it.  This proves the lemma.\qed


\begin{thebibliography}{2018}
 \bibitem{BG}
Bambusi, D., Graffi, S.: Time quasi-periodic unbounded perturbations of Schr\"odinger operators and KAM method. Commun. Math. Phys. \textbf{219}(2), 465-480 (2001)
\bibitem{BamII}
Bambusi, D.:  Reducibility of 1-d Schr\"odinger equation with time quasiperiodic un-bounded perturbations, II. Commun. Math. Phys. \textbf{353}(1), 353-378 (2017)
\bibitem{BamI}
Bambusi, D.:  Reducibility of 1-d Schr\"odinger equation with time quasiperiodic un-bounded perturbations, I. Trans. Amer. Math. Soc., \textbf{370}(3), 1823-1865 (2018)
\bibitem{BamIII}
Bambusi, D.,  Montalto, R: Reducibility of 1-d Schr\"odinger equation with unbounded time quasiperiodic perturbations, III.  J. Math. Phys. \textbf{59}, 122702 (2018)
\bibitem{BGMR17}
Bambusi, D., Gr\'{e}bert, B., Maspero, A. and Robert, D.:  Growth of Sobolev norms for abstract linear Schr\"odinger Equations. 
To appear in Journal of the European Mathematical Society. Preprint available at arXiv:1706.09708v2 (2017)
\bibitem{BGMR18}
Bambusi, D., Gr\'{e}bert, B., Maspero, A. and Robert, D.:  Reducibility of the quantum Harmonic oscillator in d-dimensions with polynomial time dependent perturbation. Analysis \& PDEs, \textbf{11}(3), 775-799(2018)
\bibitem{BBM14}
Baldi, P.,  Berti, M., Montalto, R.: KAM for quasi-linear and fully nonlinear forced perturbations of Airy equation. Math. Ann.,  \textbf{359}(1-2), 471-536 (2014)
\bibitem{BBHM17}
Baldi, P., Berti, M., Haus, E., Montalto, R: Time quasi-periodic gravity water waves in finite depth. Inventiones Math, published online July 2018.
 \bibitem{Berti}
Berti, M.: KAM for PDEs. Boll. Unione Mat. Ital. \textbf{9}, 115-142 (2016)
 \bibitem{Berti2019}
Berti, M.: KAM Theory for Partial Differential Equations. Anal. Theory Appl., \textbf{35}(3), 235-267 (2019)
\bibitem{BBP2}
 Berti, M.,  Biasco, L.,  Procesi, M.: KAM for the reversible derivative wave equation. Arch. Rational Mech. Anal. \textbf{212},  905-955 (2014)
 \bibitem{BM16}
 Berti, M., Montalto, R.: Quasi-periodic standing wave solutions for gravity-capillary water waves.  to appear in Memoirs of the Amer. Math. Society, MEMO 891. Preprint arXiv:1602.02411v1, 2016.
\bibitem{Com87}
Combescure, M.: The quantum stability problem for time-periodic perturbations of the harmonic oscillator. Ann. Inst. H. Poincar\'e Phys. Th\'eor., \textbf{47}(1), 63-83 (1987)
\bibitem{EGK}
Eliasson, H. L., Gr\'{e}bert, B., Kuksin, S. B.: KAM for the nonlinear beam equation.  Geom. Funct. Anal. \textbf{26}(6), 1588-1715(2016)
\bibitem{EK0}
  Eliasson, L.H., Kuksin, S.B.:  On reducibility of Schr\"odinger equations with quasiperiodic in time potentials. Commun. Math. Phys. \textbf{286}(1), 125-135 (2009)
\bibitem{EK}
 Eliasson, L.H.,  Kuksin, S.B.: KAM for the nonlinear Schr\"odinger equation. Ann. of Math. \textbf{172}, 371-435 (2010)
\bibitem{EV}
Enss, V., Veselic, K.: Bound states and propagating states for time - dependent hamiltonians. Ann IHP \textbf{39}(2), 159-191 (1983)
\bibitem{FP15}
Feola, R., Procesi, M.: Quasi-periodic solutions for fully nonlinear forced reversible Schr\"odinger equations. J. Diff. Eqs.  \textbf{259}(7), 3389-3447 (2015)
\bibitem{FGMP18}
Feola, R., Giuliani, F., Montalto,R.,  Procesi,M.:  Reducibility of first order linear operators on tori via Moser's  theorem.  J. Funct. Anal. \textbf{276}, 932-970 (2019)
 \bibitem{GXY}
Geng, J.,   Xu, X.,    You, J.: An infinite dimensional KAM theorem and its application to
the two dimensional cubic Schr\"odinger equation. Adv. Math. \textbf{226}, 5361-5402 (2011)
\bibitem{GY1}
 Geng, J.,   You, J.: A KAM theorem  for one dimensional
Schr\"{o}dinger equation with periodic boundary conditions. J. Diff. Eqs. \textbf{209}, 1-56 (2005)
\bibitem{GY2}
 Geng, J.,   You, J.: A KAM theorem  for Hamiltonian partial
differential equations in higher dimensional spaces. Commun. Math.
Phys. \textbf{262}, 343-372(2006)
  \bibitem{Giu17}
 F. Giuliani, Quasi-periodic solutions for quasi-linear generalized KdV equations. J. Diff. Eqs.  \textbf{262}, 5052-5132 (2017)
  \bibitem{GP16}
 Gr\'{e}bert, B.,  Paturel, E.: KAM for the Klein Gordon equation on $\mathbb S^d$.  Boll. Unione Mat. Ital. \textbf{9}(2), 237-288 (2016)
\bibitem{GT11}
  Gr\'{e}bert, B., Thomann,  L.: KAM for the Quantum Harmonic Oscillator. Commun. Math. Phys. \textbf{307}, 383-427 (2011)
\bibitem{KLiang}
 Kappeler, T., Liang, Z.: A KAM theorem for the defocusing NLS equation with periodic boundary conditions.
J. Diff. Eqs. \textbf{252}, 4068-4113 (2012)
\bibitem{KaPo}
 Kappeler, T.,    P\"oschel, J.: {\it  KDV \& KAM}. Berlin: Springer-Verlag,  2003
\bibitem{KP}
 Kuksin, S.B.,    P\"oschel, J.: Invariant Cantor manifolds of quasi-periodic
oscillations for a nonlinear Schr\"odinger equation. Ann. of Math. \textbf{143}, 149-179 (1996)
 \bibitem{Kuk93}
Kuksin, S.B.: Nearly integrable infinite-dimensional Hamiltonian systems. {\it Lecture Notes in
Mathematics}, \textbf{1556}. Berlin: Springer-Verlag,  1993
\bibitem{Ku1}
 Kuksin, S.B.: {\it  Analysis of Hamiltonina PDEs}. Oxford: Oxford University Press,  2000
\bibitem{Ku2}
Kuksin, S.B.: A KAM theorem for equations of the Korteweg-de Vries type. Rev. Math. Math. Phys. \textbf{10}(3),  1-64 (1998)
\bibitem{LZ}
  Liang, Z.: Quasi-periodic solutions for 1D Schr\"odinger equations with the nonlinearity $|u|^{2p}u$. J. Diff. Eqs. \textbf{244},  2185-2225 (2008)
 \bibitem{LiangW19}
 Liang, Z. , Wang, Z.: Reducibility of quantum harmonic oscillator on $\R^d$ with differential and quasi-periodic in time potential. J. Diff. Eqs. \textbf{267},  3355-3395 (2019)
 \bibitem{LY1}
 Liang, Z.,   You, J.: Quasi-periodic solutions for 1D Schr\"odinger equations
with higher order nonlinearity. SIAM J. Math. Anal. \textbf{36}, 1965-1990 (2005)
\bibitem{LY10}
  Liu, J.,   Yuan, X.: Spectrum for quantum Duffing oscillator and small-divisor equation with large-variable coefficient. Comm. Pure Appl. Math. \textbf{63}(9),
1145-1172 (2010)
\bibitem{LiuYuan}
  Liu, J.,    Yuan, X.: A KAM Theorem for Hamiltonian Partial Differential Equations with Unbounded Perturbations.
Commun. Math. Phys. \textbf{307}(3), 629-673 (2011)
\bibitem{IPT05}
Iooss, G., Plotnikov, P. I., Toland, J. F.:  Standing waves on an infinitely deep perfect fluid under gravity. Arch. Ration. Mech. Anal., \textbf{177}(3), 367-478 (2005)
\bibitem{Mon17a}
 Montalto, R.: Quasi-periodic solutions of forced Kirchhoff equation. Nonlinear Differ. Equ. Appl. NoDEA, 24:9, DOI:10.1007/s00030-017-0432-3, 2017.
\bibitem{Mon17b}
Montalto, R.:  A reducibility result for a class of linear wave equations on $\T^d$. International Mathematics Research Notices, page rnx167(2017)
\bibitem{Mon18}
 Montalto, R.: On the growth of Sobolev norms for a class of linear Schr\"odinger equations on the torus with superlinear dispersion. Asymptotic Analysis.  \textbf{108}(1-2), 85-114 (2018)
\bibitem{PT01}
Plotnikov, P. I., Toland, J. F.:  Nash-Moser theory for standing water waves. Arch. Ration. Mech. Anal., \textbf{159}(1), 1-83 (2001)
\bibitem{PX}
  Procesi, M., Xu, X.: Quasi-T\"oplitz functions in KAM Theorem. SIAM J. Math. Anal. \textbf{45}, 2148 - 2181(2013)
\bibitem{Pos}
  P\"{o}schel, J.: A KAM Theorem for some nonlinear partial
differential equations. Ann. Sc. Norm. sup. Pisa CI. sci. \textbf{23},  119-148 (1996)
 \bibitem{Stein}
 Stein, Elias: Harmonic Analysis: Real-variable Methods, Orthogonality and Oscillatory Integrals. Princeton University Press, 1993
 \bibitem{T1}
 Titchmarsh, E.C.:  Eigenfunction expansions associated with second-order differential equations, Part 1, 2nd edition. Oxford: Oxford University Press, 1962
\bibitem{T2}
 Titchmarsh, E.C.: Eigenfunction expansions associated with second-order differential equations, Part 2. Oxford: Oxford University Press, 1958
 \bibitem{W90}
 Wayne,C.E.: Periodic and quasi - periodic solutions for nonlinear wave equations via KAM theory. Commun. Math. Phys. \textbf{127}, 479-528 (1990)
\bibitem{Wang08}
Wang, W. M.: Pure point spectrum of the Floquet Hamiltonian for the quantum harmonic oscillator under time quasi-periodic perturbations. Commun. Math. Phys. \textbf{277}(2),
459-496 (2008)
\bibitem{WLiang17}
Wang Z, Liang Z.: Reducibility of 1D quantum harmonic oscillator perturbed by a quasiperiodic potential with logarithmic decay. Nonlinearity, \textbf{30(4)}, 1405-1448(2017)
\bibitem{Yajima}
 Yajima, K.,  Zhang, G.: Smoothing property for Schr\"odinger equations with potential superquadratic at infinity.   Commun. Math. Phys. \textbf{221}, 573-590 (2001)
\bibitem{ZGY}
  Zhang, J.,  Gao, M.,  Yuan,  X.: KAM tori for reversible partial differential equations. Nonlinearity \textbf{24},  1189-1228 (2011)
\bibitem{Wastan}
Wastan G.N.: A Treatise on the Theory of Bessel Functions, 2nd edition.
Cambridge: Cambridge University Press, 1944.
\end{thebibliography}
\end{document}